\documentclass[11pt,a4paper]{article}
\usepackage[T1]{fontenc}
\usepackage[margin=1in,marginparwidth=20mm,marginparsep=2mm]{geometry}

\usepackage{graphicx}
\usepackage{amsmath,amssymb,amsthm,mathtools}
\usepackage{lmodern}
\usepackage{microtype}
\usepackage{xcolor}
\usepackage{enumitem}
\usepackage{authblk}
\usepackage{url}
\usepackage{listings}
\usepackage{subcaption}
\usepackage{algorithm}
\usepackage{algpseudocode}
\algrenewcommand\algorithmicrequire{\textbf{Input:}}
\algrenewcommand\algorithmicensure{\textbf{Output:}}

\usepackage{booktabs,tabularx}
\usepackage[hidelinks]{hyperref}
\usepackage[nameinlink,noabbrev]{cleveref}
\newif\ifpapercomments
\papercommentstrue
\newif\ifpapertodos
\papertodostrue

\definecolor{CommentBlue}{HTML}{1D4ED8}
\definecolor{CommentPurple}{HTML}{7E22CE}
\definecolor{CommentTeal}{HTML}{087F8C}
\definecolor{CommentRed}{HTML}{B91C1C}
\definecolor{CommentOrange}{HTML}{B45309}
\definecolor{CommentMagenta}{HTML}{A21CAF}

\DeclareRobustCommand{\inlinecomment}[3][CommentBlue]{%
  \ifpapercomments
    {\color{#1}\small\sffamily[\textbf{#2}: #3]}%
  \fi
}
\DeclareRobustCommand{\margincomment}[3][CommentBlue]{%
  \ifpapercomments
    \marginpar{%
      \raggedright\normalfont\sffamily\fontsize{7}{8}\selectfont
      \setlength{\parindent}{0pt}\setlength{\parskip}{0pt}%
      \color{#1}\textbf{#2:} #3\par
    }%
  \fi
}

\newcommand{\DeclareAuthorComments}[3]{%
  \expandafter\newcommand\csname #1comment\endcsname[1]{%
    \inlinecomment[#3]{#2}{##1}}%
  \expandafter\newcommand\csname #1margin\endcsname[1]{%
    \margincomment[#3]{#2}{##1}}%
}
\DeclareAuthorComments{sl}{SL}{CommentBlue}
\DeclareAuthorComments{pv}{PV}{CommentTeal}
\DeclareAuthorComments{cl}{Claude}{CommentOrange}
\DeclareAuthorComments{ys}{YS}{CommentMagenta}
\DeclareAuthorComments{gpt}{GPT}{CommentPurple}

\newif\ifpaperchanges
\paperchangestrue
\DeclareRobustCommand{\add}[1]{\ifpaperchanges{\color{blue}\relax#1}\else#1\fi}
\DeclareRobustCommand{\del}[1]{\ifpaperchanges{\color{gray}\relax#1}\fi}
\DeclareRobustCommand{\gptadd}[1]{\add{#1}}
\DeclareRobustCommand{\gptdel}[1]{\ifpaperchanges\del{[#1]}\space\fi}

\newcommand{\gptoldlabel}[1]{}

\makeatletter
\newenvironment{gptremoved}{%
  \par\medskip\begingroup
  \let\add\@firstofone\let\del\@firstofone
  \renewcommand{\margincomment}[3][gray]{\textnormal{\footnotesize[##2: ##3]}}%
  \renewcommand{\inlinecomment}[3][gray]{\textnormal{\footnotesize[##2: ##3]}}%
  \color{gray}\small\noindent\textit{Previous version (superseded).}\par\nobreak\smallskip
}{%
  \par\smallskip\noindent\textit{End of previous version.}\par\medskip
  \endgroup
}
\makeatother


\newlist{drafttodos}{enumerate}{1}
\setlist[drafttodos]{%
  label={\(\square\)\,\textsf{T\arabic*.}},ref={T\arabic*},
  leftmargin=3.7em,labelsep=0.6em,itemsep=5pt,topsep=5pt,parsep=0pt}

\newtheorem{theorem}{Theorem}[section]

\newtheorem{definition}[theorem]{Definition}
\newtheorem{lemma}[theorem]{Lemma}
\newtheorem{corollary}[theorem]{Corollary}

\newtheorem{observation}[theorem]{Observation}
\Crefname{observation}{Observation}{Observations}

\newcommand{\E}{\mathbb{E}}
\newcommand{\Var}{\mathrm{Var}}
\DeclareMathOperator*{\union}{\bigcup}

\newcommand{\calD}{\mathcal{D}}
\newcommand{\calE}{\mathcal{E}}
\newcommand{\calR}{\mathcal{R}}
\newcommand{\calU}{\mathcal{U}}
\newcommand{\calM}{\mathcal{M}}
\newcommand{\calP}{\mathcal{P}}

\newcommand{\vecc}{\mathbf{c}}
\newcommand{\vecx}{{\mathbf{x}}}
\newcommand{\vecy}{{\mathbf{y}}}

\newcommand{\open}{\mathrm{open}}
\newcommand{\phys}{\mathrm{phys}}
\newcommand{\virt}{\mathrm{virt}}
\newcommand{\rest}{\mathrm{rest}}
\newcommand{\orig}{{}}

\newcommand{\cost}{\mathrm{cost}}
\newcommand{\pos}[1]{\left(#1\right)_{+}}
\newcommand{\Disp}{\mathcal D}

\title{Iterative Randomized Rounding for \\Better-Than-2 k-Median Approximation}

\DeclareRobustCommand{\repeatthanks}[1]{\footnotemark[#1]}
\newcommand{\authorspace}{2em}
\author{
    Jaros{\l}aw Byrka\thanks{University of Wroc{\l}aw, \texttt{\{jby, yongho, phanu.vajanopath\}@cs.uni.wroc.pl}}
    \hspace{\authorspace}
    Yuhao Guo\thanks{Tsinghua University, \texttt{\{yh-guo23, wcz23\}@mails.tsinghua.edu.cn}}
    \hspace{\authorspace}
    Yang Hu\thanks{Carnegie Mellon University, \texttt{yanghu2@andrew.cmu.edu}}
    \hspace{\authorspace}
    Shi Li\thanks{Nanjing University, \texttt{\{shili@, zaixuanwang@smail.\}nju.edu.cn}}
    \\
    Yongho Shin\repeatthanks{1}
    \hspace{\authorspace}
    Phanu Vajanopath\repeatthanks{1}
    \hspace{\authorspace}
    Chengzhang Wan\repeatthanks{2}
    \hspace{\authorspace}
    Zaixuan Wang\repeatthanks{4}
}

\date{}

\begin{document}

\maketitle

\begin{abstract}
We study the classic $k$-median problem and refine the iterative rounding framework recently introduced by Byrka, Guo, Hu, Li, Wan, and Wang~\cite{BGHL2026}, which yields a $(2+\varepsilon)$-approximation.

Our algorithm opens $k+O(1)$ facilities and achieves an expected connection cost of at most $1.9999$ times the value of the standard LP relaxation. Although the numerical improvement over $2$ is small, it establishes a conceptual result: the integrality gap of the standard relaxation is strictly below $2$ when a constant number of extra facilities is allowed. This partially answers an open question of Li and Svensson~\cite{LS16}. Combined with their reduction, our result yields a $1.9999$-approximation algorithm for $k$-median.

Our approach builds on several technical ingredients. First, we develop a preprocessing routine that makes the LP solution $\mathrm{poly}(\varepsilon)$-integral, at the cost of a $(1+\varepsilon)$ factor in the connection cost and $O(1)$ extra facilities. Unlike the preprocessing in \cite{BGHL2026}, our conversion is unconditional: it requires no special treatment of clients whose fractional facilities are concentrated. Next, we show that the main rounding algorithm can open facilities one at a time, eliminating the batch processing used in \cite{BGHL2026}. Finally, the improvement over 2 comes from a tighter analysis for clients whose distances from to its facilities are not all equal, and a complementary algorithm for the case in which these distances are nearly uniform.
\end{abstract}

\paragraph*{AI Disclosure Statement:} We used ChatGPT to assist us in the development of various detailed arguments across the paper. We also used Claude to assist us in preparing a presentation of the most technical details of the arguments. The authors verified the correctness and originality of all content, including references.

%\ifpapertodos
%  \input{todos}
%  \clearpage
%\fi

\section{Introduction}\label{sec:intro}

In the $k$-median problem, we are given a set $C$ of clients, a set $F$
of facilities, a metric $d$ on $C\cup F$, and a positive integer $k$.
The task is to open a set $O\subseteq F$ of at most $k$ facilities and
connect every client to its nearest open facility. The goal is to
minimize the total connection cost
\[
  \cost(O)=\sum_{j\in C}d(j,O),
  \qquad\text{where } d(j,O)=\min_{i\in O}d(i,j).
\]
$k$-median is one of the most basic clustering problems. It has also
served as a testbed for many techniques in approximation algorithms,
including LP rounding, primal-dual algorithms, greedy algorithms, and
local search. The problem is NP-hard, and we focus on polynomial-time
algorithms that compute $\alpha$-approximate solutions for a constant
$\alpha$.

We first review previous results in Section~\ref{sec:intro-previous}.
We then state our result in Section~\ref{sec:intro-results}, briefly
survey further related work in Section~\ref{sec:intro-related}, and give
an informal description of our approach in
Section~\ref{sec:intro-overview}.

\subsection{Previous results}\label{sec:intro-previous}

Many of the algorithms discussed below use Lagrangian relaxation. They
replace the hard constraint of opening at most $k$ facilities with a
uniform opening cost for each facility, yielding an
uncapacitated facility location problem with uniform opening costs. A Lagrangian multiplier preserving
(LMP) $\alpha$-approximation algorithm for this relaxation returns a solution
whose connection cost plus $\alpha$ times its total facility opening cost
is at most $\alpha$ times the optimal total cost.
Such an
algorithm can be used to produce a randomized solution that opens $k$
facilities in expectation and has expected connection cost at most $\alpha$ times
the optimal $k$-median cost. Without loss of generality, this distribution
can be supported on at most two solutions; the resulting convex combination
is called a \emph{bi-point solution}.

\paragraph{$k$-median.}
The first constant-factor approximation algorithm for $k$-median was
obtained by Charikar, Guha, Tardos, and Shmoys~\cite{CGTS02}. Their LP-rounding algorithm achieved a factor of $6\frac23$. Jain and Vazirani~\cite{JV01} then gave
a $6$-approximation by combining a primal-dual LMP $3$-approximation
algorithm with a factor-$2$ rounding of bi-point solutions. Charikar and
Guha improved this to $4$~\cite{CG99}. Jain, Mahdian, Markakis, Saberi, and Vazirani~\cite{JMMSV03} gave a
greedy LMP $2$-approximation, and Arya, Garg, Khandekar, Meyerson, Munagala, and Pandit~\cite{AGKMMP04} gave a
$(3+\varepsilon)$-approximation based on local search.

Further progress came from better bi-point rounding. Li and
Svensson~\cite{LS16} gave a bi-point rounding algorithm with factor
$(1+\sqrt3+\varepsilon)/2\approx1.366+\varepsilon/2$
that opens $k+O(1)$ facilities. They also introduced a reduction that
turns such a pseudo-approximation into a true approximation that opens
at most $k$ facilities, with only a multiplicative $1+\varepsilon$ loss. Together, these
yielded a $(1+\sqrt3+\varepsilon \approx 1+\sqrt3\approx2.732)$-approximation. The bi-point
rounding factor was improved further in~\cite{BPRST17,GPST23}.
Independently, Cohen-Addad, Grandoni, Lee, and Schwiegelshohn~\cite{CGLS23} gave an LMP algorithm for
facility location with factor slightly below $2$. These lines of work led to approximation factors of $2.675$, $2.67059$, and
$2.613$~\cite{GPST23}.

Recently, Cohen-Addad, Grandoni, Lee, Schwiegelshohn, and
Svensson~\cite{CGLSS25} avoided the loss from bi-point rounding
altogether and obtained a $(2+\varepsilon)$-approximation. They adapted
the greedy algorithm of~\cite{JMMSV03} so that it opens only slightly
more than $k$ facilities while keeping the factor $2+\varepsilon$. The
number of extra facilities in their approach is
$O(\log n/\varepsilon^2)$, which is too large for the reduction
of~\cite{LS16}. They therefore developed a separate algorithm for
instances whose cost is highly sensitive to the number of open
facilities. The work of~\cite{BGHL2026} provided an alternative route to
the same factor via iterative randomized rounding of the standard LP
relaxation. Its rounding procedure is an LMP $2$-approximation for $k$-median
and can be converted into an algorithm that opens
$k+O_\varepsilon(1)$ facilities with only a $1+\varepsilon$ loss.

Factor $2$ is a natural barrier for LP-based algorithms. The integrality
gap of the standard LP relaxation is at least $2$~\cite{LS16}.  However, the known instances establishing this lower bound
cease to exhibit a gap of $2$ when the integral solution may open $k+1$
facilities. The reduction of~\cite{LS16} allows algorithms to exploit
a constant number of extra facilities.

\paragraph{Euclidean $k$-median.}
Stronger guarantees are possible when the clients and facilities
lie in a Euclidean space. If the dimension is a fixed constant, the problem admits a
polynomial-time approximation scheme~\cite{ARR98,KR07,CFS19}. If the
dimension is part of the input, the problem is APX-hard. It is NP-hard
to approximate within a factor of $1.07$, or $1.015$ when facilities may
be opened anywhere in the space~\cite{CK19,CKL22}. Under the Johnson
Coverage Hypothesis, these bounds increase to $1.27$ and $1.08$,
respectively~\cite{CKL22}. On the algorithmic side, Ahmadian, Norouzi-Fard, Svensson, and Ward~\cite{ANSW20} gave a $2.633$-approximation by running the
primal-dual algorithm of~\cite{JV01} in a way that respects the
cardinality constraint, avoiding the loss of bi-point rounding.
Cohen-Addad, Esfandiari, Mirrokni, and Narayanan~\cite{CEMN22} improved this to $2.406$. Both
guarantees were later superseded by the $(2+\varepsilon)$-approximation
of~\cite{CGLSS25} for general metrics. The coverage-based reduction
behind the lower bound of $1+2/e$ for general metrics does not carry
over to Euclidean spaces. Before the concurrent work discussed below, it remained open whether
Euclidean $k$-median admits an approximation factor better than
$1+2/e$~\cite{CEMN22}.

\paragraph{Hardness of approximation.}
Jain, Mahdian, and Saberi~\cite{JMS02} showed that it is NP-hard to
approximate $k$-median within a factor better than
$1+2/e\approx1.736$. The bound holds in the setting above, where the
facilities are given separately from the clients.

\subsection{Our results}\label{sec:intro-results}

%\ifpaperchanges
%\gptmargin{Restored the previous text in gray for comparison. The blue passages give the revision; old commented drafts are also preserved in the source.}
%\fi
%\gptremovedfile{removed-text/intro-01}

\begin{theorem}\label{thm:intro-main}
There exists a randomized polynomial-time $1.9999$-approximation
algorithm for metric $k$-median: it returns at most $k$ facilities
and its expected cost is at most $1.9999\,\mathrm{OPT}$.
\end{theorem}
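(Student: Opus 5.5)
The proof reduces the theorem to a pseudo-approximation statement and then invokes the reduction of Li and Svensson~\cite{LS16}. That reduction converts any algorithm which opens $k+c$ facilities, for a constant $c$, and has cost at most $\alpha$ times the LP value, into an $\alpha(1+\varepsilon)$-approximation that opens at most $k$ facilities. The running time is $n^{O(c/\varepsilon)}$. It therefore suffices to give a randomized polynomial-time rounding of the standard LP relaxation with two properties. It always opens $k+O(1)$ facilities, and its expected connection cost is at most $1.9998\cdot\mathrm{LP}$. Applying the reduction with a small enough $\varepsilon$ then yields $1.9999\,\mathrm{OPT}$, since $\mathrm{LP}\le\mathrm{OPT}$. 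The rest of the plan builds this pseudo-approximation in three stages, following the outline in the abstract.

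\textbf{Stage 1 (preprocessing).} First solve the LP. Then convert the solution $(x,y)$ into a $\mathrm{poly}(\delta)$-integral one: every $y_i$ becomes a multiple of some $\delta^{O(1)}$, possibly after splitting facilities into co-located copies. The conversion loses a factor $1+\delta$ in LP cost and uses $O_\delta(1)$ extra facilities. The intended method is to bucket clients by their fractional radii and round the fractional openings by a dependent-rounding/pipage argument inside well-separated bundles. The few bundles where rounding fails are charged to $O(1)$ additional open facilities. Unlike~\cite{BGHL2026}, concentrated clients get no special case. Instead, their mass is split into copies, so they look like ordinary clients.

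\textbf{Stage 2 (one-at-a-time iterative rounding).} On the granular solution, run the iterative randomized rounding of~\cite{BGHL2026}, opening one facility per step. At each step, sample a facility with probability proportional to its fractional opening among the remaining ``active'' mass. Close the facilities it dominates and update the residual LP. The goal is to show that each client $j$ has expected cost at most $2\sum_i x_{ij}d(i,j)$. Granularity keeps each single step small, so the probability that $j$ loses all its fractional facilities behaves like a continuous process, and the backup distance is bounded by $2\times$ average via the triangle inequality. The facility count stays at $k+O(1)$ because openings track $\sum y_i$ with martingale concentration, and any remaining deficit is paid with the $O(1)$ spare facilities.

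\textbf{Stage 3 (beating 2).} This is the main obstacle. Fix a client $j$ and consider the distances from $j$ to the facilities in its support. The factor-$2$ analysis is tight only if these distances are nearly uniform. When they vary, a sharper accounting shows that $j$'s expected cost is at most $(2-\eta)$ times its LP cost. The idea is to use the probability that the nearer facilities open first, bounding the backup cost by an integral of the order statistics rather than by the worst case. This gives a quantified $\eta(\text{spread})>0$. For nearly uniform clients, run a complementary algorithm. Those clients act like uncapacitated facility location with one radius per client, so a clustering/bi-point-style rounding in the spirit of~\cite{LS16} gets them roughly $1+\sqrt3)/2$-type guarantees, well below $2$. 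Finally, run each algorithm with some probability, or choose a parameter threshold, so that every client is served at cost at most $1.9998$ times its LP share in expectation.

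The hardest part is the first half of Stage 3. The robust inequality $\mathbb E[\text{cost}_j]\le(2-\eta)\,\mathrm{LP}_j$ for nonuniform clients must survive the adaptive one-at-a-time process. I would first prove it for the idealized continuous-time version of the process and then pass to the granular discrete process with an $O(\delta)$ error. The tiny constants chosen throughout are what make the final bound only $1.9999$.
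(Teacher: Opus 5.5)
Your outer layer is the paper's: an LP-relative pseudo-approximation with $k+O(1)$ facilities, followed by the Li--Svensson reduction. But the steps that carry the content are missing, and the second half of Stage~3 is wrong as stated.

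\textbf{The complementary algorithm and the combination.} The factor $(1+\sqrt3)/2$ of~\cite{LS16} is for rounding a bi-point solution and is measured against the bi-point cost, not the LP value. For clients whose LP support lies at a single distance, $1+2/e$ is essentially the best possible relative to the LP: max-coverage gap instances have all LP support at distance $1$, yet integral solutions, even with $k+O(1)$ facilities, cost about $(1+2/e)$ times the LP.

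More importantly, you cannot choose the rounding per client. So the second algorithm needs a bound valid for \emph{every} client, degrading continuously in the same spread measure as the first. The paper obtains it by Charikar--Li-style filtering and bundling:
\begin{itemize}
\item each client takes an admissible radius (its ball holds at least half its mass) minimizing $R+5e\,T_j(R)$;
\item representatives are picked in increasing order of $A_j=R_j+5eT_j$;
\item every matched pair of bundles is forced to open a facility.
\end{itemize}
This gives $\E[d(j,O_2)]\le(1-1/e)V_j+(3/e)A_j$, with $A_j\le V_j+\sqrt{10eV_jM_j}+\tfrac{5e}{2}M_j$ and $M_j\le\Disp_j$.

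The first bound must likewise be explicit. The paper proves $2V_j-\Disp_j/12$ from the one-step saving $\Gamma\ge\Disp(U)/4$, together with dispersion persistence. The saving has two sources: slack in the backup bound $2d_i+r_i$ when an outside pivot deletes $i$, and survivors with $2d_i<d_q$ when a support facility $q$ is the pivot. Your order-statistics heuristic does not yield such a quantity.

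Given both all-client bounds, your fixed-probability mixture does work. Here $f(\beta)=2-\beta/12$ is linear and $g(\beta)=1+2/e+\sqrt{90\beta/e}+\tfrac{15}{2}\beta$ is concave. So by minimax, one mixing probability gives every client the paper's $\alpha^\star\approx1.99984$. That is above your $1.9998$ target but still below $1.9999$. With a second algorithm that only handles nearly uniform clients, neither mixing nor a threshold works; a threshold could in any case only be on a global quantity such as $\Disp/V$.

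\textbf{The facility count in Stage~2.} The unbiased process's total opening is a martingale. But one pivot can close arbitrarily many copies, and the number of steps grows with the instance. So martingale concentration does not bound the \emph{excess} by $O(1)$; the danger is too many facilities, not a deficit.

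The paper splits deletion sets into fragments of size at most $\delta^{-1}D$, so each operation changes the opening by at most $D$. It also boosts negative-imbalance operations by a factor $1+\eta$, which makes $\exp(\lambda(Y_t-Y_0))$ a supermartingale. Ville's inequality and rejection of bad trials then give $k+O_\varepsilon(1)$.

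The real work is showing that the dispersion bound survives this. The bias costs only $(1+\eta)^{1/\delta}$ if every relevant operation is certified. Splitting repeats the opening in each fragment, which could charge the loss of the same pairwise dispersion several times. The paper handles this with a local-charge certificate for the subset-minimized potential $\min_{U\subseteq T}\{\Phi(U,b)-\Disp(U)/12\}$. Its opening charges are nonpositive, and its endpoint deletion charges total $\Disp(U)/6$ and are paid by $\Gamma\ge\Disp(U)/4$. Passing from a continuous-time idealization with $O(\delta)$ error does not address this obstruction.

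\textbf{Stage~1.} When tails are shared by many representatives, the redundancy $A'$ of rounded shared atoms can fluctuate by far more than $O(1)$. The paper controls it by scaling shared atoms by $1+\theta$ to create a concentrated saving. It is not handled by charging ``a few failing bundles'' to spare facilities.
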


The improvement over the $(2+\varepsilon)$-approximations
of~\cite{CGLSS25,BGHL2026} is small. It nevertheless shows that factor
$2$ is not the limit of LP-based rounding, and it identifies the
structure of fractional solutions for which the factor-$2$ analysis
of~\cite{BGHL2026} is tight. Before the
final reduction of~\cite{LS16}, our algorithm is a pseudo-approximation
with respect to the standard LP relaxation. It opens
$k+O(1)$ facilities, and its expected cost is at most
$1.9999$ times the LP value. Consequently, the
integrality gap of the standard relaxation is strictly below $2$ when a
constant number of extra facilities is allowed. This partially answers the question of Li and Svensson
about the integrality gap when $k+O(1)$ facilities may be
opened~\cite{LS16}.
The final combination and its error budget are given in
Section~\ref{sec:final-analysis}.

\subsection{Further related work}\label{sec:intro-related}

The literature on clustering is very broad. We briefly discuss
some further results most relevant to our work.

\paragraph{Facility location.}
Closely related to $k$-median is uncapacitated facility location (UFL).
UFL replaces the constraint on the number of open facilities
with a location-specific opening cost for each facility. The first
constant-factor approximation algorithm for UFL used LP rounding and was due to
Shmoys, Tardos, and Aardal~\cite{STA97}. The best known approximation ratio is
$1.488$, obtained by Li~\cite{Li13} building on Byrka and
Aardal~\cite{byrka2010optimal}. It is close to the lower bound of $\gamma_{\text{UFL}}$
shown by Guha and Khuller~\cite{GK99}, where $\gamma_{\text{UFL}} \approx 1.463$ is the unique solution to $\gamma = 1 + 2e^{-\gamma}$. For $k$-median the LMP factor
matters most. The greedy algorithm of~\cite{JMMSV03} is an LMP
$2$-approximation, and~\cite{CGLS23} improved this factor slightly.

\paragraph{Approximation in FPT time.}
Some works allow fixed-parameter tractable (FPT) running time
$f(k,\varepsilon)\cdot\mathrm{poly}(n)$. In this setting,
the hardness threshold of $1+2/e$ still holds under the Gap Exponential
Time Hypothesis (Gap-ETH), and a $(1+2/e+\varepsilon)$-approximation
is achievable for every fixed $\varepsilon>0$~\cite{CGKLL19}. When the facilities coincide with the
clients, the standard NP-hardness bound is $1+1/e$. In this case, Anand and
Lee~\cite{AL24} gave an FPT-time $1.546$-approximation, strictly improving
on $1+2/e$.

\paragraph{Concurrent work.}
% NOTE (authors): decide how to describe the timing of our work relative
% to these results, and whether to add a sentence comparing the factors.
Several works giving better-than-$2$ approximations for metric $k$-median appeared in
October 2026. Buhai, Cohen-Addad, Fleischmann, Gupta, Jub\'es Monforte,
Pittu, Stouras, Svensson, and Vogl gave a randomized polynomial-time
$1.95$-approximation in two papers. The first replaces the linear bids
in the greedy algorithm of~\cite{JMMSV03} with general bidding
functions, which gives a $1.9499$-LMP approximation for facility
location. A smooth variant of this algorithm opens $k+O(\log n/\varepsilon^2)$
facilities and costs at most $(1.9499+\varepsilon)$ times the
optimum~\cite{BCFGJPSSV26-bids}. The second gives a
$(1.89+\varepsilon)$-approximation for stable instances, whose optimal cost increases sufficiently
when one fewer facility is opened~\cite{BCFGJPSSV26-stable}. The two algorithms are combined using the framework of~\cite{CGLSS25}.

In the same month, OpenAI released a collection of mathematical
manuscripts produced by an internal model. Two of them concern metric
$k$-median. The first claims a deterministic
$(1+2/e+\varepsilon)$-approximation~\cite{OAI-threshold}, which would
match the hardness of~\cite{JMS02}. Such an algorithm would also yield a $(\gamma_{\text{UFL}}+\varepsilon \approx 1.463 + \varepsilon)$-approximation for uncapacitated facility location, matching the hardness threshold up to an arbitrarily small additive term.\footnote{A blackbox reduction only gives a $(\gamma_{\text{UFL}}+\varepsilon)$-approximation for the uncapacitated facility location problem with uniform facility costs. However, the rounding algorithm of \cite{OAI-threshold} preserves the marginal probabilities of opening facilities, and thus works for UFL with arbitrary facility costs as well. }
The result also builds on the iterative
rounding framework of~\cite{BGHL2026}, with a modified, distance-tapered
rounding graph and a different per-client potential, and it forces facilities with large deletion sets to open
during batched iterations. The second claims a randomized
$(2-\sigma)$-approximation for an absolute constant
$\sigma>0$~\cite{OAI-recovery} and builds on the approach
of~\cite{CGLSS25}.

\subsection{Overview of the techniques}\label{sec:intro-overview}

\paragraph{The algorithm in a nutshell.}
Our algorithm has three steps.
\begin{enumerate}[itemsep=2pt]
\item Solve the standard LP relaxation. Preprocess its solution so that
  every fractional opening is a multiple of a small constant $\delta$
  and the total opening is at most $k+3$.
\item Round the preprocessed solution in two independent ways. The
  first is a variant of the iterative rounding of~\cite{BGHL2026}. It
  repeatedly opens a randomly chosen fractional facility and closes the
  facilities that relied on it as part of their nearest unit of
  opening. A slight bias in the random choices keeps the number of open
  facilities at $k+O_\varepsilon(1)$. The second is a filtering and
  bundling rounding in the style of Charikar and Li~\cite{CL12}, which
  opens at most $k+3$ facilities.
\item Return the cheaper of the two solutions. This is a
  pseudo-approximation that opens $k+O_\varepsilon(1)$ facilities, and
  the reduction of Li and Svensson~\cite{LS16} turns it into a solution
  with at most $k$ facilities.
\end{enumerate}

The two roundings in Step 2 have complementary strengths. Fix a client
$j$ and consider the distances from $j$ to the facilities that serve it
in the fractional solution. The factor-$2$ analysis of iterative
rounding is tight only when these distances are all about the same.
When they are spread out, the analysis charges $j$ more than it actually
pays, and we show that the gap is proportional to the spread. When the
distances are concentrated, iterative rounding may really lose a factor
of $2$, but then $j$ behaves like an element in maximum coverage. The
second rounding opens one of $j$'s fractional facilities with
probability at least about $1-1/e$. Otherwise it serves $j$ through a
nearby representative at about three times the distance, for a factor
of about $1+2/e$ overall. Both kinds of clients may occur in one
instance, and we cannot choose a rounding separately for each client.
We therefore measure the spread of the whole instance by a single
number and show that the cheaper of the two solutions is always within
a factor strictly below $2$ of the LP value. The rest of this section
explains each part in more detail.

\paragraph{Notation.}
Fix a fractional solution $(\vecx,\vecy)$. After splitting facilities
into colocated copies, every client $j$ is assigned to a set $F_j$ of
facilities with $y(F_j)=1$, namely its nearest unit of opening. Write
$d_i=d(i,j)$ and
\[
  V_j=\sum_{i\in F_j}y_id_i,\qquad
  \Disp_j=\sum_{i,i'\in F_j}y_iy_{i'}|d_i-d_{i'}|,\qquad
  M_j=\sum_{i\in F_j}y_i|d_i-V_j|.
\]
Thus $V_j$ is the fractional connection cost of $j$, $\Disp_j$ is the
pairwise dispersion of its distances, and $M_j$ is their mean absolute
deviation. These quantities satisfy $M_j\le\Disp_j\le2V_j$, and
$\Disp_j=0$ if and only if all facilities in $F_j$ are at the same
distance from $j$.

\paragraph{Preprocessing.}
As in~\cite{BGHL2026}, we
first make the fractional solution $\delta$-integral for a constant
$\delta$ (Theorem~\ref{thm:preprocessing}). Our preprocessing separates
the nearest $1-\varepsilon$ units of each representative client's
opening, its \emph{head}, from the next $\varepsilon$ units, its
\emph{tail}. It replaces tail mass by small atoms and rounds all atoms
by a single pipage rounding. Atoms shared by several tails are slightly
scaled up before rounding, and the resulting saving keeps the total
opening at most $k+3$ with high probability. The expected connection
cost of every client increases by at most a $1+O(\varepsilon)$ factor.

\paragraph{Algorithm A: iterative rounding and where the factor $2$ is lost.}
We first recall the LMP algorithm of~\cite{BGHL2026}. Every fractional
facility $i$ receives incoming edges from its nearest one unit of
fractional opening, so that every facility has fractional in-degree
$1$. In each iteration, a pivot $p$ is chosen with probability
proportional to its current opening $z_p$ and opened. Each out-neighbor
$i$ of $p$ is closed with probability $w_{p,i}/z_p$, where $w_{p,i}$ is
the amount of $p$'s opening in the nearest unit of $i$. This keeps the
expected number of open facilities unchanged. For a client $j$, the
analysis tracks the set $T\subseteq F_j$ of support facilities that are
still fractional, together with the distance $b$ from $j$ to its
nearest open facility. It shows that the potential
\[
  \Phi(T,b)=2\sum_{i\in T}y_id_i+\bigl(1-y(T)\bigr)b
\]
does not increase in expectation, which gives
$\E[d(j,O)]\le2V_j$.

This argument has slack in two places. When an outside pivot closes a
support facility $i$, the new backup distance is bounded by
$2d_i+r_i$. Here $r_i$ is the smallest distance from $j$ to a surviving
support facility that is not entirely contained in the nearest unit of
$i$, and this bound is often loose. When a support
facility $p$ itself is chosen, the surviving support facilities with
$2d_i<d_p$ remain useful. We show that the total slack in an iteration
is at least a quarter of the current pairwise dispersion
(Lemma~\ref{lem:combined-saving}). We also show that the dispersion of
the surviving support decays slowly enough. Together, these facts give
\begin{equation}\label{eq:intro-dispersion}
  \E[d(j,O)]\le2V_j-\frac{\Disp_j}{12}
\end{equation}
(Theorem~\ref{thm:client-dep}). In words, the factor $2$ can be tight only for
clients whose fractional distances are nearly equal.

Bound~\eqref{eq:intro-dispersion} holds for the LMP algorithm, which
opens $k$ facilities only in expectation. We then convert the LMP algorithm into one that opens
$k+O_\varepsilon(1)$ facilities. The conversion of~\cite{BGHL2026}
selects many facilities in each iteration and forces facilities with
large negative imbalance to open. In contrast, our conversion keeps a
single operation per iteration (Section~\ref{sec:single-pivot-conversion}).
Each fractional copy carries one token and each open facility carries
$1/\delta$ tokens. An operation chooses a token, opens its owner, and
closes the fractional copies whose nearest $1/\delta$ tokens include it.
Operations that would close too much opening mass are split into a few
smaller operations. To compensate for the resulting drift, we slightly
increase the sampling probability of operations that decrease the
total opening. The total opening is then an exponential
supermartingale, and the number of open facilities exceeds the initial
opening by more than $O_\varepsilon(1)$ only with small probability.

Splitting and biased sampling perturb the cost analysis. We isolate a
sufficient condition, a \emph{local-charge certificate}, under which a
potential argument survives the conversion with only a $1+\varepsilon$
loss (Theorem~\ref{thm:conversion-local-charges}). For every token,
the potential change of every partial operation must be bounded by an
opening charge plus deletion charges for the closed tracked facilities.
The opening charge is counted once per fragment and each deletion
charge exactly once, and the total of all charges must be nonpositive.
The main difficulty is that the dispersion saving does not
fit this framework directly. Splitting an operation can count the loss
of the same pairwise dispersion credit in several fragments. We
therefore use the potential
\[
  P_j(T,B)=\min_{U\subseteq T}\left\{\Phi(U,B)-\frac{\Disp(U)}{12}\right\},
\]
which may discard part of the surviving support. When a support
facility opens, the potential moves to a subset that excludes it. This
retains dispersion and makes the opening charge nonpositive, so
repeating it over fragments is harmless. When a support facility $i$ is
closed, the lost dispersion is charged to an endpoint credit of $i$
proportional to $\sum_{v\in U}|d_i-d_v|$. A pair with both endpoints
closed is then overcharged, never undercharged. The endpoint credits
total $\Disp(U)/6$, and together with the remaining terms they are paid
for by the saving of $\Disp(U)/4$ from
Lemma~\ref{lem:combined-saving}. Hence the weighted balance holds
(Lemma~\ref{lem:conversion-disp-cert}). Consequently,
the converted algorithm opens $k+O_\varepsilon(1)$ facilities and
satisfies $\E[d(j,O)]\le(1+\varepsilon)(2V_j-\Disp_j/12)$ for every
client $j$.

\paragraph{Algorithm B: rounding when distances are concentrated.}
For these clients we use a second algorithm. For a radius $R$, let
$T_j(R)=\sum_{i\in F_j:d_i>R}y_id_i$. Each client chooses a threshold
radius $R_j$ that minimizes $R+5e\,T_j(R)$ among the radii whose ball
contains at least half of its fractional opening. Let $A_j$ be the
minimum value. We then follow the filtering and bundling approach of
Charikar and Li~\cite{CL12}. Representatives are selected greedily in
increasing order of $A_j$. Each representative receives a bundle of
facilities within half the distance to its nearest other
representative, and representatives are matched greedily. A dependent
rounding~\cite{srinivasan2001distributions} then opens exactly $k$
facilities. It opens at least one facility in each matched pair of
bundles, preserves marginals, and is negatively correlated. We show
that
$\E[d(j,O)]\le(1-1/e)V_j+(3/e)A_j$. Moreover, $A_j=V_j$ when $\Disp_j=0$,
and $A_j\le V_j+\sqrt{10eV_jM_j}+\frac{5e}2M_j$ in general. This gives
\begin{equation}\label{eq:intro-robust}
  \E[d(j,O)]\le\left(1+\frac2e\right)V_j
  +\sqrt{\frac{90V_jM_j}{e}}+\frac{15}{2}M_j
\end{equation}
(Theorem~\ref{thm:robust}). The factor $1+2/e$ for concentrated distances
matches the hardness of~\cite{JMS02}. This is natural: in its
coverage-based reduction, every client is at distance at least one from
every facility, so on yes-instances an optimal fractional solution has
all of its support at distance exactly one.

\paragraph{Combining the two algorithms.}
The complementarity must be exploited at the level of complete
solutions, since one cannot choose a rounding procedure separately for
each client. We run both algorithms on the same fractional solution and
return the cheaper solution. Let $V=\sum_jV_j$, $\Disp=\sum_j\Disp_j$,
and $s=\Disp/V\in[0,2]$. Since $\sum_jM_j\le\Disp$, the Cauchy--Schwarz
inequality gives $\sum_j\sqrt{V_jM_j}\le\sqrt{V\Disp}$. Summing
\eqref{eq:intro-dispersion} and \eqref{eq:intro-robust} over clients
shows that the expected cost of the cheaper solution is at most
\[
  V\cdot\min\left\{2-\frac s{12},\;
  1+\frac2e+\sqrt{\frac{90s}{e}}+\frac{15}{2}s\right\}.
\]
The first term decreases in $s$, and the second increases from
$1+2/e<2$. The two terms cross at $s=s^\star\approx0.0018867$, the
positive solution of $\frac{91}{12}s+\sqrt{90s/e}=1-\frac2e$. Hence the
expression is at most $\alpha^\star V$, where
$\alpha^\star:=2-s^\star/12\approx1.9998428$.

\paragraph{From $k+O(1)$ facilities to $k$ facilities.}
The robust algorithm runs on the same preprocessed solution and opens at
most $k+3$ facilities. Returning the cheaper of the two solutions gives
a pseudo-approximation that opens $k+O_\varepsilon(1)$ facilities and
has expected cost at most $(\alpha^\star+O(\varepsilon))$ times the LP
value. Finally, the reduction of Li and Svensson~\cite{LS16} turns it
into an algorithm that opens at most $k$ facilities, which proves
Theorem~\ref{thm:intro-main}, since $\alpha^\star<1.9999$.

\section{Preliminaries}

\paragraph{LP relaxation.}
Let $C$ be the set of clients, $F$ the set of facilities, and $d$ a metric
on $C\cup F$. Given an integer $k\ge1$ and a constant $p\ge1$, the
$k$-clustering problem asks for at most $k$ open facilities minimizing
the sum of the $p$-th powers of the clients' connection distances.
We use the standard linear programming relaxation
\begin{equation}\label{lp:k-clustering}
\begin{aligned}
\min\quad &\sum_{j\in C}\sum_{i\in F}d(i,j)^p x_{i,j}\\
\text{subject to}\quad
&\sum_{i\in F}x_{i,j}=1 &&\text{for every }j\in C,\\
&x_{i,j}\le y_i &&\text{for every }i\in F,\ j\in C,\\
&\sum_{i\in F}y_i\le k,\\
&x_{i,j}\ge0 &&\text{for every }i\in F,\ j\in C,\\
&0\le y_i\le1 &&\text{for every }i\in F.
\end{aligned}
\end{equation}
Here $y_i$ is the opening of facility $i$, and $x_{i,j}$ is the amount
of client $j$ assigned to $i$. For a fixed feasible opening vector
$\vecy$, we may assume that each client is assigned to its nearest one
unit of opening, breaking ties consistently. This minimizes the cost
for that vector. We denote the resulting cost of client $j$ by
$\cost_{\vecy}(j)$. For a vector $\vecy$ indexed by a finite set $U$
and $L\subseteq U$, write $y(L):=\sum_{i\in L}y_i$.

\paragraph{Pipage rounding with a laminar family.}
A family $\mathcal L\subseteq2^U$ is \emph{laminar} if any two of its
sets are disjoint or one contains the other. By applying pipage
rounding to a nonnegative vector $\vecy$ indexed by $U$, with laminar
family $\mathcal L$ and granularity $\delta>0$, we mean the following
dependent-rounding procedure~\cite{srinivasan2001distributions}.
We may add $U$ and all singletons to $\mathcal L$ without destroying
laminarity. Write
\[
\frac{y_i}{\delta}=n_i+f_i,
\qquad n_i:=\left\lfloor\frac{y_i}{\delta}\right\rfloor,
\qquad 0\le f_i<1.
\]
Keep the integer parts fixed. Repeatedly choose an inclusion-minimal
set of $\mathcal L$ containing at least two fractional coordinates,
and move mass between two of them, preserving their sum, until at
least one becomes integral. Choose between the two possible directions
randomly so that each coordinate's conditional expectation is
unchanged. If a single fractional coordinate remains, round it to
$0$ or $1$ with its current value as the probability of $1$.
Let $X_i\in\{0,1\}$ be the final rounded fractional parts, and return
$\widehat y_i:=\delta(n_i+X_i)$.

This procedure preserves expectations and rounds every laminar-set
total to one of its two neighboring multiples of $\delta$:
\begin{align*}
\E[\widehat y_i]&=y_i,
&\widehat y_i&\in
\left\{\delta\left\lfloor\frac{y_i}{\delta}\right\rfloor,
\delta\left\lceil\frac{y_i}{\delta}\right\rceil\right\}
&&\text{for every }i\in U,\\
\widehat y(L)&\in
\left\{\delta\left\lfloor\frac{y(L)}{\delta}\right\rfloor,
\delta\left\lceil\frac{y(L)}{\delta}\right\rceil\right\}
&&&&\text{for every }L\in\mathcal L.
\end{align*}
In particular, if $y(L)$ is a multiple of $\delta$, then
$\widehat y(L)=y(L)$ in every outcome. The rounded fractional parts
also satisfy, for every $S\subseteq U$,
\begin{equation} \label{eq:neg-corr}   
\E\!\left[\prod_{i\in S}X_i\right]\le\prod_{i\in S}f_i,
\qquad
\E\!\left[\prod_{i\in S}(1-X_i)\right]
\le\prod_{i\in S}(1-f_i), \qquad \Var\left(\sum_{i\in S} X_i\right) \le \sum_{i\in S} \Var(X_i).
\end{equation}

\paragraph{Probabilistic Preliminaries.}
\begin{theorem}[Bernstein's Inequality]\label{thm:bernstein}
    Let $X_1,\ldots,X_n$ be independent random variables such that $|X_i-\E[X_i]|\le M$ almost surely, for all $i\in[n]$. For any $t\ge0$ and $Y=\sum_{i=1}^n X_i$, \[
        \Pr\left[Y\le \E\left[Y\right]-t\right] \le \exp\left(-\frac{t^2/2}{\sum_{i=1}^n \Var(X_i) +Mt/3}\right).
    \]
\end{theorem}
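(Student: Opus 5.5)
The plan is the standard exponential-moment (Chernoff) argument, applied to the lower tail by flipping signs. Set $Z_i:=\E[X_i]-X_i$, so that $\E[Z_i]=0$, $|Z_i|\le M$ almost surely, $\Var(Z_i)=\Var(X_i)$, and the $Z_i$ are independent. Write $\sigma^2:=\sum_{i=1}^n\Var(X_i)$. The event $Y\le\E[Y]-t$ is exactly $\sum_i Z_i\ge t$. For any $\lambda>0$, Markov's inequality applied to $e^{\lambda\sum_i Z_i}$ together with independence gives
\[
\Pr\Bigl[\sum_i Z_i\ge t\Bigr]\le e^{-\lambda t}\prod_{i=1}^n\E\bigl[e^{\lambda Z_i}\bigr].
\]

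Next I will bound each moment generating factor. Expanding $e^{\lambda Z_i}$ as a power series and using $\E[Z_i]=0$ together with $|\E[Z_i^k]|\le M^{k-2}\E[Z_i^2]$ for $k\ge2$, I get
\[
\E\bigl[e^{\lambda Z_i}\bigr]\le1+\Var(X_i)\sum_{k\ge2}\frac{\lambda^kM^{k-2}}{k!}.
\]
The key elementary inequality is $k!\ge2\cdot3^{k-2}$ for $k\ge2$, which follows by induction since each factor beyond $2$ is at least $3$. It yields, for $0<\lambda M<3$,
\[
\sum_{k\ge2}\frac{\lambda^kM^{k-2}}{k!}\le\frac{\lambda^2}{2}\sum_{m\ge0}\Bigl(\frac{\lambda M}{3}\Bigr)^m=\frac{\lambda^2}{2(1-\lambda M/3)}.
\]
Combining this with $1+x\le e^x$ and taking the product over $i$ gives
\[
\Pr\Bigl[\sum_i Z_i\ge t\Bigr]\le\exp\Bigl(-\lambda t+\frac{\lambda^2\sigma^2}{2(1-\lambda M/3)}\Bigr).
\]

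Finally, I choose $\lambda=t/(\sigma^2+Mt/3)$. This choice satisfies $\lambda M/3<1$ whenever $\sigma^2>0$; the degenerate cases $t=0$ and $\sigma^2=0$ are trivial or follow by a limiting argument. With this $\lambda$ we have $1-\lambda M/3=\sigma^2/(\sigma^2+Mt/3)$, so the second term equals
\[
\frac{\lambda^2\sigma^2}{2(1-\lambda M/3)}=\frac{t^2}{2(\sigma^2+Mt/3)}.
\]
The exponent therefore simplifies to $-t^2/\bigl(2(\sigma^2+Mt/3)\bigr)$, which is exactly the claimed bound.

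The only step that requires any care is the moment bound, that is, controlling the series via $k!\ge2\cdot3^{k-2}$. That inequality is what produces the constant $1/3$ in front of $Mt$. The remaining steps are algebra.
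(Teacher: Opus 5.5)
Your proof is correct and is the standard Chernoff--Bernstein argument. The paper states this theorem without proof as a standard fact, but the same computation appears in its proof of Lemma~\ref{lem:pipage-concentration}: it bounds each centered factor via $k!\ge 2\cdot 3^{k-2}$ to get $\exp\bigl(\lambda^2\E[W_i^2]/(2(1-\lambda M/3))\bigr)$, then applies Markov with $\lambda=t/(v+Mt/3)$. One small remark: the case $\sigma^2=0$ needs no limiting argument, since then every $X_i$ is almost surely constant and the probability is $0$ for $t>0$.
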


Note that the standard Bernstein inequality for independent random variables cannot be directly applied to our pipage (dependent) rounding. The following lemma gives a Bernstein-like inequality for our pipage rounding with granularity $\delta$.

\begin{lemma}[Weighted pipage concentration]\label{lem:pipage-concentration}
Let $\widehat y$ be the output of the pipage rounding above on input
$y$, let $a_i\ge0$ be deterministic, and put
\[
 Z=\sum_i a_i\widehat y_i,\qquad
 v=\delta\sum_i a_i^2y_i,\qquad M=\delta\max_i a_i.
\]
Then $\Var(Z)\le v$. For every $t>0$, each of
$\Pr[Z\le\E Z-t]$ and $\Pr[Z\ge\E Z+t]$ is at most
\[
 \exp\left(-\frac{t^2}{2(v+Mt/3)}\right).
\]
If $v=0$, the variable $Z$ is deterministic and both probabilities vanish.
\end{lemma}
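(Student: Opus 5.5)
The plan is to treat the pipage rounding as a martingale and apply Freedman's (Bernstein-type) inequality for martingales with bounded increments.

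\textbf{Martingale and increments.} Since the integer parts $n_i$ are fixed, $Z-\sum_i a_i\delta n_i=\delta\sum_i a_iX_i$. We therefore work with the fractional parts. Let $f^{(t)}$ be the fractional vector after $t$ pipage steps, counting the final single-coordinate rounding as a step, and set $Z_t=\delta\sum_i a_i f^{(t)}_i$. Each step moves mass between two coordinates $i,i'$ with unbiased direction, so $(Z_t)$ is a martingale with $Z_0=\E Z$ and final value $Z$.

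Consider a step on a pair $i,i'$. One direction moves $f_i$ up by $\alpha$ and $f_{i'}$ down by $\alpha$, with probability $\beta/(\alpha+\beta)$. The other moves them by $-\beta$ and $+\beta$, with probability $\alpha/(\alpha+\beta)$. The increment of $Z_t$ is then $\delta(a_i-a_{i'})\alpha$ or $-\delta(a_i-a_{i'})\beta$. Two facts follow.
\begin{itemize}
\item The increment has absolute value at most $\delta\max_i a_i\cdot\max(\alpha,\beta)\le M$, because the $a$'s are nonnegative and $\alpha,\beta\le1$.
\item Its conditional variance is $\delta^2(a_i-a_{i'})^2\alpha\beta$.
\end{itemize}

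\textbf{Variance budget.} The key step is to bound the sum of conditional variances deterministically. I would use the potential
\[
\Psi_t=\delta^2\sum_i a_i^2 f^{(t)}_i\bigl(1-f^{(t)}_i\bigr).
\]
A direct computation shows that the expected decrease of $\Psi$ in a step on the pair $i,i'$ equals
\[
\delta^2\bigl(a_i^2+a_{i'}^2\bigr)\alpha\beta\;\ge\;\delta^2(a_i-a_{i'})^2\alpha\beta,
\]
which is at least the conditional variance of that step. The final single-coordinate step also has conditional variance exactly equal to the expected drop of $\Psi$. Hence $\Psi_t+\sum_{s\le t}\Var_s$ is a supermartingale, with $\Psi\ge0$ throughout.

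Freedman needs an almost-sure bound on the predictable quadratic variation, which this supermartingale argument does not give directly. I would handle this in one of two ways. The first option is to use the exponential supermartingale $\exp\bigl(\lambda(Z_t-Z_0)-g(\lambda)(\Psi_0-\Psi_t)\bigr)$, where $g(\lambda)=\lambda^2/(2(1-\lambda M/3))$. One checks step by step that it is a supermartingale, using the Bernstein moment bound $\E e^{\lambda D}\le \exp(g(\lambda)\Var D)$ for a centered $D$ with $|D|\le M$, together with the fact that the conditional variance is at most the drop of $\Psi$ in expectation. This requires care, because $\Psi_0-\Psi_t$ is random. A cleaner route is to include the realized drop of $\Psi$ in the exponent and verify the one-step inequality $\E\exp(\lambda D-g(\lambda)\Delta\Psi)\le1$ directly for a two-point move.

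Either way, we get $\E e^{\lambda(Z-\E Z)}\le e^{g(\lambda)\Psi_0}$. Since $\Psi_0\le\delta^2\sum_i a_i^2f_i(1-f_i)$, and $\delta f_i\le y_i$ gives $\delta^2 f_i\le\delta y_i$, we obtain $\Psi_0\le v$. Optimizing over $\lambda$ gives the stated tail bound. The same argument applied to $-Z$ handles the lower tail, since the increment bounds are symmetric.

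\textbf{Variance and the degenerate case.} The bound $\Var Z\le\Psi_0\le v$ follows by summing the expected drops of $\Psi$, using $\Psi_{\mathrm{final}}=0$. When $v=0$, every coordinate with $a_i>0$ has $y_i=0$, so $Z=0$ deterministically.

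\textbf{Main obstacle.} The hardest part is the one-step exponential inequality for the asymmetric two-point move that couples $D$ with the realized change of $\Psi$. I would first try the simpler route of using only the expected drop and bounding the conditional variance by the drop. I would fall back to the explicit two-point verification if the randomness of $\Psi_0-\Psi_t$ causes trouble.
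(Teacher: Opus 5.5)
Your variance bound is correct, since summing the expected drops of $\Psi$ gives $\Var(Z)\le\Psi_0\le v$. Your degenerate case is also fine. The tail bound, however, has a genuine gap exactly at the step you flag, and neither of your two routes closes it.

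\textbf{The realized-drop route.} The one-step inequality $\E\exp(\lambda D-g(\lambda)\Delta\Psi)\le1$, with $\Delta\Psi$ the realized drop, is false already for a single two-point move. Take $a_{i'}=0$, $\delta a_i=M=1$, $f_i=0.1$ and $f_{i'}=0.2$, so $\alpha=0.2$ and $\beta=0.1$.
With probability $1/3$ you get $D=0.2$, while $\Psi$ \emph{increases} by $0.12$, i.e.\ $\Delta\Psi=-0.12$. With probability $2/3$ you get $D=-0.1$ and $\Delta\Psi=0.09$.
For $\lambda=1$ we have $g(1)=3/4$, and $\frac13e^{0.29}+\frac23e^{-0.1675}\approx1.009>1$. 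Expanding in $\lambda$, the coefficient of $\lambda^3$ is $0.004>0$, so the inequality also fails for every small $\lambda>0$. The cross term between $D$ and $\Delta\Psi$ enters at order $\lambda^3$ and is not absorbed by the $M/3$ correction in $g$.

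\textbf{The expected-drop route.} Freedman needs $\sum_s\Var_s\le\Psi_0$ almost surely, and this is false. Let the only coordinate with $a_i\ne0$ start at $f_i=1/2$, and pair it successively with coordinates of value $\gamma$. With probability $1/2$ it climbs to $1$. Along that path the predictable variation is $\delta^2a_i^2(3/8-\gamma/4)$, which exceeds $\delta^2a_i^2/4=\Psi_0$. An additive potential like $\Psi$ controls the quadratic variation only in expectation, and that is not enough for an exponential bound.

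\textbf{The missing idea.} Compare moment generating functions with the independent case instead of tracking quadratic variation; this is what the paper does. For $\lambda\ge0$, write $e^{\lambda\delta a_iX_i}=1+(e^{\lambda\delta a_i}-1)X_i$. The product over $i$ is then a nonnegative combination of the monomials $\prod_{i\in S}X_i$. So \eqref{eq:neg-corr} gives $\E e^{\lambda Z}\le\E e^{\lambda\widetilde Z}$, where $\widetilde Z$ uses independent Bernoulli$(f_i)$ variables in place of the $X_i$. For $\lambda<0$, expand in $1-X_i$ instead.

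You can also stay in your martingale framework, but the potential must be multiplicative. Set $u_k=e^{\lambda\delta a_k}-1>-1$. Restricted to a pipage line, $\prod_k\bigl(1+f^{(t)}_ku_k\bigr)$ is a quadratic whose leading coefficient is $-u_iu_{i'}\le0$ times a positive factor. It is therefore concave along the line and hence a supermartingale. It starts at $\E e^{\lambda\delta\sum_ka_k\widetilde X_k}$ and ends at $e^{\lambda\delta\sum_ka_kX_k}$.

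For the independent sum, apply the one-variable Bernstein bound $\E e^{\lambda W_i}\le\exp(g(\lambda)\E W_i^2)$ with $|W_i|\le M$. Since $\sum_i\E W_i^2=\delta^2\sum_ia_i^2f_i(1-f_i)\le v$, this gives both tails with the stated constant $Mt/3$.
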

\begin{proof}
Let $\widetilde X_i$ be independent Bernoulli variables with means $f_i$,
and set $\widetilde Z=\delta\sum_i a_i(n_i+\widetilde X_i)$.
For $\lambda\ge0$, expand
$e^{\lambda\delta a_iX_i}=1+(e^{\lambda\delta a_i}-1)X_i$.
For $\lambda<0$, instead expand
$e^{\lambda\delta a_iX_i}=e^{\lambda\delta a_i}
+(1-e^{\lambda\delta a_i})(1-X_i)$.
All coefficients in either product expansion are nonnegative, so
\eqref{eq:neg-corr} implies
$\E[e^{\lambda Z}]\le\E[e^{\lambda\widetilde Z}]$ for every real
$\lambda$. Pairwise negative correlation also gives
\[
 \Var(Z)\le\delta^2\sum_i a_i^2 f_i(1-f_i)
 \le\delta\sum_i a_i^2 y_i=v.
\]
The same variance bound holds for $\widetilde Z$. Each centered
independent summand $W_i=\delta a_i(\widetilde X_i-f_i)$ satisfies
$|W_i|\le M$. For $0<\lambda<3/M$, expansion of its exponential,
using $n!\ge2\cdot3^{n-2}$ for $n\ge2$, gives
\[
 \E[e^{\pm\lambda W_i}]
 \le\exp\left(\frac{\lambda^2\E[W_i^2]}{2(1-\lambda M/3)}\right).
\]
Multiply these bounds and use the moment comparison. Markov's
inequality with $\lambda=t/(v+Mt/3)$ proves both tails when $v>0$.
\end{proof}

\begin{lemma}\label{lem:capped-variance}
For any real random variable $V$ with finite variance and any real number $\kappa$,
\[
\mathrm{Var}(\min(V,\kappa))\le\mathrm{Var}(V).
\]
\end{lemma}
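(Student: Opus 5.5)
The plan is to use the variational characterization of variance: for any real random variable $W$ with finite second moment,
\[
\mathrm{Var}(W)=\min_{c\in\mathbb R}\E\bigl[(W-c)^2\bigr]=\tfrac12\,\E\bigl[(W-W')^2\bigr],
\]
where $W'$ is an independent copy of $W$. The second (symmetrization) form is the one I would use. It reduces the statement to a pointwise inequality, because $x\mapsto\min(x,\kappa)$ is $1$-Lipschitz.

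First I would check that $\min(V,\kappa)$ has finite variance. This holds because $|\min(V,\kappa)|\le|V|+|\kappa|$ and $V$ has a finite second moment.

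Next let $V'$ be an independent copy of $V$, and write $g(x)=\min(x,\kappa)$. Then $g(V')$ is an independent copy of $g(V)$. The key step is the pointwise bound $|g(x)-g(y)|\le|x-y|$. I would verify it by cases:
\begin{itemize}
\item if $x,y\le\kappa$, both sides are equal;
\item if $x,y\ge\kappa$, the left side is $0$;
\item if $x\le\kappa\le y$, the left side is $\kappa-x\le y-x$.
\end{itemize}

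Squaring this bound and taking expectations gives
\[
\mathrm{Var}(g(V))=\tfrac12\E\bigl[(g(V)-g(V'))^2\bigr]\le\tfrac12\E\bigl[(V-V')^2\bigr]=\mathrm{Var}(V).
\]

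Nothing here is a genuine obstacle. The only point needing care is the symmetrization identity $\E[(W-W')^2]=2\E[W^2]-2(\E W)^2$, which is immediate from independence and finiteness of second moments. An alternative route is to write $\mathrm{Var}(g(V))\le\E[(g(V)-g(\E V))^2]\le\E[(V-\E V)^2]$, using the minimizing property of the mean and then the Lipschitz bound. That route is equally short, and I would mention it as a remark.
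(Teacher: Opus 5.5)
Your proof is correct, but it takes a different route from the paper's.

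The paper works algebraically with the decomposition $V=V_1+V_2$, where $V_1=\min(V,\kappa)$ and $V_2=\max(0,V-\kappa)$. Since $V_1V_2=\kappa V_2$ pointwise, one gets $\mathrm{Cov}(V_1,V_2)=(\kappa-\E[V_1])\E[V_2]\ge0$, using $V_1\le\kappa$ and $V_2\ge0$. Expanding $\mathrm{Var}(V_1+V_2)$ then gives $\mathrm{Var}(V)\ge\mathrm{Var}(V_1)$.

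You instead use only the fact that $x\mapsto\min(x,\kappa)$ is $1$-Lipschitz, together with the symmetrization identity $\mathrm{Var}(W)=\tfrac12\E[(W-W')^2]$ (or, in your remark, the minimizing property of the mean).

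Your argument is more general. It proves $\mathrm{Var}(g(V))\le\mathrm{Var}(V)$ for every $1$-Lipschitz $g$, so it covers two-sided clipping and other contractions just as well. It also needs no identity specific to truncation.

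The paper's argument is tailored to one-sided capping, but it yields an exact expression for the gap, namely $\mathrm{Var}(V)-\mathrm{Var}(\min(V,\kappa))=\mathrm{Var}(V_2)+2(\kappa-\E[V_1])\E[V_2]$. It also needs no auxiliary independent copy.

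Finally, you explicitly check that $\min(V,\kappa)$ has a finite second moment, which the paper leaves implicit.
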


\begin{proof}
Let $V_1:=\min(V,\kappa)$ and $V_2:=\max(0,V-\kappa)$. Then $V=V_1+V_2$, and $V_1V_2=\kappa V_2$. Consequently, we have
\[
\begin{aligned}
\mathrm{Var}(V)
&=\mathrm{Var}(V_1)+\mathrm{Var}(V_2)+2\mathrm{Cov}(V_1,V_2)\\
&=\mathrm{Var}(V_1)+\mathrm{Var}(V_2)+2\left(\mathbb E[V_1V_2]-\mathbb E[V_1]\mathbb E[V_2]\right)\\
&=\mathrm{Var}(V_1)+\mathrm{Var}(V_2)+2(\kappa-\mathbb E[V_1])\mathbb E[V_2]
\ge\mathrm{Var}(V_1)=\mathrm{Var}(\min(V,\kappa)),
\end{aligned}
\]
where the last inequality uses $V_1\le\kappa$ and $V_2\ge 0$.
\end{proof}

\section{Preprocessing}
\label{sec:preprocessing}

Fix a constant $p\ge1$, and let $\ell:=\lceil p\rceil$. Throughout, $0<\varepsilon<1/p$, and all constants hidden in asymptotic notation are absolute and independent of $p$. Let $(\vecx,\vecy)$ be a solution to the LP~\eqref{lp:k-clustering}. As in the preliminaries, assume that each client $j\in C$ is connected to the nearest one unit of open facilities, and let $\cost_{\vecy}(j)$ denote its connection cost with respect to $d(i,j)^p$.
Formally, let $F_j$ be a set of nearest one unit of open facilities, i.e., we have % \pvmargin{Added the definition of $F_j$.} 
\[
    d(j,i)\le d(j,i') \quad \forall i\in F_j, i'\in F\setminus F_j \quad \text{and} \quad y(F_j)=1.
\]
Then $\cost_{\mathbf y}(j)=\sum_{i\in F_j} y_id(j,i)^p$.
Throughout this section, let $\theta:=\varepsilon^{2\ell+1}$ be the scaling parameter used in Step~\ref{step:pre-scale} of \Cref{alg:preprocessing}.

\begin{theorem}\label{thm:preprocessing}
Let $0<\varepsilon<1/p$, where
$\varepsilon\le 1/8$, suppose
$1/(2\varepsilon)\in\mathbb N$, and put $\delta=\varepsilon^{14\ell+2}$.
There is an $n^{O(1)}\varepsilon^{-O(p)}$-time randomized algorithm
that, given a feasible solution $(\vecx,\vecy)$ to
LP~\eqref{lp:k-clustering}, outputs $\vecy'\in[0,1]^F$ such that:
\begin{itemize}
\item every $y'_i$ is an integer multiple of $\delta$;
\item $y'(F)\le k+3$ with probability at least
$1-\exp(-\Omega(\varepsilon^{-10\ell}))$;
\item for every client $j$, the nearest unit of $y'$ exists and
$\E[\cost_{\vecy'}(j)]\le(1+O(p\varepsilon))\cost_{\vecy}(j)$.
\end{itemize}
The cost bound is unconditional; the constants hidden in $O$ and
$\Omega$ are absolute and independent of $p$.
\end{theorem}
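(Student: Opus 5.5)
The plan follows the outline in the introduction: filter to representatives, split each representative's unit into a head and a tail, atomize the tail, and round everything with one pipage rounding over a laminar family.

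\textbf{Filtering.} First I would run a standard filtering step. Process clients in increasing order of $\cost_{\vecy}(j)$, or of a radius such as the $(1-\varepsilon)$-quantile distance of $F_j$. Declare $j$ a representative unless some earlier representative $j'$ is within $O(1/\varepsilon)$ times $j$'s radius. A non-representative client $j$ will then be served through the ball of a nearby representative. By the triangle inequality and $(a+b)^p\le(1+\varepsilon)^{p}a^p+O(1/\varepsilon)^{p}b^p$, this costs only a $(1+O(p\varepsilon))$ factor once the representative's head is preserved almost surely, up to a $\theta$-level loss.

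\textbf{Head/tail split.} For each representative $j$, split facilities into colocated copies so that $F_j$ is the disjoint union of a head $H_j$ with $y(H_j)=1-\varepsilon$ (nearest) and a tail $T_j$ with $y(T_j)=\varepsilon$. Representatives' heads are pairwise disjoint by filtering, since their balls are far apart relative to radii. Each head is rounded down to a multiple of $\delta$, which is possible since $1/(2\varepsilon)\in\mathbb N$ makes $1-\varepsilon$ compatible with the grid, and is made a set of the laminar family so its total is preserved exactly.

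\textbf{Atomization and scaling.} Tail mass is replaced by atoms of size $\theta$, so each tail is a multiple of $\theta$. The subtle point is that tails of different representatives may overlap. An atom shared by several tails is scaled up by a factor $(1+\theta)$ (Step~\ref{step:pre-scale}). The total scale-up is paid for by the saving obtained when each non-representative's fractional opening outside representatives' heads and tails is discarded. I would then verify that the resulting vector $\vecy''$ has $y''(F)\le k+2$ deterministically and that all heads and tails are members of a laminar family (heads disjoint, plus singletons and $U$).

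\textbf{Pipage rounding and cost.} Apply pipage rounding with granularity $\delta$. Expectations are preserved and laminar totals are rounded to a neighboring multiple of $\delta$, so every $y'_i$ is a multiple of $\delta$. For the opening bound, apply Lemma~\ref{lem:pipage-concentration} with $a_i=1$ to the unstructured part. Since $\E[y'(F)]\le k+2$, the deviation needed is $t=1$. Here $M=\delta$ and $v=\delta\,y(F)$, and $y(F)$ can be large. So I would rather apply concentration only to the part not fixed by laminar sets, whose total is $O(1)$ in the relevant scale after the scaling saving, giving a bound of $\exp(-\Omega(1/\delta))$ or better, which is at least $\exp(-\Omega(\varepsilon^{-10\ell}))$.

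For the cost of a client $j$, the head total is preserved exactly, so $j$ always has $1-\varepsilon$ units within its head radius. The remaining $\varepsilon$ units come from the tail. Negative correlation~\eqref{eq:neg-corr} bounds the probability that the tail is underfilled, and in that event $j$ falls back to a facility at distance $O(\varepsilon^{-1})$ times its radius. Lemma~\ref{lem:capped-variance} and the weighted concentration bound, with $a_i=d(i,j)^p$, control the capped tail cost, so the probability times $\varepsilon^{-O(p)}$ stays $O(p\varepsilon)$ relative to the cost. This is where $\delta=\varepsilon^{14\ell+2}\ll\theta$ is used.

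\textbf{Main obstacle.} The hardest step is the unconditional cost bound for clients whose mass is concentrated, where the tail carries almost no cost and the fallback distance could dominate. I would handle this by ensuring the head alone always supplies $1-\varepsilon$ units. The missing $\varepsilon$ units are then charged at a distance at most $(1+O(\varepsilon))$ times the tail's farthest point, bounded through the next representative, with the concentration lemma making the probability of a large deficit negligible against $\varepsilon^{-O(p)}$. A second delicate point is reconciling the scaling of shared atoms with laminarity and the $k+3$ budget; I expect the $+3$ to absorb one unit each from rounding the heads, rounding the tails, and the concentration slack.
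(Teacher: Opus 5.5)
Your proposal has a genuine gap in how it treats non-representative clients. Filtering only gives $d(j,r(j))\le O(\varepsilon^{-1})R_j$, while $\cost_{\vecy}(j)$ can be as small as $R_j^p$ (for instance, all of $F_j$ at distance $R_j$). So serving $j$ ``through the ball of a nearby representative'' can cost $\varepsilon^{-\Theta(p)}\cost_{\vecy}(j)$. In $(a+b)^p\le(1+\varepsilon)^pa^p+O(1/\varepsilon)^pb^p$, the factor $O(1/\varepsilon)^p$ multiplies $d(j,r)^p$, which is not small compared with $\cost_{\vecy}(j)$. Discarding the opening outside representatives' heads and tails makes this loss unavoidable, since a client whose support lies there loses all of it. Nothing else protects such a client either, because a non-representative's head is not a laminar set.

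The paper discards nothing: degree-$0$ atoms form the laminar set $\calE_0$. Its key estimate is the additive ball-deficit bound of Lemma~\ref{lem:local-deficit}, $\E[(1-y'(S))_+]\le(1-y(S))_+ +3\varepsilon\theta$ for \emph{every} $S\subseteq\calU$. The proof bounds the mean loss of a fixed sub-mass of $S$ by $O(\varepsilon\theta)$ and its standard deviation by $3\sqrt\delta\le3\varepsilon\theta$. In the layer-cake integral, this bound covers all radii $t\le t_0=(d(j,r)+(1+\xi)R_r)/\varepsilon$, so $j$ is served by its own rounded facilities there. The representative enters only for $t\ge t_0$, where $d(j,r)\le\varepsilon t$. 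The accumulated loss $3\varepsilon\theta t_0^p$ is $O(\varepsilon)\cost_{\vecy}(j)$ precisely because $\theta=\varepsilon^{2\ell+1}$.

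The facility count also fails as proposed. The saving that pays for the $(1+\theta)$ scale-up cannot come from discarded mass: that mass may be zero, while $\theta$ times the shared tail mass can be of order $\theta\varepsilon k$, which is unbounded in $k$. So $y''(F)\le k+2$ is false in general. Tails overlap one another and other representatives' heads, so heads and tails cannot all lie in one laminar family. Moreover, the part left to concentration is not $O(1)$. It is the redundancy $A=\sum_e(d_e-1)y_e$ of shared atoms (plus a virtual-atom term), which can be $\Theta(\varepsilon k)$.

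The missing idea is an \emph{adaptive} head target $Y_r=1-\min\{1/2,y'(T_r)\}$, with heads rounded after the atoms. An atom lying in $d_e$ tails then shrinks $d_e$ heads when it opens. Scaling it by $1+\theta$ therefore raises the expectation of the rounded redundancy $A'$ to $(1+\theta)A$. This is an expected net saving of $\theta(d_e-1)y_e$ per physical atom, and $\theta d_ey_e$ per virtual atom of some $T_r\cap H_{r'}$, which never enters the output. The surplus $\theta A$ dominates the $O(\sqrt\delta A)$ fluctuation of $A'$ once atoms lying in too many tails are duplicated, so that each has coefficient $O(A)$, at a cost of less than $3/2$ extra opening. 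This is where $\theta^2/\delta=\varepsilon^{-10\ell}$ comes from.

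The same adaptive head, together with reallocating virtual atoms into heads with $Y_{r'}\ge1/2$, gives every representative an exact unit $\nu_r$ in every outcome. That unit drives Lemma~\ref{lem:radial} and replaces your unsupported fallback ``bounded through the next representative''. Finally, atoms must have size at most $\delta/(1+\theta)$ so that they round to $\{0,\delta\}$. Atoms of size $\theta$ are already multiples of $\delta$, so pipage rounding would not touch them.
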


% \gptremovedfile{removed-text/preprocessing-01}

\paragraph{Overview.}
For each client $j$, we split its nearest unit of opening into a
\emph{head}, the nearest $1-\varepsilon$ units, and a \emph{tail}, the
next $\varepsilon$ units. Filtering selects representatives
$\mathcal R$ whose heads are pairwise disjoint and well separated, so
the heads can be rounded independently of one another. Tails, however,
may overlap each other and the heads of other representatives. We
therefore replace tail mass by tiny \emph{atoms}. Physical atoms are
pieces of facilities outside all representative heads. Virtual atoms
replace the part of a tail that lies inside another representative's
head; they later serve as bookkeeping and are not part of the output.
All atoms are rounded by a single pipage rounding, and afterwards each
head is rescaled so that it forms one unit together with its rounded
tail.
Opening an atom that lies in $d\ge2$ tails serves all $d$ tails. We
exploit this by scaling up such atoms, and all virtual atoms, by a
factor $1+\theta$ before rounding. Scaling up an atom $e$ increases its
own opening by $\theta y_e$ but decreases the incident heads by
$\theta d y_e$ in total. The net saving pays for the rounding errors
and keeps the total opening at most $k+O(1)$. Atoms that lie in too many
tails are first duplicated, so that each atom has a bounded influence on
this saving, as needed for concentration.

\paragraph{Parameters.}
The parameters $\theta=\varepsilon^{2\ell+1}$ and
$\delta=\varepsilon^{14\ell+2}$ are chosen so that the following hold.
(i) $\theta^2/\delta=\varepsilon^{-10\ell}$, which gives the
concentration in Lemma~\ref{lem:wa}.
(ii) $\sqrt\delta=\varepsilon^{7\ell+1}\le\varepsilon\theta$, so the
fluctuation term in Lemma~\ref{lem:local-deficit} is dominated by its
bias term $O(\varepsilon\theta)$.
(iii) In Lemma~\ref{lem:connection-cost}, the additive loss is
$O(\varepsilon\theta)t_0^p$, where
$t_0^p\le(1+O(\varepsilon))^p\varepsilon^{-2p}R_j^p
\le (1+O(p\varepsilon))\varepsilon^{-2p-1}\cost_{\vecy}(j)$. Since $\ell\ge p$, this
loss is at most $\varepsilon(1+O(p\varepsilon))\cost_{\vecy}(j)$.

\subsection{Algorithm}

% \gptremovedfile{removed-text/preprocessing-02}

Algorithm~\ref{alg:preprocessing} summarizes the preprocessing procedure.
The heads, tails, atoms, and rounding inputs used in the pseudocode are
defined as we walk through the algorithm. 
\begin{algorithm}[htbp]
% \begin{gptadded}
\caption{Preprocessing the fractional opening vector}
\label{alg:preprocessing}
\begin{algorithmic}[1]
	  \Require{A solution $(\vecx, \vecy)$ to LP~\eqref{lp:k-clustering}, and sufficiently small constant $\varepsilon > 0$ with $1/(2\varepsilon)\in\mathbb N$}
	  \Ensure{A $\delta$-integral solution $\vecy' \in [0, 1]^F$, with $\delta := \varepsilon^{14\ell+2}$}
    \State\label{step:pre-heads} Split facilities into colocated pieces to construct $H_j,T_j$ and
    $R_j$ for every $j\in C$.
    \State\label{step:pre-filter} Run Algorithm~\ref{alg:filtering-quantile-radii} to obtain
    $\mathcal R$ and the representatives $r(j)$ for all $j\in C$.
    \State\label{step:pre-physical} For every $i\in F\setminus\bigcup_{r\in\mathcal R}H_r$,
    split $i$ into physical atoms and update its tail memberships.
    \State\label{step:pre-virtual} For every $i\in (\bigcup_{r\in\mathcal R}H_r)\cap(\bigcup_{r\in\mathcal R} T_r)$, split $i$ into virtual atoms and update its head and tail memberships
    \State\label{step:pre-influence} Replace high-influence atoms by low-influence copies,
    assigning each incident tail to one copy.
    \State\label{step:pre-partition} Partition the atoms (denoted as $\mathcal E$) into $\mathcal E_0,\mathcal E_1,
    \mathcal E_{\mathrm{rest}}$, and set
    $\mathcal E_1(r)\gets\mathcal E_1\cap T_r$.
    \State\label{step:pre-scale} Set $\bar y_e\gets(1+\theta)y_e$ for
    $e\in\mathcal E_{\mathrm{rest}}$, and $\bar y_e\gets y_e$ for all other atoms.
    \State\label{step:pre-round-atoms} Apply pipage rounding to $\bar{\vecy}_{|\mathcal E}$ with
    laminar family $\mathcal L:=\{\mathcal E_0,\mathcal E_1,\mathcal E_{\mathrm{rest}}\}
    \cup\{\mathcal E_1(r):r\in\mathcal R\}
    \cup\{\{e\}:e\in\mathcal E\}$ and granularity $\delta$, obtaining
    $\vecy'_{|\mathcal E}$.
    \For{every $r\in\mathcal R$}\label{step:pre-round-heads}
        \State\label{step:pre-head-target} Set $Y_r\gets1-\min\{1/2,y'(T_r)\}$.
        \State Apply pipage rounding to $\frac{Y_r}{1-\varepsilon}\vecy_{|H_r}$
        with laminar family $\{H_r\}\cup\{\{i\}:i\in H_r\}$ and granularity
        $\delta$, obtaining $\vecy'_{|H_r}$.
    \EndFor
    \State\label{step:pre-return} \Return the vector on original facilities obtained by aggregating
    the opening in heads and physical atoms, capping each total at $1$.
    Virtual atoms do not contribute to this vector.
\end{algorithmic}
% \end{gptadded}
\end{algorithm}

\paragraph{Step~\ref{step:pre-heads}: Heads and Tails}
For $j\in C$, let $H_j$ be the set of the closest $1-\varepsilon$ fraction of open facilities to $j$, and $T_j$ be the set of the next closest $\varepsilon$ fraction of open facilities to $j$. $H_j$ and $T_j$ are called the head and tail of $j$. By fixing tie-breaking and splitting facilities into colocated pieces, we assume that all heads and tails consist of whole pieces; we update $F$ and $(F_j)_{j\in C}$ to denote this refined set. 
More formally, after splitting for every $j \in C$, we have $H_j\cup T_j=F_j$, $y(H_j) = 1 - \varepsilon$ and $y(T_j) = \varepsilon$. For every $i \in H_j$ and $i' \in T_j$, we have $d(j, i) \leq d(j, i')$. For every $i \in T_j$ and  $i' \in F \setminus F_j$, we have $d(j, i) \leq d(j, i')$. We write $y^{\mathrm{orig}}$ for the opening of these facility pieces; it stays fixed, while the steps below further split pieces into atoms. Write $T_j^\mathrm{orig}$ for the tail at this point and keep $F$ and $y^{\mathrm{orig}}$ fixed. The changing atom sets below have their own indices; each atom records its original facility and location. The vector returned in Step~\ref{step:pre-return} is again indexed by the original facilities, as in \Cref{thm:preprocessing}.

\paragraph{Step~\ref{step:pre-filter}: Filtering}
Define $R_j:=\max_{i\in H_j}d(i,j)$. We use \Cref{alg:filtering-quantile-radii} to select a representative set $\mathcal R$.
\begin{algorithm}[H]
\caption{Filtering by quantile radii}
\label{alg:filtering-quantile-radii}
\begin{algorithmic}[1]
    \State $C' \gets C,\ \mathcal R \gets \varnothing$;
    \While{$C' \neq \varnothing$}
        \State choose $r \in C'$ with minimum $R_r$;
        \State $\mathcal R \gets \mathcal R \cup \{r\}$;
        \State assign to $r$ and remove from $C'$ every client $j$
        satisfying $d(j,r) \leq \varepsilon^{-1} R_j$;
    \EndWhile
    \State \Return $\mathcal R$ and the representative $r(j)$ assigned to each $j\in C$;
\end{algorithmic}
\end{algorithm}

For any client $j\in C$, let $r(j)$ be the representative that removes it in the algorithm.

\begin{observation}\label{obs:filtering}
    For any client $j\in C$ and $r=r(j)$, we have 
    \[R_j^p\le\varepsilon^{-1}\cost_{\vecy}(j),\qquad R_r\le R_j,\qquad d(j,r)\le\varepsilon^{-1}R_j.\]
\end{observation}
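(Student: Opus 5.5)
The observation has three parts, and I would prove them in the order stated.

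\emph{First inequality.} I would use only the head/tail split of $F_j$. Since $y(H_j)=1-\varepsilon$ and every facility of $T_j$ is at distance at least every head distance, every $i\in T_j$ satisfies $d(i,j)\ge R_j$. Dropping the head contribution gives
\[
\cost_{\vecy}(j)=\sum_{i\in F_j}y_i d(i,j)^p\ge\sum_{i\in T_j}y_i d(i,j)^p\ge y(T_j)R_j^p=\varepsilon R_j^p,
\]
which is $R_j^p\le\varepsilon^{-1}\cost_{\vecy}(j)$.

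\emph{Remaining two inequalities.} These come from the mechanics of \Cref{alg:filtering-quantile-radii}. The loop terminates only when $C'$ is empty, and each iteration removes at least the chosen $r$ itself, because $d(r,r)=0\le\varepsilon^{-1}R_r$. Hence every client $j$ is removed in exactly one iteration, and $r(j)$ is well defined.

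Consider the iteration in which $r=r(j)$ is chosen. At that moment $j$ is still in $C'$, since it is removed in this very iteration. As $r$ minimizes $R$ over $C'$, this gives $R_r\le R_j$. The removal rule of that iteration is precisely $d(j,r)\le\varepsilon^{-1}R_j$, so the third inequality holds.

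No step here is a real obstacle. The only point needing care is the tail bound $d(i,j)\ge R_j$ for $i\in T_j$, which uses the consistent tie-breaking and colocated splitting from Step~\ref{step:pre-heads}. That step guarantees $T_j$ consists of whole pieces, all at least as far from $j$ as every piece of $H_j$.
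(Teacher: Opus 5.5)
Your proposal is correct and follows the paper's own argument. You drop the head contribution and use that the tail has mass $\varepsilon$ at distance at least $R_j$, which gives the first inequality. You then read $R_r\le R_j$ and $d(j,r)\le\varepsilon^{-1}R_j$ directly off the filtering rule, with a little extra care about $r(j)$ being well defined.
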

\begin{proof}
    The last $\varepsilon$ units of the nearest one unit of opening to $j$ have distance at least $R_j$. In other words, \( \cost_{\mathbf y}(j)\ge \sum_{i\in T_j} d(i,j)^py_i\ge \varepsilon R_j^p \).
    The inequalities $R_r\le R_j$ and $d(j,r)\le\varepsilon^{-1}R_j$ follow from the filtering condition.
\end{proof}

\begin{observation}[Disjoint Heads]\label{obs:disjoint_heads}
    The heads $\{H_r: r\in \calR\}$ are pairwise disjoint.
\end{observation}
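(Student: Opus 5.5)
We argue by contradiction, using the filtering rule of \Cref{alg:filtering-quantile-radii} together with the triangle inequality. Suppose two distinct representatives $r,r'\in\mathcal R$ have heads that share a facility piece $i\in H_r\cap H_{r'}$. By the definition of $R_r$ and $R_{r'}$ as the largest head distances, we have $d(i,r)\le R_r$ and $d(i,r')\le R_{r'}$. The triangle inequality then gives
\[
d(r,r')\le d(r,i)+d(i,r')\le R_r+R_{r'}.
\]

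Without loss of generality, let $r$ be the one chosen first by the filtering loop. At the moment $r$ is chosen, $r'$ is still in $C'$, since it is later selected as a representative; hence $R_r\le R_{r'}$ by the minimality rule. The bound above therefore becomes $d(r,r')\le 2R_{r'}$. Since $\varepsilon\le 1/8$, we have $2\le\varepsilon^{-1}$, so
\[
d(r',r)\le \varepsilon^{-1}R_{r'}.
\]
This is exactly the removal condition with $j=r'$. Thus $r'$ would have been assigned to $r$ and removed from $C'$ in the iteration where $r$ was selected, so $r'$ could never become a representative. This contradiction shows that the heads are pairwise disjoint.

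Two points need a little care. First, the ordering argument needs only that $r'$ is still in $C'$ when $r$ is picked, which holds because $r'$ is picked later. Second, ``sharing a facility'' must be read at the level of the colocated pieces after the splitting in Step~\ref{step:pre-heads}. The argument uses only a common location $i$ and the distance bounds $d(i,r)\le R_r$ and $d(i,r')\le R_{r'}$, so it applies verbatim to pieces. I expect no real obstacle beyond making the order of selection explicit.
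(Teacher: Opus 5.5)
Your proof is correct and follows essentially the same argument as the paper: order the two representatives so that $R_r\le R_{r'}$, use the triangle inequality through the common piece to get $d(r,r')\le 2R_{r'}\le\varepsilon^{-1}R_{r'}$, and conclude that $r'$ would have been removed when $r$ was selected. The only cosmetic difference is that you get $2\le\varepsilon^{-1}$ from $\varepsilon\le 1/8$, whereas the paper uses $1/(2\varepsilon)\in\mathbb N$; both are hypotheses of Theorem~\ref{thm:preprocessing}.
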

\begin{proof}
    Let $r$ and $r'$ be distinct representatives in $\mathcal R$ such that $r'$ is chosen after $r$ by \Cref{alg:filtering-quantile-radii}. Then $R_{r}\le R_{r'}$, and since $r'$ is not removed when $r$ is selected, $d(r,r')>\varepsilon^{-1} R_{r'}$.
    Assume toward contradiction that there exists $i\in H_r\cap H_{r'}$. By triangle inequality, \[
        d(r,r')\le d(r,i)+d(i,r')\le R_r+R_{r'}\le 2R_{r'}.
    \]
    However, from the statement of \Cref{thm:preprocessing}, we have $1/(2\varepsilon)\ge 1$. Hence, $d(r,r')\le \varepsilon^{-1}R_{r'}$, which is a contradiction. Therefore, $H_r\cap H_{r'}=\varnothing$.
\end{proof}
    
\paragraph{Steps~\ref{step:pre-physical} and~\ref{step:pre-virtual}: Atomization} Then we split some facilities into \textit{atoms}, which are facilities with tiny $y$ values. The purpose of this process is to ensure that \(y_i\) is sufficiently small for all \(i \in F\) for the pipage rounding with granularity $\delta$. We construct them as follows:
\begin{itemize}
    \item \textbf{Physical Atoms:} For any facility $i\in F$ that does not belong to any $H_r$ for $r\in \calR$, we split $i$ into $m$ colocated \emph{physical atoms} $\{i_1,\ldots,i_m\}$ for some $m$ such that $y_{i_1}+\cdots+y_{i_m}=y_i$ and $y_{i_o}\le \frac{\delta}{1+\theta}$ for every $o\in [m]$.
    Note that the facility $i$ is replaced by these $m$ atoms, and they inherit its tail membership.
    % \item For any facility $i\in F$ that does not belong to any $H_r$ for $r\in \mathcal R$, we divide $i$ into $m$ colocated atoms $\{i_1,\ldots,i_m\}$ for some $m$, such that $y_{i_1}+\cdots+y_{i_m}=y_i$ and $y_{i_o}\le \frac{\delta}{1+\theta}$ for every $o \in [m]$. We also replace $i$ in $F$ and any $T_{r'}$ with the atoms $\{i_1,\ldots,i_m\}$. The atoms created in this case are called \textit{physical atoms} since they are split from a physical facility $i$.

    \item \textbf{Virtual Atoms:} %\pvmargin{Change in virtual atoms} 
    For $i\in F$ that belongs to some $T_r\cap H_{r'}$ for some representatives $r\ne r'$, we split each facility $i$ into $m$ colocated \emph{virtual atoms} such that $y_{i_1}+\cdots+y_{i_{m}}=y_i$ and $y_{i_o}\le \frac{\delta}{1+\theta}$ for every $o\in [m]$. Note that each facility $i$ is replaced by these $m$ atoms, and they inherit its head and tail membership.

    % \item For any pair of representatives $r,r'\in \mathcal R$ with $T_r\cap H_{r'}\ne\varnothing$, we create $m$ atoms $\{i_1,\ldots,i_m\}$ \del{collocated}\add{colocated} with $r'$ for some $m$, such that $y_{i_1}+\cdots+y_{i_m}=y(T_r\cap H_{r'})$ and $y_{i_o}\le \frac{\delta}{1+\add{\theta}}$ for every $o \in [m]$.  \add{Then, we remove $T_r \cap H_{r'}$ from $T_r$, and add the $m$ atoms to $T_r$ and $F$. This does not change $y(T_r)$. The atoms created in this case are called \textit{virtual atoms}.} We remark that virtual atoms will not contribute to the final opening.
\end{itemize}

The key distinction between physical and virtual atoms is that the opening of a virtual atom can be reassigned to a facility in its associated head, whereas the opening of a physical atom cannot. Specifically, the algorithm performs pipage rounding in two stages: first on the tails and then on the heads. Since virtual atoms belong to both components, even if a virtual atom in $T_r \cap H_{r'}$ is opened during the first stage on the tails, we can instead use any facility in $H_{r'}$ by reallocating to that facility. Note that, when performing pipage rounding on the heads, we may merge the virtual atoms back into their original facilities. Consequently, virtual atoms do not contribute to the final facility openings. Although this reassignment incurs an additional connection cost, the resulting increase can be bounded by a multiplicative factor of $1+O(\varepsilon)$ for any representative $r$ (see \Cref{lem:virtual_distance}).

\begin{lemma}\label{lem:virtual_distance} %\pvmargin{This is moved out from \Cref{obs:filtering}}
    For distinct representatives $r,r'\in \calR$ and any $u,v\in H_{r'}$, \[
        d(r,v)\le (1+\xi)d(r,u) \qquad \text{where } \xi:=\frac{2}{\varepsilon^{-1}-1}=O(\varepsilon).
    \]
\end{lemma}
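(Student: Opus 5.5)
The plan is to use the triangle inequality through $r'$, so that both $d(r,v)$ and $d(r,u)$ are close to $d(r,r')$, and then to bound the error term using the separation guaranteed by filtering. For $u,v\in H_{r'}$ we have $d(r',u),d(r',v)\le R_{r'}$ by the definition of $R_{r'}$. The triangle inequality then gives
\[
  d(r,v)\le d(r,r')+R_{r'}
  \qquad\text{and}\qquad
  d(r,u)\ge d(r,r')-R_{r'}.
\]
It therefore suffices to show that $R_{r'}\le d(r,r')/\varepsilon^{-1}$, that is, $d(r,r')\ge\varepsilon^{-1}R_{r'}$. Granting this and writing $D:=d(r,r')$, we obtain
\[
  \frac{d(r,v)}{d(r,u)}\le\frac{D+R_{r'}}{D-R_{r'}}
  \le\frac{1+\varepsilon}{1-\varepsilon}
  =1+\frac{2}{\varepsilon^{-1}-1}=1+\xi .
\]
Here the denominator is positive since $\varepsilon\le1/8$, and the middle inequality holds because $(D+R)/(D-R)$ is decreasing in $D$ for $D>R$.

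The separation bound comes from the filtering rule in Algorithm~\ref{alg:filtering-quantile-radii}, applied exactly as in the proof of Observation~\ref{obs:disjoint_heads}. The main step needing care is the order in which the two representatives were selected.

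Suppose first that $r$ is selected before $r'$. Then $r'$ survived removal at $r$'s step, so $d(r,r')>\varepsilon^{-1}R_{r'}$ directly.

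Suppose instead that $r'$ is selected before $r$. Then $r$ survived removal at $r'$'s step, which gives $d(r,r')>\varepsilon^{-1}R_r$. This is a bound in terms of $R_r$, not $R_{r'}$. However, the greedy choice by minimum radius gives $R_{r'}\le R_r$, since $r$ was still in $C'$ when $r'$ was chosen. Hence $d(r,r')>\varepsilon^{-1}R_r\ge\varepsilon^{-1}R_{r'}$ in this case as well.

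In both cases $d(r,r')>\varepsilon^{-1}R_{r'}$, and the displayed computation completes the proof. The identity $\xi=O(\varepsilon)$ is immediate from $\varepsilon\le 1/8$.
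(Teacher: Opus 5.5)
Your proof is correct and follows essentially the paper's argument. You first obtain the separation $d(r,r')>\varepsilon^{-1}\max\{R_r,R_{r'}\}$ from the filtering rule, and your two-case analysis of the selection order is exactly what the paper compresses into one line. You then apply the triangle inequality through $r'$.

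The difference is cosmetic. The paper writes $d(r,v)\le d(r,u)+2R_{r'}$ with $d(r,u)>(\varepsilon^{-1}-1)R_{r'}$ and treats $R_{r'}=0$ separately. Your bound $(D+R_{r'})/(D-R_{r'})\le(1+\varepsilon)/(1-\varepsilon)$ is equivalent to $R_{r'}\le\varepsilon D$, so it covers that case automatically. When $R_{r'}=0$, positivity of $D-R_{r'}$ comes from the strict separation, not from $\varepsilon\le1/8$.
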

\begin{proof}
    Assume that representative $r_1$ is chosen before representative $r_2$ by \Cref{alg:filtering-quantile-radii}. Because $r_2$ is not removed by $r_1$, it implies $d(r_1,r_2)>\varepsilon^{-1}R_{r_2}=\varepsilon^{-1}\max(R_{r_1},R_{r_2})$.
    If $R_{r'}=0$, then all locations in $H_{r'}$ coincide, and $d(r,v)=d(r,u)$. Otherwise, by triangle inequality, $d(r,u)\ge d(r,r')-d(r',u)>(\varepsilon^{-1}-1)R_{r'}$, and 
    \begin{align*}
        d(r,v)& \le d(r,u)+d(u,v)\le d(r,u)+2R_{r'} \\
        & \le \left(1+\frac{2}{\varepsilon^{-1}-1}\right)d(r,u). \qedhere
    \end{align*}
\end{proof}

% For the last assertion, \add{the separation above gives $d(r,r')>\varepsilon^{-1}R_{r'}$ in either order.} \del{if}\add{If} $R_{r'}=0$, then all locations in $H_{r'}$ coincide, and there is no distance increase. Otherwise, we have $d(r,u)\ge d(r,r')-d(r',u)>(\varepsilon^{-1}-1)R_{r'},$ and

% Figure~\ref{fig:atom-construction} illustrates the construction before splitting high-influence atoms.

\paragraph{Step~\ref{step:pre-influence}: Splitting High-Influence Atoms}
Let $\mathcal E_\phys$ and $\calE_\virt$ denote the set of all physical and virtual atoms, respectively. Let $\calE$ denote the set of all atoms, i.e, $\calE=\calE_\phys\cup \calE_\virt$. We further partition $\calE$ into
\[
    \calE_0:= \{e\in \calE_\phys: d_e=0\}, \quad
    \calE_1:= \{e\in \calE_\phys: d_e=1\}, \quad
    \calE_\rest:=  \calE\setminus (\calE_0\cup\calE_1).
\]

For each atom $e$, let $d_e$ denote the number of tails containing $e$, i.e., $d_e:=|{r\in\mathcal R: e \in T_r}|$.
Let $F_0$ denote the updated set $F$, which contains all atoms and head facilities. Note that $F_0$ contains each virtual atom only once: it is regarded as a virtual atom from the perspective of its associated tails and as a head facility from the perspective of its associated head. Observe that
\begin{equation*}
    y(F)=y(F_0)=\sum_{r\in\mathcal R} y(F_r) +\sum_{e\in \mathcal E_\phys:d_e=0} y_e -\sum_{e\in \mathcal E_\phys: d_e\ge 1} (d_e-1)y_e -\sum_{e\in \mathcal E_\virt} d_ey_e.
\end{equation*}
The second equality follows by accounting for atoms that are omitted or counted multiple times in $\sum_{r\in\mathcal R} y(F_r)$. Specifically, this sum omits all physical atoms outside $\bigcup_{r\in\mathcal R} F_r$, namely those with $d_e=0$. Moreover, each physical atom $e$ with $d_e\ge 1$ is counted $d_e$ times, once for each tail containing it, while each virtual atom $e$ is counted $d_e+1$ times: $d_e$ times through the tails and once through its unique head, where uniqueness follows from \Cref{obs:disjoint_heads}.

We denote the total redundant amount by \[
    A_0:= \sum_{e\in\mathcal E_\phys: d_e\ge 2} (d_e-1)y_e
    +\sum_{e\in\mathcal E_{\virt}} \del{d_e}y_e.
\]
Consequently,
\begin{equation}\label{eq:y0-formula}
    k\ge y(F_0)=|\mathcal R| + y(\calE_0) -A_0.
\end{equation}
% Let
% \[A_0:=\sum_{e\text{ physical atom, }d_e\ge 2}(d_e-1)y_e+\sum_{e\text{ virtual atom}}y_e.\]

% \del{We take a detour to briefly explain the coefficients $d_e - 1$ for physical atoms with $d_e \geq 2$ and $1$ for virtual atoms in the definition of $A_0$. In Step~\ref{step:pre-scale} of Algorithm~\ref{alg:preprocessing}, scale up the $y$-value of such an atom $e$ (i.e, atoms in $\mathcal{E}_\text{rest}$) by $(1+\theta)$ for a tiny $\theta  = \varepsilon^{2\ell + 1}$ to $y_e + \theta y_e$.  To maintain that the total $y$-value of the head and tail for any $r \in \mathcal R$ to be $1$, we shall decrease the total $y$-value of heads whose correspondent tails contain $r$. This will increase the total $y$ value of heads by $\theta d_e y_e$. However, as we do not consider the openings of virtual atoms, the net decrease of $y$-value given by the operation is $\theta (d_e - 1)y_e$ if $e$ is a physical atom with $d_e \geq 2$, and $\theta y_e$ if $e$ is a virtual atom.}
% \add{The coefficients in $A_0$ measure the saving described in the overview. Step~\ref{step:pre-scale} scales up each physical atom $e$ with $d_e\ge2$ and each virtual atom $e$ by a factor $1+\theta$. This decreases the heads of the $d_e$ tails containing $e$ by $\theta d_e y_e$ in total, ignoring the threshold $1/2$, which is handled by $\Delta_1$ below. The opening of a physical atom increases by $\theta y_e$, whereas virtual atoms are not part of the output. The net saving is therefore $\theta(d_e-1)y_e$ for a physical atom with $d_e\ge2$ and $\theta y_e$ for a virtual atom.}

\begin{definition}[High Influence]
    A physical atom $e$ has \emph{high influence} if $d_e-1> A_0$. A virtual atom $e$ has high influence if $d_e> A_0$. Otherwise, it has low influence.
\end{definition}

For our analysis, particularly the application of Lemma~\ref{lem:pipage-concentration} in \Cref{lem:wa}, we need to bound $d_e$ for every atom $e$. To this end, when $A_0\ge2$, we split each high-influence atom $e$ into $m$ copies $e_1,\ldots,e_m$, each retaining the original opening value $y_{e_i}=y_e$.

Specifically, let $q:=\lfloor A_0\rfloor$ and $m:=\left\lceil d_e/q \right\rceil$.
If $A_0\ge 2$, we partition the $d_e$ tails containing high influence atom $e$ into $m$ nonempty groups, each of size at most $q$, and assign a distinct copy $e_i$ to each group. Consequently, every copy satisfies \[d_{e_i} \le A_0.\]
Each copy inherits the type of the original atom: copies of physical atoms remain physical, while copies of virtual atoms remain virtual.

% A physical atom $e$ is called high-influence if $d_e-1> A_0$ (otherwise low-influence). We \del{replace}\add{split} each high-influence physical atom $e$ into several colocated low-influence physical atoms $\{e_1,\ldots,e_m\}$ for some $m$,  such that $y_{e_o}=y_e$ for every $o \in [m]$, and each of the $d_e$ tails that contains $e$ now contains exactly one of the $e_o$'s. Specifically, let $q:=\lfloor A_0\rfloor+1$, and partition the $d_e$ tails into $m=\lceil d_e/q\rceil$ nonempty groups of size at most $q$, and use one copy per group. That is, for any tail in the group, we replace the $e$ in the tail with the copy.  After this step, all physical atoms are low-influence. \add{Indeed, every copy has degree at most $q$, so $d_e-1\le\lfloor A_0\rfloor\le A_0$ for every physical atom $e$. This bounds the contribution of each atom to the sum $A'$ defined below, which is needed for concentration in Lemma~\ref{lem:wa}.}

Let $\calE'_\phys$ and $\calE'_\virt$ denote the sets of physical and virtual atoms, respectively, after this step. For each $r\in \calR$, let $F'_r$ be obtained from $F_r$ by replacing each high-influence atom with the copy assigned to the tail of $r$. Note that $y(F'_r)=y(F_r)=1$. Let $\Gamma$ be the total amount of openings by physical atoms and facilities in head. Hence,
\begin{align}
    \Gamma = & y(\mathcal E'_\phys)+\sum_{r\in \mathcal R} y(H_r) \nonumber \\
    = & \sum_{r\in\mathcal R} y(F'_r) +y(\mathcal E_0) -\sum_{e\in \mathcal E'_\phys:d_e\ge2} (d_e-1)y_e -\sum_{e\in \mathcal E'_\virt} d_ey_e\nonumber \\
    = & |\mathcal R| +y(\mathcal E_0)-A, \label{eq:y1-formula}
\end{align}
where \[
    A:= \sum_{e\in \mathcal E'_\phys:d_e\ge2} (d_e-1)y_e +\sum_{e\in \mathcal E'_\virt} d_ey_e.
\]
For the second equality, $\sum_{r\in \calR} y(F'_r)$ counts each atom $e$ for $d_e$ times. Note that every atom outside $\mathcal E_0$ belongs to at least one tail. To ensure that each physical atom is counted exactly once and each virtual atom is excluded, we subtract $(d_e-1)y_e$ for each physical atom and $\del{d_e}y_e$ for each virtual atom.

% \del{Let $A$ be the value of the expression defining $A_0$, evaluated on the atoms and their degrees after Step~\ref{step:pre-influence}; $A_0$ remains its value before that step.}\add{Let $A$ denote the value of the expression defining $A_0$, evaluated on the atoms and their degrees after Step~\ref{step:pre-influence}; thus $A_0$ and $A$ are the values before and after this step, respectively.} The splitting will increase the total $y$ value of physical atoms, and let $\Delta_0$ be the total increase.

Define $\Delta_0$ as the increase of $y$ value of physical atoms.
\begin{lemma}\label{lem:Delta-equality}
	$\Delta_0 = A_0 - A$.
\end{lemma}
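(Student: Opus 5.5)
The plan is a direct atom-by-atom comparison. Both $A_0$ and $A$ are sums of per-atom contributions, so it suffices to track how each atom's contribution changes in Step~\ref{step:pre-influence}. We then match that change against the increase in physical opening that the step causes. Throughout, I read the virtual-atom coefficient as $d_ey_e$ in both $A_0$ and $A$. This is the coefficient appearing in \eqref{eq:y0-formula} and in the definition of $A$, and it is the reading under which \eqref{eq:y0-formula} and \eqref{eq:y1-formula} are consistent.

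First, the trivial case. If $A_0<2$, no atom is split. Then $A=A_0$, the physical opening is unchanged, and so $\Delta_0=0=A_0-A$. Assume now $A_0\ge2$. Atoms of low influence are untouched, so their contributions to $A_0$ and $A$ agree. They also contribute nothing to $\Delta_0$. This covers in particular every atom with $d_e=0$, since $A_0\ge0$ means such an atom is never high-influence.

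Next, a high-influence virtual atom $e$. It is replaced by $m$ copies, each with opening $y_e$, whose degrees $d_{e_1},\dots,d_{e_m}$ sum to $d_e$, because the groups partition the $d_e$ incident tails. Its contribution to $A$ is therefore $\sum_i d_{e_i}y_e=d_ey_e$, which equals its contribution to $A_0$. Virtual copies do not count as physical opening, so the contribution to $\Delta_0$ is $0$ as well.

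Finally, a high-influence physical atom $e$. Here $d_e-1>A_0\ge2$, so $e$ did contribute $(d_e-1)y_e$ to $A_0$. Each copy has $d_{e_i}\ge1$ because the groups are nonempty. Hence the copies' total contribution to $A$, namely the sum of $(d_{e_i}-1)y_e$ over copies with $d_{e_i}\ge2$, can be rewritten as a sum over all copies: copies with $d_{e_i}=1$ contribute $0=(d_{e_i}-1)y_e$. This gives
\[
\sum_{i=1}^m (d_{e_i}-1)y_e=(d_e-m)y_e .
\]
So this atom decreases $A$ relative to $A_0$ by exactly $(m-1)y_e$. Replacing one physical atom of opening $y_e$ by $m$ copies of opening $y_e$ increases the physical opening by exactly $(m-1)y_e$. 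Summing over all atoms then yields $\Delta_0=A_0-A$.

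The only delicate point is bookkeeping rather than estimation. One must check that copies of degree $1$ cause no discrepancy, even though they leave the ``$d_e\ge2$'' index set of the sum. One must also check that the original atom was in that index set to begin with. Both follow from the nonempty groups and from $d_e>A_0+1\ge3$, respectively.
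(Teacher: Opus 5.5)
Your argument is correct, but it is organized differently from the paper's.

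The paper argues globally. It notes that $\Delta_0=\Gamma-y(F_0)$, the total opening of physical atoms and head facilities after versus before Step~\ref{step:pre-influence}. It then subtracts the counting identities \eqref{eq:y0-formula} and \eqref{eq:y1-formula}, in which $|\mathcal R|$ and $y(\mathcal E_0)$ cancel.

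You argue atom by atom:
\begin{itemize}
\item low-influence atoms contribute the same amount to $A_0$ and to $A$;
\item a split virtual atom contributes $\sum_i d_{e_i}y_e=d_ey_e$ to both $A_0$ and $A$, and nothing to $\Delta_0$;
\item a split physical atom drops from $(d_e-1)y_e$ to $(d_e-m)y_e$, while adding exactly $(m-1)y_e$ of physical opening.
\end{itemize}

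The paper's version is a one-liner because those identities are needed anyway for Lemma~\ref{lem:good-event}. It does, however, inherit all their multiplicity bookkeeping, including the fact that heads are untouched by the splitting.

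Your version is self-contained. It explicitly handles the two edge cases you flag: degree-one copies leaving the index set, and $d_e\ge2$ for the original atom. It also directly gives $\Delta_0=\sum_{e\in\mathcal H}(m_e-1)y_e$, which is the formula used in the proof of Lemma~\ref{lem:delta0}.

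Your reading of the virtual-atom coefficient in $A_0$ as $d_ey_e$ is the right one. It is what the display preceding \eqref{eq:y0-formula} forces. With coefficient $y_e$ instead, the lemma would already fail without any splitting whenever some virtual atom lies in two or more tails.
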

\begin{proof}
    Because $\Delta_0=\Gamma-y(F_0)$, the statement is obtained by subtracting \eqref{eq:y0-formula} from \eqref{eq:y1-formula}.
\end{proof}
% \begin{proof}
% 	When we split a high-influence physical atom $\add{e}$ into $m \geq 2$ low-influence ones, we increase total $y$ value by $(m - 1)y_{\add{e}}$, and we decreased $\sum_{\add{f}\text{ physical atom}: d_{\add{f}} \geq 2}(d_{\add{f}} - 1) y_{\add{f}}$ by $(m-1)y_{\add{e}}$, as $\sum_{o=1}^m d_{\add{e_o}}=d_{\add{e}}$ and $d_{\add{e_o}} \geq 1$ for every $o \in [m]$.
% 	Hence the lemma holds.
% \end{proof}

\begin{lemma}\label{lem:delta0}
	We have $\Delta_0<3/2$. Moreover, if $A_0\ge2$, then
$\Delta_0\le A_0/2$.
\end{lemma}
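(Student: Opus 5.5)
The plan is to compute $\Delta_0$ atom by atom and compare each high-influence atom's contribution to $\Delta_0$ with its contribution to $A_0$. First the trivial case. By the construction in Step~\ref{step:pre-influence}, high-influence atoms are split only when $A_0\ge2$. If $A_0<2$, nothing changes, so $\Delta_0=0<3/2$ and the second assertion is vacuous. From now on assume $A_0\ge2$, and put $q=\lfloor A_0\rfloor\ge2$.

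Next I would identify $\Delta_0$ explicitly. $\Delta_0$ is the increase of the total $y$-value of physical atoms. Virtual copies do not enter this quantity, because $\Gamma$ counts only physical atoms and head facilities, and the heads are unchanged by the splitting. A high-influence physical atom $e$ is replaced by $m_e=\lceil d_e/q\rceil$ copies, each of value $y_e$. It therefore adds $(m_e-1)y_e$, so
\[
\Delta_0=\sum_{e\in\mathcal E_\phys\text{ high influence}}(m_e-1)\,y_e .
\]
This should agree with Lemma~\ref{lem:Delta-equality}, and I would use that as a sanity check. Each such $e$ satisfies $d_e-1>A_0\ge q\ge 2$, so in particular $d_e\ge2$. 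Hence $e$ contributes exactly $(d_e-1)y_e$ to the first sum in $A_0$.

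The key per-atom inequality is $\lceil d/q\rceil-1\le (d-1)/q$ for all positive integers $d,q$. To see it, write $d-1=aq+b$ with $0\le b<q$. Then $\lceil d/q\rceil=a+1$, so the left side is $a$, which is at most $(aq+b)/q$. Summing over high-influence physical atoms and dropping the remaining nonnegative terms of $A_0$ gives
\[
\Delta_0\le\frac1q\sum_{e}(d_e-1)y_e\le\frac{A_0}{q}.
\]
Since $q\ge2$, this yields $\Delta_0\le A_0/2$. Moreover $A_0<q+1$, so $\Delta_0<(q+1)/q\le 3/2$, which is the first assertion in this case.

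The only delicate point I expect is the bookkeeping of what $\Delta_0$ counts. I need to confirm that virtual high-influence atoms do not increase $\Gamma$, and that the copies inherit value $y_e$ rather than a split value. Both follow from the definitions above. The arithmetic itself is routine.
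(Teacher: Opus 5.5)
Your proof is correct and is essentially the paper's argument: both write $\Delta_0=\sum_e(m_e-1)y_e$ over high-influence physical atoms, use $m_e-1=\lceil d_e/q\rceil-1=\lfloor (d_e-1)/q\rfloor\le (d_e-1)/q$ to get $\Delta_0\le A_0/q$, and conclude from $q=\lfloor A_0\rfloor\ge 2$ and $A_0<q+1$. The case $A_0<2$ is handled the same way in both: there is no splitting, so $\Delta_0=0$.
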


\begin{proof}
	If $A_0<2$, then $\Delta_0=0$. Otherwise, suppose a
high-influence physical atom $e$ is divided into $m_e$
colocated atoms. By the grouping rule, we have 
\[
m_e-1
=\left\lfloor\frac{d_e-1}{q}\right\rfloor.
\]
Let $\mathcal H$ denote the set of all high-influence physical
atoms. Therefore,
\[
\Delta_0
=\sum_{e\in\mathcal H}(m_e-1)y_e
\le \frac1q\sum_{e\in\mathcal H}(d_e-1)y_e
\le \frac{A_0}{q}.
\]
Since $q=\lfloor A_0\rfloor\ge2$ and $A_0<q+1$, we have
$\Delta_0\le A_0/2$ and
$\Delta_0<1+1/q\le3/2$.
\end{proof}

\paragraph{Step~\ref{step:pre-scale}: Scaling the $y$-Values of Some Atoms}
% Let $\mathcal E$ be the set of all atoms we constructed so far. Let $\mathcal E_0,\mathcal E_1$ be the set of physical atoms with degree $0$ and $1$, respectively.
Let $\mathcal E_1(r) = \mathcal E_1 \cap T_r$ for every $r \in \mathcal R$. Therefore, $(\mathcal E_1(r))_{r \in \mathcal R}$ forms a partition of $\mathcal E_1$ as every atom in $\mathcal E_1$ appears in exactly one $T_r$.
Recall that $\mathcal E_{\text{rest}}$ is the set of physical atoms with degree at least $2$ and virtual atoms. % Then $\mathcal E:=\mathcal E_0\cup\mathcal E_1\cup\mathcal E_{\text{rest}}$.

For $e\in\mathcal E_{\text{rest}}$, let $\bar y_e\leftarrow (1+\theta)y_e$. For $e\in\mathcal E_0\cup\mathcal E_1$, let $\bar y_e\leftarrow y_e$. Note that $\bar y_e\le\delta = \varepsilon^{14\ell + 2}$ for every $e \in \mathcal E$, since every atom was created with $y_e\le\delta/(1+\theta)$ and all copies keep this value.

\paragraph{Step~\ref{step:pre-round-atoms}: Rounding Atoms (i.e., Tails)}
Consider the following laminar family
\[\mathcal L=\{\mathcal E_0,\mathcal E_1,\mathcal E_{\text{rest}}\}\cup \left\{\mathcal E_1(r)\mid r\in\mathcal R\right\}\cup\left\{\{e\}\mid e\in\mathcal E\right\}.\]
We apply pipage rounding to $\bar{\vecy}_{|\mathcal E}$ with laminar
family $\mathcal L$ and granularity $\delta$, as defined in
the preliminaries, to obtain $\vecy'$ over $\mathcal E$.
So, for any $L\in\mathcal L$, we have
\[y'(L)\in\{\delta\lfloor \bar y(L)/\delta\rfloor,\delta\lceil \bar y(L)/\delta\rceil\}.\] Particularly, $y'_e\in\{0,\delta\}$ for any atom $e$.

Define \[
    A':=\sum_{e\in \calE'_\phys:d_e\ge2}(d_e-1)y'_e+\sum_{e\in \calE'_\virt}d_ey'_e.
\]
% The definition is the same as that of $A_0$ and $A$, but for the rounded vector $y'$.
For the following lemma, observe that without scaling up on $\calE_\rest$, we would have $\E[A']=A$. Consequently, the desired event $A'\ge\left(1+\frac{\theta}{2}\right)A$ would require $A'\ge \E[A']+\frac{\theta}{2}A$, which does not occur with sufficiently high probability. This motivates scaling up $\vecy$ to $\bar \vecy$, so that we can expect $A'$ to be slightly larger than $A$.
\begin{lemma}\label{lem:wa}
	% Assume $A_0>8$, then $A_0\le 2A$ and
	If $A_0\ge 2$, then $A_0\le 2A$ and
	\[\Pr\left[A'<\left(1+\frac{\theta}{2}\right)A\right]\le\exp\left(-\Omega\left(\varepsilon^{-10\ell}\right)\right).\]
\end{lemma}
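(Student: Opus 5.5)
The proof has two parts: a deterministic comparison of $A_0$ and $A$, then a concentration bound for $A'$ via Lemma~\ref{lem:pipage-concentration}.

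\emph{First part: $A_0\le 2A$.} By Lemma~\ref{lem:Delta-equality}, $A=A_0-\Delta_0$. By Lemma~\ref{lem:delta0}, $A_0\ge2$ gives $\Delta_0\le A_0/2$. Hence $A\ge A_0/2$, that is, $A_0\le 2A$.

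\emph{Second part: the tail bound.} Write $A'=\sum_{e}a_e y'_e$, where $a_e:=d_e-1$ for physical atoms with $d_e\ge2$, $a_e:=d_e$ for virtual atoms, and $a_e:=0$ otherwise. These coefficients are deterministic and nonnegative. They are nonzero only on $\mathcal E_{\mathrm{rest}}$, where the pipage input is $\bar y_e=(1+\theta)y_e$. Since pipage rounding preserves marginals, $\E[A']=(1+\theta)A$. The bad event $A'<(1+\theta/2)A$ is therefore contained in $\{A'\le \E A'-t\}$ with $t=\theta A/2$.

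We apply Lemma~\ref{lem:pipage-concentration} to the input $\bar y$ with these weights. The key point is that Step~\ref{step:pre-influence} ensures every copy is low-influence. For virtual atoms $d_e\le A_0$, and for physical atoms $d_e-1\le A_0$ (the grouping gives $d_{e_i}\le q\le A_0$). So $a_e\le A_0\le 2A$ for every atom, and atoms that were low-influence from the start satisfy this by definition. This gives
\[
M=\delta\max_e a_e\le 2\delta A,
\qquad
v=\delta\sum_e a_e^2\bar y_e\le \delta\,(2A)(1+\theta)\sum_e a_e y_e=2(1+\theta)\delta A^2 .
\]
The exponent is then
\[
\frac{t^2}{2(v+Mt/3)}\ge \frac{\theta^2A^2/4}{2\bigl(2(1+\theta)\delta A^2+\delta\theta A^2/3\bigr)}=\Omega\!\left(\frac{\theta^2}{\delta}\right)=\Omega(\varepsilon^{-10\ell}),
\]
using parameter choice (i). If $v=0$, then $A'$ is deterministic and equal to $(1+\theta)A$, so the event cannot occur. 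Note that $A>0$ whenever $A_0\ge2$, by the first part.

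\emph{Main obstacle.} The delicate step is making sure the coefficient cap $a_e\le 2A$ holds uniformly over all atoms after splitting, including unsplit atoms and virtual copies. Everything else follows from the first part, marginal preservation, and the chosen ratio $\theta^2/\delta$.
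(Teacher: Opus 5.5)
Your proposal is correct and follows the paper's proof essentially step for step. You get $A_0\le 2A$ from Lemmas~\ref{lem:Delta-equality} and~\ref{lem:delta0}. You then apply Lemma~\ref{lem:pipage-concentration} to $\bar y$ with $t=\theta A/2$, using the low-influence cap $a_e\le A_0\le 2A$ to get $v=O(\delta)A^2$ and $M=O(\delta)A$, hence the exponent $\Omega(\theta^2/\delta)=\Omega(\varepsilon^{-10\ell})$. Your explicit treatment of the degenerate case $v=0$ and the observation that $A>0$ are minor details the paper leaves implicit.
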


\begin{proof}
	By Lemmas \ref{lem:Delta-equality} and \ref{lem:delta0}, $A=A_0-\Delta_0$ and $\Delta_0 \leq A_0/2$, we have $A_0\le 2A$.

    According to the pipage rounding with granularity $\delta$, for any
atom $e$, $\Pr(y'_e=\delta)=\frac{\bar y_e}{\delta}$.
Since $\mathbb E[y'_e]=\bar y_e=(1+\theta)y_e$ for
$e\in\mathcal E_{\text{rest}}$, we have
$\mathbb E[A']=(1+\theta)A$. Moreover,
\begin{align*}
    &\sum_{e\in \calE'_\phys: d_e\ge 2}
        (d_e-1)^2\delta\bar y_e
      +\sum_{e\in \calE'_\virt} d_e^2\delta\bar y_e\\
    &\qquad\le
      \delta A_0(1+\theta)
      \left(
        \sum_{e\in \calE'_\phys: d_e\ge 2}(d_e-1)y_e
        +\sum_{e\in \calE'_\virt}d_ey_e
      \right)
      \tag{$e$ has low influence}\\
    &\qquad=
      \delta(1+\theta)A_0A
      \le O(\delta)A^2.
\end{align*}
	% \[\begin{aligned}
	% 	\mathrm{Var}(A')&\le\sum_{e\in \mathcal E'_\phys d_e\ge 2}(d_e-1)\delta\mathbb E[(d_e-1)y'_e]+\sum_{e\in \mathcal E'_\virt}\delta\mathbb E[y'_e]\\
	% 	&\le \delta A_0\mathbb E[A']\le O(\delta)A^2.
	% \end{aligned}\]

	% \del{By the negative product-moment property (A3) in~\cite{srinivasan2001distributions}, the exponential moments of $A'$ are bounded by those of the independent Bernoulli sum with the same marginals. The first upper bound above also bounds the variance of that independent sum.}\add{We apply Lemma~\ref{lem:pipage-concentration} to $A'$, with coefficient $d_e-1$ on each physical atom with $d_e\ge2$, coefficient $1$ on each virtual atom, and $0$ elsewhere; the first upper bound above is the variance proxy of that lemma. Recall that $d_e-1\le A_0$ for every physical atom after Step~\ref{step:pre-influence}.}
    Since all atoms have low influence, we have, for every
$e\in \calE'_\phys\cap\calE_\mathrm{rest}$,
\[
    (d_e-1)\delta\le\delta A_0\le O(\delta)A.
\]
Similarly, for every $e\in\calE'_\virt$, we have
$d_e\delta\le\delta A_0\le O(\delta)A$.
Thus, applying Lemma~\ref{lem:pipage-concentration} to $A'$
with $t=\theta A/2$, we obtain
\begin{align*}
\Pr\left[A'<\left(1+\frac{\theta}{2}\right)A\right]
    &\le
    \exp\left(
        -\frac{(\theta A/2)^2}
        {O(\delta)A^2+O(\delta)A\cdot(\theta A/2)}
    \right)\\
    &\le
    \exp\left(-\Omega\left(\frac{\theta^2}{\delta}\right)\right)
    =
    \exp\left(-\Omega\left(\varepsilon^{-10\ell}\right)\right).
    \qedhere
\end{align*}
\end{proof}

\paragraph{Loop~\ref{step:pre-round-heads}: Rounding Heads}
Next, we round the $y$ values of facilities in heads. For $r\in\mathcal R$, let $Y_r:=1-\min(1/2,y'(T_r))$. We apply pipage rounding to $\frac{Y_r}{1-\varepsilon}\vecy_{|H_r}$ with laminar family $\{H_r\}\cup\{\{i\}:i\in H_r\}$ and granularity $\delta$, as defined in preliminaries, using independent fresh randomness for each head conditional on the rounded atoms. So, for every $i\in H_r$, we have
\[\mathbb E\left[y'_i \middle|\ (y'_e)_{e\in\mathcal E}\right]=\frac{Y_r}{1-\varepsilon}\cdot y_i,\qquad y'_i\in \left\{\delta \left\lfloor \frac{Y_r}{1-\varepsilon}\frac{y_i}{\delta}\right\rfloor,\delta \left\lceil \frac{Y_r}{1-\varepsilon}\frac{y_i}{\delta}\right\rceil\right\}.\]
Moreover, $y(H_r)=1-\varepsilon$, so the input vector for this call
has total $Y_r$. Since $1/(2\varepsilon)\in\mathbb N$ and
$\delta=\varepsilon^{14\ell+2}$, the values $1$, $1/2$, and
$y'(T_r)$ are multiples of $\delta$, % \add{ (indeed, $1/(2\delta)=\frac{1}{2\varepsilon}\cdot\varepsilon^{-(14\ell+1)}\in\mathbb N$, and $y'_e\in\{0,\delta\}$ for every atom $e$)}
 and hence so is $Y_r$.
The rounding therefore preserves this total exactly: $y'(H_r)=Y_r\in[1/2,1]$.

% \gptremovedfile{removed-text/preprocessing-06}

% \begin{gptadded}
To account for the threshold $1/2$, define the excess for each representative and its total by
\[
\Delta_{1,r}:=Y_r-(1-y'(T_r))=(y'(T_r)-1/2)_+,
\qquad \Delta_1:=\sum_{r\in\mathcal R}\Delta_{1,r}.
\]
We prove the following lemma that upper bounds $\Delta_1$:
\begin{lemma}\label{lem:delta1} 
    For $\varepsilon\le1/8$,
	\[\Pr\left[\Delta_1>\frac{\theta A}{2}\right]\le\exp\left(-\Omega\left(\varepsilon^{-(14\ell+2)}\right)\right).\]
\end{lemma}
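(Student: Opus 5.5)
Note that each tail $T_r$ has $\bar y(T_r)\le(1+\theta)\varepsilon\le 2\varepsilon\le1/4$ for $\varepsilon\le1/8$. So $\Delta_{1,r}>0$ requires $y'(T_r)$ to exceed its mean by at least $1/4$. The plan is to bound, for each $r$, the tail $\E[\Delta_{1,r}]$ and the fluctuations of $\Delta_1$. Since $\Delta_1$ is a sum of nonlinear functions of overlapping tails, concentration is awkward. I would instead aim for a bound that is essentially deterministic: with probability $1-\exp(-\Omega(1/\delta))$, no tail is overfull at all, so $\Delta_1=0\le\theta A/2$.

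\textbf{Step 1 (single tail).} Fix $r$. Then $y'(T_r)=\sum_{e\in T_r}y'_e$ with $y'_e\in\{0,\delta\}$, which is a weighted sum of pipage outputs with coefficients $a_e=1$ on $T_r$. Lemma~\ref{lem:pipage-concentration} applies with $v=\delta\,\bar y(T_r)\le 2\delta\varepsilon$ and $M=\delta$. Taking $t=1/4$, the bound is
\[\Pr[y'(T_r)>1/2]\le\exp\left(-\frac{1/16}{2(2\delta\varepsilon+\delta/12)}\right)=\exp(-\Omega(1/\delta)).\]
Here $1/\delta=\varepsilon^{-(14\ell+2)}$, which is exactly the exponent in the statement. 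Head facilities in $T_r$ are all virtual atoms, since head pieces in tails were atomized, so $T_r$ consists only of atoms and the lemma applies.

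\textbf{Step 2 (union bound).} The obstacle is the number of representatives: a union bound over $|\mathcal R|\le n$ tails gives failure probability $n\exp(-\Omega(1/\delta))$, which is not independent of $n$. To avoid this I would use the budget $\theta A/2$ rather than demanding $\Delta_1=0$.
\begin{itemize}
\item Only tails meeting $\mathcal E_{\mathrm{rest}}$ matter. If $T_r\subseteq\mathcal E_1(r)$, then $y'(T_r)$ is a rounding of $\bar y(\mathcal E_1(r))\le\varepsilon$ in the laminar family. Hence $y'(T_r)\le\delta\lceil\varepsilon/\delta\rceil=\varepsilon<1/2$ deterministically and $\Delta_{1,r}=0$.
\item For the remaining tails, $\Delta_{1,r}\le y'(T_r\cap\mathcal E_{\mathrm{rest}})+\varepsilon-1/2\le (y'(T_r\cap \mathcal E_{\rm rest})-1/4)_+$.
\end{itemize}

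\textbf{Step 3 (aggregate bound).} Bound $\Delta_1\le\sum_r (Z_r-1/4)_+$ with $Z_r=y'(T_r\cap\mathcal E_{\rm rest})$. Using the expansion $(z-1/4)_+\le z\cdot\mathbf 1[z>1/4]$ and $Z_r\le 4Z_r^2$ on that event, together with $\sum_r Z_r\le \sum_e d_e y'_e$, I would relate the sum to $A'$-type quantities and ultimately to $A$. Each atom in $\mathcal E_{\rm rest}$ contributes to $A$ at least $y_e d_e/2$ (for physical atoms $d_e-1\ge d_e/2$), so $\sum_r\bar y(T_r\cap\mathcal E_{\rm rest})\le 2(1+\theta)A$. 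This mass is the natural scale.

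I would then apply Lemma~\ref{lem:pipage-concentration} with $\lambda$ of order $1/\delta$ to the exponential moment of $\sum_r (Z_r-1/4)_+$. The per-tail exponential tail from Step~1 gives $\E[(Z_r-1/4)_+]\le \bar y(T_r\cap\mathcal E_{\rm rest})\,e^{-\Omega(1/\delta)}$, so the expectation of $\Delta_1$ is at most $A e^{-\Omega(1/\delta)}\ll\theta A/4$. The hard part, which I expect to be the main obstacle, is upgrading this expectation to a tail bound of the same exponent despite overlapping tails. I would try a Markov argument on $\exp(\lambda\Delta_1)$, exploiting the bound $d_e\le A_0\le 2A$ on atom degrees so that a single atom changes $\Delta_1$ by at most $O(\delta A)$. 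The relevant ratio $\theta A/(\delta A)$ is large, which should yield the $\exp(-\Omega(\varepsilon^{-(14\ell+2)}))$ rate.
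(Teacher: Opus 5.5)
Your reduction in Steps~1--2 to $\Delta_{1,r}\le(Z_r-1/4)_+$, with $Z_r=y'(T_r\cap\mathcal E_{\mathrm{rest}})$, is correct and matches the paper. Your intermediate target $\E[\Delta_1]\le A\,e^{-\Omega(1/\delta)}$ is also exactly what the paper establishes. The gap is in how you reach it.

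Write $\mu_r:=\bar y(T_r\cap\mathcal E_{\mathrm{rest}})$. The per-tail bound $\E[(Z_r-1/4)_+]\le\mu_r\,e^{-\Omega(1/\delta)}$ does not follow from Step~1. In Lemma~\ref{lem:pipage-concentration} with $v=\delta\mu_r$, $M=\delta$ and $t\approx1/4$, the exponent tends to $3/(8\delta)$ as $\mu_r\to0$. So it only gives a bound of order $\delta\,e^{-\Omega(1/\delta)}$, with no factor $\mu_r$. Summed over the tails meeting $\mathcal E_{\mathrm{rest}}$, that bound is proportional to the number of such tails. $A$ does not control this number, because many tails can each carry an arbitrarily small rest mass. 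This is the same obstruction you identified for the union bound.

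A Bernstein-type inequality cannot supply the linear factor; you need a size-biased estimate. In the paper's notation, $Y'_{\mathrm{rest}}=Z_r$ and $Y_{\mathrm{rest}}=y(T_r\cap\mathcal E_{\mathrm{rest}})$. The paper takes $\lambda=1/(4\delta)$ and uses
\[
\E[\Delta_{1,r}]\le\E\bigl[Y'_{\mathrm{rest}}\,e^{\lambda(Y'_{\mathrm{rest}}-Y_{\mathrm{rest}}-1/4)}\bigr],
\qquad
\E\bigl[y'_e\,e^{\lambda Y'_{\mathrm{rest}}}\bigr]\le e^{\lambda\delta}\,\bar y_e\prod_{f\ne e}\Bigl(1+\frac{\bar y_f}{\delta}\bigl(e^{\lambda\delta}-1\bigr)\Bigr).
\]
Here the product runs over $f\in T_r\cap\mathcal E_{\mathrm{rest}}$, and the second bound comes from expanding the product and applying \eqref{eq:neg-corr} term by term. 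The prefactor $\bar y_e$ makes the estimate linear in $Y_{\mathrm{rest}}$. Summing over $r$ with your bound $\sum_r\mu_r\le2(1+\theta)A$ then gives $\E[\Delta_1]\le A\,e^{-\Omega(1/\delta)}$.

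What you call the main obstacle is not one. Since $\Delta_1\ge0$, Markov's inequality gives
\[
\Pr[\Delta_1>\theta A/2]\le\frac{2}{\theta}\,e^{-\Omega(\varepsilon^{-(14\ell+2)})}
\]
directly; if $A=0$, then $\Delta_1=0$ almost surely. The factor $2/\theta=2\varepsilon^{-(2\ell+1)}$ is absorbed into the exponential, because the expectation bound already carries the full rate $1/\delta$. This is how the paper concludes.

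The exponential-moment route you sketch would not reach the stated rate anyway. With per-atom increments of size $O(\delta A)$ and deviation $\Theta(\theta A)$, the relevant ratio is $\theta/\delta=\varepsilon^{-(12\ell+1)}$, not $\varepsilon^{-(14\ell+2)}$. Moreover, Lemma~\ref{lem:pipage-concentration} applies only to linear functions of the rounded vector, not to $\sum_r(Z_r-1/4)_+$. Finally, the degree bound $d_e\le A_0$ holds only when $A_0\ge2$, since high-influence atoms are not split otherwise.
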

% \end{gptadded}

\begin{proof}
	Fix a representative $r\in\mathcal R$. Define $Y_1=\sum_{e\in\mathcal E_1\cap T_r}y_e,\ Y_\rest=\sum_{e\in\mathcal E_{\rest}\cap T_r} y_e$ and define $Y'_1$ and $Y'_\rest$ in the same way. Since $\mathcal E_1(r)\in\mathcal L$ and $\bar y_e=y_e$ on $\mathcal E_1$, pipage rounding gives $Y_1'\le Y_1+\delta$. We also have $Y_1+Y_\rest=\varepsilon$. % For sufficiently small $\varepsilon$, we have $\varepsilon+\delta\le1/4$, and 
    Hence,
	\[\begin{aligned}
		\Delta_{1,r}&=\max(0,Y_1'+Y_\rest'-1/2)\\
        &\le \max(0,Y_1+\delta+Y'_\rest-1/2)\\
		&\le\max(0,\varepsilon-Y_\rest+\delta+Y'_\rest-1/2) \\
		&\le\max(0,Y'_\rest-Y_\rest-1/4).
	\end{aligned}\]
	
	For $\lambda\ge0$, we first estimate the exponential moment of $Y'_\rest$, using negative-correlation property in \eqref{eq:neg-corr}.
	\[\begin{aligned}
		\mathbb E[\exp(\lambda Y'_\rest)]&\le \prod_{e\in \mathcal E_{\text{rest}}\cap T_r}\mathbb E[\exp(\lambda y'_e)]\\
		&\le \prod_{e\in \mathcal E_{\text{rest}}\cap T_r}\left(1+\frac{(1+\theta)y_e}{\delta}(e^{\lambda\delta}-1)\right)\\
		&\le \exp\left(\frac{(1+\theta)Y_\rest}{\delta}(e^{\lambda\delta}-1)\right).
	\end{aligned}\]
	
	Consequently, by the same negative-correlation property in \eqref{eq:neg-corr}, and $y'_e\in\{0,\delta\}$,
	
	\[\begin{aligned}
		&\mathbb E[Y'_\rest\exp(\lambda Y'_\rest)]\\
		&\quad=\sum_{e\in\mathcal E_{\text{rest}}\cap T_r}
		\mathbb E[y'_e\exp(\lambda Y'_\rest)]\\
		&\quad\le e^{\lambda\delta}\sum_{e\in\mathcal E_{\text{rest}}\cap T_r}\bar y_e
		\prod_{\substack{f\in\mathcal E_{\text{rest}}\cap T_r\\f\ne e}}
		\left(1+\frac{\bar y_f}{\delta}(e^{\lambda\delta}-1)\right)\\
		&\quad\le e^{\lambda\delta}\cdot (1+\theta)Y_\rest\cdot
		\exp\left(\frac{(1+\theta)Y_\rest}{\delta}(e^{\lambda\delta}-1)\right).
	\end{aligned}\]
	
	Then we can control $\Delta_{1,r}$ as follows:
	
	\[\begin{aligned}
		\mathbb E[\Delta_{1,r}]&\le \mathbb E[Y'_\rest\mathbf 1_{\{\Delta_{1,r}>0\}}]\\
		&\le \mathbb E[Y'_\rest\exp\left(\lambda\left(Y'_\rest-Y_\rest-1/4\right)\right)]\\
		&\le \exp(-\lambda(Y_\rest+1/4))\cdot \mathbb E[Y'_\rest\exp(\lambda Y'_\rest)]\\
		&\le e^{\lambda\delta}\cdot (1+\theta)Y_\rest\cdot \exp\left(-\lambda(Y_\rest+1/4)+\frac{(1+\theta)Y_\rest}{\delta}(e^{\lambda\delta}-1)\right)\\
		&=e^{\lambda\delta}\cdot (1+\theta)Y_\rest\cdot \exp\left(-\lambda/4+Y_\rest\left(\frac{(1+\theta)}{\delta}(e^{\lambda\delta}-1)-\lambda\right)\right).
	\end{aligned}\]
	
	Let $\lambda=1/(4\delta)$, then $(1+\theta)(e^{\lambda\delta}-1)/\delta-\lambda\le 1/(16\delta)$. Combined with $Y_\rest\le\varepsilon$, we have
	
	\[\begin{aligned}
		\mathbb E[\Delta_{1,r}]&\le e^{1/4}\cdot (1+\theta)Y_\rest\cdot \exp\left(-\frac{1}{16\delta}+\frac{Y_\rest}{16\delta}\right)\\
		&\le O(1)\cdot Y_\rest\cdot\exp\left(-\Omega\left(\frac{1}{\delta}\right)\right)
		\le \exp\left(-\Omega\left(\varepsilon^{-(14\ell+2)}\right)\right)\cdot Y_\rest.
	\end{aligned}\]
	
	Summing over all representatives, we have
	
	\[\begin{aligned}
		\mathbb E[\Delta_1]&\le\exp\left(-\Omega\left(\varepsilon^{-(14\ell+2)}\right)\right)\cdot\sum_{r\in\mathcal R}\sum_{e\in\mathcal E_{\text{rest}}\cap T_r}y_e\\
		&=\exp\left(-\Omega\left(\varepsilon^{-(14\ell+2)}\right)\right)\cdot\sum_{e\in\mathcal E_{\text{rest}}}d_ey_e\\
		&\le2\cdot \exp\left(-\Omega\left(\varepsilon^{-(14\ell+2)}\right)\right)\cdot\left(\sum_{e\in \mathcal E_\phys:d_e\ge 2}(d_e-1)y_e+\sum_{e\in\mathcal E_\virt} d_ey_e\right)\\
		&\le\exp\left(-\Omega\left(\varepsilon^{-(14\ell+2)}\right)\right)\cdot A.
	\end{aligned}\]
	The equality holds because each atom $e$ lies in exactly $d_e$ tails. The next inequality uses $d_e\le2(d_e-1)$ for physical atoms with $d_e\ge2$.

	If $A=0$, then $\Delta_1=0$ almost surely. Otherwise, Markov's inequality gives
\[
\Pr\left[\Delta_1>\frac{\theta A}{2}\right]
\le \frac{2}{\theta}\exp\left(-\Omega\left(\varepsilon^{-(14\ell+2)}\right)\right)
=\exp\left(-\Omega\left(\varepsilon^{-(14\ell+2)}\right)\right). \qedhere
\] 
\end{proof}

\paragraph{Step~\ref{step:pre-return}: Returning the Opening of Heads and Physical Atoms} Finally, we return the opening vector $\vecy'$ restricted to $\union_{r \in \mathcal R}H_r$ and physical atoms. We map the opening to the original facilities in $F$, and we cap the opening of each original facility by $1$.
Notice that virtual atoms are auxiliary variables for the tail rounding and are not
included in the final opening vector.
Merging sums multiples of $\delta$, and $1$ is a multiple of $\delta$, so every returned coordinate is a multiple of $\delta$ in $[0,1]$. Neither merging nor capping increases the total opening. Moreover, for every client $j$ and radius $t\ge0$, the deficit $(1-y'(\operatorname{ball}(j,t)))_+$ is unchanged. Merging preserves the opening at every location. Capping changes the opening only at a location whose opening is at least $1$, and every ball containing that location has deficit $0$ both before and after. Since connection costs are determined by these deficits (see \eqref{eq:con-integral}), costs are unchanged as well. The analysis below is therefore carried out on the pieces before this step.

\subsection{Bounding the Number of Open Facilities}\label{subsec:pre-count}

Let $\Gamma'$ be the total opening in heads and of physical atoms after Loop~\ref{step:pre-round-heads}. It follows that
\begin{align}
    \Gamma' & = \sum_{r\in \calR} y'(H_r) +\sum_{e\in \calE'_\phys} y'_e \nonumber \\
    & = \sum_{r\in \calR}Y_r + \left(y'(\calE_0)+\sum_{r\in \calR} y'(T_r) -\sum_{e\in \calE'_\phys:d_e\ge 2} (d_e-1)y'_e -\sum_{e\in \calE'_\virt} d_ey'_e\right)\nonumber \\
    & = \sum_{r\in \calR} (1+\Delta_{1,r})  +y'(\calE_0) -A'\nonumber\tag{$\Delta_{1,r}=Y_r-(1-y'(T_r))$}\\
    & = |\calR|+\Delta_1+y'(\calE_0)-A'. \label{eq:y2-formula}
\end{align}
For the second equality, recall that $y'(H_r)=Y_r$ because $Y_r$ is a multiple of $\delta$ and pipage rounding preserves this value. Moreover, $\sum_{r\in \calR} y'(T_r)$ counts each atom $e$ exactly $d_e$ times. To ensure that each physical atom is counted exactly once and each virtual atom is excluded, we subtract $(d_e-1)y_e$ for each physical atom and $d_ey_e$ for each virtual atom.

\begin{lemma}\label{lem:good-event}
    With probability at least $1-\exp(-\Omega(\varepsilon^{-10\ell}))$, $\Gamma'\le k+3$. %\pvmargin{Improved from $k+10$ to $k+3$.}
\end{lemma}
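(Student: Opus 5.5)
The plan is to compare the post-rounding identity \eqref{eq:y2-formula} with the pre-rounding identity \eqref{eq:y0-formula}, and to control each difference term using Lemmas~\ref{lem:Delta-equality}, \ref{lem:delta0}, \ref{lem:wa} and~\ref{lem:delta1}.

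\emph{Step 1: reduce to a single inequality.} Subtracting \eqref{eq:y0-formula}, i.e.\ $k\ge|\mathcal R|+y(\mathcal E_0)-A_0$, from \eqref{eq:y2-formula}, and substituting $A_0=A+\Delta_0$ from Lemma~\ref{lem:Delta-equality}, gives
\[
\Gamma'-k\le \Delta_1+\bigl(y'(\mathcal E_0)-y(\mathcal E_0)\bigr)+\Delta_0+A-A'.
\]

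\emph{Step 2: the $\mathcal E_0$ term.} Since $\mathcal E_0\in\mathcal L$ and $\bar y=y$ on $\mathcal E_0$, pipage rounding gives $y'(\mathcal E_0)\le y(\mathcal E_0)+\delta$ in every outcome. So it remains to show that, with high probability, $\Delta_1+\Delta_0+A-A'\le 3-\delta$.

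\emph{Step 3: the case $A_0<2$.} Here no atom is split, so $\Delta_0=0$ and $A=A_0<2$. Moreover $A'\ge 0$ deterministically. By Lemma~\ref{lem:delta1}, $\Delta_1\le\theta A/2<\theta$ except with probability $\exp(-\Omega(\varepsilon^{-(14\ell+2)}))$. Hence, on the complementary event,
\[
\Gamma'-k< \theta+\delta+2\le 3 .
\]

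\emph{Step 4: the case $A_0\ge2$.} Lemma~\ref{lem:wa} gives $A'\ge(1+\theta/2)A$, and Lemma~\ref{lem:delta1} gives $\Delta_1\le\theta A/2$, each failing with probability at most $\exp(-\Omega(\varepsilon^{-10\ell}))$. On the intersection of these two events we have $\Delta_1+A-A'\le 0$. Lemma~\ref{lem:delta0} gives $\Delta_0<3/2$. Therefore
\[
\Gamma'-k<\tfrac32+\delta\le 3 .
\]

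\emph{Step 5: the failure probability.} A union bound over the at most two bad events gives total failure probability $\exp(-\Omega(\varepsilon^{-10\ell}))$, because $\varepsilon^{-(14\ell+2)}\ge\varepsilon^{-10\ell}$.

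I do not expect a real obstacle, since the concentration work is already done in Lemmas~\ref{lem:wa} and~\ref{lem:delta1}. The only points needing care are (i) making sure the $\mathcal E_0$ rounding error is deterministic, which holds because $\mathcal E_0$ is in the laminar family, and (ii) the small-$A_0$ case, where Lemma~\ref{lem:wa} is unavailable and we rely only on $A<2$ and $A'\ge0$.
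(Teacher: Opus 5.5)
Your proposal is correct and follows essentially the same route as the paper: combine \eqref{eq:y0-formula} and \eqref{eq:y2-formula} with $A_0=A+\Delta_0$ and the deterministic bound $y'(\mathcal E_0)\le y(\mathcal E_0)+\delta$ to get $\Gamma'\le k+\Delta_0+\Delta_1+\delta+A-A'$. Then split on the size of $A_0$, using Lemmas~\ref{lem:wa}, \ref{lem:delta1} and~\ref{lem:delta0} in the large case, and $A'\ge0$ with $A+\Delta_0\le 2$ in the small case. The only cosmetic difference is where you split: you use $A_0<2$ versus $A_0\ge2$ (exactly the hypothesis of Lemma~\ref{lem:wa}) and $\Delta_0=0$ in the small case, whereas the paper uses $A_0>2$ versus $A_0\le2$ and invokes $A_0=A+\Delta_0$ there.
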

\begin{proof}
    By combining \eqref{eq:y0-formula}, \eqref{eq:y1-formula}, \eqref{eq:y2-formula}, and since $\mathcal E_0$ is a laminar rounding set and is not scaled,
$y'(\mathcal E_0)\le y(\mathcal E_0)+\delta$, we have
    \begin{align*}
        \Gamma' & = (\Gamma'-\Gamma)+(\Gamma-y(F_0))+y(F_0) \\
            & =(\Delta_1+y'(\calE_0)-y(\calE_0)-A'+A)+\Delta_0+y(F_0) \\
            & \le k+\Delta_0+\Delta_1+\delta-A'+A.
    \end{align*}
    Now consider two cases:
    \begin{itemize}
        \item \textbf{Case I:} $A_0>2$. \\
         It follows from \Cref{lem:wa} that $A'-A\ge \frac{\theta}{2}A$ with probability at least $1-\exp(-\Omega(\varepsilon^{-10\ell}))$. Then by \Cref{lem:delta1}, we have $\Delta_1\le \frac{\theta}{2}A$ with probability at least $1-\exp(-\Omega(\varepsilon^{-14\ell-2}))$. By \cref{lem:delta0}, we have $\Delta_0<3/2.$ Thus,
        \begin{align*}
            \Gamma' & \le k+\Delta_0+\Delta_1+\delta+A-A' \\
            & \le k+\frac{3}{2}+\frac{\theta}{2}A+\delta-\frac{\theta}{2}A \\
            & \le k+2.
        \end{align*}
        By union bound, both events together hold with probability at least $1-\exp(-\Omega(\varepsilon^{-10\ell}))$.
        \item \textbf{Case II:} $A_0\le 2$. \\
        According to \Cref{lem:delta1}, $\Delta_1\le \theta$ with probability at least $1-\exp(-\Omega(\varepsilon^{-10\ell}))$. By \Cref{lem:Delta-equality}, we have $A_0=A+\Delta_0$. Thus,
        \begin{align*}
            \Gamma' & \le k+\Delta_0+\Delta_1+\delta+A-A' \\
            & \le k+A_0+\theta+\delta-A' \\
            & \le k+3. \tag{$\delta<\theta< \varepsilon\le 1/2$}
        \end{align*}
    \end{itemize}
    Therefore, $\Gamma'\le k+3$ with probability at least $1-\exp(-\Omega(\varepsilon^{-10\ell}))$.
\end{proof}

\subsection{Connection Cost Analysis}
Recall that $\mathcal E'_\phys\subseteq\mathcal E$ denotes the physical atoms
just before the rounding, and let
\[\mathcal U:=\mathcal E'_\phys\cup\bigcup_{r\in\mathcal R}H_r.\]
Thus $\mathcal U$ contains exactly the physical atoms and head facilities;
virtual atoms are excluded. Throughout this subsection, $y$ and $y'$ on
subsets of $\mathcal U$ refer to the openings before and after rounding,
prior to the final merging and capping. Throughout the connection-cost analysis, \(\operatorname{cost}_{\mathbf y}(j)\) continues to denote the original LP connection cost defined by $\vecy^\mathrm{orig}$.

For $j\in C$ and $t\ge0$, let $\operatorname{ball}(j,t)$ denote the
closed metric ball centered at $j$ with radius $t$. When an opening or
assignment is evaluated on a ball, we sum its mass over the points of its
own domain in that ball. Recall that $\vecy^{\mathrm{orig}}$ is the input LP
opening on the facility pieces used to define heads and tails, before
any atom construction or duplication, and define
\[q_j(t):=\left(1-y^{\mathrm{orig}}(\operatorname{ball}(j,t))\right)_+.\]
Recall that
\[\xi=\frac{2}{\varepsilon^{-1}-1}=O(\varepsilon).\]
For every $r\in\mathcal R$ and every rounded outcome, let $\mu_r$ be the
assignment to the nearest one unit of rounded fractional opening $y'$
from $r$, using only $\mathcal U=\mathcal E_\phys\cup
\bigcup_{s\in\mathcal R}H_s$ and breaking distance ties in a fixed order.
This unit of opening exists: the comparison assignment $\nu_r$ constructed at the beginning of the proof of Lemma~\ref{lem:radial} uses only the opening $y'$ on $\mathcal U$ and has total mass $1$, and its construction does not use $\mu_r$. In particular, for $t\ge 0$,
\[\mu_r(\mathcal U)=1,\qquad 0\le\mu_r\le\vecy',\qquad
\mu_r\bigl(\mathcal U\setminus\operatorname{ball}(r,t)\bigr)
=\left(1-y'(\operatorname{ball}(r,t))\right)_+.\]

We bound the expected cost of each client $j$ by integrating its ball deficits $(1-y'(\operatorname{ball}(j,t)))_+$ over $t$. For radii $t$ below threshold $t_0=\Theta((d(j,r(j))+R_{r(j)})/\varepsilon)$, \Cref{lem:local-deficit} shows that rounding increases each deficit by at most an additive $O(\varepsilon\theta)$.
\begin{lemma}\label{lem:local-deficit}
For any $S\subseteq \mathcal U$, %\pvmargin{The constant is changed from 2 to 3 }
\[\mathbb E[(1-y'(S))_+]\le(1-y(S))_+ +3\varepsilon\theta.\]
\end{lemma}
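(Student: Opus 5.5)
The plan is to write $Z:=y'(S)$ and separate the deficit into a bias part and a fluctuation part. Since $x\mapsto(1-x)_+$ is $1$-Lipschitz and nonincreasing,
\[
\E[(1-Z)_+]\le\bigl(1-\E Z\bigr)_+ +\E\bigl[(\E Z-Z)_+\bigr]
\le (1-y(S))_+ +\bigl(y(S)-\E Z\bigr)_+ +\sqrt{\Var(Z)} .
\]
It then suffices to show two things: a bias bound $y(S)-\E Z\le 2\varepsilon\theta$ and a fluctuation bound $\sqrt{\Var Z}\le\varepsilon\theta$. Together these give the claimed $3\varepsilon\theta$.

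First I will reduce to the case $y(S)\le 1+O(\delta)$. The deficit $(1-y'(S))_+$ is monotone nonincreasing in $S$. The right-hand side is $3\varepsilon\theta$ whenever $y(S)\ge1$. So when $y(S)>1$ I will replace $S$ by a subfamily $S'$ with $y(S')\ge1$ but small total mass. I would build $S'$ greedily: add physical atoms (each of mass at most $\delta$) and head facilities until the mass first reaches $1$. A single head facility can have large $y_i$. In that case I take $S'=\{i\}$ and use that $y'_i$ is rounded from $\frac{Y_r}{1-\varepsilon}y_i$ with $Y_r\ge 1/2$, so the same bias and variance estimates apply with $y(S')\le 1$. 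Verifying that this reduction is clean is routine but needs care.

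For the bias, decompose $Z=\sum_{e\in S\cap\mathcal E'_\phys}y'_e+\sum_{r}y'(S\cap H_r)$. For atoms, $\E y'_e=\bar y_e\ge y_e$. For heads, the conditional mean is $\E[y'(S\cap H_r)\mid (y'_e)_e]=\frac{Y_r}{1-\varepsilon}y(S\cap H_r)$, and
\[
\E Y_r\ge 1-\E y'(T_r)\ge 1-\varepsilon-\theta\varepsilon,
\]
because the atoms in $T_r$ total $\varepsilon$ and are scaled up by at most $1+\theta$. Hence $\E Z\ge y(S)-\frac{\varepsilon\theta}{1-\varepsilon}y(S\cap\bigcup_r H_r)\ge y(S)-2\varepsilon\theta$, using $y(S)\le1$ after the reduction.

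For the variance, I use the law of total variance with respect to the atom rounding. The conditional variance given the atoms comes from independent head roundings. Lemma~\ref{lem:pipage-concentration} with $a_i=1$ bounds it by $\delta\sum_i\frac{Y_r}{1-\varepsilon}y_i\le 2\delta\,y(S)$. The remaining term is $\Var$ of $W:=\sum_{e\in S}y'_e+\sum_r \frac{Y_r}{1-\varepsilon}y(S\cap H_r)$, a function of the atoms only.

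This term is the main obstacle. The $Y_r=1-\min\{1/2,y'(T_r)\}$ are decreasing in atom openings, while the atoms of $S$ enter $W$ increasingly, so the signs are mixed. Moreover, tails share atoms, so the $Y_r$ are correlated. My first attempt is to write $W$ as $W_+-W_-$, with
\[
W_+:=\sum_{e\in S}y'_e,\qquad W_-:=\sum_r \frac{\min\{1/2,y'(T_r)\}}{1-\varepsilon}\,y(S\cap H_r),
\]
and bound $\Var W\le 2\Var W_++2\Var W_-$. Lemma~\ref{lem:pipage-concentration} gives $\Var W_+\le\delta\,y(S)$. For $W_-$, I drop the min via Lemma~\ref{lem:capped-variance}, with the cap applied to the linear atom sum so that it is legitimate. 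This leaves the nonnegative linear form
\[
\sum_e\Bigl(\sum_{r:\,e\in T_r}\frac{y(S\cap H_r)}{1-\varepsilon}\Bigr)y'_e .
\]
Its coefficients are at most $2\,y(S)\le 2$, and its mean is at most $2\varepsilon\, y(S)$, so Lemma~\ref{lem:pipage-concentration} gives variance $O(\delta)$. Altogether $\Var Z=O(\delta)$, and parameter choice (ii), $\sqrt\delta\le\varepsilon\theta$, finishes the bound up to the absolute constant; I would tune constants so the fluctuation is at most $\varepsilon\theta$.
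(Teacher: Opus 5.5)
Your skeleton matches the paper's: the bias-plus-$\sqrt{\Var}$ split, the bound $\E Y_r\ge 1-\varepsilon-\varepsilon\theta$, total variance over the atom rounding, and Lemma~\ref{lem:capped-variance}. However, the step for $W_-$ fails as written. Lemma~\ref{lem:capped-variance} controls a single truncation $\min(V,\kappa)$. Your $W_-=\sum_r c_r\min\{1/2,y'(T_r)\}$, with $c_r=y(S\cap H_r)/(1-\varepsilon)$, truncates each tail separately, so it is not the cap of any one linear form. The inequality you need, $\Var(\sum_r c_r\min\{\kappa,X_r\})\le\Var(\sum_r c_rX_r)$, is false in general. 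For example, take $X_1=U$, $X_2=2-U$ with $U$ uniform on $\{0,1,2\}$ and $\kappa=1$: the uncapped sum is constant, but the capped sum is not. The cross-covariances between one tail's truncated part and another tail's excess have no sign. The repair, which is exactly the paper's route, is to use subadditivity of the standard deviation term by term:
\[
\sqrt{\Var(W_-)}\le\sum_r c_r\sqrt{\Var(\min\{1/2,y'(T_r)\})}\le\sum_r c_r\sqrt{\Var(y'(T_r))}\le\frac{y(S)}{1-\varepsilon}\sqrt{\delta(1+\theta)\varepsilon},
\]
applying Lemma~\ref{lem:capped-variance} to one summand at a time and using $\bar y(T_r)\le(1+\theta)\varepsilon$.

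Your reduction to bounded mass is also not right as stated. A head facility has $y_i\le y(H_r)=1-\varepsilon<1$, so the case ``$S'=\{i\}$ with $y(S')\ge1$'' never occurs. What actually happens is that a greedy prefix can overshoot to $1\le y(S')<2$, so you do not get $y(S')\le 1$. Your bias estimate $y(S')-\E Z\le2\varepsilon\theta$ is then unjustified; it only gives $<\frac{2\varepsilon\theta}{1-\varepsilon}$. The overshoot is harmless if you keep $1-\E Z$ unsplit rather than writing it as $(1-y(S'))_++(y(S')-\E Z)_+$. With $y(S')=1+a$ you get $1-\E Z\le -a+\frac{\varepsilon\theta}{1-\varepsilon}(1+a)\le\frac{\varepsilon\theta}{1-\varepsilon}$. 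The paper avoids this case analysis altogether by choosing deterministic coefficients $c\in[0,1]^{\mathcal U}$ supported on $S$ with $\sum_i c_iy_i=\min\{y(S),1\}$, and working with $m'=\sum_ic_iy'_i\le y'(S)$, whose mass is exactly $\min\{y(S),1\}$. With these two repairs, $\sqrt{\Var Z}=O(\sqrt\delta)=O(\varepsilon^{5\ell-1})\,\varepsilon\theta$, and your argument closes within $3\varepsilon\theta$.
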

For larger radii, we compare with the representative $r(j)$: \Cref{lem:radial} shows that, around $r(j)$, the rounded opening is as good as the original opening up to a factor $1+\xi=1+O(\varepsilon)$ in the radius and a factor $1+\theta$ in the mass.
\begin{lemma}\label{lem:radial}
For every representative $r$ and every $t\ge R_r$,
\[\mathbb E\left[\mu_r\bigl(\mathcal U\setminus
\operatorname{ball}(r,(1+\xi)t)\bigr)\right]
\le(1+\theta)q_r(t).\]
\end{lemma}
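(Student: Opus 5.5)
The plan is to construct an explicit comparison assignment $\nu_r$ of total mass $1$, supported on $\mathcal U$ and dominated by $\vecy'$. Since $\mu_r$ is the nearest unit of $y'$ from $r$, it is stochastically closer to $r$ than any such assignment. Hence
\[
\mu_r\bigl(\mathcal U\setminus\operatorname{ball}(r,(1+\xi)t)\bigr)\le\nu_r\bigl(\mathcal U\setminus\operatorname{ball}(r,(1+\xi)t)\bigr)
\]
pointwise in every outcome. It then suffices to bound the expectation of the right-hand side by $(1+\theta)q_r(t)$ for $t\ge R_r$.

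\textbf{Constructing $\nu_r$.} Let $\nu_r$ take all of $y'$ on the head $H_r$, which has mass $Y_r=1-\min\{1/2,y'(T_r)\}$. It then takes the rounded tail atoms of $r$ as follows.
\begin{itemize}
\item For each open physical atom $e\in T_r$ ($y'_e=\delta$), put mass $y'_e$ on $e$ itself.
\item For each open virtual atom $e\in T_r\cap H_{r'}$, put mass $y'_e$ on the facilities of $H_{r'}$. Choose this mass as a proportional share of $y'|_{H_{r'}}$, scaled so that the total charge placed on $H_{r'}$ by all virtual atoms of $T_r\cap H_{r'}$ is at most $y'(H_{r'})\ge 1/2$.
\end{itemize}
This is feasible because $y'(T_r\cap H_{r'})\le y'(T_r)$, and the capping at $1/2$ in $Y_r$ makes the total at most $1$. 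If $y'(T_r)>1/2$, simply truncate the tail contribution so the total is exactly $1$. Since heads are disjoint (Observation~\ref{obs:disjoint_heads}), the charges land in distinct heads. One also needs $y'(H_{r'})\ge y'(T_r\cap H_{r'})$, which holds because $y'(H_{r'})\ge 1/2\ge\varepsilon+\delta\ge y'(T_r)$ except on truncation.

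\textbf{Bounding the far mass.} All of $H_r$ lies within $R_r\le t$, so the head contributes nothing. A physical atom $e$ lies outside $\operatorname{ball}(r,(1+\xi)t)$ only if $d(r,e)>t$. By Lemma~\ref{lem:virtual_distance}, a virtual atom rerouted into $H_{r'}$ lands outside $\operatorname{ball}(r,(1+\xi)t)$ only if its own location satisfies $d(r,e)>t$. Hence
\[
\nu_r\bigl(\mathcal U\setminus\operatorname{ball}(r,(1+\xi)t)\bigr)\le\sum_{e\in T_r,\ d(r,e)>t}y'_e .
\]
Taking expectations and using $\E[y'_e]=\bar y_e\le(1+\theta)y_e$ gives $(1+\theta)\,y(T_r\setminus\operatorname{ball}(r,t))$. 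Since $T_r$ consists of the last $\varepsilon$ units of $F_r$ and everything outside $F_r$ is farther, this equals $(1+\theta)\min\{q_r(t),\ldots\}\le(1+\theta)q_r(t)$.

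\textbf{Expected obstacle.} The main obstacle is the second step, where the head is deflated to $Y_r<1$. When $y'(T_r)$ is small, the head mass $Y_r=1-y'(T_r)$ exactly complements the tail, so the bound above is complete. When $y'(T_r)>1/2$, truncation only removes mass, which is fine. I expect the delicate point to be the feasibility of rerouting virtual atoms simultaneously from several tails $T_r$ into the same $H_{r'}$. This is not an issue here, however, because $\nu_r$ is defined for a single $r$ and only needs to respect $0\le\nu_r\le y'$. That requirement reduces to $y'(T_r\cap H_{r'})\le y'(H_{r'})$, which holds by the bound $y'(H_{r'})\ge 1/2$ noted above.
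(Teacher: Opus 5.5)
Your proposal is correct and follows essentially the same route as the paper: you build $\nu_r$ from all of $y'$ on $H_r$ plus retained tail atoms of total $\min\{1/2,y'(T_r)\}$, with physical atoms kept in place and virtual atoms rerouted into $H_{r'}$ using $y'(H_{r'})=Y_{r'}\ge 1/2$. You then compare $\mu_r$ with $\nu_r$ via $\mu_r(\mathcal U\setminus\operatorname{ball}(r,u))=(1-y'(\operatorname{ball}(r,u)))_+$, and bound the far mass with Lemma~\ref{lem:virtual_distance} and $\E[y'_e]=\bar y_e\le(1+\theta)y_e$.

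One correction: $y'(T_r)\le\varepsilon+\delta$ is false in general, because only $\mathcal E_1(r)$ is in the laminar family, which is exactly why $\Delta_{1,r}$ appears. You never need it, though, since the retained demand on any $H_{r'}$ is at most $\min\{1/2,y'(T_r)\}\le 1/2$. Also, for $t\ge R_r$ the final quantity $y(T_r\setminus\operatorname{ball}(r,t))$ equals $q_r(t)$ exactly, so no ``$\min$'' is needed.
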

\Cref{lem:connection-cost} combines the two regimes. Note that $\operatorname{cost}_{\vecy'}(j)$ here only accounts for the contribution from $\mathcal U$.
\begin{lemma}\label{lem:connection-cost}
    For every client $j\in C$ and $\varepsilon\le1/8$, \[
        \E[\operatorname{cost}_{\vecy'}(j)] \le (1+O(p\varepsilon))\operatorname{cost}_{\vecy}(j).
    \]
\end{lemma}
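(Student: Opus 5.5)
We will write the connection cost as an integral over radii of ball deficits, using the identity the paper refers to as \eqref{eq:con-integral}:
\[
\cost_{\vecy'}(j)=\int_0^\infty p\,t^{p-1}\bigl(1-y'(\operatorname{ball}(j,t))\bigr)_+\,dt,
\qquad
\cost_{\vecy}(j)=\int_0^\infty p\,t^{p-1}q_j(t)\,dt .
\]
Fix $j$ and put $r=r(j)$, $D:=d(j,r)+R_r$. Observation~\ref{obs:filtering} gives $D\le(\varepsilon^{-1}+1)R_j$. Choose the threshold $t_0:=c\,D/\varepsilon$ for a suitable constant $c$, so that $t_0\le(1+O(\varepsilon))\varepsilon^{-2}R_j$. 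We then split the integral at $t_0$ and treat the two regimes separately.

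For $t\le t_0$, apply Lemma~\ref{lem:local-deficit} with $S=\operatorname{ball}(j,t)\cap\mathcal U$. This needs $y(S)\ge y^{\mathrm{orig}}(\operatorname{ball}(j,t))$ up to the part of the opening lying in virtual atoms. Here one must check that the pre-rounding opening on $\mathcal U$ dominates $y^{\mathrm{orig}}$ on every ball. Physical atoms and heads carry the original mass, and duplication only adds mass. Virtual atoms sit in heads $H_{r'}$, which stay in $\mathcal U$ as head facilities. So $(1-y(S))_+\le q_j(t)$, and integrating gives a contribution of at most
\[
\int_0^{t_0}p\,t^{p-1}q_j(t)\,dt+3\varepsilon\theta\,t_0^p .
\]
By parameter remark (iii), $3\varepsilon\theta\,t_0^p\le O(\varepsilon)\cost_{\vecy}(j)$, using $t_0^p\le(1+O(p\varepsilon))\varepsilon^{-2p-1}\cost_{\vecy}(j)$ together with $\theta=\varepsilon^{2\ell+1}$ and $\ell\ge p$.

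For $t>t_0$, we bound $j$'s deficit by routing through $r$. The measure $\mu_r$ has unit mass, so
\[
\bigl(1-y'(\operatorname{ball}(j,t))\bigr)_+\le\mu_r\bigl(\mathcal U\setminus\operatorname{ball}(j,t)\bigr)\le\mu_r\bigl(\mathcal U\setminus\operatorname{ball}(r,t-d(j,r))\bigr).
\]
Set $s=(t-d(j,r))/(1+\xi)$. Since $t>t_0$ and $t_0\gg D$, we have $s\ge R_r$, so Lemma~\ref{lem:radial} bounds the expectation by $(1+\theta)q_r(s)$. Next we compare $q_r(s)$ back to $j$ via $q_r(s)\le q_j(s-d(j,r))$, which holds because $\operatorname{ball}(j,s-d(j,r))\subseteq\operatorname{ball}(r,s)$. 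Writing $u:=s-d(j,r)$, we have $t=(1+\xi)(u+d(j,r))+d(j,r)$. Because $t\ge t_0\ge c\,d(j,r)/\varepsilon$, this gives $t\le(1+O(\varepsilon))u$ on the whole range. Changing variables then makes the tail integral at most
\[
(1+\theta)(1+O(\varepsilon))^{p}\int_{u_0}^\infty p\,u^{p-1}q_j(u)\,du .
\]
Since $(1+O(\varepsilon))^p=1+O(p\varepsilon)$ when $\varepsilon<1/p$, we obtain the factor $1+O(p\varepsilon)$.

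Finally, the lower range $[0,t_0]$ and the range $[u_0,\infty)$ overlap at worst, and both integrands are nonnegative. Summing the two regimes therefore gives
\[
\E[\cost_{\vecy'}(j)]\le(1+O(p\varepsilon))\cost_{\vecy}(j)+O(\varepsilon)\cost_{\vecy}(j).
\]

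The main obstacle is the bookkeeping in the far regime. The change of variables has to be set up so that the multiplicative distortion in $t$ is $1+O(\varepsilon)$ rather than an additive $d(j,r)$ shift. This is exactly why $t_0$ must exceed $d(j,r)/\varepsilon$, and also why $t_0$ must exceed $R_r$ so that Lemma~\ref{lem:radial} applies. At the same time $t_0$ must stay $O(\varepsilon^{-2}R_j)$ so that the additive loss $3\varepsilon\theta\,t_0^p$ is absorbed. A secondary point to verify is the domination $(1-y(S))_+\le q_j(t)$ on $\mathcal U$ despite the removal of virtual atoms.
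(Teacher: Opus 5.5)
Your route matches the paper's. You use the layer-cake identity and split at $t_0\approx(d(j,r)+R_r)/\varepsilon$. Below $t_0$ you apply Lemma~\ref{lem:local-deficit} together with the domination $y(\operatorname{ball}(j,t)\cap\calU)\ge y^{\mathrm{orig}}(\operatorname{ball}(j,t))$. Above $t_0$ you use $\mu_r$, Lemma~\ref{lem:radial}, the comparison $q_r(s)\le q_j(s-d(j,r))$, and the change of variables with $t\le(1+\chi)u$, $\chi=O(\varepsilon)$. All of that is fine. Your constant $c$ must be $1+O(\varepsilon)$, e.g.\ $c=1$, so that no factor $c^p$ enters $t_0^p$; your stated bound on $t_0$ already forces this.

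The gap is the final summation. Since $u_0=(t_0-d(j,r))/(1+\xi)-d(j,r)<t_0$, adding $\int_0^{t_0}pt^{p-1}q_j(t)\,dt$ and $(1+O(p\varepsilon))\int_{u_0}^\infty pu^{p-1}q_j(u)\,du$ counts $\int_{u_0}^{t_0}pt^{p-1}q_j(t)\,dt$ twice. Nonnegativity of the integrands does not make this harmless: it is exactly why the double-counted piece is a positive excess. From what you wrote, the total is only bounded by $(2+O(p\varepsilon))\cost_{\vecy}(j)$.

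The missing step uses two facts: the overlap interval is multiplicatively thin, and $q_j$ is nonincreasing. Your own inequality $t\le(1+\chi)u$, evaluated at $t=t_0$, gives $t_0\le(1+\chi)u_0$. Hence
\[
\int_{u_0}^{t_0}pt^{p-1}q_j(t)\,dt\le\bigl(t_0^p-u_0^p\bigr)q_j(u_0)\le\bigl((1+\chi)^p-1\bigr)u_0^p\,q_j(u_0)\le\bigl((1+\chi)^p-1\bigr)\int_0^{u_0}pu^{p-1}q_j(u)\,du .
\]
The right-hand side is $O(p\varepsilon)\cost_{\vecy}(j)$ because $p\varepsilon<1$. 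This is precisely how the paper closes the argument. With this bound inserted, your proof is complete.
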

\begin{proof}
Let $r=r(j)$. By the layer-cake formula $D^p=\int_0^\infty pt^{p-1}\mathbf 1_{\{D>t\}}\,dt$, we have 
    \begin{align}
        \cost_\mathbf{y}(j) & = \int_{0}^\infty pt^{p-1}q_j(t) \, dt, \nonumber \\
        \E[\cost_{\mathbf{y}'}(j)] & =\int_{0}^\infty pt^{p-1}\E[(1-y'(\calU\cap \operatorname{ball}(j,t)))_+] \, dt. \label{eq:con-integral}
    \end{align}
    For \eqref{eq:con-integral}, Tonelli’s theorem justifies exchanging expectation and integration, because the integrand is nonnegative.

    Let $t_0\ge d(j,r)+(1+\xi)R_r$ be the threshold at which we split the integral into two parts. By \Cref{lem:local-deficit}, we derive
    \begin{align*}
        \int_0^{t_0} pt^{p-1} \E[(1-y'(\calU\cap \operatorname{ball}(j,t)))_+] \, dt
        \le &
        \int_0^{t_0} pt^{p-1} \big(1-y(\operatorname{ball}(j,t))_+ +3\varepsilon\theta\big) \, dt
        \\ \le &
        \int_0^{t_0} pt^{p-1} \big((1-y^{\mathrm{orig}}(\operatorname{ball}(j,t)))_+ +3\varepsilon\theta\big) \, dt
        \\
        = & \int_0^{t_0} pt^{p-1}q_j(t) \, dt + 3\varepsilon\theta t_0^p,
    \end{align*}
    where the second inequality follows from the fact that $y(\operatorname{ball}(j, t)) \geq y^{\mathrm{orig}}(\operatorname{ball}(j, t))$ for any client $j \in C$ and radius $t \geq 0$ due to high-influence atoms split in Step~\ref{step:pre-influence}.

    To leverage the backup facilities associated with $r$, we use the fact that $0\le \mu_r\le \mathbf y'$, which implies $(1-y'(S))_+ \le (1-\mu_r(S))_+=\mu_r(\calU\setminus S)$ for any $S\subseteq \calU$. Consequently,
    \begin{align*}
        \E[(1-y'(\operatorname{ball}(j,t)))_+] \le &  \E[\mu_r(\calU\setminus\operatorname{ball}(j,t))] \\
        \le & \E[\mu_r\big(\calU\setminus \operatorname{ball}(r,t-d(j,r))\big)] \tag{*} \\
        \le & \E[\mu_r\big(\calU\setminus \operatorname{ball}(r,(1+\xi)s)\big)] \\
        \le & (1+\theta)q_r(s) \tag{\Cref{lem:radial}} \\
        \le & (1+\theta)q_j(s-d(j,r)), \tag{**}
    \end{align*}
    where $s$ be a value such that $(1+\xi)s=t-d(j,r)$. Observe that $s\ge R_r$ when $t\ge t_0$. %Note that this choice of $s$ is possible by an appropriate choice of $t_0$.
    For (*) and (**), we know that, for $t\ge R_r$, $\operatorname{ball}(r,t-d(j,r))\subseteq \operatorname{ball}(j,t)$ and $\operatorname{ball}(j,s-d(j,r))\subseteq \operatorname{ball}(r,s)$, respectively.

    Let $s_0$ be a value such that $(1+\xi)s_0=t_0-d(j,r)$. We know that, for $u=s-d(j,r)$, \[
        t = (1+\xi)u+(2+\xi)d(j,r)=t_0+(1+\xi)(u-s_0)+(1+\xi)d(j,r).
    \]
    Suppose that $t_0\le (1+\chi)u_0$ where $u_0:=s_0-d(j,r)\ge 0$ and $\chi$ is a constant such that $\xi\le \chi$. Note that $u_0\ge 0$ by this condition. Then $t\le (1+\chi)u$. In this case, we have
    \begin{align*}
        \int_{t_0}^\infty pt^{p-1} \E[(1-y'(\calU\cap \operatorname{ball}(j,t)))_+] \, dt & \le \int_{t_0}^\infty pt^{p-1} (1+\theta)q_j(s-d(j,r)) \,dt \\
        & = (1+\theta)(1+\xi) \int_{u_0}^\infty pt^{p-1}q_j(u) \, du \\
        & \le (1+\theta)(1+\xi)(1+\chi)^{p-1} \int_{u_0}^\infty pu^{p-1}q_j(u)\,du.
    \end{align*}
    Combine both integrals; we get
    \begin{align*}
        \mathbb E[\cost_{\mathbf{y}'}(j)] & \le \int_0^{t_0} pt^{p-1}q_j(t)\, dt +3\varepsilon\theta t_0^p+(1+\theta)(1+\xi)(1+\chi)^{p-1} \int_{u_0}^\infty pu^{p-1}q_j(u)\,du \\
        & \le (1+\theta)(1+\xi)(1+\chi)^{p-1} \int_0^{\infty} pt^{p-1}q_j(t)\, dt +3\varepsilon\theta t_0^p +\int_{u_0}^{t_0} pt^{p-1}q_j(t)\, dt \\
        & \le (1+\theta)(1+\xi)(1+\chi)^{p-1} \cost_{\mathbf y}(j) +3\varepsilon\theta t_0^p +(t_0^p-u_0^p)q_j(u_0) \\
        & \le (1+\theta)(1+\xi)(1+\chi)^{p-1} \cost_{\mathbf y}(j) +3\varepsilon\theta t_0^p +((1+\chi)^p-1) u_0^p q_j(u_0) \\
        & \le (1+\theta)(1+\xi)(1+\chi)^{p-1} \cost_{\mathbf y}(j) +3\varepsilon\theta t_0^p +((1+\chi)^p-1) \int_0^{u_0} pu^{p-1} q_j(u)\,du \\
        & \le (1+O(p\varepsilon))\cost_{\mathbf y}(j) +3\varepsilon\theta t_0^p.
    \end{align*}
    To achieve the desired upper bound of $(1+O(p\varepsilon))\operatorname{cost}_\mathbf{y}(j)$, we additionally impose the constraint $\chi= O(\varepsilon)$. We now try to minimize $t_0$ subject to all the constraints established above. That is,
    \[
        t_0 \ge d(j,r)+(1+\xi)R_r, \quad
        t_0 \le (1+\chi)\left(\frac{t_0-d(j,r)}{1+\xi}-d(j,r)\right), \quad
        \chi \ge \xi, \quad
        \chi = O(\varepsilon).
    \]
    One feasible solution to these constraints is $(t_0,\chi)=\left(\frac{d(j,r)+(1+\xi)R_r}{\varepsilon}, \frac{4\varepsilon}{1-3\varepsilon}\right)$ when $\varepsilon\le 1/8$. 

    According to \Cref{obs:filtering}, $R_r^p\le R_j^p\le \varepsilon^{-1}\cost_{\mathbf y}(j)$ and $d(j,r)\le \varepsilon^{-1}R_j$. That is, %\pvmargin{$t_0$ no longer has constant 40.}
    \begin{align*}
        t_0 & \le \varepsilon^{-1}(\varepsilon^{-1}R_j+(1+\xi)R_j)\le \frac{1+\varepsilon(1+\xi)}{\varepsilon^2}R_j, \\
        t_0^p &\le \frac{(1+\varepsilon(1+\xi))^p}{\varepsilon^{2p+1}}\operatorname{cost}_\mathbf{y}(j)\le \frac{1+O(p\varepsilon)}{\theta}\operatorname{cost}_\mathbf{y}(j). \tag{$\ell=\lceil p \rceil$}
    \end{align*}
    Therefore,
    \begin{align*}
        \mathbb E[\cost_{\mathbf{y}'}(j)] 
        & \le (1+O(p\varepsilon))\cost_{\mathbf y}(j) +3\varepsilon\left(1+O(p\varepsilon)\right)\cost_{\mathbf y}(j)  \tag{$p\varepsilon<1$} \\
        & = (1+O(p\varepsilon))\cost_{\mathbf y}(j). \qedhere
    \end{align*}
\end{proof}

\subsubsection{Proof of Lemma \ref*{lem:local-deficit}}

% \localdeficit*

% \gptremovedfile{removed-text/preprocessing-09}

% \begin{gptadded}
Our proof works with a sub-mass of $S$ of total exactly $\min(y(S),1)$. Consider a fixed fractional combination of physical atoms and facilities in the heads, specified by deterministic coefficients $\vecc\in [0,1]^\mathcal U$ satisfying \[
    \sum_{e\in \calE_\phys} c_ey_e+\sum_{r\in \calR} \sum_{i\in H_r} c_iy_i=m, \quad \text{and} \quad c_i=0 \quad\forall i\notin S,
\]
where $m:=\min(y(S),1)$. Let $m'$ denote the total opening after rounding, weighted by the coefficients $\vecc$, i.e., \[
    m':=\sum_{e\in\calE_\phys} c_ey'_e+\sum_{r\in\calR}\sum_{i\in H_r} c_iy'_i.
\]
Since $\operatorname{supp}(\vecc)\subseteq S$ and $c_i\in[0,1]$ for all $i\in\mathcal U$, we have $0\le m'\le y'(S)$. Furthermore, by the Cauchy–Schwarz inequality (or Jensen's inequality), the mean absolute deviation of $m'$ is bounded above by its standard deviation: $\E[|m'-\E[m']|]\le \sqrt{\Var(m')}$. Hence,
\begin{align}
    \E[(1-y'(S))_+] & \le \E[(1-m')_+]\nonumber \\
    & \le \E[(1-\E[m'])_+]+\E[|\E[m']-m'|] \nonumber \\ 
    & \le (1-\E[m'])_+ +\sqrt{\Var(m')}. \label{eq:local-deficit}
\end{align}
% \end{gptadded}

It remains to bound $\E[m']$ (\Cref{lem:local-deficit-expectation}) and $\Var[m']$ (\Cref{lem:local-deficit-variance}). To this end, we define the mass contributed by physical atoms and facilities in the heads to the fractional combination as follows: \[
    m_\phys:=\sum_{e\in \calE_\phys} c_ey_e, \quad m_r:=\sum_{i\in H_r} c_iy_i, \quad m'_\phys:=\sum_{e\in \calE_\phys} c_ey'_e,  \quad m'_r:=\sum_{i\in H_r} c_iy'_i.
\] 
for all $r\in \calR$. Thus, \[
    m=m_\phys+\sum_{r\in \calR}m_r \quad \text{and} \quad m'=m'_\phys+\sum_{r\in \calR} m'_r.
\]

\begin{lemma}\label{lem:local-deficit-expectation}
$\E[m']\ge m-\frac{\varepsilon\theta}{1-\varepsilon}$.
\end{lemma}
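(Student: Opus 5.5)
The plan is to split $\E[m']$ into the physical-atom part and the head parts, $\E[m']=\E[m'_\phys]+\sum_{r\in\calR}\E[m'_r]$, and bound each separately. For physical atoms, pipage rounding preserves marginals and $\bar y_e\ge y_e$ for every atom. Hence $\E[m'_\phys]=\sum_{e}c_e\bar y_e\ge m_\phys$, and no loss arises there. For a head $H_r$, condition on the rounded atoms. By the head-rounding step, $\E[y'_i\mid (y'_e)_e]=\frac{Y_r}{1-\varepsilon}y_i$, so
\[
\E[m'_r]=\frac{\E[Y_r]}{1-\varepsilon}\,m_r .
\]
It therefore suffices to show $\E[Y_r]\ge 1-\varepsilon-\varepsilon\theta$ for every $r$. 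Then
\[
\E[m'_r]\ge m_r-\frac{\varepsilon\theta}{1-\varepsilon}m_r,
\]
and summing with $\sum_r m_r\le m\le 1$ gives the claim.

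To bound $\E[Y_r]$, write $Y_r=1-\min\{1/2,y'(T_r)\}\ge 1-y'(T_r)$, so $\E[Y_r]\ge 1-\E[y'(T_r)]$. The tail $T_r$ now consists of atoms: physical atoms in $\calE_1$ or in $\calE_\rest$ (degree $\ge 2$), together with virtual atoms, which all lie in $\calE_\rest$. Atoms of $\calE_1$ are unscaled, while atoms of $\calE_\rest$ are scaled by $1+\theta$. By marginal preservation of pipage rounding,
\[
\E[y'(T_r)]=\sum_{e\in T_r}\bar y_e\le(1+\theta)\,y(T_r)=(1+\theta)\varepsilon .
\]
This uses that $y(T_r)=\varepsilon$ is preserved through atomization and through the high-influence splitting. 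In the splitting step each tail receives exactly one copy with the same value, so $y(F'_r)=1$ still holds. Consequently $\E[Y_r]\ge 1-\varepsilon-\theta\varepsilon$.

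The only subtle point is the conditioning in the head step. We need the tower property, $\E[y'_i]=\E[\E[y'_i\mid (y'_e)_e]]=\frac{y_i}{1-\varepsilon}\E[Y_r]$, which is valid because $Y_r$ is a deterministic function of the rounded atoms and the head randomness is fresh. Since the coefficients $c_i$ are deterministic, linearity finishes the argument. No concentration is needed, because dropping the $\min$ with $1/2$ only helps the lower bound on $Y_r$.
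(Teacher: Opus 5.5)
Your proof is correct and follows essentially the same route as the paper. Both split $\E[m']$ into the physical part, where nothing is lost because $\bar y_e\ge y_e$, and the head parts, which are handled by the tower property with $\E[y'_i\mid(y'_e)_e]=\frac{Y_r}{1-\varepsilon}y_i$ and the bound $\E[Y_r]\ge 1-\E[y'(T_r)]\ge 1-\varepsilon-\varepsilon\theta$ from marginal preservation and the $(1+\theta)$ scaling. The only difference is at the end: the paper keeps the slightly sharper loss $\frac{\varepsilon\theta}{1-\varepsilon}(m-m_\phys)$, which your use of $\sum_r m_r\le m\le 1$ relaxes to the stated bound.
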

\begin{proof}
By the properties of the pipage rounding, $\E[y'(T_r)]\le (1+\varepsilon^{2\ell+1})\varepsilon$ and $y'(H_r)=Y_r=1-\min(y'(T_r),1/2)$.
Consequently, \[
    \E[Y_r]\ge 1-\E[y'(T_r)]\ge 1-\varepsilon-\varepsilon\theta.
\]
Moreover, recall that $\mathbb E\left[y'_i\mid(y'_e)_{e\in\mathcal E}\right]
=\frac{Y_r}{1-\varepsilon}y_i$ for any $i\in H_r$. By the law of total expectation,
\begin{align*}
    \E[m'_r] & = \sum_{i\in H_r} c_i\E_{(y'_e)_{e\in \calE}}\big[\E[y_i'\mid (y'_e)_{e\in \calE}]\big] \\
    & = \sum_{i\in H_r} c_i\frac{\E[Y_r]}{1-\varepsilon}\E[y_i] \\
    & \ge \frac{m_r}{1-\varepsilon}\E[Y_r].
\end{align*}
Therefore,
\begin{align*}
    \E[m'] & =\E[m'_\phys]+\sum_{r\in \calR}\E[m'_r] \\
    & \ge\E[m'_\phys]+\sum_{r\in \calR}\frac{m_r}{1-\varepsilon}\E[Y_r] \\
    & \ge m_\phys + \sum_{r\in \calR}m_r-\frac{\varepsilon\theta}{1-\varepsilon}\sum_{r\in \calR} m_r \\
    & \ge m-\frac{\varepsilon\theta}{1-\varepsilon}(m-m_\phys). \qedhere
\end{align*}
\end{proof}

\begin{lemma}\label{lem:local-deficit-variance}
$\Var(m')\le 9\delta$.
\end{lemma}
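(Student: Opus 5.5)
We bound $\Var(m')$ with the law of total variance, conditioning on the atom rounding $\mathcal F:=(y'_e)_{e\in\mathcal E}$:
\[
\Var(m')=\E\bigl[\Var(m'\mid\mathcal F)\bigr]+\Var\bigl(\E[m'\mid\mathcal F]\bigr).
\]
For the first term, $m'_\phys$ is $\mathcal F$-measurable. Given $\mathcal F$, the heads are rounded with independent fresh randomness, so $\Var(m'\mid\mathcal F)=\sum_{r}\Var(m'_r\mid\mathcal F)$. Each head rounding is a pipage rounding of the vector $\frac{Y_r}{1-\varepsilon}\vecy_{|H_r}$ with granularity $\delta$. We apply Lemma~\ref{lem:pipage-concentration} with coefficients $c_i\in[0,1]$. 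It gives
\[
\Var(m'_r\mid\mathcal F)\le\delta\sum_{i\in H_r}c_i^2\frac{Y_r}{1-\varepsilon}y_i\le\frac{\delta}{1-\varepsilon}m_r,
\]
using $Y_r\le1$. Summing over $r$ bounds the first term by $\frac{\delta}{1-\varepsilon}\sum_r m_r\le 2\delta m\le2\delta$.

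For the second term, write
\[
\E[m'\mid\mathcal F]=m'_\phys+\sum_r\frac{m_r}{1-\varepsilon}Y_r=:X_1+X_2.
\]
Then $\Var(X_1+X_2)\le 2\Var(X_1)+2\Var(X_2)$. Since $X_1=\sum_e c_e y'_e$ is a weighted sum of pipage-rounded atoms with coefficients in $[0,1]$, Lemma~\ref{lem:pipage-concentration} applied to $\bar{\vecy}$ gives
\[
\Var(X_1)\le\delta\sum_e c_e^2\bar y_e\le(1+\theta)\delta m_\phys\le(1+\theta)\delta.
\]
For $X_2$, use $Y_r=1-\min\{1/2,y'(T_r)\}$. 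The map $y'(T_r)\mapsto\min\{1/2,y'(T_r)\}$ is monotone, so we cannot apply Lemma~\ref{lem:capped-variance} termwise across $r$ directly.

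The main obstacle is handling this sum across representatives. We treat it in two steps.
\begin{itemize}
\item \emph{Dropping the cap.} First note $y'(T_r)\le1/2$ except on the rare event analysed in Lemma~\ref{lem:delta1}. Alternatively, and more cleanly, observe that $\Var(\sum_r a_r\min(Z_r,1/2))\le\Var(\sum_r a_rZ_r)$ need not hold in general. Instead we bound
\[
\Var(X_2)\le\E\Bigl[\Bigl(\sum_r a_r(\min(Z_r,\tfrac12)-\min(\E Z_r,\tfrac12))\Bigr)^2\Bigr],
\]
which is at most $\E[(\sum_r a_r|Z_r-\E Z_r|)^2]$ by $1$-Lipschitzness. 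Here $a_r=m_r/(1-\varepsilon)$ and $Z_r=y'(T_r)$.
\item \emph{The linear surrogate.} Rather than expanding this square, we swap to the linear surrogate $W:=\sum_r a_r Z_r=\sum_e\bigl(\sum_{r\ni e}a_r\bigr)y'_e$. This is a weighted atom sum with coefficients $b_e=\sum_{r:e\in T_r}a_r$. Since $\sum_r a_r\le 2$, each $b_e\le 2$. Lemma~\ref{lem:pipage-concentration} then gives
\[
\Var(W)\le\delta\sum_e b_e^2\bar y_e\le 2\delta(1+\theta)\sum_r a_r y(T_r)\le 4\varepsilon(1+\theta)\delta.
\]
\end{itemize}
To connect $X_2$ to $W$ we use that the cap is inactive with overwhelming probability. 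On the complement, $X_2$ and $\sum_ra_r(1-Z_r)$ differ by at most $\sum_r a_r\le2$. So the tail contribution is at most $O(1)\cdot\exp(-\Omega(1/\delta))\le\delta$, using the exponential-moment bound from Lemma~\ref{lem:delta1}, which shows $\Pr[y'(T_r)>1/2]$ is tiny. If this step becomes messy, we fall back on bounding $\Var(X_2)$ by the second moment about $\sum_r a_r(1-\E Z_r)$ and splitting on the cap event.

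Collecting terms, the total is at most $2\delta+2(1+\theta)\delta+2(4\varepsilon(1+\theta)\delta+\delta)$. This is at most $9\delta$ for $\varepsilon\le1/8$, after careful constant bookkeeping.
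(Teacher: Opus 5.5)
\textbf{There is a genuine gap in your treatment of $X_2$.} Your bound on $\E[\Var(m'\mid\mathcal F)]$, your bound on $X_1=m'_\phys$, and your bound $\Var(W)\le4\varepsilon(1+\theta)\delta$ are all fine. The inequality in your first bullet, $\Var(X_2)\le\E\bigl[(\sum_r a_r|Z_r-\E Z_r|)^2\bigr]$, is also correct. The failure is the passage from there to the linear surrogate $W$.

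\begin{itemize}
\item \emph{The swap to $W$ is unjustified.} Because of the absolute values, the right-hand side of your first bullet has no cancellation. It is in general \emph{larger} than $\Var(W)=\E\bigl[(\sum_r a_r(Z_r-\E Z_r))^2\bigr]$, not smaller.
\item \emph{The difference is not bounded by $\sum_r a_r$.} You claim $X_2-\sum_r a_r(1-Z_r)=\sum_r a_r(Z_r-\tfrac12)_+$ is at most $\sum_r a_r$. But $y'(T_r)$ has no deterministic bound, since $T_r\cap\mathcal E_{\mathrm{rest}}$ is not a set of the laminar family and may contain arbitrarily many atoms.
\item \emph{The per-$r$ tail bound does not cover the event you need.} The event that \emph{some} cap is active needs a union bound over all $r$ with $m_r>0$. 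Their number is not bounded in terms of $\delta$, so the per-$r$ estimate $\Pr[y'(T_r)>1/2]\le\exp(-\Omega(1/\delta))$ does not give $O(1)\exp(-\Omega(1/\delta))$ for that event.
\item \emph{A cross term is dropped.} The step $\Var(X_2)\le\Var(W)+\delta$ ignores the cross term between $W$ and the cap error.
\end{itemize}

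\textbf{The missing idea is subadditivity of the standard deviation (Minkowski in $L^2$).} Contrary to your remark, this \emph{does} let you handle the caps termwise, and it is what the paper uses. The paper writes
\[
\sqrt{\Var(\E[m'\mid\mathcal F])}\le\sqrt{\Var(m'_\phys)}+\sum_r\frac{m_r}{1-\varepsilon}\sqrt{\Var(\min(y'(T_r),1/2))}.
\]
It then applies Lemma~\ref{lem:capped-variance} to each $r$ separately: $\Var(\min(y'(T_r),1/2))\le\Var(y'(T_r))\le\delta(1+\theta)\varepsilon$.

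\textbf{Your first bullet is one line from done.} Minkowski gives
\[
\E\bigl[(\textstyle\sum_r a_r|Z_r-\E Z_r|)^2\bigr]\le\bigl(\textstyle\sum_r a_r\sqrt{\Var(Z_r)}\bigr)^2\le\frac{(1+\theta)\varepsilon\delta}{(1-\varepsilon)^2}.
\]
This uses $\sum_r a_r\le1/(1-\varepsilon)$ and $\Var(Z_r)\le\delta\,\bar y(T_r)\le(1+\theta)\varepsilon\delta$. Via your $1$-Lipschitz observation, it even bypasses Lemma~\ref{lem:capped-variance}. No tail estimate and no union bound are needed. Your bookkeeping then gives
\[
\Var(m')\le2\delta+2(1+\theta)\delta+\frac{2(1+\theta)\varepsilon\delta}{(1-\varepsilon)^2}<9\delta \quad\text{for }\varepsilon\le1/2,
\]
where $\theta\le\varepsilon^3$. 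The termwise route gives up the negative correlation across representatives that your surrogate $W$ exploits, but that gain is not needed here.
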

\begin{proof}
We bound the variance using the law of total variance.
Conditioned on $(y'_e)_{e\in\mathcal E}$,
$m'_\phys$ and all head targets $Y_r$ are fixed, and the head
rounding calls are independent.
Since $\Var(y'_e)\le \E[(y'_e)^2]= \delta\bar y_e$, and
by the negative-correlation property of the
pipage rounding (see \eqref{eq:neg-corr}) within each head, we get
\begin{align*}
    \mathrm{Var}\left(m'\mid(y'_e)_{e\in\mathcal E}\right)
    &=\sum_{r\in\mathcal R}\mathrm{Var}\left(m'_r\mid(y'_e)_{e\in\mathcal E}\right)\\
    &\le\delta \sum_{r\in\mathcal R}\sum_{i\in H_r}
    c_i^2 \frac{Y_r}{1-\varepsilon}y_i  \\
    &\le\delta\sum_{r\in\mathcal R}\frac{m_r}{1-\varepsilon}Y_r \tag{$c_i^2\le c_i\le 1$}\\
    &\le\delta\frac{\sum_{r\in\mathcal R} m_r}{1-\varepsilon} \tag{$Y_r\in [1/2,1]$} \\
    \E[\mathrm{Var}\left(m'\mid(y'_e)_{e\in\mathcal E}\right)] &\le \frac{\delta}{1-\varepsilon}(m-m_\phys).
\end{align*}
On the other hand, the conditional head marginals give \[
    \mathbb E\left[m'\mid(y'_e)_{e\in\mathcal E}\right]
 = m'_{\phys}+\sum_{r\in\mathcal R}\frac{m_r}{1-\varepsilon}
\left(1-\min(y'(T_r),1/2)\right).
\]
By the subadditivity of the standard deviation,
\begin{align*}
\sqrt{\Var(\E\left[m'\mid(y'_e)_{e\in\mathcal E}\right])} & \le \sqrt{\Var(m'_\phys)}+ \sum_{r\in \calR} \frac{m_r}{1-\varepsilon}\sqrt{\Var\big(\min(y'(T_r),1/2)\big)}.
\end{align*}
By the negative correlation of the pipage rounding,
\begin{align*}
    \Var(m'_\phys) & \le \sum_{e\in \calE_\phys} c_e^2 \Var(y'_e) \\
    & \le \delta\sum_{e\in \calE_\phys} c_e^2\bar y_e \\
    & \le \delta(1+\theta) m_\phys, \tag{$c_e^2\le c_e\le 1$} \\
    \Var\big(\min(y'(T_r),1/2)\big)&\le \Var\big(y'(T_r)\big) \tag{\Cref{lem:capped-variance}} \\
    & \le \delta\sum_{e\in T_r} \bar y_e \le \delta(1+\theta)\varepsilon
    \tag{$y(T_r)=\varepsilon$ and $\bar y_e\le (1+\theta)y_e$}.
\end{align*}
Recall that $m_\phys\le m\le1$. Consequently,
\begin{align*}
    \Var(\E\left[m'\mid(y'_e)_{e\in\mathcal E}\right]) & \le \left(\sqrt{\delta(1+\theta)} +\frac{\sqrt{\delta(1+\theta)\varepsilon}}{1-\varepsilon}\sum_{r\in \calR}m_r\right)^2  \\
    &\le \delta(1+\theta)\left(1+\frac{\sqrt{\varepsilon}}{1-\varepsilon}\right)^2.
\end{align*}
Moreover, recall from the assumption of \Cref{thm:preprocessing} that $1/(2\varepsilon)\in\mathbb N$, implying $\varepsilon\le 1/2$. Thus, by the law of total variance,
\begin{align*}
    \Var(m')
    &
    =\E[\mathrm{Var}\left(m'\mid(y'_e)_{e\in\mathcal E}\right)]+\Var(\E\left[m'\mid(y'_e)_{e\in\mathcal E}\right])
    \\&
    \le \frac{\delta}{1 - \varepsilon} + \delta(1 + \theta) \left(1 + \frac{\sqrt \varepsilon}{1 - \varepsilon} \right)^2
    \\&
    \le \delta \max_{0<\varepsilon\le 1/2} \left(\frac{1}{1-\varepsilon}+(1+\varepsilon^3)\left(1+\frac{\sqrt\varepsilon}{1-\varepsilon}\right)^2\right) \\
    & < 9\delta. \qedhere
\end{align*}
\end{proof}

\begin{proof}[Proof of \Cref{lem:local-deficit}]
By combining \eqref{eq:local-deficit}, \Cref{lem:local-deficit-expectation,lem:local-deficit-variance}, we get \[
    \E[(1-y'(S))_+]\le 1-m+\frac{\varepsilon\theta}{1-\varepsilon}+3\sqrt{\delta} < (1-y(S))_+ +3\varepsilon\theta.
\]
For the last inequality, we can find that $1/(1-\varepsilon)+3\varepsilon^{5\ell-1}<3$ for $0<\varepsilon\le 1/2$.
\end{proof}

\subsubsection{Proof of Lemma \ref*{lem:radial}}

\begin{proof}[Proof of \Cref{lem:radial}]
% Fix a representative $r\in\mathcal R$. Let $T_r^{\mathrm{orig}}$ denote its original tail on facility pieces, before the construction of physical and virtual atoms. Thus, $H_r\cup T_r^{\mathrm{orig}}$ consists of the nearest one unit of original opening to $r$.
First we construct a feasible comparison assignment $\nu_r$ for every rounded outcome. Retain selected atoms in $T_r$ with total rounded value
$\min(1/2,y'(T_r))$; this is possible because $y'_e\in\{0,\delta\}$ for every atom and $1/2$ is a multiple of $\delta$. We use all rounded opening in $H_r$ and each retained physical tail atom at its original facility. For every
$r'\in\mathcal R\setminus\{r\}$, collect all retained virtual atoms
created from $T_r^\orig\cap H_{r'}$, and allocate their
total rounded value to the opened facilities in $H_{r'}$. Such an allocation is possible because this demand is at most $1/2$ and $y'(H_{r'})=Y_{r'}\ge1/2$.

Therefore, the demand can be allocated without exceeding $y'_i$
at any facility $i\in H_{r'}$. Consequently, such
an assignment satisfies $\nu_r(\mathcal U)=y'(H_r)+\min(1/2,y'(T_r))=1$ and $0\le\nu_r\le\vecy'$.

Next we bound the expectation of the assignment outside the enlarged ball. Fix $t\ge R_r$. The distance from $r$ to each facility in $H_r$ is at most $R_r$, so, according to \Cref{lem:virtual_distance}, $\nu_r\bigl(H_r\setminus
\operatorname{ball}(r,(1+\xi)t)\bigr)=0.$

For any physical tail atom $e$, its location does not change.
If it contributes to the assignment outside radius $(1+\xi)t$,
then $d(r,e)>t$. Its contribution is at most $y'_e$, and
marginal preservation in pipage rounding gives $\mathbb E[y'_e]=\bar y_e
\le(1+\theta)y_e$.
% For the virtual atoms, fix $r'\ne r$, and let $\mathcal V_{r,r'}$ be the set of virtual atoms created from $T_r^{\mathrm{orig}}\cap H_{r'}$. According to construction, we have
% \[\sum_{e\in\mathcal V_{r,r'}}y_e
% =y^{\mathrm{orig}}(T_r^{\mathrm{orig}}\cap H_{r'}).\]

Recall from \Cref{lem:virtual_distance} that for every $u\in T_r\cap H_{r'}$ and $v\in H_{r'}$, \[
    d(r,v)\le(1+\xi)d(r,u).
\]
Define $T_r(r')$ as a set of virtual atoms in $T_r$ whose original facility belongs to $H_{r'}$. Thus, the contribution of atoms in $T_r\cap H_{r'}$ with respect to allocation $\nu$ is as follows:
\begin{align*}
    \E[\nu_r(H_{r'}\setminus\operatorname{ball}(r,(1+\xi)t))] & \le \sum_{e\in T_r(r'):d(r,e)>t} \E[y'_e] \\
    & \le (1+\theta)\sum_{e\in T_r\cap H_{r'}:d(r,e)>t} y_e \\
    & = (1+\theta) y(T_r(r')\setminus \operatorname{ball}(r,t)).
\end{align*}

Moreover, since $\mu_r$ uses the nearest unit of opening, for
every rounded outcome and every radius $u\ge0$, \[
    \mu_r\bigl(\mathcal U\setminus\operatorname{ball}(r,u)\bigr)
    =\left(1-y'(\operatorname{ball}(r,u))\right)_+
    \le\nu_r\bigl(\mathcal U\setminus\operatorname{ball}(r,u)\bigr).
\]

Therefore,
\begin{align*}
    \mathbb E\left[\mu_r\bigl(\mathcal U\setminus
    \operatorname{ball}(r,(1+\xi)t)\bigr)\right] & \le \mathbb E\left[\nu_r\bigl(\mathcal U\setminus
    \operatorname{ball}(r,(1+\xi)t)\bigr)\right] \\
    & \le \sum_{e\in T_r\cap \calE_\phys: d(r,e)>(1+\xi)t} \E[y'_e] +\sum_{r'\in \calR\setminus {r}} \E[\nu_r(H_{r'}\setminus\operatorname{ball}(r,(1+\xi)t))] \\
    & \le (1+\theta)\Bigg(y((T_r\cap \calE_\phys)\setminus \operatorname{ball}(r,(1+\xi)t)) \\
    & \qquad\qquad + \sum_{r'\in\calR\setminus\{r\}} y((T_r(r'))\setminus\operatorname{ball}(r,t)) \Bigg) \\
    & \le (1+\theta) y(T_r\setminus \operatorname{ball}(r,t)) \\
    & = (1+\theta) y^\mathrm{orig}(T^\mathrm{orig}_r\setminus \operatorname{ball}(r,t)) \\
    & = (1+\theta)(1-y^\mathrm{orig}(\operatorname{ball}(r,t)))_+ = (1+\theta)q_r(t). \qedhere
\end{align*}

\end{proof}

\subsection{Putting All Together}
% \begin{gptadded}
\begin{proof}[Proof of Theorem~\ref{thm:preprocessing}]
If there are no clients, the zero vector suffices. Otherwise filtering
selects at least one representative, and the construction of $\nu_r$
above ensures at least one unit of rounded opening in every outcome.
Merging and capping preserve all ball deficits, as shown in
Step~\ref{step:pre-return}; thus a nearest unit still exists afterward.
They also preserve grid integrality and do not increase total opening.
Lemma~\ref{lem:good-event} proves the facility bound, and
Lemma~\ref{lem:connection-cost} proves the unconditional cost bound.

There are polynomially many original pieces after the head and tail
splits. Atomization uses at most a polynomial in $n$ times
$\delta^{-1}$ atoms, and influence splitting makes at
most $|\mathcal R|$ copies per atom. Each pipage step makes
at least one fractional coordinate integral; the laminar sets and
the filtering can be maintained in polynomial time. Since
$\delta^{-1}=\varepsilon^{-O(p)}$, the claimed running time follows.
\end{proof}
% \end{gptadded}

\section{A Client-Dependent Bound for the Single-Pivot Rounding Algorithm of \texorpdfstring{\cite{BGHL2026}}{[BGH+2026]}}\label{sec:client-dep}\label{sec:client-dependent-analysis}

Throughout this section, we take $p=1$. That is, we focus on the $k$-median problem. So, the connection cost of a client is its distance to the facility serving it.  Most notations in this section are independent of those from Section~\ref{sec:preprocessing}. 

The main goal of the section is to show that, for the $2$-LMP rounding algorithm of~\cite{BGHL2026}, the approximation ratio for a client $j$ is $2$ only when its connection distances in the support of the LP solution are the same. In particular, we define the \emph{dispersion} $\calD_j$ to measure how far away the distribution of distances is from a deterministic distance and prove an approximation ratio for $j$ that depends on $\calD_j$.

\subsection{Recap of Rounding Algorithm of~\texorpdfstring{\cite{BGHL2026}}{[BGH+2026]}}
First, we review the single-pivot rounding algorithm of~\cite{BGHL2026}.  The input to the rounding algorithm is an opening vector $\mathbf{y}\in[0,1]^F$ with $y(F)\ge1$. It runs in iterations and maintains an opening vector $\mathbf{z}^{(t)}$ at the end of iteration $t$.  Initially, we have $\mathbf{z}^{(0)}=\mathbf{y}$. For every $t = 0, 1, 2, \cdots$, the iteration $t + 1$ of the algorithm runs as follows. For every facility $i$, the \emph{nearest unit of opening} of $i$ is described by weights $w_{qi}^{(t)}$, $q\in F$, obtained by
collecting the opening of $\mathbf{z}^{(t)}$ in nondecreasing order of distance from $i$, breaking ties consistently and in favor of $i$ itself, until one unit is collected. Thus
\[
 0\le w_{qi}^{(t)}\le z_q^{(t)},
 \qquad
 \sum_{q\in F}w_{qi}^{(t)}=1,
 \qquad
 w_{ii}^{(t)}=z_i^{(t)}.
\]

Let $\mathcal F_t$ be the history through iteration $t$, including all
random choices made so far; $\mathcal F_0$ is trivial.  In iteration $t+1$, the pivot $q$ is chosen with probability
$z_q^{(t)}/{|\mathbf{z}^{(t)}|_1}$. The pivot is opened (its opening becomes $1$), and each other facility $i$ is deleted (its opening becomes $0$) with probability $w_{qi}^{(t)}/z_q^{(t)}$. Indeed, the analysis uses only these conditional marginals, not the joint
distribution of the deletions. 

The algorithm terminates when the opening vector becomes integral. It is possible that an iteration does not change the opening vector. So it may happen that, for any finite $T$ (which may depend on the instance), with positive probability, the algorithm does not terminate after $T$ iterations.  Using conditional probabilities, one can easily avoid these ``useless'' iterations. However, for the analysis purpose, it is convenient for us to keep these iterations. Moreover, we define $\mathbf{z}^{(t)}$ over all integers $t \geq 0$: if algorithm terminates in less than $t$ iterations, then $\mathbf{z}^{(t)}$ is the $\mathbf{z}$-vector when the algorithm terminates, which is integral. 

The following simple lemma is needed later:
\begin{lemma}[Almost-sure termination]\label{lem:client-termination}
	The number of iterations the algorithm runs is finite almost surely.
\end{lemma}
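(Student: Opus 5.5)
Since the algorithm terminates exactly when $\mathbf z^{(t)}$ becomes integral, the plan is to show that, from any non-integral state, one iteration makes progress with probability bounded below by a positive constant depending only on the instance. The natural progress measure is the number of fractional facilities,
\[
N_t:=\bigl|\{i\in F:0<z^{(t)}_i<1\}\bigr|.
\]
It never increases. The pivot becomes $1$, every other facility either keeps its value or becomes $0$, and no coordinate ever moves from an integral value to a fractional one. So $N_t$ is nonincreasing and takes values in $\{0,\dots,|F|\}$.

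Next, suppose $\mathbf z^{(t)}$ is not integral, so some $q$ has $0<z^{(t)}_q<1$. This $q$ is chosen as pivot with probability $z^{(t)}_q/|\mathbf z^{(t)}|_1$, and then it becomes integral, so $N$ strictly drops. I would lower bound this probability uniformly over the run. The values $z^{(t)}_q$ are always either $0$, $1$, or the original $y_q$. Hence every fractional coordinate satisfies $z^{(t)}_q\ge y_{\min}$, where $y_{\min}$ is the smallest positive fractional entry of $\mathbf y$. Also $|\mathbf z^{(t)}|_1\le|F|$. Therefore, conditional on $\mathcal F_t$ and on non-integrality,
\[
\Pr[N_{t+1}<N_t\mid\mathcal F_t]\ge \rho:=y_{\min}/|F|>0 .
\]
We also need $|\mathbf z^{(t)}|_1\ge 1$ so that the sampling is well defined. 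This follows because the pivot is opened to value $1$ and the initial vector has $y(F)\ge1$.

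Finally, a standard geometric argument finishes the proof. Each of the at most $|F|$ decrements of $N$ requires a waiting time that is stochastically dominated by a geometric random variable with success probability $\rho$. Concretely, the probability of $s$ consecutive non-progress iterations from any fractional state is at most $(1-\rho)^s$. Thus the number of iterations is dominated by a sum of $|F|$ geometric variables. It is therefore finite almost surely, and indeed has finite expectation.

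The only delicate point is the uniform lower bound: we must verify that no fractional coordinate ever takes a new value smaller than $y_{\min}$. This holds because the algorithm only sets coordinates to $0$ or $1$ and never alters a fractional value partially.
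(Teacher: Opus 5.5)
Your proof is correct. It follows the same template as the paper: a per-iteration success probability bounded away from zero, then a geometric tail. The difference is in which event you lower-bound.

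The paper fixes a fractional facility $i$ and bounds the probability that $i$ becomes integral in the next iteration, either as the pivot or by being deleted by another pivot. These events are disjoint. The nearest-unit identities $w^{(t)}_{ii}=z^{(t)}_i$ and $\sum_q w^{(t)}_{qi}=1$ make their total probability exactly $1/|\mathbf z^{(t)}|_1\ge 1/|F|$. A union bound over facilities then gives probability at most $|F|(1-1/|F|)^t$ of not having terminated after $t$ iterations.

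You use only the pivot event for a single fractional facility. To make that work you need your extra observation that fractional coordinates stay frozen at their input values, so $z^{(t)}_q\ge y_{\min}$. This gives the rate $y_{\min}/|F|$, and you finish with the geometric-sum argument on the count $N_t$.

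What each route buys:
\begin{itemize}
\item Your argument never uses the specific deletion probabilities $w_{qi}/z_q$. It needs only that the pivot is drawn proportionally to $z_q$ and that every other coordinate is either unchanged or set to $0$.
\item The paper's bound is uniform in the LP values. It gives an expected number of iterations controlled by $|F|$ alone, rather than your $|F|^2/y_{\min}$.
\end{itemize}
For the almost-sure statement of the lemma, either suffices.
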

\begin{proof}
	Let $n_F:=|F|$.  Conditional on $\mathcal F_t$, a facility $i$ that is
	fractional at time $t$ becomes integral in the next iteration either by
	being chosen as the pivot or by being deleted by another pivot.  Since
	the nearest-unit weights satisfy $w_{ii}^{(t)}=z_i^{(t)}$ and
	$\sum_q w_{qi}^{(t)}=1$, this has probability
	\[
	\frac{z_i^{(t)}+\sum_{q\ne i}w_{qi}^{(t)}}{{|\mathbf{z}^{(t)}|_1}}
	=\frac1{{|\mathbf{z}^{(t)}|_1}}\ge\frac1{n_F},
	\]
	where the inequality uses ${|\mathbf{z}^{(t)}|_1}\le n_F$.
	No integral facility becomes fractional again.  Thus the probability
	that a given initially fractional facility remains fractional after
	$t$ iterations is at most $(1-1/n_F)^t$, which tends to $0$. Since there
	are finitely many facilities, the number of iterations is finite almost
	surely.
\end{proof}

The expected number of open facilities given by the algorithm is $|\mathbf{y}|_1$, which follows from a simple martingale argument.   The main focus of our analysis is on the expected connection cost of a client $j$. 

\subsection{Main Theorem and Notations}\label{subsec:client-termination}
Till the end of Section~\ref{sec:client-dep}, we fix a client $j$, and our goal is to give the expected connection cost of $j$.  After the standard facility-splitting step, we may assume that
$x_{ij}\in\{0,y_i\}$ for every $i\in F$; let
$F_j:=\{i\in F:x_{ij}>0\}$ be the set of supporting facilities  of $j$. Then
$y(F_j)=1$ and $y_i>0$ for
every $i\in F_j$.  Write $d_i=d(i,j)$, and order the facilities in
$F_j$ so that $i<q$ implies $d_i\le d_q$, breaking ties arbitrarily.
Define
\[
 V_j:=\sum_{i\in F_j}y_i d_i,
 \qquad
 \Disp_j:=\sum_{i,q\in F_j}y_i y_q|d_i-d_q|
 =2\sum_{i<q}y_i y_q(d_q-d_i).
\]

So, $V_j$ is the fractional cost of $j$ in the LP solution $\mathbf{y}$. We call $\calD_j$ the \emph{dispersion} of the client $j$, which measures how different the distances $d_i$'s are.  Let $c_j$ denote the connection cost of $j$ given by the rounding algorithm. The main theorem we shall prove is:
\begin{theorem}[Main Theorem]\label{thm:client-dep}
$ \E[c_j]\le 2V_j-\frac{\Disp_j}{12}$.
\end{theorem}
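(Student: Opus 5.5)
We prove the bound with a potential argument that refines the factor-$2$ analysis of~\cite{BGHL2026}. At the end of iteration $t$, let $T_t\subseteq F_j$ be the support facilities that are still fractional. Let $b_t$ be the distance from $j$ to its nearest open facility, with $b_t=\infty$ if none is open. Let $\Disp(T):=\sum_{i,q\in T}y_iy_q|d_i-d_q|$. We track
\[
 \Psi_t:=\Phi(T_t,b_t)-\frac{1}{12}\Disp(T_t),
 \qquad \Phi(T,b)=2\sum_{i\in T}y_id_i+\bigl(1-y(T)\bigr)b.
\]
Initially $T_0=F_j$ and $y(T_0)=1$, so $\Psi_0=2V_j-\Disp_j/12$. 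At termination $T=\varnothing$, so $\Psi=b=c_j$. Termination is almost sure by Lemma~\ref{lem:client-termination}. Everything is bounded once some facility is open, and before that the factor $(1-y(T))$ vanishes. Optional stopping, or Fatou's lemma, then reduces Theorem~\ref{thm:client-dep} to the supermartingale property $\E[\Psi_{t+1}-\Psi_t\mid\mathcal F_t]\le 0$.

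We prove the one-step inequality by decomposing the expected change of $\Phi$ into the usual terms from~\cite{BGHL2026} and keeping the slack they discard.

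\textbf{Case (a): the pivot $p$ lies in $T$.} Then $b$ drops to at most $d_p$. Every $i\in T$ with $2d_i<d_p$ stays useful rather than being charged $2d_p$.

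\textbf{Case (b): the pivot lies outside $T$ and closes some $i\in T$.} The old analysis bounds the new backup by $2d_i+b$. We instead bound it by $2d_i+r_i$, where $r_i$ is the distance to the nearest surviving support facility not contained in $i$'s nearest unit.

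Summing these terms with the pivot probabilities $z_q/|\mathbf z|_1$ and deletion probabilities $w_{qi}/z_q$ gives the combined saving. As announced in the overview (Lemma~\ref{lem:combined-saving}), we show that the drift of $\Phi$ is at most $-\frac{1}{4|\mathbf z|_1}\Disp(T)$, up to the normalization matching the rate at which $T$ loses mass. The route is pairwise: for every pair $i<q$ in $T$ we exhibit a saving of order $y_iy_q(d_q-d_i)$. Pick whichever of the two events gives more. Opening $q$ saves on $i$ when $d_q$ is large relative to $d_i$. Deleting $q$ while $i$ survives gives $r_q\le d_i$ in place of $b$, saving order $d_q-d_i$ through the triangle inequality. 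Both events have rate proportional to $y_q$ and $y_i$ respectively, because $w_{qq}=z_q$ and the nearest units have total weight $1$.

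Next we bound the expected loss of the dispersion term. When a facility $i\in T$ leaves $T$, whether by opening or by deletion, $\Disp(T)$ drops by $2y_i\sum_{v\in T}y_v|d_i-d_v|$, minus a correction for pairs with both endpoints removed. Facility $i$ leaves at rate at most $1/|\mathbf z|_1$, as in Lemma~\ref{lem:client-termination}. The expected drop is therefore at most $\frac{2}{|\mathbf z|_1}\Disp(T)$. Hence $-\frac{1}{12}\Disp(T)$ increases in expectation by at most $\frac{1}{6|\mathbf z|_1}\Disp(T)$. This is smaller than the saving $\frac{1}{4|\mathbf z|_1}\Disp(T)$, which gives the supermartingale property. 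The constants $1/4$ versus $1/6$ fix the factor $1/12$.

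The main obstacle is the per-pair accounting in the saving lemma. Deletions are correlated through the nearest-unit weights, and $r_i$ depends on which facilities survive. Only marginals are available, so pairs must be charged in a way that never double-uses a single event. I would first try to lower-bound the case-(a) saving by $\sum_{i<q}y_iy_q(d_q-2d_i)_+$. Then I would show that the case-(b) saving covers the complementary regime $d_q\le 2d_i$, with a term of order $y_iy_q(d_q-d_i)$. Finally I would combine the two with weights $1/2$ each to get at least $\Disp(T)/4$.
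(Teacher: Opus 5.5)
\textbf{The gap is in the potential itself.} Your $\Psi_t=\Phi(T_t,b_t)-\Disp(T_t)/12$ keeps charging every surviving support facility. It is not a supermartingale, and its factor-$2$ part $\Phi(T_t,b_t)$ already fails. After a cheap support facility opens, survivors $i$ with $2d_i>b$ are still charged $2y_id_i$, so the case-(a) saving you describe is never realized by this potential.

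Here is a concrete example. Let $i_1$ be colocated with $j$, $d_{i_2}=D$, $y_{i_1}=1-a$, $y_{i_2}=a$ with $0<a<1/2$. Let $g\notin F_j$ have $z_g=1-a$, $d(i_2,g)=D$ and $d(j,g)=2D$, with the tie broken so that the nearest unit of $i_2$ is $\{i_2,g\}$; the nearest unit of $i_1$ is $\{i_1,i_2\}$. Initially $\Phi(F_j,\infty)=2aD$. The relevant pivots act as follows.
\begin{itemize}
\item Pivot $i_1$ deletes nothing in $F_j$ and gives $\Phi(\{i_2\},0)=2aD$.
\item Pivot $g$ deletes $i_2$ and gives $\Phi(\{i_1\},2D)=2aD$.
\item Pivot $i_2$ deletes $i_1$ and gives $\Phi(\varnothing,D)=D$.
\end{itemize}
Every other pivot leaves $\Phi$ unchanged, since $y(T)=1$. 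So the expected first-step change of $\Phi(T_t,b_t)$ is $a(1-2a)D/|\mathbf z|_1>0$. The dispersion term vanishes after each of the three events, so $\Psi$ increases even more. For $a=1/2$ the drift of $\Phi$ is $0$, but $\Disp(T)$ still drops from $D/2$ to $0$. The theorem itself holds in this instance; what fails is your supermartingale step. No saving lemma of the form you want can hold for this potential.

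\textbf{How the paper handles it.} It minimizes over subsets: $\Psi^{(t)}=\min_{U\subseteq T^{(t)}}\{\Phi(U,b^{(t)})-\Disp(U)/12\}$ up to the stopping time $t^*$ (first pivot in $F_j$, or all of $F_j$ deleted). After $t^*$ it continues with the factor-$2$ potential $\min_U\Phi(U,b)$. The minimum is exactly what lets a pivot $q\in U$ be charged $d_q-\sum_{i\in U'}y_i\pos{d_q-2d_i}$, and this produces the term $I$ in $\Gamma$. Lemma~\ref{lem:combined-saving} then gets $\Gamma\ge\Disp(U)/4$ from two global lower bounds on $\Gamma$, not from separate case-(a) and case-(b) pieces. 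These are the rectangle bound $\sum_{i<q}y_iy_q\min\{d_i,d_q-d_i\}$ and $\sum_{i,q}y_iy_q\pos{d_q-2d_i}$, which sum to $\Disp(U)/2$.

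\textbf{Why your constants do not survive the fix.} With the minimum in place, your rate-$2$ bound on the dispersion loss is no longer justified. After a support pivot, the minimizer may discard survivors with $2d_i\ge b'$. Pairs of survivors then lose their credit even though neither endpoint left $T$. The paper therefore writes off the whole credit on any pivot in $T$, which has probability at most $y(T)/|\mathbf z|_1$. This is the third term in the rate $3/|\mathbf z|_1$ of Lemma~\ref{lem:persistence}, and $\frac1{12}\cdot 3=\frac14$ matches the saving exactly. Your apparent slack of $\frac16$ against $\frac14$ comes from pairing a saving that needs the minimum with a persistence bound that holds only without it.
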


If $y_i=1$ for some $i\in F_j$, then $F_j=\{i\}$ and this facility
is already open and remains open. Thus $c_j\le d_i=V_j$ and
$\Disp_j=0$, proving the theorem in this case. Henceforth assume
$0<y_i<1$ for every $i\in F_j$.

Consequently, the approximation factor for $j$ is at
most $2-\Disp_j/(12V_j)$ (assuming $V_j>0$).  %This bound is strictly below $2$ unless all facilities of $F_j$ are at the same distance from $j$. Since $\Disp_j\le2V_j$ by~\eqref{eq:disp-bound} below, it always lies in $[11/6,2]$. 
The rest of this section is dedicated to the proof of the theorem. If $F_j$ is a singleton, then $\Disp_j=0$ and the theorem is immediate. Henceforth assume $|F_j|\ge2$; in particular, $0<y_i<1$ for every $i\in F_j$. \medskip

%\paragraph{Proof strategy.}
%We use a potential that is defined at every state of the algorithm.
%Let $T$ be the set of support facilities that are still fractional, and
%let $b$ be the backup distance, that is, the distance from $j$ to its
%nearest integrally open facility. For $U\subseteq T$, the factor-$2$
%potential of~\cite{BGHL2026} is $\Phi(U,b)=2V(U)+(1-y(U))b$. We subtract
%a dispersion credit $\Disp(U)/12$ and minimize over $U\subseteq T$. The
%resulting potential $P_\star(T,b)$ is at most $2V_j-\Disp_j/12$ initially
%and equals $c_j$ when the algorithm terminates, so it suffices to show
%that it is a supermartingale (Lemma~\ref{lem:closed-potential}). For a
%minimizing set $U$, one iteration decreases the expected factor-$2$
%potential by at least $\Gamma/{|\mathbf{z}^{(t)}|_1}$ (Lemma~\ref{lem:one-step}), where
%$\Gamma\ge\Disp(U)/4$ (Lemma~\ref{lem:combined-saving}), while the
%expected dispersion credit decreases by at most $3\Disp(U)/{|\mathbf{z}^{(t)}|_1}$
%(Lemma~\ref{lem:persistence}). The gain $\Disp(U)/(4{|\mathbf{z}^{(t)}|_1})$ exactly pays
%for the loss $\frac1{12}\cdot3\Disp(U)/{|\mathbf{z}^{(t)}|_1}$. The constant $12=4\cdot3$
%comes from these two lemmas; improving either of them improves the
%theorem directly.
%
%Unlike the analysis of~\cite{BGHL2026}, we do not stop when a support
%facility is selected as the pivot. The minimum over $U$ takes care of
%the support facilities that become useless once the backup distance
%drops: they are simply left out of the minimizing set.

For every $U\subseteq F_j$, let
\[
 {V(U)}:=\sum_{i\in U}y_i d_i,
 \qquad
 \Disp(U):=\sum_{i,i'\in U}y_iy_{i'}|d_i-d_{i'}|.
\]
Note that $V(F_j)=V_j$ and $\Disp(F_j)=\Disp_j$.  Since
$|d_i-d_{i'}|\le d_i+d_{i'}$, we have
\begin{equation}\label{eq:disp-bound}
	\Disp(U)\le\sum_{i,i'\in U}y_iy_{i'}(d_i+d_{i'})=2y(U){V(U)}.
\end{equation}

For every $U\subseteq F_j$ and
$b\in\mathbb{R}_{\ge0}\cup\{\infty\}$,
we define
\[\Phi(U,b):=2V(U)+(1-y(U))b,\]
with the convention that $0\cdot\infty=0$.  This is the same as the definition in \cite{BGHL2026}. \medskip
 %For fixed $U$, the potential $\Phi(U,b)$ is nondecreasing in $b$.  It charges each facility $i\in U$ twice its distance, and the backup distance $b$ to the missing mass $1-y(U)$. 

%
%We have $T^{(0)}=F_j$. Every iteration opens its pivot, so $\add{b^{(t)}}\le d_{\max}(j):=\max_{i\in F}d(i,j)$ for every $t\ge1$. At $\sigma_{\mathrm{alg}}$ every facility is integral, so $\add{T^{(\sigma_{\mathrm{alg}})}}=\varnothing$ and $\add{b^{(\sigma_{\mathrm{alg}})}}=c_j$.

For every $t \geq 0$, let $T^{(t)} = \{i \in F_j: z_i^{(t)} = y_i\}$. Let $b^{(t)}$ be the distance from $j$ to its
nearest integrally open facility at the end of iteration $t$; set $b^{(t)}=\infty$ if no facility is
integrally open, which could happen only when $t = 0$.  Let $S^{(t)} = \{i \in T^{(t)}: 2d_i < b^{(t)} \}$, which is the subset $S$ of $T^{(t)}$ that minimizes $\Phi(S, b^{(t)})$. 

%\gptremovedfile{removed-text/client-dep-03}

Assuming the metric is finite, we have that $b^{(t)}\le \infty$ for every $t\ge1$.
Let $t^*$ be the first iteration in which the pivot lies in $F_j$
or all facilities of $F_j$ have been deleted. This is a stopping
time for the algorithm's history. Before $t^*$, a facility in $F_j$
can leave $T^{(t)}$ only by deletion. Thus any positive-mass pivot
outside $T^{(t)}$ lies outside $F_j$. Moreover,
$1\le t^*<\infty$ almost surely, by
Lemma~\ref{lem:client-termination}. For $0\le t<t^*$ define

\begin{equation}\label{eq:psi-def}
	\Psi^{(t)}:=\min_{U\subseteq T^{(t)}}\left\{\Phi(U,b^{(t)})-\frac{\Disp(U)}{12}\right\},
\end{equation}
and $U^{(t)}$ to be the set $U\subseteq T^{(t)}$ achieving the value of $\Psi^{(t)}$.   However, when $t \geq t^*$,  we define $\Psi^{(t)} = \Phi(S^{(t)}, b^{(t)})$, in which case $U^{(t)}$ is not defined.

% \begin{gptadded}
% By \eqref{eq:disp-bound}, every candidate in \eqref{eq:psi-def}
% is at least $(2-y(U)/6)V(U)+(1-y(U))b^{(t)}\ge0$.
% Taking $U=F_j$ initially gives
% $\Psi^{(0)}\le2V_j-\Disp_j/12$. At positive times before $t^*$,
% the empty candidate gives $\Psi^{(t)}\le b^{(t)}\le d_{\max}(j)$;
% the same bound holds at $t^*$ by minimizing $\Phi$.
% \end{gptadded}

\subsection{Saving in One Iteration}\label{sec:one-step-saving}
With the notations defined, we can now prove the main theorem.  We fix an integer $t$ and consider the iteration $t+1$ of the algorithm. We assume $t < t^*$; notice that the exact value of $t^*$ is not known yet.  We suppress the time subscript and write $\mathbf{z}=\mathbf{z}^{(t)}$, $T=T^{(t)}$, $U = U^{(t)}$, $b=b^{(t)}$, and $w_{qi}=w_{qi}^{(t)}$.

Recall that we choose a pivot $q\in F$ with probability $\frac{z_q}{|\mathbf{z}|_1}$; for $q\in U$, we have $z_q = y_q$ and thus the
probability is $\frac{y_q}{|\mathbf{z}|_1}$. For every $q\in F$ and $i  \in U$, the probability that $q$ is chosen
and deletes $i$ is $\frac{z_q}{|\mathbf{z}|_1} \cdot \frac{w_{qi}}{z_q}=\frac{w_{qi}}{|\mathbf{z}|_1}$. For $i\in U$, write
\[
 w(U,i):=\sum_{q\in U}w_{qi},
 \qquad
 u_{qi}:=y_q-w_{qi}, \forall q\in U.
\]
$w(U, i)$ is the unnormalized probability that $i$ will be removed by pivots in $U$, and $u_{qi}$ is the extent to which $q$ does not remove $i$.

Since $w_{qi}\le z_q=y_q$ for $q\in U$, we have $u_{qi}\ge0$ and
$\sum_{q\in U}u_{qi}={y(U)}-w(U,i)\ge0$. Define
\[
 r_i:=\min\{d_q:q\in U,\ u_{qi}>0\},
\]
where the minimum of the empty set is $\infty$. Then $r_i=\infty$ exactly
when ${y(U)}=w(U,i)$, and in that case we interpret $({y(U)}-w(U,i))r_i$ as $0$.

\begin{lemma}\label{lemma:backup-bound}
	Suppose some $q \notin U$ and $i \in U$ have $w_{qi} > 0$. Then, $d(j, q) \leq 2d_i + r_i$. 
\end{lemma}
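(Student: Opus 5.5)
The plan is to locate a facility $q'\in U$ with $u_{q'i}>0$ and $d_{q'}=r_i$, and to chain triangle inequalities through $i$. If $r_i=\infty$ the bound is trivial, so assume $r_i<\infty$ and fix $q'\in U$ with $u_{q'i}>0$ and $d_{q'}=r_i$.

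The key observation concerns the nearest unit of opening of $i$, described by the weights $w_{\cdot i}$. These are collected in nondecreasing order of distance from $i$, with ties broken consistently. Since $w_{qi}>0$, the facility $q$ belongs to this nearest unit. Since $u_{q'i}=y_{q'}-w_{q'i}>0$ and $z_{q'}=y_{q'}$, the facility $q'$ is not fully contained in it. Hence the greedy collection reached $q$ before exhausting $q'$, which gives
\[
d(i,q)\le d(i,q').
\]
If $d(i,q)>d(i,q')$ held, then $q'$ would be collected in full before any of $q$, a contradiction. In the tie case the consistent order places $q$ no later than $q'$, and the inequality still holds.

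Now apply the triangle inequality twice:
\[
d(j,q)\le d(j,i)+d(i,q)\le d_i+d(i,q')\le d_i+\bigl(d(i,j)+d(j,q')\bigr)=2d_i+r_i.
\]
This proves the lemma. The hypothesis $q\notin U$ is not actually used; the bound holds for any $q$ with $w_{qi}>0$.

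The only delicate step is justifying $d(i,q)\le d(i,q')$ from the greedy definition of $w_{\cdot i}$, including the tie case. One must also note that $q'\ne i$ is not needed: if $q'=i$, then $u_{ii}=z_i-w_{ii}=0$, so $i$ never realizes $r_i$ in the first place.
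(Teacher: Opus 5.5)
Your proof is correct and follows the paper's argument: choose $q'\in U$ with $u_{q'i}>0$ and $d_{q'}=r_i$, use the nondecreasing-distance collection of $i$'s nearest unit to get $d(i,q)\le d(i,q')$, and chain two triangle inequalities to reach $2d_i+r_i$. Your side remarks are also right: the hypothesis $q\notin U$ is not needed, and the tie discussion is redundant because the contrapositive already covers it.
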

\begin{proof}
We can assume $r_i<\infty$. Let $v\in U$ be such that $u_{vi}>0$ and
$d_v=r_i$. The nearest unit of $i$ is collected in nondecreasing order
of distance from $i$, and it uses $q$ but not all of $v$; hence
$d(i,q)\le d(i,v)$, and
\begin{equation*}
	d(j,q)\le d_i+d(i,q)
	\le d_i+d(i,v)
	\le 2d_i+d_v=2d_i+r_i. \qedhere
\end{equation*}	
\end{proof}

Consequently, whenever a pivot outside $U$ deletes $i\in U$, the new
backup distance is at most $\min\{b,2d_i+r_i\}$; this is trivial if
$r_i=\infty$.

%TODO: there is an issue with t vs t + 1.

Let $U' = \{i \in U: z^{(t + 1)}_i = y_i\}$ be the set of facilities of $U$ that are still
fractional after iteration $t + 1$, and let $b' = b^{(t+1)}$ be the new backup distance.  Notice that $U'$ may be different from $U^{(t+1)}$.

Define the next charge and the next dispersion credit of $U$ by
\begin{align}
\Phi'&:=
 \begin{cases}
 	d_q-\sum_{i\in U'}y_i\pos{d_q-2d_i}, & \text{if the pivot $q$ lies in $U$},\\
  \Phi(U',b'), & \text{otherwise},
 \end{cases}\nonumber\\
 	\Disp'&:=
 	\begin{cases}
 		0, & \text{if the pivot $q$ lies in $T$},\\
 		\Disp(U'), & \text{otherwise}.
 	\end{cases}	 \label{equ:Disp'}
\end{align}
Notice that the case conditions for $\Phi'$ are slightly different from those for $\Disp'$. 

 Finally, we define the saving terms
\[
 I:=\sum_{i,q\in U}y_iu_{qi}\pos{d_q-2d_i},
 \qquad
 \Gamma:={V(U)}-\sum_{i\in U}y_i\bigl({y(U)}-w(U,i)\bigr)r_i+I.
\]

\begin{lemma}[One-step saving]\label{lem:one-step}
${|\mathbf{z}|_1} \cdot \E[\Phi'\mid\mathcal F_t]\le {|\mathbf{z}|_1} \cdot \Phi(U,b)-\Gamma$.
\end{lemma}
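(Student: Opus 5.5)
We decompose $|\mathbf z|_1\cdot\E[\Phi'\mid\mathcal F_t]$ according to where the pivot $q$ lands. There are three cases: $q\in U$, which happens with probability $y_q/|\mathbf z|_1$; $q\notin U$ with $w_{qi}>0$ for some $i\in U$; and pivots that touch nothing in $U$. We then compare each contribution with the corresponding share of $|\mathbf z|_1\Phi(U,b)$. The share is obtained by writing $|\mathbf z|_1=\sum_{q\in F}z_q$ and weighting $\Phi(U,b)$ by each pivot's $z_q$.

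\textbf{Pivots outside $U$.} For such a pivot the new charge is $\Phi(U',b')$. Removing $i$ from $U$ changes $2V(U)+(1-y(U))b$ by $y_i(b'-2d_i)$. When $i$ is deleted by a pivot $q\notin U$, Lemma~\ref{lemma:backup-bound} gives $b'\le\min\{b,\,2d_i+r_i\}$. We also have $b'\le b$ always, and $\Phi$ is monotone in $b$.

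Bounding $\Phi(U',b')$ requires more than the single-deletion estimate, because several facilities may be deleted at once. Since $\Phi(U',b')=2V(U')+(1-y(U'))b'$ is linear in the deleted mass at fixed $b'$, we use
\[
\Phi(U',b')\le\Phi(U,b)+\sum_{i\in U\setminus U'}y_i\bigl(\min\{b,2d_i+r_i\}-2d_i\bigr).
\]
This holds because $b'\le\min_i(\cdot)$ over all deleted $i$, so each deleted $i$ may be charged with its own bound. Taking expectations over deletions, each $i$ is deleted by outside pivots with total unnormalized probability $\sum_{q\notin U}w_{qi}=1-w(U,i)-\sum_{q\in U}\dots$. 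More precisely, since $\sum_q w_{qi}=1$, the outside weight is $1-w(U,i)$. This weight includes the pivots in $T\setminus U$, which also delete nothing beyond what the lemma bounds. Hence the outside pivots contribute at most
\[
\Bigl(\sum_{q\notin U}z_q\Bigr)\Phi(U,b)+\sum_{i\in U}y_i\bigl(1-w(U,i)\bigr)\bigl(r_i\bigr),
\]
after bounding $\min\{b,2d_i+r_i\}-2d_i\le r_i$. We will need to split this as $(1-y(U))$ plus $(y(U)-w(U,i))$ in order to match terms with the inside case.

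\textbf{Pivots inside $U$.} For $q\in U$ we have $\Phi'=d_q-\sum_{i\in U'}y_i(d_q-2d_i)_+$, where $i\in U'$ survives with probability $1-w_{qi}/y_q=u_{qi}/y_q$. Weighting by $y_q$ gives the contribution
\[
\sum_{q\in U}y_qd_q-\sum_{i,q\in U}y_iu_{qi}(d_q-2d_i)_+=V(U)-I.
\]

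\textbf{Combining the cases.} The target is $|\mathbf z|_1\Phi(U,b)-\Gamma$, and this is the step we expect to be the main obstacle. We write
\[
y(U)\Phi(U,b)=y(U)\bigl(2V(U)+(1-y(U))b\bigr),
\]
and must show that the inside contribution plus the excess of the outside contribution is at most this quantity minus $V(U)-\sum_i y_i(y(U)-w(U,i))r_i+I$. Expanding, the required inequality reduces to an inequality in which:
\begin{itemize}
\item the $I$ terms cancel exactly;
\item the terms $\sum_i y_i(1-y(U))\cdot(\text{excess})$ from outside deletions are absorbed by the $(1-y(U))b$ part of $y(U)\Phi$;
\item the $(y(U)-w(U,i))r_i$ terms match $\Gamma$ exactly.
\end{itemize}
Carrying this out requires the careful accounting that deletions caused by outside pivots of total weight $1-w(U,i)$ split into the parts $1-y(U)$ and $y(U)-w(U,i)$. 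On the first part we use the bound $b'-2d_i\le b-2d_i$. On the second part we use $r_i$ together with $2d_i$. The leftover $2V(U)y(U)-V(U)\ge V(U)(2y(U)-1)$ must then be reconciled, which is precisely where the precise BGHL bookkeeping of $\Phi$ is needed.

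We would verify this algebra with $U=\{$all$\}$ first as a sanity check.
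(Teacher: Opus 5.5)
Your route is the paper's route. The inside-pivot contribution $V(U)-I$, the per-deleted-facility bound $\Phi(U',b')\le\Phi(U,b)+\sum_{i\in U\setminus U'}y_i\bigl(\min\{b,2d_i+r_i\}-2d_i\bigr)$ via Lemma~\ref{lemma:backup-bound}, the outside deletion weight $1-w(U,i)$, and the split $1-w(U,i)=(y(U)-w(U,i))+(1-y(U))$ are exactly the ingredients of the proof. The gap is in the combination step: you leave it undone, and you misdiagnose it.

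First, your displayed outside bound replaces $\min\{b,2d_i+r_i\}-2d_i$ by $r_i$ for the \emph{whole} weight $1-w(U,i)$. No later splitting can repair that. Comparing it with the target, and using $z(U)=y(U)$, you would need $2V(U)+\sum_{i\in U}y_ir_i\le y(U)\,b$ whenever $y(U)<1$, which is false in general. (The bound is even infinite when some $r_i=\infty$ and $y(U)<1$.) The split must be applied to $(1-w(U,i))\min\{b-2d_i,r_i\}$ \emph{before} bounding the minimum, giving $(y(U)-w(U,i))r_i+(1-y(U))(b-2d_i)$. Your later sentence about using $b-2d_i$ on the $(1-y(U))$ part is the right move, but it contradicts the display you carry forward.

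Second, once the split is done this way, there is no leftover, and nothing further from \cite{BGHL2026} is needed. Summing over $i$, the second piece equals $(1-y(U))\bigl(y(U)b-2V(U)\bigr)$. Its $-2(1-y(U))V(U)$ part turns the inside term $V(U)$ into $2y(U)V(U)-V(U)$, and one checks directly that
\[
V(U)-I+\sum_{i\in U}y_i\bigl(y(U)-w(U,i)\bigr)r_i+(1-y(U))\bigl(y(U)b-2V(U)\bigr)=y(U)\Phi(U,b)-\Gamma .
\]
Adding $z(F\setminus U)\Phi(U,b)$ and using $z(U)=y(U)$ gives exactly $|\mathbf z|_1\Phi(U,b)-\Gamma$. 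Your remark that a leftover $2V(U)y(U)-V(U)$ ``must then be reconciled'' shows that the $-2d_i$ contribution was lost. Without it you would need $2V(U)\le2y(U)V(U)$, which fails whenever $y(U)<1$ and $V(U)>0$.
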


\begin{proof}
All expectations and pivot probabilities below are conditional on
$\mathcal F_t$. First suppose that the pivot $q$ lies in $U$.
Conditional on this, a facility $i\in U$ survives with probability
$\frac{u_{qi}}{z_q} = \frac{u_{qi}}{y_q}$; for $i=q$ this is $0$, as $u_{qq}=0$. Therefore, the contribution of pivots
in $U$ to ${|\mathbf{z}|_1}\cdot \E[\Phi']$ is
\[
 \sum_{q\in U}y_q\Bigl(d_q-\sum_{i\in U}\frac{u_{qi}}{y_q}\,y_i\pos{d_q-2d_i}\Bigr)
 ={V(U)}-I.
\]

Now suppose that the pivot $q$ lies outside $U$. Then $U\setminus U'$
consists of the facilities deleted by $q$, and $b'\le b$.
By Lemma~\ref{lemma:backup-bound}, in every outcome,
\[
 \begin{aligned}
\Phi'=\Phi(U',b')
 &=2V(U')+(1-{y(U)})b'+\sum_{i\in U\setminus U'}y_ib'\\
 &\le 2V(U')+(1-{y(U)})b+\sum_{i\in U\setminus U'}y_i\min\{b,2d_i+r_i\}\\
 &= 2V(U) + (1-{y(U)})b+\sum_{i\in U\setminus U'}y_i\min\{b - 2d_i,r_i\}\\
 &= \Phi(U, b) +\sum_{i\in U\setminus U'}y_i\min\{b - 2d_i,r_i\}
 \end{aligned}
\]
The total
pivot mass outside $U$ that deletes $i$ is $\sum_{q\notin U}w_{q,i}=1-w(U,i)$. Hence the contribution of pivots outside $U$ to ${|\mathbf{z}|_1}\cdot \E[\Phi']$ is at most
\[
 {z(F\setminus U)}\,\Phi(U,b)+\sum_{i\in U}y_i\bigl(1-w(U,i)\bigr)\cdot \min\{b - 2d_i,r_i\}.
\]
Since $1-w(U,i)=({y(U)}-w(U,i))+(1-{y(U)})$ with both terms
nonnegative,
\[
 \bigl(1-w(U,i)\bigr)\cdot \min\{b - 2d_i,r_i\}
 \le\bigl({y(U)}-w(U,i)\bigr)r_i+(1-{y(U)})(b-2d_i).
\]
Summing $y_i$ times the above inequality over all  $i\in U$, the contribution of pivots outside $U$ is at most
\[
 {z(F\setminus U)}\cdot\Phi(U,b)+\sum_{i\in U}y_i\bigl({y(U)}-w(U,i)\bigr)r_i+(1-{y(U)})(y(U)b-2V(U)).
\]
Adding the two contributions, we obtain
\begin{align*}
 &\quad {|\mathbf{z}|_1}\cdot \E[\Phi']\\
 &\leq  V(U) - I + {z(F\setminus U)}\cdot\Phi(U,b)+\sum_{i\in U}y_i\bigl({y(U)}-w(U,i)\bigr)r_i+(1-{y(U)})(y(U)b-2V(U))\\
 &=-V(U) - I + {z(F\setminus U)}\cdot\Phi(U,b) +\sum_{i\in U}y_i\bigl({y(U)}-w(U,i)\bigr)r_i + y(U) \big(2V(U) + (1-y(U))b\big)\\
 &= z(F)\cdot \Phi(U,b)-{V(U)}-I+\sum_{i\in U}y_i\bigl({y(U)}-w(U,i)\bigr)r_i
 ={|\mathbf{z}|_1}\cdot\Phi(U,b)-\Gamma. \qedhere
\end{align*}
\end{proof}

\subsection{Bounding the Saving Term \texorpdfstring{$\Gamma$}{Gamma}}
In this section, we lower bound $\Gamma$ using the dispersion $\Disp(U)$.  Define 
\[
 H:=\max_{s\ge0}\left( s\sum_{q\in U:\,d_q\ge s}y_q\right).
\]
This can be viewed as the area of the largest rectangle in the distance histogram for $U$, with width $(y_q)_{q \in U}$ and heights $(d_q)_{q \in U}$.  Notice that the maximum is attained at $s=0$ or at some $d_q$. 

In the sums over pairs $i<q$ below, both facilities range over $U$.
\begin{lemma}[Pairwise rectangle bound]\label{lem:rectangle}
${y(U)}({V(U)}-H)\ge\sum_{i<q}y_iy_q\min\{d_i,d_q-d_i\}$.
\end{lemma}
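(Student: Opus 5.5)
The plan is to expand both sides of the inequality into sums over pairs of facilities of $U$. Then $H$ is handled by checking the inequality separately for each candidate threshold $s$ in its definition. Since the maximum defining $H$ is attained at $s=0$ or at some $d_q$, it suffices to show, for every $s\ge 0$,
\[
 {y(U)}\,s\sum_{q\in U:\,d_q\ge s}y_q\;\le\;{y(U)}{V(U)}-\sum_{i<q}y_iy_q\min\{d_i,d_q-d_i\}.
\]
The case $s=0$ is immediate, because the right-hand side is nonnegative; this follows from the pairwise expansion below.

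First I will rewrite the right-hand side as a pairwise sum. Expanding ${y(U)}{V(U)}=\sum_{i,q\in U}y_iy_qd_i$ and grouping each unordered pair $i<q$ gives
\[
 {y(U)}{V(U)}=\sum_{i\in U}y_i^2d_i+\sum_{i<q}y_iy_q(d_i+d_q).
\]
For $i<q$ we have $d_i\le d_q$, so the identity $d_i+d_q-\min\{d_i,d_q-d_i\}=\max\{d_q,2d_i\}$ holds. Hence the right-hand side equals
\[
 \sum_{i\in U}y_i^2d_i+\sum_{i<q}y_iy_q\max\{d_q,2d_i\}.
\]
Every term here is nonnegative, which settles the case $s=0$.

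Next, fix $s>0$ and split $U$ into $G:=\{q\in U:d_q\ge s\}$ and $L:=U\setminus G$. The left-hand side is $s\,y(G)\bigl(y(G)+y(L)\bigr)$. Expanding $y(G)^2$, this equals
\[
 s\sum_{i\in G}y_i^2+2s\sum_{i<q,\;i,q\in G}y_iy_q+s\sum_{i\in L,\,q\in G}y_iy_q .
\]
I will bound each group by the matching terms of the pairwise form above. A diagonal term with $i\in G$ satisfies $d_i\ge s$. A pair $i<q$ with both in $G$ satisfies $\max\{d_q,2d_i\}\ge 2d_i\ge 2s$. A cross pair with $i\in L$ and $q\in G$ has $d_i<s\le d_q$, so it appears in the ordering as $i<q$ with $\max\{d_q,2d_i\}\ge d_q\ge s$. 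The remaining terms of the pairwise form are the diagonal terms in $L$ and the pairs inside $L$; they are nonnegative and can be dropped.

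Summing these bounds gives the claim for every $s$, and taking the maximum over $s$ yields the lemma. The only delicate step is the tie convention in the ordering $i<q$. With ties broken arbitrarily, a pair with $d_i=d_q$ still satisfies $\max\{d_q,2d_i\}\ge d_q$, and a cross pair between $L$ and $G$ always has $d_i<d_q$, so the ordering is forced there. I therefore expect no real obstacle beyond getting this case bookkeeping right.
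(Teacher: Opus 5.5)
Your proof is correct and is essentially the paper's argument. The paper fixes a maximizing threshold $s$, sets $a_i=d_i$ for $d_i<s$ and $a_i=d_i-s$ otherwise, checks $\min\{d_i,d_q-d_i\}\le a_i+a_q$ for each pair, and drops the nonnegative diagonal terms $\sum_i y_i^2a_i$; via $d_i+d_q-\min\{d_i,d_q-d_i\}=\max\{d_q,2d_i\}$, your three cases ($\ge 2s$, $\ge s$, $\ge 0$) and your discarded diagonals are exactly these inequalities, with the minor bonus that you prove the bound for every $s\ge 0$ and so never need the maximum in $H$ to be attained.
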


\begin{proof}
Choose a threshold $s$ attaining $H$, and let $a_i:=d_i$ if $d_i<s$ and
$a_i:=d_i-s$ if $d_i\ge s$. Thus $a_i\ge0$ and $\sum_{i\in U}y_ia_i={V(U)}-H$.
Fix $i<q$. If $s>d_i$, then $a_i=d_i$. If $s\le d_i$, then
$a_q=d_q-s\ge d_q-d_i$. In either case,
$\min\{d_i,d_q-d_i\}\le a_i+a_q$. Multiplying by $y_iy_q$ and summing
over pairs $i<q$ gives
\begin{align*}
 &\quad\sum_{i<q}y_iy_q\min\{d_i,d_q-d_i\}
 \le\sum_{i<q}y_iy_q(a_i+a_q)\\
 &=\sum_{i\in U}y_ia_i({y(U)}-y_i)
 \le {y(U)}\sum_{i\in U}y_ia_i
 ={y(U)}({V(U)}-H).\qedhere
\end{align*}
\end{proof}

\begin{lemma}[Saving versus dispersion]\label{lem:combined-saving}
We have $\Gamma\ge\frac{\Disp(U)}4$. In particular,
$\Gamma\ge0$.
\end{lemma}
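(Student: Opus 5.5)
The plan is to lower bound the three pieces of $\Gamma$ separately and reduce everything to a sum over pairs $i<q$ in $U$. By definition $\tfrac14\Disp(U)=\tfrac12\sum_{i<q}y_iy_q(d_q-d_i)$, so the goal is a pairwise lower bound on $\Gamma$ of at least $\tfrac12 y_iy_q(d_q-d_i)$ per pair. The lemma should follow from Lemma~\ref{lem:rectangle} together with the structure of nearest units.

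\emph{Step 1: the negative term is at most a rectangle.} Fix $i\in U$ with $r_i<\infty$. Every $q\in U$ with $d_q<r_i$ has $u_{qi}=0$ by the definition of $r_i$, so it is fully collected: $w_{qi}=y_q$. Hence
\[
y(U)-w(U,i)=\sum_{q\in U}u_{qi}=\sum_{q\in U:\,d_q\ge r_i}u_{qi}\le y(\{q\in U:d_q\ge r_i\}).
\]
Therefore $(y(U)-w(U,i))r_i\le H$. Summing with weights $y_i$ bounds the subtracted term of $\Gamma$ by $y(U)H$. Since $y(U)\le1$, we have $V(U)\ge y(U)V(U)$, and so
\[
\Gamma\ \ge\ y(U)\bigl(V(U)-H\bigr)+I\ \ge\ \sum_{i<q}y_iy_q\min\{d_i,d_q-d_i\}+I,
\]
where the second inequality is Lemma~\ref{lem:rectangle}.

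\emph{Step 2: pairs with $d_q\le 2d_i$.} For such a pair, $\min\{d_i,d_q-d_i\}=d_q-d_i\ge\tfrac12(d_q-d_i)$, so these pairs are already paid for.

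\emph{Step 3: pairs with $d_q>2d_i$ (expected main obstacle).} Here the rectangle only gives $y_iy_qd_i$, and the deficit is $\tfrac12 y_iy_q(d_q-3d_i)\le \tfrac12y_iy_q(d_q-2d_i)$. I would split $y_q=u_{qi}+w_{qi}$.

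The $u_{qi}$ part is covered by $I$, which contains $y_iu_{qi}(d_q-2d_i)$. That is the full amount needed, so there is even a factor-$2$ slack.

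The $w_{qi}$ part is the difficulty, because then $q$ lies in the nearest unit of $i$ and contributes nothing to $I$. My plan is to sharpen Step~1 rather than use the global $H$. I would keep the exact bound $(y(U)-w(U,i))r_i=\sum_{q:d_q\ge r_i}u_{qi}r_i$ instead of replacing it by the rectangle mass. Any $q$ with $d_q\ge r_i$ and $w_{qi}>0$ then reduces the subtracted mass by $w_{qi}r_i$. Any $q$ with $d_q<r_i$ lies entirely in $i$'s unit. Then $d(i,q)\ge d_q-d_i>d_i$, so all of $U$ within distance $d_q-d_i$ of $i$ is collected before $q$, which forces $r_i$ to be large relative to $d_q$.

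The resulting quantity is $\sum_i y_i\bigl[V(U)-r_i\sum_{q:d_q\ge r_i}u_{qi}\bigr]+I$. I would redo the rectangle argument of Lemma~\ref{lem:rectangle} with a per-$i$ threshold $s=r_i$ in place of the global maximizer. The aim is to show that each pair $(i,q)$ with $d_q>2d_i$ recovers at least $\tfrac12y_iw_{qi}(d_q-2d_i)$ from this refined slack, using the factor-$2$ spare capacity left over from the $u_{qi}$ part of $I$ and from Step~2 to absorb cross terms. I expect this per-pair accounting, which merges the per-$i$ thresholds with the global rectangle bound, to be the delicate part. If it fails, I would fall back to a case split on whether $r_i\le d_q$ and verify the inequality directly in each case.

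Summing Steps 2 and 3 over all pairs gives $\Gamma\ge\tfrac12\sum_{i<q}y_iy_q(d_q-d_i)=\Disp(U)/4$, and nonnegativity of $\Gamma$ follows.
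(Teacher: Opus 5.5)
Your Steps 1 and 2 are correct, and Step 1 is exactly the first of the two lower bounds the paper uses. But Step 3 contains the whole difficulty and is left as a plan. As framed, the plan cannot work: per-pair accounting on top of the Step-1 inequality $\Gamma\ge y(U)(V(U)-H)+I$ fails because that inequality is itself too weak. Take $U=F_j=\{i,q\}$ with $y_i=y_q=\tfrac12$, $d_i=1$, $d_q=D\ge2$. Let $i$ and $q$ be each other's nearest facilities, so $w_{qi}=w_{iq}=\tfrac12$ and $u\equiv0$ on $U$. This is a legitimate state at $t=0$, where $b=\infty$ forces $U=F_j$. Then $I=0$ and $H=D/2$, so your bound gives only $\tfrac12$, which is below $\Disp(U)/4=(D-1)/8$ once $D>5$. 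The true value is $\Gamma=V(U)=(1+D)/2$, because $w(U,\cdot)=y(U)$ makes the subtracted term vanish. Replacing $(y(U)-w(U,i))r_i$ by the global $H$ discards exactly the far mass lying in $i$'s nearest unit, and no slack elsewhere is guaranteed to pay for it.

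The refined expression you write down does contain what is needed, since for each $i$
\[
V(U)-r_i\sum_{q\in U}u_{qi}=\sum_{q\in U}\bigl(w_{qi}d_q+u_{qi}(d_q-r_i)\bigr),
\]
with $d_q\ge r_i$ whenever $u_{qi}>0$. The missing step is to extract from it a second, separate lower bound and then average the two bounds, rather than merging thresholds pair by pair. Drop $u_{qi}(d_q-r_i)\ge0$ and use $w_{qi}d_q+u_{qi}\pos{d_q-2d_i}\ge y_q\pos{d_q-2d_i}$. This gives $\Gamma\ge\sum_{i<q}y_iy_q\pos{d_q-2d_i}$, which pays the $w_{qi}$ part of far pairs directly. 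The paper then takes the larger of this bound and your Step-1 bound. For $i<q$ we have $\min\{d_i,d_q-d_i\}+\pos{d_q-2d_i}=d_q-d_i$, so the two bounds sum to $\sum_{i<q}y_iy_q(d_q-d_i)=\Disp(U)/2$, and the larger is at least $\Disp(U)/4$. If you prefer a single chain, bound $(y(U)-w(U,i))r_i\le\tfrac12\bigl(H+\sum_q u_{qi}d_q\bigr)$. Only half of $I$ is then consumed, and that is where your factor-$2$ slack goes.
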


\begin{proof}
Fix $i\in U$. Every $q\in U$ with $u_{qi}>0$ satisfies $d_q\ge r_i$, and
$u_{qi}\le y_q$. This gives two bounds:
\[
 \bigl({y(U)}-w(U,i)\bigr)r_i=r_i\sum_{q\in U:\,d_q\ge r_i}u_{qi}\le H,
 \qquad
 \bigl({y(U)}-w(U,i)\bigr)r_i\le\sum_{q\in U}u_{qi}d_q={V(U)}-\sum_{q\in U}w_{qi}d_q.
\]
The first bound, $I\ge0$, and Lemma~\ref{lem:rectangle} give
\[
 \Gamma\ge {V(U)}-y(U)H\ge {y(U)}({V(U)}-H)\ge\sum_{i<q}y_iy_q\min\{d_i,d_q-d_i\}.
\]
The second bound gives
\[
 \Gamma\ge(1-{y(U)}){V(U)}+\sum_{i,q\in U}y_i\bigl(w_{qi}d_q+u_{qi}\pos{d_q-2d_i}\bigr)
 \ge\sum_{i,q\in U}y_iy_q\pos{d_q-2d_i},
\]
because
$w_{qi}d_q+u_{qi}\pos{d_q-2d_i}\ge(w_{qi}+u_{qi})\pos{d_q-2d_i}=y_q\pos{d_q-2d_i}$.
In particular, $\Gamma\ge0$. For every $i<q$, we have
$\min\{d_i,d_q-d_i\}+\pos{d_q-2d_i}=d_q-d_i$, while $\pos{d_q-2d_i}=0$
whenever $d_q\le d_i$. Consequently, the two lower bounds sum to
\[
 \sum_{i<q}y_iy_q(d_q-d_i)=\frac{\Disp(U)}{2},
\]
and their maximum is at least $\Disp(U)/4$.
\end{proof}

\subsection{Dispersion Persistence and Completion of the Proof}
Then, we show that the dispersion term $\Disp(U)$ does not decrease too fast. Recall the definition of $\Disp'$ in \eqref{equ:Disp'}.
\begin{lemma}[Dispersion persistence]\label{lem:persistence}
$\displaystyle \E[\Disp'\mid\mathcal F_t]\ge\Bigl(1-\frac3{|\mathbf{z}|_1}\Bigr)\Disp(U)$.
\end{lemma}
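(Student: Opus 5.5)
We will bound the expected loss $\Disp(U)-\E[\Disp'\mid\mathcal F_t]$ directly and show it is at most $3\Disp(U)/|\mathbf z|_1$. Write $\Disp(U)=\sum_{i,i'\in U}y_iy_{i'}|d_i-d_{i'}|$ as a sum over ordered pairs. By \eqref{equ:Disp'}, $\Disp'$ keeps exactly the pairs with both endpoints in $U'$, unless the pivot lies in $T$, in which case everything is lost. The loss is therefore at most the sum, over pairs $(i,i')$, of $y_iy_{i'}|d_i-d_{i'}|$ times the probability that the pivot lies in $T$ or one of $i,i'$ leaves $U$. We bound this probability by a union bound over three events.

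\emph{Event 1: the pivot lies in $T$.} This has probability $y(T)/|\mathbf z|_1$. However, $y(T)$ may be as large as $1$, so this event alone already costs $\Disp(U)/|\mathbf z|_1$. We handle it together with deletions: if the pivot $q$ lies in $T\setminus U$, then it is not in $U$ but contributes $y_q/|\mathbf z|_1$, and $y(T)\le y(F_j)=1$. So the pivot-in-$T$ event costs at most $\Disp(U)/|\mathbf z|_1$.

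\emph{Events 2 and 3: the pivot lies outside $T$ and deletes $i$ or $i'$.} Facility $i\in U$ is deleted by some pivot (any pivot other than $i$) with total unnormalized probability $\sum_{q\ne i}w_{qi}=1-z_i\le 1$. Hence it is deleted with probability at most $1/|\mathbf z|_1$, and the same holds for $i'$. A union bound over the two endpoints gives at most $2\Disp(U)/|\mathbf z|_1$. Note that we may drop the restriction that the pivot lies outside $T$ here; this only overcounts, which is harmless.

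Summing the three contributions, each pair loses in expectation at most $3y_iy_{i'}|d_i-d_{i'}|/|\mathbf z|_1$. Therefore
\[
\E[\Disp'\mid\mathcal F_t]\ge\Bigl(1-\frac3{|\mathbf z|_1}\Bigr)\Disp(U).
\]
If $|\mathbf z|_1<3$, the right-hand side is negative and the claim is trivial since $\Disp'\ge0$.

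The only subtle point is checking that the events are correctly accounted for under the case split in \eqref{equ:Disp'}. If the pivot lies in $U$, then $\Disp'=0$; this is covered by Event 1 because $U\subseteq T$. If the pivot is $q\notin T$, then $U'$ consists of $U$ minus the facilities deleted by $q$, since before $t^*$ no support facility outside $T$ is still fractional. In that case the pair $(i,i')$ survives unless $q$ deletes an endpoint, which is exactly Events 2 and 3. The main obstacle I anticipate is confirming that the deletion probability uses the total in-weight $\sum_q w_{qi}=1$ (and not some larger quantity), which holds by the definition of the nearest unit of opening.
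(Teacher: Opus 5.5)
Your proposal is correct and matches the paper's proof: fix an ordered pair in $U$ and apply a union bound over the three ways it can fail to contribute to $\Disp'$. These are a pivot in $T$, with probability $y(T)/|\mathbf z|_1\le 1/|\mathbf z|_1$, and the deletion of either endpoint, each with probability $(1-z_i)/|\mathbf z|_1\le 1/|\mathbf z|_1$. Your extra checks of the case split in \eqref{equ:Disp'} and of the case $|\mathbf z|_1<3$ are harmless but not needed.
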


\begin{proof}
Fix an ordered pair $i,i'\in U$ with $i \neq i'$. The pair contributes $y_iy_{i'}|d_i - d_{i'}|$ to $\Disp'$ unless a pivot
in $T \supseteq U$ is selected (in which case $\Disp'$ becomes $0$) or a pivot deletes $i$ or $i'$. These
events have probabilities at most $\frac{y(T)}{|\mathbf{z}|_1} \leq \frac{1}{|\mathbf{z}|_1}$, $\frac1{|\mathbf{z}|_1}$, and $\frac1{|\mathbf{z}|_1}$,
respectively. A union bound therefore gives
\[
 \Pr\bigl[\text{the pair $i,i'$ contributes to $\Disp'$}\bigm|\mathcal F_t\bigr]\ge1-\frac3{|\mathbf{z}|_1}.
\]
This  proves the lemma.
\end{proof}

\begin{corollary}[Combined one-step inequality]\label{cor:combined-one-step}
$\displaystyle 
 \E\Bigl[\Phi'-\frac{\Disp'}{12}\Bigm|\mathcal F_t\Bigr]
 \le\Phi(U,b)-\frac{\Disp(U)}{12}$.
\end{corollary}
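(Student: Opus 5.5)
The corollary should follow by combining Lemma~\ref{lem:one-step}, Lemma~\ref{lem:combined-saving} and Lemma~\ref{lem:persistence}, all conditional on $\mathcal F_t$ with $t<t^*$. The plan is to lower-bound the expected decrease of the factor-$2$ charge and upper-bound the expected loss of dispersion credit, each in units of $1/|\mathbf z|_1$, and then check that the first pays for the second.

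First, divide Lemma~\ref{lem:one-step} by $|\mathbf z|_1$ and insert $\Gamma\ge\Disp(U)/4$ from Lemma~\ref{lem:combined-saving}. This gives
\[
 \E[\Phi'\mid\mathcal F_t]\le\Phi(U,b)-\frac{\Disp(U)}{4|\mathbf z|_1}.
\]
Second, multiply Lemma~\ref{lem:persistence} by $-1/12$. This gives
\[
 -\frac{1}{12}\E[\Disp'\mid\mathcal F_t]\le-\frac{\Disp(U)}{12}+\frac{3\Disp(U)}{12|\mathbf z|_1}
 =-\frac{\Disp(U)}{12}+\frac{\Disp(U)}{4|\mathbf z|_1}.
\]
Third, add the two inequalities using linearity of conditional expectation. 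The terms $\pm\Disp(U)/(4|\mathbf z|_1)$ cancel exactly, leaving $\Phi(U,b)-\Disp(U)/12$, which is the claim.

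The only point needing care is whether the case definitions of $\Phi'$ and $\Disp'$ in \eqref{equ:Disp'} are the ones used in the two lemmas, since they branch on different conditions: pivot in $U$ versus pivot in $T$. They are. Lemma~\ref{lem:one-step} is stated for $\Phi'$ with the pivot-in-$U$ case, and Lemma~\ref{lem:persistence} is stated for $\Disp'$ with the pivot-in-$T$ case. So both bounds concern the same random pair $(\Phi',\Disp')$ on the same probability space, and no further argument about the mismatch is needed. I do not expect any real obstacle. The only subtlety, the exact matching of the constants $4\cdot 3=12$, is already built into the statements.
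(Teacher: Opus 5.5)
Your proposal is correct and is the paper's argument: divide Lemma~\ref{lem:one-step} by $|\mathbf z|_1$ and insert $\Gamma\ge\Disp(U)/4$ from Lemma~\ref{lem:combined-saving}. Then subtract $\tfrac1{12}$ times the persistence bound of Lemma~\ref{lem:persistence}, so that the $\Disp(U)/(4|\mathbf z|_1)$ terms cancel. Your remark that both lemmas concern the same random pair $(\Phi',\Disp')$ as defined in \eqref{equ:Disp'}, despite their different case conditions, is the right check for using linearity of conditional expectation.
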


\begin{proof}
By Lemmas~\ref{lem:one-step} and~\ref{lem:combined-saving},
$\E[\Phi' \mid\mathcal F_t]\le\Phi(U,b)-\frac{\Disp(U)}{4|\mathbf{z}|_1}$. By
Lemma~\ref{lem:persistence},
$\E[\Disp'\mid\mathcal F_t]\ge\Disp(U)-\frac{3\Disp(U)}{|\mathbf{z}|_1}$. Subtracting $\frac1{12}$ times the second bound from the first proves the corollary.
\end{proof}

%\add{Replacing $\Disp'$ by the larger quantity $\Disp(U')$ when the pivot lies in $T\setminus U$ only decreases the left-hand side, so \Cref{cor:combined-one-step} also holds in that form; this is the form used in \Cref{sec:single-pivot-conversion}.}

% The second label keeps the reference in the proof strategy above valid.
Recall that we assumed that $t < t^*$. We prove the following lemma:
\begin{lemma}[Main lemma]\label{lem:main-potential}\label{lem:closed-potential}
We have
\[
 \E[\Psi^{(t+1)}\mid\mathcal F_t]\le\Psi^{(t)}.
\]
That is, the stopped sequence $(\Psi^{(\min\{t, t^*\})})_{t\ge0}$ is a
supermartingale.
\end{lemma}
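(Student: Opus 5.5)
The plan is to show $\Psi^{(t+1)}\le\Phi'-\Disp'/12$ pointwise in every outcome of iteration $t+1$. Taking conditional expectations and applying Corollary~\ref{cor:combined-one-step} then gives
\[
\E[\Psi^{(t+1)}\mid\mathcal F_t]\le\Phi(U,b)-\frac{\Disp(U)}{12}=\Psi^{(t)},
\]
since $U=U^{(t)}$ attains the minimum in \eqref{eq:psi-def}. The pointwise comparison is split according to where the pivot $q$ lies: outside $T$ (so outside $F_j$, since $t<t^*$), in $U$, or in $T\setminus U$.

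\emph{Case 1: $q\notin T$.} If $t+1<t^*$, then $T^{(t+1)}\supseteq U'$ and $b^{(t+1)}=b'$. Choosing the candidate $U'$ in \eqref{eq:psi-def} gives
\[
\Psi^{(t+1)}\le\Phi(U',b')-\frac{\Disp(U')}{12}=\Phi'-\frac{\Disp'}{12}.
\]
If instead $t+1=t^*$, all of $F_j$ has been deleted, so $U'=\varnothing$ and $S^{(t+1)}=\varnothing$. Then $\Psi^{(t+1)}=b'=\Phi(\varnothing,b')=\Phi'$, and $\Disp'=\Disp(\varnothing)=0$.

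\emph{Case 2: $q\in U$.} Now $t+1=t^*$, $b'\le d_q$, and $\Disp'=0$. We need
\[
\Psi^{(t+1)}=\min_{S\subseteq T^{(t+1)}}\Phi(S,b')\le d_q-\sum_{i\in U'}y_i\pos{d_q-2d_i}.
\]
Take $S=\{i\in U':2d_i<d_q\}\subseteq T^{(t+1)}$. Using $\Phi(S,b)=b-\sum_{i\in S}y_i(b-2d_i)$ and monotonicity in $b$, we get $\Phi(S,b')\le\Phi(S,d_q)$, which equals the right-hand side. The identity $\Psi^{(t^*)}=\Phi(S^{(t^*)},b^{(t^*)})=\min_S\Phi(S,b^{(t^*)})$ holds because $S^{(t)}$ minimizes $\Phi(\cdot,b^{(t)})$.

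\emph{Case 3: $q\in T\setminus U$.} This case is not covered by the first branch of $\Phi'$. Here $\Phi'=\Phi(U',b')$ with $b'\le d_q$, and $\Disp'=0$ because $q\in T$. Since $t+1=t^*$, $\Psi^{(t+1)}=\min_S\Phi(S,b')\le\Phi(U',b')$, where $U'\subseteq T^{(t+1)}$ because $q\notin U$. So $\Psi^{(t+1)}\le\Phi'=\Phi'-\Disp'/12$.

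With the pointwise bound in hand, I would take conditional expectations and use $\Psi^{(t)}=\Phi(U,b)-\Disp(U)/12$. For $t\ge t^*$ the stopped sequence is constant, which gives the supermartingale property. Integrability holds because $\Psi\le\max_i d(i,j)$ for $t\ge1$, and $b=\infty$ at $t=0$ only multiplies $1-y(U)$, which we can take to be $0$ with $U=F_j$ (the $0\cdot\infty$ convention).

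The main obstacle is checking that the case definitions of $\Phi'$ and $\Disp'$ match the stopping rule. In particular, Case 3 (pivot in $T\setminus U$) and the bookkeeping at $t=0$ with $b=\infty$ need to be consistent with Lemma~\ref{lem:one-step}, which was proved using exactly these definitions. Most of the care goes there; the inequalities themselves are immediate.
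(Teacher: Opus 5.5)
Your proposal is correct and follows the paper's proof essentially verbatim. It proves the same pointwise bound $\Psi^{(t+1)}\le\Phi'-\Disp'/12$ over the same three pivot cases, then applies Corollary~\ref{cor:combined-one-step}. Two small differences, neither affecting validity: in the case $q\in U$ you use the explicit candidate $\{i\in U':2d_i<d_q\}$ and the linearity of $\Phi(S,\cdot)$ in place of the paper's monotonicity of the minimized expression over $T^{(t^*)}$; and the inclusion $U'\subseteq T^{(t+1)}$ holds by definition of $U'$, not because $q\notin U$.
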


\begin{proof}
The probability space in this proof is conditional on $\mathcal F_t$.  We claim that the following inequality always holds:
\[
 \Psi^{(t+1)}\le\Phi'-\frac{\Disp'}{12}.
\]
Recall that $\mathbf{z} = \mathbf{z}^{(t)}, U = U^{(t)}, T = T^{(t)}, b = b^{(t)}$ and $U \subseteq T$. $U' = \{i \in U: z^{(t+1)}_i = y_i\} \subseteq T^{(t+1)}$ and $b ' = b^{(t+1)}$. We distinguish three cases, depending on the pivot we choose on iteration $t + 1$. 
\begin{itemize}
\item The pivot $q$ lies in $U$. Then $t^* = t+1$, $\Disp'=0$, and $b'\le d_q$.
\begin{align*}
	 \Psi^{(t+1)} &= \Psi^{(t^*)} = \Phi(S^{(t^*)}, b') = b' - \sum_{i \in T^{(t^*)}}y_i\pos{b'-2d_i} \\
&\leq d_q-\sum_{i\in T^{(t^*)}}y_i\pos{d_q-2d_i}
\le d_q-\sum_{i\in U'}y_i\pos{d_q-2d_i}=\Phi'.	
\end{align*}
The first inequality follows from a simple derivative technique. 
\item The pivot lies in $T\setminus U$. Then $t^* = t + 1$, $\Disp'=0$,
and $\Psi^{(t+1)} = \Phi(S^{(t^*)}, b') \le\Phi(U',b')=\Phi'$. The inequality used that $S^{(t^*)}$ is the subset $S$ of $T^{(t^*)}$ that minimizes $\Phi(S, b')$ and $U' \subseteq T^{(t^*)}$.
\item The pivot lies outside $T$. Then $\Phi'=\Phi(U',b')$ and
$\Disp'=\Disp(U')$. If $T^{(t+1)}=\varnothing$, then $t^* = t + 1$,
$U'=\varnothing$, $\Disp' = \Disp(U') = 0$ and $\Psi^{(t+1)}=b'=\Phi'-\Disp'/12$. Otherwise,
$t+1<t^*$, and $U'$ is a candidate in~\eqref{eq:psi-def}, so
$\Psi^{(t+1)}\le\Phi(U',b')-\frac{\Disp(U')}{12}=\Phi'-\frac{\Disp'}{12}$.
\end{itemize}
This proves the claim. Taking conditional expectations and applying
\Cref{cor:combined-one-step} gives
$\E[\Psi^{(t+1)}\mid\mathcal F_t]\le\Phi(U,b)-\frac{\Disp(U)}{12}=\Psi^{(t)}$.
\end{proof}

At time $t^*$, we use the factor-2 analysis of \cite{BGHL2026}. For completeness, we recover the proof using the potential argument we already established.

\begin{lemma}[Factor-$2$ continuation]\label{lem:factor2-continuation}
$\E[c_j\mid\mathcal F_{t^*}]
\le\Phi(S^{(t^*)},b^{(t^*)})$.
\end{lemma}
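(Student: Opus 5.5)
The plan is to show that, from time $t^*$ onward, the quantity $\Phi_t:=\Phi(S^{(t)},b^{(t)})$ is a supermartingale. At termination it equals $c_j$.

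Indeed, when the algorithm stops every facility is integral. Then $T^{(t)}=\varnothing$ and $S^{(t)}=\varnothing$, so $\Phi_t=b^{(t)}=c_j$. Since $\Phi(S,b)=\min_{S'\subseteq T}\Phi(S',b)$, we have
\[
\Phi_t=b^{(t)}-\sum_{i\in T^{(t)}}y_i\pos{b^{(t)}-2d_i}.
\]
This expression is nondecreasing in $b$, as its derivative is $1-y(S)\ge0$. It is also nonincreasing when $T$ grows. Given the supermartingale property, we get
\[
\E[\Phi_{t^*+s}\mid\mathcal F_{t^*}]\le\Phi_{t^*}
\]
for every $s$. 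Termination is almost sure by Lemma~\ref{lem:client-termination}, and $\Phi_t\ge0$. Fatou's lemma then gives $\E[c_j\mid\mathcal F_{t^*}]\le\Phi_{t^*}$. To make the limit argument clean, note that $\Phi_t\le b^{(t)}\le b^{(t^*)}<\infty$ for $t\ge t^*\ge1$, so bounded convergence also works.

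The one-step inequality for $t\ge t^*$ comes from applying Lemma~\ref{lem:one-step} with $U:=S^{(t)}$. That lemma uses only that $U\subseteq T$ consists of fractional facilities with $z_i=y_i$. In particular, it does not use minimality of $\Psi$, so it applies verbatim at any time. Combined with $\Gamma\ge0$ from Lemma~\ref{lem:combined-saving}, it yields
\[
\E[\Phi'\mid\mathcal F_t]\le\Phi(S^{(t)},b^{(t)}).
\]
It then remains to check that $\Phi_{t+1}\le\Phi'$ in every outcome, by cases on the pivot $q$, as in the proof of Lemma~\ref{lem:closed-potential}. There are three cases.
\begin{itemize}
\item If $q\in U$, then $b'\le d_q$ and $U'\subseteq T^{(t+1)}$. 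By monotonicity in $b$ and in $T$,
\[
\Phi_{t+1}\le d_q-\sum_{i\in U'}y_i\pos{d_q-2d_i}=\Phi'.
\]
\item If $q\notin U$, then $U'\subseteq T^{(t+1)}$.
\item In that case, minimality of $S^{(t+1)}$ gives $\Phi_{t+1}\le\Phi(U',b')=\Phi'$.
\end{itemize}

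The main point to verify is that Lemma~\ref{lem:one-step} indeed holds for $U=S^{(t)}$ even when $t\ge t^*$. In the pivot-outside-$U$ case its proof invoked Lemma~\ref{lemma:backup-bound} and the identity $\sum_{q\notin U}w_{qi}=1-w(U,i)$; both are valid for any $U$ of still-fractional support facilities. A second concern is that pivots in $T\setminus U$ are now allowed, and such a pivot can lower $b$. Here the bound $b'\le b$ suffices, since $\Phi(U',b')\le\Phi(U,b)+\dots$ was derived only from $b'\le\min\{b,2d_i+r_i\}$ for deleted $i$, together with $b'\le b$ otherwise. Finally, the case $b^{(t)}=\infty$ cannot arise here, because $t^*\ge1$.
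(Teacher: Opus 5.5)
Your proposal is correct and follows the paper's proof essentially step for step. The paper also tracks $Q_t=\min_{U\subseteq T^{(t)}}\Phi(U,b^{(t)})$, which is your $\Phi_t$. It applies Lemma~\ref{lem:one-step} to a minimizing set together with $\Gamma\ge 0$, checks $Q_{t+1}\le\Phi'$ by the same case split (using monotonicity in $b$ when the pivot lies in $U$), and finishes with bounded optional stopping. Your explicit closed form $b-\sum_{i\in T}y_i\pos{b-2d_i}$ and the domination $\Phi_t\le b^{(t^*)}$ just spell out what the paper compresses; note also that your bulleted ``three cases'' are really the two cases $q\in U$ and $q\notin U$.
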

\begin{proof}
For $t\ge t^*$ we define
$Q_t=\min_{U\subseteq T^{(t)}}\Phi(U,b^{(t)})$, and fix a set $U$ minimizing the quantity. The proof of
Lemma~\ref{lem:one-step} only requires that $U$ is fractional and
$y(U)\le1$, so it applies in this state as well. Together with
Lemma~\ref{lem:combined-saving}, it gives
$\E[\Phi'\mid\mathcal F_t]\le\Phi(U,b^{(t)})=Q_t$.
If the pivot lies outside $U$, then
$Q_{t+1}\le\Phi(U',b')=\Phi'$. If the pivot is $q\in U$, use
$b'\le d_q$ and minimize over the surviving $U'$ to obtain
\[
 Q_{t+1}\le b'-\sum_{i\in U'}y_i(b'-2d_i)_+
 \le d_q-\sum_{i\in U'}y_i(d_q-2d_i)_+=\Phi'.
\]
The middle function is nondecreasing because $y(U')\le1$.
Thus $Q_t$ is a supermartingale after $t^*$ with non-negative support. At termination $T=\varnothing$ and $Q=c_j$.
Conditional bounded optional stopping proves the claim.
\end{proof}

\begin{proof}[Proof of Theorem~\ref{thm:client-dep}]
By Lemma~\ref{lem:main-potential}, $(\Psi^{(\min\{t, t^*\})})_{t\ge0}$ is a
supermartingale. It is bounded, since $0\le\Psi^{(0)}\le2V_j$ and
$0\le\Psi^{(t)}< \infty$ for $1\le t\le t^*$. Since $t^*$ is finite
almost surely, the optional stopping theorem gives
\[
 \E\bigl[\Psi^{(t^*)}\bigr]\le\Psi^{(0)}\le2V_j-\frac{\Disp_j}{12}.
\]
By Lemma~\ref{lem:factor2-continuation},
$\E[c_j\mid t^*, \mathcal F_{t^*}]\le\Phi(S^{(t^*)},b^{(t^*)})=\Psi^{(t^*)}$.
Taking expectations proves the theorem.
\end{proof}

\section{Converting the LMP Algorithm into One with \texorpdfstring{$k+O(1)$}{k+O(1)} Open Facilities}
\label{sec:single-pivot-conversion}

In this section, we convert the LMP algorithm from Section~\ref{sec:client-dependent-analysis} into an algorithm that always opens at most $k+O(1)$ facilities. A similar conversion was given in \cite{BGHL2026} for their $2$-LMP algorithm. Our contribution is twofold. First, their conversion does not apply as a black box to our refined, dispersion-dependent bound, so we establish conditions under which this bound survives the conversion. Second, our conversion is simpler: instead of selecting a batch of pivots, we select a single pivot in each iteration. Relative to the algorithm in Section~\ref{sec:client-dependent-analysis}, we favor operations with negative imbalance and split large deletion sets into smaller ones. This bounds the change in total opening in each iteration and allows us to apply concentration bounds.

\subsection{Algorithm}\label{sec:conversion-algorithm}
In our conversion, we need to first apply the preprocessing procedure in Section~\ref{sec:preprocessing}. Let $\varepsilon\in(0,1)$ be sufficiently small and satisfy the arithmetic assumptions of Theorem~\ref{thm:preprocessing}; it will also be the multiplicative loss parameter for the conversion. We apply the preprocessing to obtain a  $\delta = \textrm{poly}(\varepsilon)$-integral opening vector $\mathbf y \in [0, 1]^F$, where we assume $\delta^{-1}\in\mathbb N$ and $0<\delta<1$.  We assume that the input has at least one unit of opening and write $Y_0:=y(F)\ge1$. If preprocessing returns a vector with $Y_0>k+3$, we repeat preprocessing until $Y_0\le k+3$. Its failure probability is $\exp(-\Omega(\varepsilon^{-10}))$ for $p=1$, so this conditioning preserves its $1+O(\varepsilon)$ expected-cost guarantee. We also define
\begin{align*}
	\eta&:=\varepsilon\delta/4, \quad \zeta:=\varepsilon/4, \quad D:=  \lceil 4(1+\eta)/\eta \rceil + 1,  \quad \lambda:=\frac{\log(1+\eta/2)}{D+1},\\
	 s&:=\left\lceil\lambda^{-1}\log(1/\zeta)\right\rceil
	=O\!\left(1+\eta^{-2}\log(1/\zeta)\right).
\end{align*}
%\quad \text{and}\quad

We split each facility $i$ into $y_i/\delta$ colocated copies of opening $\delta$, so that every facility $i$ now has $y_i = \delta$. For simplicity, we still use  $F$ and $\mathbf y$ for the new set of facilities and new fractional opening vector. 

During the algorithm, let $\mathbf z^{(t)}$ denote the current opening vector, with $\mathbf z^{(0)}=\mathbf y$. Initially, $O_0=\varnothing$ and $R_0=F$.  After each iteration $t$, each facility $i$ has one of the following states:

\begin{itemize}
\item \emph{Closed}: $z_i^{(t)}=0$.
\item \emph{Open}: $z_i^{(t)}=1$. Denote by $O_t$ the set of open facilities at this time.
\item \emph{Fractional}: $z_i^{(t)}=\delta$. Denote by $R_t$ the set of fractional facilities at this time.
\end{itemize}

In each iteration, we choose a pivot that is either fractional or already open, open it if necessary, and close a set of fractional facilities. An iteration may leave the state unchanged. Once a facility is opened or closed, it remains in that state for the rest of the algorithm. The algorithm terminates when all facilities are either open or closed. 

Now we describe iteration $t$ of the algorithm, for $t \geq 1$:
\begin{itemize}
\item Put $1$ \emph{token} on each fractional facility $i \in R_{t-1}$ and $\delta^{-1}$ tokens on each open facility $i \in O_{t-1}$. For every token $x$, let $\operatorname{own}(x)$ denote its facility. So, one token represents a $\delta$-fractional opening. 
\item For each fractional facility $i \in R_{t-1}$, denote by $N(i)$ the set of the nearest $\delta^{-1}$ tokens to $i$ (breaking ties in favor of its own token, and otherwise in a fixed order). Such a set exists initially because $Y_0\ge1$, and after the first iteration because at least one facility is open and remains open.
\item For each token $x$, denote by $N^{-1}(x)$ the set of fractional facilities $i$ such that $x \in N(i)$, except for the facility that owns $x$. Then create several \emph{operations} associated with $x$ as follows. Each operation is represented by $(x,Q)$ where $Q\subseteq R_{t-1}$ is a set of fractional facilities.
\begin{itemize}
\item If $|N^{-1}(x)|\le \delta^{-1}D$, then set $r_x:=1$ and create one operation $(x,N^{-1}(x))$.
\item Otherwise, split $N^{-1}(x)$ into $r_x:=\lceil |N^{-1}(x)|/(\delta^{-1}D)\rceil$ disjoint sets $Q_1,\ldots,Q_{r_x}$ of size between $\delta^{-1}D/2$ and $\delta^{-1}D$, and create $r_x$ operations $(x,Q_1),\ldots,(x,Q_{r_x})$.
\end{itemize}
Let $\mathcal A$ be the set of all operations $(x,Q)$ created. We retain the nearest-unit
weights from the previous LMP analysis. In the $\delta$-integral
setting, their token representation is
\begin{align}
 w_{q,i}&:=\delta\,|\{x\in N(i):\operatorname{own}(x)=q\}|, &&\forall q \in R_{t-1} \cup O_{t-1}, i \in R_{t-1}, \quad \text{and}\nonumber\\
 w(U,i)&:=\sum_{q\in U}w_{q,i}, &&\forall U \subseteq R_{t-1} \cup O_{t-1}, i \in R_{t-1}. \label{eq:conversion-token-weights}
\end{align}
Since the sets $\{Q: (x, Q) \in \mathcal A\}$ partition $N^{-1}(x)$, for $q\ne i$, 
the number of operations $(x,Q)$ with $\operatorname{own}(x)=q$
and $i\in Q$ is $w_{q,i}/\delta$.
\item The imbalance $\beta(x,Q)$ of an operation $(x,Q)$ is defined as follows. If $x$ is on a fractional facility $i$, then $\beta(x,Q)=(1-\delta)-\delta|Q|$. If $x$ is on an open facility, then $\beta(x,Q)=-\delta|Q|$. The weight of this operation is defined as $\omega(x,Q)=(1+\eta)^{\mathbf 1_{\beta(x,Q)<0}}$.
\item Sample an operation $(x,Q)$ with probability proportional to its weight $\omega(x,Q)$. \emph{Perform} the operation: open the facility that owns $x$ and close all facilities in $Q$.
\end{itemize}

This finishes the description of the iterative algorithm.  Note that the imbalance of an operation is the change in the total opening value $|O_t|+\delta|R_t|$ of facilities after performing the operation. By boosting the weight of operations with negative imbalance, we can control the exponential moment of $|O_t|+\delta|R_t|$ after each iteration. This gives a high probability guarantee about the final number of open facilities.

\subsection{Analysis of the Number of Open Facilities}
In this section, we bound the number of open facilities.  Fix an iteration $t$ and the history till the end of iteration $t-1$. For a token $x$, define its imbalance $\beta(x)$ similarly to the imbalance of an operation: If $x$ is on a fractional facility $i$, then $\beta(x)=(1-\delta)-\delta|N^{-1}(x)|$. If $x$ is on an open facility, then $\beta(x)=-\delta|N^{-1}(x)|$.

Let
$$\operatorname{imb}^+=\sum_{(x,Q)\in\mathcal A:\,\beta(x,Q)>0}\beta(x,Q),\qquad \operatorname{imb}^-=\sum_{(x,Q)\in\mathcal A:\,\beta(x,Q)<0}-\beta(x,Q).$$

\begin{lemma}\label{lem:imbalance}
We have $\sum_x \beta(x)=0$, and $$0\le \operatorname{imb}^+-\operatorname{imb}^-\le\frac{\operatorname{imb}^+}{D-1}.$$
\end{lemma}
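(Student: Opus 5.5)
The plan is to prove the identity $\sum_x\beta(x)=0$ by double counting, and then to read off both inequalities from how operations are split.

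\textbf{The identity.} Each fractional facility $i\in R_{t-1}$ has $|N(i)|=\delta^{-1}$ tokens, one of which is its own token (ties are broken in favour of it). So $i$ lies in $N^{-1}(x)$ for exactly $\delta^{-1}-1$ tokens $x$. Summing over tokens,
\[
\sum_x|N^{-1}(x)|=|R_{t-1}|(\delta^{-1}-1).
\]
The positive parts of $\beta(x)$ come only from tokens on fractional facilities, one per facility, and contribute $(1-\delta)|R_{t-1}|$ in total. Hence
\[
\sum_x\beta(x)=(1-\delta)|R_{t-1}|-\delta|R_{t-1}|(\delta^{-1}-1)=0.
\]

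\textbf{Reduction to one token at a time.} The sets $Q$ for a token $x$ partition $N^{-1}(x)$. An operation $(x,Q)$ with $x$ on a fractional facility has imbalance $(1-\delta)-\delta|Q|$; it is the $\beta(x)$ term when $r_x=1$. When $r_x>1$, the $1-\delta$ term is counted once per fragment. I need to decide how this is handled. Two readings seem possible: either the definition credits $1-\delta$ to every fragment, or only to one fragment. I would take the definition literally: every fragment of a fractional-owner token has imbalance $(1-\delta)-\delta|Q|$. The key observation is that this is harmless, because split fragments have $|Q|\ge\delta^{-1}D/2$. Then $\delta|Q|\ge D/2>1$, so every split fragment has negative imbalance.

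\textbf{Case analysis per token.} For a fractional-owner token with $r_x=1$, the operation's imbalance equals $\beta(x)$. For open-owner tokens and split tokens, all operations are nonpositive. Their sum is $\beta(x)$ plus $(r_x-1)(1-\delta)$ in the split fractional case, which is $\ge\beta(x)$.

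I will restate the claim as $\mathrm{imb}^+-\mathrm{imb}^-=\sum_{(x,Q)}\beta(x,Q)$. This equals $\sum_x\beta(x)+\sum_{x\text{ fractional, split}}(r_x-1)(1-\delta)$. Since $\sum_x\beta(x)=0$, the left inequality $0\le\mathrm{imb}^+-\mathrm{imb}^-$ follows immediately.

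\textbf{The right inequality.} I must show
\[
\sum_{x\text{ split, fractional}}(r_x-1)(1-\delta)\le\frac{\mathrm{imb}^+}{D-1}.
\]
The positive operations are exactly the unsplit fractional tokens with $|N^{-1}(x)|<\delta^{-1}-1$. So $\mathrm{imb}^+$ counts the slack from tokens whose in-degree is below average. The split tokens have in-degree at least about $\delta^{-1}D$, far above average. From $\sum_x\beta(x)=0$ the positive total balances the negative total: each split fractional token contributes $-\beta(x)\ge\delta|N^{-1}(x)|-1\ge (r_x-1)D-1$. The last step uses $|N^{-1}(x)|>\delta^{-1}D(r_x-1)$.

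Summing, $\mathrm{imb}^+\ge\sum_x\beta(x)^+=\sum_x\beta(x)^-\ge\sum_{\text{split}}((r_x-1)D-1)$. Each split token has $r_x\ge2$, so $(r_x-1)D-1\ge(r_x-1)(D-1)$. This gives $(D-1)\sum(r_x-1)\le\mathrm{imb}^+$, which yields the claim since $1-\delta\le1$.

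I need $\sum_x\beta(x)^+\le\mathrm{imb}^+$. This holds because a token with $\beta(x)>0$ must be fractional with small in-degree, hence unsplit, so its single operation is counted in $\mathrm{imb}^+$.

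The main obstacle is getting the in-degree bound for split tokens exactly right: the ceiling in $r_x$ gives $|N^{-1}(x)|>(r_x-1)\delta^{-1}D$. I also expect to double-check the boundary case $\beta=0$ in the definitions of $\mathrm{imb}^\pm$.
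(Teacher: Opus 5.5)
Your proof is correct and follows the paper's argument. It proves $\sum_x\beta(x)=0$ by double counting, notes that splitting a fractional-owner token adds exactly $(1-\delta)(r_x-1)$ to $\operatorname{imb}^+-\operatorname{imb}^-$, and bounds $\sum(r_x-1)$ by the negative token imbalance, which equals $\operatorname{imb}^+$ because positive-imbalance tokens are never split. The only cosmetic difference is the final algebra: the paper uses $r_x-1\le \delta|N^{-1}(x)|/D\le -\beta(x)/(D-1)$, whereas you use $-\beta(x)>(r_x-1)D-1\ge(r_x-1)(D-1)$.
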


\begin{proof}
Summing the imbalance of all tokens gives
$$\begin{aligned}
\sum_x \beta(x)&=(1-\delta)|R_{t-1}|-\delta\sum_x|N^{-1}(x)| =(1-\delta)|R_{t-1}|-\delta\sum_{i\in R_{t-1}}(|N(i)|-1)\\
&=(1-\delta)|R_{t-1}|-(1-\delta)|R_{t-1}| =0.
\end{aligned}$$

We split a token $x$ only when $|N^{-1}(x)|>\delta^{-1}D$, which implies $\beta(x)<0$. The change of imbalance after this splitting is (recall that $r_x$ is the number of operations created by splitting $N^{-1}(x)$)
$$\beta(x)-\sum_{Q:(x,Q)\in\mathcal A}\beta(x,Q)=\begin{cases}0 & (\operatorname{own}(x)\in O_{t-1})\\ -(1-\delta)\cdot(r_x-1) & (\operatorname{own}(x)\in R_{t-1})\end{cases}.$$

Summing over all tokens $x$ gives $\operatorname{imb}^+-\operatorname{imb}^-=(1-\delta)\cdot \sum_{x:\operatorname{own}(x)\in R_{t-1}}(r_x-1)$. This value is non-negative and can be upper bounded as follows.
\begin{align*}
(1-\delta)\cdot \sum_{x:\operatorname{own}(x)\in R_{t-1}}(r_x-1)&\le \sum_{\substack{x:\operatorname{own}(x)\in R_{t-1}\\r_x>1}}\frac{|N^{-1}(x)|}{\delta^{-1}D}\\
&\le \sum_{\substack{x:\operatorname{own}(x)\in R_{t-1}\\r_x>1}}\frac{\delta|N^{-1}(x)|-(1-\delta)}{D-1} \le \frac{\operatorname{imb}^+}{D-1}.
\end{align*}
For the last inequality, every fragment of a split token has negative imbalance, since $\delta|Q|\ge D/2>1-\delta$. Thus the positive imbalance is unchanged by splitting, and the total negative imbalance of the original tokens equals $\operatorname{imb}^+$ because $\sum_x\beta(x)=0$.
\end{proof}

Denote by $Y_t$ the total opening value of facilities after iteration $t$, i.e., $Y_t=|O_t|+\delta|R_t|$. When $t$ is larger than the number of iterations, we let $Y_t$ be the number of open facilities returned by the algorithm. 

\begin{lemma}\label{lem:conversion-exponential}
$\left(\exp(\lambda(Y_t-Y_0))\right)_{t\ge 0}$
is a supermartingale. Consequently, for every $h\ge0$,
$$\Pr\left[\sup_{t\ge0} Y_t\ge Y_0+h\right]\le e^{-\lambda h}.$$
\end{lemma}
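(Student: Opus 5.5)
We prove the one-step inequality $\E[\exp(\lambda(Y_t-Y_{t-1}))\mid\mathcal F_{t-1}]\le1$ for every iteration $t$ at which the algorithm has not terminated. After termination $Y_t$ is constant, so the claim is trivial there. The maximal inequality then follows from Ville's inequality: for a nonnegative supermartingale $M_t=\exp(\lambda(Y_t-Y_0))$ with $M_0=1$, we have $\Pr[\sup_t M_t\ge e^{\lambda h}]\le e^{-\lambda h}$. Equivalently, apply optional stopping at the first time $Y_t\ge Y_0+h$ and use Fatou's lemma, since that time may be infinite.

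For the one-step bound, fix the history. The increment $Y_t-Y_{t-1}$ equals the imbalance $\beta(x,Q)$ of the sampled operation, which is chosen with probability $\omega(x,Q)/W$, where $W=\sum_{\mathcal A}\omega$. We control the range of every $\beta(x,Q)$.
\begin{itemize}
\item Positive imbalances are at most $1-\delta<1$.
\item Negative imbalances are at least $-\delta\cdot\delta^{-1}D=-D$, since every fragment has $|Q|\le\delta^{-1}D$.
\end{itemize}
Hence $|\lambda\beta|\le\lambda D\le\log(1+\eta/2)$, which is small. On this range we use $e^{u}\le1+u+cu^2$ with $c$ close to $1/2$ (and $c\le1$ suffices). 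This gives
\[
W\cdot\E[e^{\lambda\beta}]-W\le\lambda\Bigl(\operatorname{imb}^+-(1+\eta)\operatorname{imb}^-\Bigr)+c\lambda^2\Bigl(\sum_{\beta>0}\beta^2+(1+\eta)\sum_{\beta<0}\beta^2\Bigr).
\]
The linear term is handled by Lemma~\ref{lem:imbalance}: $\operatorname{imb}^-\ge\operatorname{imb}^+(1-\tfrac1{D-1})$, so the linear coefficient is at most $\operatorname{imb}^+\bigl(1-(1+\eta)\tfrac{D-2}{D-1}\bigr)$. Because $D\ge4(1+\eta)/\eta+1$, this is at most about $-\tfrac{3\eta}{4}\operatorname{imb}^+$.

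For the quadratic term, positive imbalances satisfy $\beta^2\le\beta$, and negative ones satisfy $\beta^2\le D|\beta|$. The quadratic term is therefore at most $c\lambda^2\bigl(\operatorname{imb}^++(1+\eta)D\operatorname{imb}^-\bigr)\le c\lambda^2(1+(1+\eta)D)\operatorname{imb}^+$, using $\operatorname{imb}^-\le\operatorname{imb}^+$. With $\lambda=\log(1+\eta/2)/(D+1)\le\eta/(2(D+1))$, this is at most roughly $\lambda\cdot\tfrac{\eta}{2}c\,\tfrac{1+(1+\eta)D}{D+1}\operatorname{imb}^+\le\lambda\cdot\tfrac{\eta}{2}(1+\eta)c\operatorname{imb}^+$. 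That is dominated by the linear saving of $\tfrac{3\eta}{4}\lambda\operatorname{imb}^+$. The expectation of the increment is thus at most $1$; the case $\operatorname{imb}^+=0$ gives $\operatorname{imb}^-=0$ and all increments zero.

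The main obstacle is the constant bookkeeping. We must verify that the chosen $D$ and $\lambda$ make the linear drift from the $(1+\eta)$ boost beat both the splitting loss of Lemma~\ref{lem:imbalance} and the second-order term. If the Taylor constant is tight, we would use the exact convexity bound instead. For $u\in[-a,b]$, $e^{u}$ lies below the chord of the exponential on that interval, which reduces the check to a two-point inequality. This check is where $\lambda\le\log(1+\eta/2)/(D+1)$ is presumably designed to fit. Under that bound, $e^{\lambda\cdot(\text{positive})}-1\le\eta/2$-scaled terms compare directly with $(1+\eta)(1-e^{-\lambda|\beta|})$.
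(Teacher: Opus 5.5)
Your proof is correct, but your one-step estimate takes a different route from the paper's. The paper never goes to second order. It uses the first-order mean-value bounds $e^{\lambda\beta}-1\le\lambda\beta e^{\lambda}$ for $0\le\beta\le1$ and $e^{\lambda\beta}-1\le\lambda\beta e^{-\lambda D}$ for $-D\le\beta\le0$. These collapse everything to $W\,\E[e^{\lambda\Delta Y}-1]\le\lambda e^{-\lambda D}\bigl(e^{\lambda(D+1)}\operatorname{imb}^+-(1+\eta)\operatorname{imb}^-\bigr)$. Lemma~\ref{lem:imbalance} then gives $(1+\eta)\operatorname{imb}^-\ge(1+3\eta/4)\operatorname{imb}^+$, and the choice of $\lambda$ makes $e^{\lambda(D+1)}=1+\eta/2$. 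So the comparison is one line, and it explains exactly why $\lambda$ has the denominator $D+1$.

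Your argument is instead the Bernstein-style split into drift plus variance. You use $e^{u}\le1+u+u^2$, which is valid since $|\lambda\beta|\le\lambda D<1$. You take the same linear saving $-\tfrac{3\eta}{4}\lambda\operatorname{imb}^+$ from Lemma~\ref{lem:imbalance}. To control the quadratic term you need two ingredients the paper avoids: $\beta^2\le D|\beta|$ on negative fragments, and $\operatorname{imb}^-\le\operatorname{imb}^+$. Together with $\log(1+\eta/2)\le\eta/2$, these bound the quadratic term by $\tfrac{\eta}{2}(1+\eta)\lambda\operatorname{imb}^+$, which is dominated because $\eta\le1/2$. This is a bit more bookkeeping, but it makes transparent that the $\Theta(\eta)$ drift from the boost must beat a $\Theta(\lambda D)$ second-order term, which is why $\lambda=\Theta(\eta/D)$. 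Both arguments tolerate any $\lambda$ of that order.

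Your closing hedge about falling back to a chord bound is unnecessary: the Taylor argument as written already closes. Your Ville's-inequality step matches the paper's.
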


\begin{proof}
Fix an iteration $t$ and condition on the history through iteration $t-1$. All expectations below are conditional on this history. Let $\Delta Y=Y_t-Y_{t-1}$; it suffices to show that $\mathbb E[\exp(\lambda\Delta Y)-1]\le 0$. After termination, $\Delta Y=0$, so we consider only iterations before termination.

By definition we have $-D\le\beta(x,Q)\le 1$. 
Write $W=\sum_{(x,Q)\in\mathcal A}\omega(x,Q)$ for the total operation weight. For $-D\le\beta(x,Q)\le 0$, we use $e^{\lambda\beta(x,Q)}-1\le \lambda\beta(x,Q)e^{-\lambda D}$. For $0\le\beta(x,Q)\le 1$, we use $e^{\lambda\beta(x,Q)}-1\le \lambda\beta(x,Q)e^{\lambda}$. Therefore, we have
\begin{align*}
\mathbb E[\exp(\lambda\Delta Y)-1]&=\frac{1}{W}\sum_{(x,Q)\in \mathcal A}\omega(x,Q)\left(e^{\lambda\beta(x,Q)}-1\right)\\
&=\frac{1}{W}\left(\sum_{\beta(x,Q)>0}\left(e^{\lambda\beta(x,Q)}-1\right)+\sum_{\beta(x,Q)<0}(1+\eta)\left(e^{\lambda\beta(x,Q)}-1\right)\right)\\
&\le \frac{\lambda}{W}\left(e^{\lambda}\operatorname{imb}^+-(1+\eta)e^{-\lambda D}\operatorname{imb}^-\right)\\
&\le \frac{\lambda e^{-\lambda D}}{W}\left(e^{\lambda(D+1)}\operatorname{imb}^+-(1+\eta)\operatorname{imb}^-\right).
\end{align*}

From Lemma~\ref{lem:imbalance} and $D-1\ge 4(1+\eta)/\eta$, we have
$$(1+\eta)(\operatorname{imb}^+-\operatorname{imb}^-)\le \frac{\eta\operatorname{imb}^+}{4}.$$

Thus $(1+\eta)\operatorname{imb}^-\ge (1+3\eta/4)\operatorname{imb}^+$. Substituting this into the previous inequality gives
\begin{align}
\mathbb E[\exp(\lambda\Delta Y)-1]&\le \frac{\lambda e^{-\lambda D}\operatorname{imb}^+}{{\sum_{(x,Q)\in \mathcal A}\omega(x,Q)}}\left(e^{\lambda(D+1)}-(1+3\eta/4)\right) \le 0.
\end{align}
The stated tail bound follows from Ville's inequality applied to the nonnegative supermartingale $Z_t:=\exp(\lambda(Y_t-Y_0))$, since $Z_0=1$.
\end{proof}

\begin{lemma}\label{lem:conversion-termination}
The algorithm terminates almost surely, with a polynomial expected
number of iterations in $|F|$ and $\delta^{-1}$.
\end{lemma}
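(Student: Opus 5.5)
The plan is to follow the argument of Lemma~\ref{lem:client-termination}: show that in every iteration before termination, the number $|R_{t-1}|$ of fractional facilities strictly decreases with probability bounded below by an inverse polynomial in $|F|$ and $\delta^{-1}$. Here $F$ denotes the set after splitting into $\delta$-copies, so $|F|$ is polynomial in the original size and $\delta^{-1}$. Two facts make this quantity a valid progress measure. First, it never increases, since once a facility is opened or closed it keeps that state. Second, the algorithm terminates exactly when $R_{t-1}=\varnothing$.

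\textbf{Step 1 (bounding the total weight).} I will bound the total operation weight $W=\sum_{(x,Q)\in\mathcal A}\omega(x,Q)$ by a polynomial. The number of tokens is $|R_{t-1}|+\delta^{-1}|O_{t-1}|\le(1+\delta^{-1})|F|$. For every token $x$, the splitting rule gives
\[
r_x\le 1+\frac{|N^{-1}(x)|}{\delta^{-1}D}\le 1+|N^{-1}(x)|.
\]
Moreover, $\sum_x|N^{-1}(x)|\le\sum_{i\in R_{t-1}}|N(i)|\le\delta^{-1}|F|$. Hence $|\mathcal A|\le(1+2\delta^{-1})|F|$. Since every weight satisfies $\omega(x,Q)\le 1+\eta\le 2$, this yields $W\le 2(1+2\delta^{-1})|F|$.

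\textbf{Step 2 (probability of progress).} Fix any fractional facility $i\in R_{t-1}$ and let $x$ be its own token. At least one operation $(x,Q)$ exists, because $r_x\ge1$ even when $N^{-1}(x)=\varnothing$. Every such operation has weight at least $1$, and performing it opens $i$. Therefore, conditional on the history,
\[
\Pr\bigl[\,|R_t|\le|R_{t-1}|-1\,\bigr]\;\ge\;\frac1W\;\ge\;p:=\frac{1}{2(1+2\delta^{-1})|F|}.
\]
This lower bound is uniform over all histories before termination.

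\textbf{Step 3 (conclusion).} The process can decrease $|R_t|$ at most $|F|$ times. Each decrease takes a number of iterations stochastically dominated by a geometric random variable with success probability $p$, so the expected number of iterations is at most
\[
\frac{|F|}{p}=2(1+2\delta^{-1})|F|^2,
\]
which is polynomial in $|F|$ and $\delta^{-1}$. A finite expectation implies finiteness almost surely. Equivalently, the probability of surviving $m$ further iterations without progress is at most $(1-p)^m\to0$.

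I expect the only delicate point to be checking that the token structure is well defined at every time, so that the counting in Step 1 applies. This means verifying that $N(i)$ exists, which the algorithm description already guarantees: $Y_0\ge1$ initially, and afterwards at least one facility is open. It also means checking that a fractional facility's own token always yields at least one operation, which holds since $r_x\ge1$ by definition.
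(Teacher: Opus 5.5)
Your proof is correct and follows the paper's argument. You bound the total operation weight $W$ by a polynomial and observe that each fractional facility's own token gives an operation of weight at least $1$ that opens it, so progress occurs with probability at least $1/W$ in every iteration. The differences are cosmetic: you count $|\mathcal A|$ via $\sum_x|N^{-1}(x)|\le\delta^{-1}|F|$, giving a bound linear in $|F|$ where the paper's is quadratic, and you conclude with geometric phases on $|R_t|$ where the paper uses a union bound over facilities.
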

\begin{proof}
For any iteration, there are at most $|F|\delta^{-1}$ tokens, every token is split into at most $1+|F|\delta^{-1}/D$ operations, and $\omega(x,Q)\le 2$. Thus
$$\sum_{(x,Q)\in \mathcal A}\omega(x,Q)\le 2|F|\delta^{-1}(1+|F|\delta^{-1}/D)$$

is polynomial. Any fractional facility is opened with probability at least $1/\bigl(2|F|\delta^{-1}(1+|F|\delta^{-1}/D)\bigr)$ in each iteration. By the union bound, the probability that the algorithm does not terminate after $t$ iterations is at most
$$|F|\bigl(1-1/\bigl(2|F|\delta^{-1}(1+|F|\delta^{-1}/D)\bigr)\bigr)^t,$$

which is exponentially small in $t$. Thus the algorithm terminates almost surely and the expected number of iterations is polynomial.
\end{proof}

To bound the number of opened facilities, we run independent trials of the rounding algorithm until a trial satisfies $|O|\le Y_0+s$.
\begin{lemma}\label{lem:conversion-budget}
Each trial is accepted with probability at least $1-\zeta$.
%, and the opened facility set $O$ of the accepted trial satisfies
\end{lemma}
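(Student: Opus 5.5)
The plan is to derive the statement directly from Lemma~\ref{lem:conversion-exponential}, which already gives the tail bound on the running total opening. A trial is rejected exactly when its final output satisfies $|O|>Y_0+s$. At termination every facility is open or closed, so the final total opening is $Y_{t}=|O|$ for all $t$ beyond the (almost surely finite, by Lemma~\ref{lem:conversion-termination}) number of iterations. Hence the rejection event is contained in $\{\sup_{t\ge0}Y_t\ge Y_0+s\}$.

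Applying Lemma~\ref{lem:conversion-exponential} with $h=s\ge0$ then gives
\[
\Pr\bigl[|O|>Y_0+s\bigr]\le\Pr\Bigl[\sup_{t\ge0}Y_t\ge Y_0+s\Bigr]\le e^{-\lambda s}.
\]
By the choice $s=\lceil\lambda^{-1}\log(1/\zeta)\rceil$, we have $\lambda s\ge\log(1/\zeta)$, because $\lambda>0$. This holds since $\eta>0$ gives $\log(1+\eta/2)>0$. Therefore $e^{-\lambda s}\le\zeta$, and each trial is accepted with probability at least $1-\zeta$. Trials are independent, so the number of trials is geometric with success probability at least $1-\zeta$.

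The only point needing care is the passage from the supermartingale over the iteration index to the final random output. The convention that $Y_t$ equals the number of returned open facilities after termination makes $|O|=\lim_t Y_t$ almost surely, and this limit is dominated by the supremum. The sharper $\le$ versus $<$ in the rejection criterion only helps. No further obstacle is expected; the substantive work was done in Lemmas~\ref{lem:imbalance} and~\ref{lem:conversion-exponential}.
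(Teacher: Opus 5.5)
Your proposal is correct and follows essentially the same route as the paper: almost-sure termination gives $Y_\sigma=|O_\sigma|$, the rejection event lies inside $\{\sup_t Y_t\ge Y_0+s\}$, and Lemma~\ref{lem:conversion-exponential} with $h=s$ bounds its probability by $e^{-\lambda s}\le\zeta$. Your explicit check that $\lambda>0$ and that $\lambda s\ge\log(1/\zeta)$ follows from the ceiling in the definition of $s$ fills in a step the paper leaves implicit.
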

\begin{proof}
By Lemma~\ref{lem:conversion-termination}, the termination time
$\sigma$ is finite almost surely, and $Y_\sigma=|O_\sigma|$.
The rejection event is contained in
$\{\sup_tY_t\ge Y_0+s\}$. By
Lemma~\ref{lem:conversion-exponential}, its probability is at most
$e^{-\lambda s}\le\zeta$.
\end{proof}
Therefore, the expected number of trials is at most $1/(1-\zeta)$. 

\subsection{Analysis of Connection Cost of \texorpdfstring{$j$}{j}}\label{sec:conversion-certificate}
In this section, we fix a client $j \in C$, and we analyze the expected connection cost of $j$. For a fixed distance exponent $p\ge1$, let $F_j$ be the nearest $1/\delta$ facilities of $j$ in $F$, after the splitting operation.  Thus $\delta|F_j|=1$. Define
$$
d_i=d(i,j),\kappa_i=d_i^p, \forall i \in F, \qquad \text{and} \qquad
V(U)=\sum_{i\in U}\delta\kappa_i, \forall U\subseteq F_j.
$$
Then $V(F_j)$ is the cost of $j$ in the LP solution obtained from preprocessing.

In the analysis, we track the alive fractional set $T\subseteq F_j$ and a backup cost $b$, initially
$(T,b)=(F_j,\infty)$. Given a current state $(T,b)$, a token $x$, and a partial deletion
set $Q\subseteq N^{-1}(x)$, we define the state transition under operation $(x,Q)$ by
\begin{equation}\label{eq:conversion-analytic-transition}
\begin{aligned}
 H_x(Q)&=\mathbf1_{\{\operatorname{own}(x)\in T\ \text{or}\ Q\cap T\ne\varnothing\}},\\
 T^{x,Q}&=T\setminus\bigl(\{\operatorname{own}(x)\}\cup Q\bigr),\\
 b^{x,Q}&=\begin{cases}
 \min\{b,\kappa_{\operatorname{own}(x)}\},&H_x(Q)=1,\\
 b,&H_x(Q)=0.
 \end{cases}
\end{aligned}
\end{equation}
We say that an operation $(x,Q)$ is \emph{relevant} to client $j$ if $H_x(Q)=1$.
A relevant operation removes at least one copy from $T$, whereas
an irrelevant operation leaves both $T$ and $b$ unchanged. There
are consequently at most $1/\delta$ relevant operations, and the
analytical state stays fixed once $T=\varnothing$.

\begin{definition}\label{def:conversion-certificate}
Let $P_j(T,b)$ be a nonnegative potential, uniformly bounded over
the reachable states of the fixed finite instance. We say that $P_j$ has a
local-charge certificate with initial bound $\mathcal B_j$ if
\begin{itemize}
	\item $ P_j(F_j,\infty)\le\mathcal B_j, P_j(\varnothing,b)\ge b, \forall \text{ finite } b \geq 0$, and
	\item at every reachable algorithm state with analytical state $(T,b)$, there exist opening charges
	$a_{j,x}\in\mathbb R$ (for each token $x$) and deletion charges $g_{j,x,i}\in\mathbb R$ (for each token $x$ and each
	$i\in N^{-1}(x)\cap T$) satisfying
	\begin{align}
		P_j(T^{x,Q},b^{x,Q})-P_j(T,b)
		&\le H_x(Q)a_{j,x}+\sum_{i\in Q\cap T}g_{j,x,i},
		\forall Q\subseteq N^{-1}(x),\label{eq:conversion-majorant}\\
		\sum_x\left(a_{j,x}\sum_{Q:(x,Q)\in\mathcal A}H_x(Q)
		+\sum_{i\in N^{-1}(x)\cap T}g_{j,x,i}\right)
		&\le0.\label{eq:conversion-balance}
	\end{align}
\end{itemize}
\end{definition}
We remark that in the second condition, the charges may depend on the entire current algorithm state, including $(T,b)$ and the constructed operations,
but the same $a_{j,x}$ and $g_{j,x,i}$ must satisfy
\eqref{eq:conversion-majorant} for every $Q\subseteq N^{-1}(x)$.

Intuitively, the opening charge $a_{j,x}$ accounts for the effect of
opening $\operatorname{own}(x)$ and updating the backup cost, and is incurred
once for each relevant operation associated with $x$. The deletion charge
$g_{j,x,i}$ accounts for the removal of a tracked copy
$i\in Q\cap T$. Crucially, because the sets $\{Q:(x,Q)\in\mathcal A\}$ partition $N^{-1}(x)$, each deletion charge $g_{j,x,i}$
is counted at most once in the balance condition~\eqref{eq:conversion-balance}, even when token $x$ is split into multiple operations.

\begin{theorem}\label{thm:conversion-local-charges}
If every client has a local-charge certificate with $\mathcal B_j$ as in
Definition~\ref{def:conversion-certificate}, the opened facility
set $O$ of
the accepted trial satisfies
\begin{equation}\label{eq:conversion-transfer}
 |O|\le Y_0+s,\qquad
 \E[d(j,O)^p]\le\frac{(1+\eta)^{1/\delta}}{1-\zeta}\mathcal B_j.
\end{equation}
In particular, our choices $\eta=\varepsilon \delta/4$ and
$\zeta=\varepsilon/4$ yield $\E[d(j,O)^p]\le(1+\varepsilon)\mathcal B_j$ with an
additive facility allowance of
$s=O\!\left(\delta^{-2}\varepsilon^{-2}\log(4/\varepsilon)\right)$.
\end{theorem}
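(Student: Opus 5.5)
The facility bound $|O|\le Y_0+s$ holds by the acceptance rule, so the work is the cost bound. The plan is a potential argument for $P_j$ along a single trial, run without conditioning on acceptance, followed by a transfer to the accepted trial. Acceptance only costs a factor: since $d(j,O)^p\ge0$ and each trial is accepted with probability at least $1-\zeta$ (Lemma~\ref{lem:conversion-budget}), we have $\E[d(j,O)^p\mid\text{accepted}]\le\E[d(j,O)^p]/(1-\zeta)$, where the right side is over an unconditioned trial. It therefore suffices to show that an unconditioned trial has $\E[d(j,O)^p]\le(1+\eta)^{1/\delta}\mathcal B_j$.

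For one iteration at a reachable state with analytical state $(T,b)$, I would bound the expected potential change using the certificate. An operation $(x,Q)$ is sampled with probability $\omega(x,Q)/W$, where $\omega\in\{1,1+\eta\}$. Consider the unbiased comparison distribution, which picks each operation with probability $1/|\mathcal A|$. By \eqref{eq:conversion-majorant}, the expected increase of $P_j$ under it is at most $|\mathcal A|^{-1}\sum_x\bigl(a_{j,x}\sum_Q H_x(Q)+\sum_{i\in N^{-1}(x)\cap T}g_{j,x,i}\bigr)\le0$. Here each $g_{j,x,i}$ is counted exactly once because the sets $Q$ for a fixed token $x$ partition $N^{-1}(x)$; the final inequality is \eqref{eq:conversion-balance}.

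The bias is the obstacle, since the charges may have either sign and reweighting could break the balance. I would avoid working with signed charges and instead use nonnegativity of $P_j$ together with relevance. Irrelevant operations leave $(T,b)$, and hence $P_j$, unchanged. Let $\pi$ be the actual sampling law and $\pi_0$ the unbiased one. Then $\pi(x,Q)\le(1+\eta)\,\pi'(x,Q)$, where $\pi'$ is the law with all weights set to $1$ but normalized by $W$ rather than $|\mathcal A|$; the total mass of $\pi'$ is at most $1$. So I would analyze the process through the next \emph{relevant} operation. Conditioned on the next operation being relevant, its law under the biased rule has density at most $(1+\eta)$ relative to its law under the unbiased rule, as the ratio of weights among relevant operations lies in $[1,1+\eta]$. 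From the certificate, conditioned on relevance, the unbiased rule gives $\E[P_j(\text{new})]\le P_j(\text{old})$. Since $P_j\ge0$, it follows that $\E_{\text{biased}}[P_j(\text{new})\mid\text{relevant}]\le(1+\eta)P_j(\text{old})$.

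Because there are at most $1/\delta$ relevant operations (each removes a copy from $T$, and $|T|\le1/\delta$), iterating gives $\E[P_j(\varnothing,b_{\text{final}})]\le(1+\eta)^{1/\delta}P_j(F_j,\infty)\le(1+\eta)^{1/\delta}\mathcal B_j$. Formally this is a supermartingale for $(1+\eta)^{-N_t}P_j$, where $N_t$ counts relevant operations so far. Optional stopping applies because $P_j$ is bounded and termination is almost sure (Lemma~\ref{lem:conversion-termination}). At termination $T=\varnothing$ and $b$ is finite. Since $b\ge\kappa$ of some opened facility, we get $d(j,O)^p\le b\le P_j(\varnothing,b)$.

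It remains to check the parameter consequences. First, $(1+\eta)^{1/\delta}\le e^{\varepsilon/4}$ and $1/(1-\zeta)=1/(1-\varepsilon/4)$, so their product is $1+O(\varepsilon)$; a slight adjustment of constants gives $\le1+\varepsilon$ for small $\varepsilon$. Second, $\lambda=\Theta(\eta^2)$ since $D=\Theta(1/\eta)$, so $s=O(\eta^{-2}\log(1/\zeta))=O(\delta^{-2}\varepsilon^{-2}\log(4/\varepsilon))$.
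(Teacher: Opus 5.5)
Your proposal is correct and follows essentially the same route as the paper: you get unbiased nonpositive drift from the certificate, condition on relevant operations to obtain a density ratio of at most $1+\eta$, use nonnegativity of $P_j$ and the bound of at most $1/\delta$ relevant steps, apply optional stopping, and divide by $1-\zeta$ for acceptance, with your supermartingale $(1+\eta)^{-N_t}P_j$ playing the role of the paper's $(1+\eta)^{|T|}P_j$. The one loose end is the final constant, and no adjustment is needed there: $e^{\varepsilon/4}\le(1-\varepsilon/4)^{-1}$ and $(1+\varepsilon)(1-\varepsilon/4)^2\ge1$ for $0<\varepsilon\le1$, so the stated choices of $\eta$ and $\zeta$ already give the factor $1+\varepsilon$ exactly.
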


\begin{proof}
We first analyze a single unconditioned trial of the algorithm.
Fix the current state and let $\nu_0(x,Q):=1/|\mathcal A|$ denote the uniform distribution over the constructed operations
$(x,Q)\in\mathcal A$.
Let $(T',b')$ be the random next state. Summing
\eqref{eq:conversion-majorant} over these operations gives
\begin{align}
 &|\mathcal A|\bigl(\E_0[P_j(T',b')]-P_j(T,b)\bigr)
 \nonumber\\
 &\qquad=\sum_x\sum_{Q:(x,Q)\in\mathcal A}
 \bigl[P_j(T^{x,Q},b^{x,Q})-P_j(T,b)\bigr]
 \nonumber\\
 &\qquad\le\sum_x\left(
 a_{j,x}\sum_{Q:(x,Q)\in\mathcal A}H_x(Q)
 +\sum_{i\in N^{-1}(x)\cap T}g_{j,x,i}\right)
 \label{eq:conversion-fragment-majorant}\\
 &\qquad\le0.
 \label{eq:conversion-fragment-balance}
\end{align}
For \eqref{eq:conversion-fragment-majorant}, use
\eqref{eq:conversion-majorant} and the partition of $N^{-1}(x)$,
which counts every deletion charge exactly once.
\eqref{eq:conversion-fragment-balance} is precisely
\eqref{eq:conversion-balance} for the constructed operations.
Thus the unbiased potential drift is nonpositive.

Let $\mathcal I_j:=\{(x,Q)\in\mathcal A: H_x(Q)=1\}$ be the set of operations relevant to client $j$. Irrelevant operations leave the potential
unchanged. Therefore, if $\mathcal I_j\ne\varnothing$,
\begin{equation}\label{eq:conversion-unbiased-relevant}
 \E_0[P_j(T',b')\mid\mathcal I_j]\le P_j(T,b).
\end{equation}
All probabilities and expectations in this part of the proof are
conditional on the current state; the subscript specifies the
sampling distribution.

In fact, the algorithm uses a biased sampling.
It samples operation $(x,Q)$ with probability 
$$
 \nu_\eta(x,Q)=\frac{\omega(x,Q)}
 {\sum_{(\tilde x,\tilde Q)\in\mathcal A}\omega(\tilde x,\tilde Q)}.
$$
Since $1\le \omega(x,Q)\le1+\eta$, for every
$(x,Q)\in\mathcal I_j$ we have
\begin{equation}\label{eq:conversion-relevant-ratio}
 \frac{\nu_\eta((x,Q)\mid\mathcal I_j)}
      {\nu_0((x,Q)\mid\mathcal I_j)}
 =\frac{\omega(x,Q)|\mathcal I_j|}
 {\sum_{(\tilde x,\tilde Q)\in\mathcal I_j}\omega(\tilde x,\tilde Q)}
 \le1+\eta.
\end{equation}
Define $\widehat P_j(T,b):=(1+\eta)^{|T|}P_j(T,b)$. On a relevant
operation, $|T'|\le|T|-1$, and thus
\begin{align}
 \E_\eta[\widehat P_j(T',b')\mid\mathcal I_j]
 &\le(1+\eta)^{|T|-1}\E_\eta[P_j(T',b')\mid\mathcal I_j]
 \label{eq:conversion-bias-support}\\
 &\le(1+\eta)^{|T|}\E_0[P_j(T',b')\mid\mathcal I_j]
 \label{eq:conversion-bias-ratio}\\
 &\le(1+\eta)^{|T|}P_j(T,b)=\widehat P_j(T,b).
 \label{eq:conversion-bias-drift}
\end{align}
For \eqref{eq:conversion-bias-support}, use $|T'|\le|T|-1$
and nonnegativity of $P_j(T',b')$.
\eqref{eq:conversion-bias-ratio} follows from
\eqref{eq:conversion-relevant-ratio} and the same nonnegativity;
\eqref{eq:conversion-bias-drift} uses
\eqref{eq:conversion-unbiased-relevant}. Irrelevant operations
preserve $\widehat P_j(T,b)$. Averaging over relevant and
irrelevant operations proves that $\widehat P_j(T,b)$ is a
supermartingale. If $\mathcal I_j$ is empty, this holds because
the state does not change.

Let $\sigma$ be the termination time of a single trial. By
Lemma~\ref{lem:conversion-termination}, it is finite almost surely.
At termination, $T_\sigma=\varnothing$ and the finite backup
$b_\sigma$ is realized by a copy in $O_\sigma$.
The potential is bounded and $|T|\le 1/\delta$, so applying the
supermartingale inequality gives
\begin{align}
 \E[d(j,O_\sigma)^p]
 &\le\E[b_\sigma]
 \nonumber\\
 &\le\E[P_j(\varnothing,b_\sigma)]
 \label{eq:conversion-terminal-cost}\\
 &\le(1+\eta)^{1/\delta} P_j(F_j,\infty)
 \le(1+\eta)^{1/\delta}\mathcal B_j.
 \label{eq:conversion-trial-cost}
\end{align}
\eqref{eq:conversion-terminal-cost} uses
$P_j(\varnothing,b_\sigma)\ge b_\sigma$ from
Definition~\ref{def:conversion-certificate}.
For \eqref{eq:conversion-trial-cost}, use the supermartingale bound,
$T_\sigma=\varnothing$, and the initial bound.

Let $\mathcal G$ be the event that the trial meets the
facility budget. Lemma~\ref{lem:conversion-budget} gives
$\Pr(\mathcal G)\ge1-\zeta$. Consequently,
$$\begin{aligned}
 \E[d(j,O_\sigma)^p\mid\mathcal G]
 =\frac{\E[d(j,O_\sigma)^p\mathbf1_{\mathcal G}]}
         {\Pr(\mathcal G)}\nonumber
 \le\frac{(1+\eta)^{1/\delta}}{1-\zeta}\mathcal B_j.
\end{aligned}$$
This proves \eqref{eq:conversion-transfer}.

We have
$$   \frac{(1+\varepsilon\delta/4)^{1/\delta}}{1-\varepsilon/4}
 \le\frac{e^{\varepsilon/4}}{1-\varepsilon/4}
 \le\frac1{(1-\varepsilon/4)^2}
 \le1+\varepsilon.
$$

The first two inequalities use $1+\varepsilon\delta/4\le e^{\varepsilon\delta/4}$ and
$\varepsilon/4\le-\log(1-\varepsilon/4)$. For the last inequality,
$(1+\varepsilon)(1-\varepsilon/4)^2-1
=(\varepsilon/4)(2-7\varepsilon/4+\varepsilon^2/4)\ge0$ for $0<\varepsilon\le1$.
Substituting $\eta^{-2}=16\delta^{-2}\varepsilon^{-2}$ into the definition of $s$
yields the stated bound on $s$.
\end{proof}

\subsection{The dispersion potential and the client-dependent certificate}
\label{sec:conversion-dispersion}

The client $j$ remains fixed in this section. Throughout we focus on the case $p = 1$, i.e., the metric $k$-median problem.  In this case, the backup cost $b$ is also a distance. Recall that $V(U)=\sum_{i\in U}\delta d_i$ in this case. Define
$$
\Disp(U)=\sum_{i,q\in U}\delta^2|d_i-d_q|,\qquad
V_j=V(F_j),\quad \Disp_j=\Disp(F_j).
$$
 For $U\subseteq F_j$, let
$\Phi(U,b)=2V(U)+(1-\delta|U|)b$ be the potential from Section~\ref{sec:client-dependent-analysis}, and define
\begin{equation}\label{eq:conversion-psi}
 \Psi_j(U,b)=\Phi(U,b)-\frac{\Disp(U)}{12},\qquad
 P_j^{\rm M}(T,b)=\min_{U\subseteq T}\Psi_j(U,b).
\end{equation}
Notice that $P_j^{\rm M}(T,b)$ uses the same minimization over $U\subseteq T$ as in~\eqref{eq:psi-def}, except that here we maintain this definition across all reachable states rather than switching at $t^*$. Moreover, in accordance with~\eqref{eq:conversion-analytic-transition}, the backup distance $b$ is updated only on operations relevant to $j$ and therefore serves as an upper bound on the distance to the nearest open facility. We now verify that $P_j^{\rm M}(T,b)$ satisfies the conditions of Definition~\ref{def:conversion-certificate}.

We first record the basic properties needed for the local-charge
certificate. For a fixed $U$, write $m=\delta|U|$ and $V=V(U)$.
The bound $\Disp(U)\le2mV$ from~\eqref{eq:disp-bound} gives
\begin{equation}\label{eq:conversion-psi-nonnegative}
 \Psi_j(U,b)\ge\left(2-\frac m6\right)V+(1-m)b\ge0.
\end{equation}
Here we follow the convention $0\cdot\infty=0$, so $(1-m)b=0$ whenever $m=1$, including at $b=\infty$. For finite $b$, the empty candidate gives
$P_j^{\rm M}(T,b)\le b$. If $b=\infty$,
then $T=F_j$, and only the full candidate has finite value. Consequently,
$$
 P_j^{\rm M}(F_j,\infty)=2V_j-\frac{\Disp_j}{12},\qquad
 P_j^{\rm M}(\varnothing,b)=b,
$$ 
and $P_j^{\rm M}$ is nonnegative and uniformly bounded over all reachable states.

To control the loss of dispersion under partial deletions,
for each $i\in U$, define the endpoint charge
$$
 e_i(U)=\frac\delta6\sum_{v\in U}\delta|d_i-d_v|.
$$
In particular, $\sum_{i\in U}e_i(U)=\Disp(U)/6$. For any
$Q\subseteq U$,
\begin{align}
 \frac{\Disp(U)-\Disp(U\setminus Q)}{12}
 &=\frac16\sum_{i\in Q,\,v\in U\setminus Q}\delta^2|d_i-d_v|
   +\frac1{12}\sum_{i,v\in Q}\delta^2|d_i-d_v|
 \nonumber\\
 &\le\sum_{i\in Q}e_i(U).
 \label{eq:conversion-delete-credit}
\end{align}

We will combine this deletion bound with the saving from Section~\ref{sec:client-dependent-analysis}. For $U\subseteq T$ and $q,i\in U$, let $\mu_{q,i}:=\delta-w_{q,i}$
and retain the quantities
$$
 r_i=\min\{d_q:q\in U,\ \mu_{q,i}>0\},\qquad
 I=\sum_{i,q\in U}\delta\mu_{q,i}\pos{d_q-2d_i},
$$
$$
 \Gamma(U,w)=V-\sum_{i\in U}\delta(m-w(U,i))r_i+I.
$$
The minimum of the empty set is infinity. If $r_i=\infty$,
then $m=w(U,i)$ and $(m-w(U,i))r_i=0$.
These are the definitions in Section~\ref{sec:one-step-saving}
with $y_q=\delta$ and $u_{qi}=\mu_{q,i}$.
Since every facility in $U\subseteq T$ is still fractional, we have $0\le w_{q,i}\le\delta$.
The proof of Lemma~\ref{lem:combined-saving} therefore applies
to these weights and gives
\begin{equation}\label{eq:conversion-saving}
 \Gamma(U,w)\ge\frac{\Disp(U)}4.
\end{equation}

We now use these bounds to verify the local-charge certificate.
\begin{lemma}\label{lem:conversion-disp-cert}
The potential $P_j^{\rm M}(T,b)$ has a local-charge certificate
with initial bound $\mathcal B_j=2V_j-\Disp_j/12$ as in
Definition~\ref{def:conversion-certificate}.
\end{lemma}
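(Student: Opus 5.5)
The first condition of Definition~\ref{def:conversion-certificate} is already in hand: nonnegativity follows from \eqref{eq:conversion-psi-nonnegative}, and we computed $P_j^{\rm M}(F_j,\infty)=2V_j-\Disp_j/12$ and $P_j^{\rm M}(\varnothing,b)=b$. For the second condition, fix a reachable state $(T,b)$ and a minimizer $U\subseteq T$ of $\Psi_j(\cdot,b)$, so that $P_j^{\rm M}(T,b)=\Psi_j(U,b)$. Every upper bound on the new potential will come from exhibiting a candidate subset of $T^{x,Q}$. The charges are defined token by token, and tracked deletions of facilities in $T\setminus U$ get charge $g_{j,x,i}=0$.

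\emph{Opening charges.} For a token $x$ with $\operatorname{own}(x)=q\in U$, use the candidate $U_q:=\{i\in U\setminus\{q\}:2d_i<d_q\}$, minus whatever $Q$ deletes. Since $b^{x,Q}\le d_q$, the function $\beta\mapsto\min_{U''}\Phi(U'',\beta)$ is nondecreasing, and subsets only lose dispersion through \eqref{eq:conversion-delete-credit}, this gives
\[
P_j^{\rm M}(T^{x,Q},b^{x,Q})\le d_q-\sum_{i\in U\setminus(Q\cup\{q\})}\delta(d_q-2d_i)_+-\frac{\Disp(\cdot)}{12}.
\]
I would set $a_{j,x}$ equal to this bound minus $\Psi_j(U,b)$, and charge each deleted $i\in Q\cap U$ the amount $g_{j,x,i}=\delta(d_q-2d_i)_+ +e_i(U)$. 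The issue is that splitting $x$ into $r_x$ fragments counts $a_{j,x}$ up to $r_x$ times. Before fixing this choice I would check that $a_{j,x}\le 0$ can be arranged: the opening charge should be $\min\{0,\cdot\}$, with the residual positive part shifted into per-deletion terms. If $a_{j,x}$ cannot be made nonpositive, I would instead rely on the fact that $r_x>1$ only when $\operatorname{own}(x)\notin T$ or $x$ is on an open facility, which needs checking.

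For $x$ with $\operatorname{own}(x)\in T\setminus U$, or with $\operatorname{own}(x)$ outside $T$ but on a relevant operation, the backup only decreases. I would use the candidate $U\setminus Q$, set $a_{j,x}=0$, and for $i\in Q\cap U$ set $g_{j,x,i}=\delta\min\{b-2d_i,r_i\}+e_i(U)$. This uses Lemma~\ref{lemma:backup-bound}, with $r_i$ defined from $U$, together with \eqref{eq:conversion-delete-credit}, which bounds the lost dispersion by $\sum_{i\in Q}e_i(U)$ regardless of how $N^{-1}(x)$ is partitioned.

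\emph{Balance.} Each deletion charge appears exactly once, so the left-hand side of \eqref{eq:conversion-balance} is at most the unsplit quantity. Reproducing the computation of Lemma~\ref{lem:one-step} with $y=\delta$, counting $\delta^{-1}$ tokens per unit of opening and multiplying through by $\delta^{-1}$, the $\Phi$ parts total at most $-\Gamma(U,w)$. The endpoint credits total at most $\sum_i e_i(U)\cdot(\text{deletion mass})\le\Disp(U)/6$ per unit normalization, and there is an additional term from loss of the dispersion credit when a $U$-pivot opens. Using \eqref{eq:conversion-saving}, $\Gamma\ge\Disp(U)/4$ dominates $\Disp(U)/6$ plus the pivot-opening term $\Disp(U)/12$, in the same way that $3/12=1/4$ in Corollary~\ref{cor:combined-one-step}.

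\textbf{Main obstacle.} The hardest step is making the opening charge compatible with fragment repetition while still being able to retain dispersion when a support facility opens. For the latter I would try the candidate $U_q$ with the dispersion of $U_q$ kept, rather than zeroed, so that the opening charge becomes nonpositive and repeated fragments are harmless.
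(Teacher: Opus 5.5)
Your charges for owners outside $U$ (candidate $U\setminus Q$, $a_{j,x}=0$, $g_{j,x,i}=\delta\min\{b-2d_i,r_i\}+e_i(U)$) match the paper. So do your deletion charges $\delta\pos{d_q-2d_i}+e_i(U)$ for owners $q\in U$, and your balance accounting, where $\Gamma(U,w)\ge\Disp(U)/4$ pays for the endpoint credits $\Disp(U)/6$ plus an opening-side loss of at most $\Disp(U)/12$.

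The gap is the step you flag yourself. For $q=\operatorname{own}(x)\in U$ you never obtain $a_{j,x}\le0$, and each of your three suggestions fails.

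\emph{The candidate $U_q$.} The set $U_q=\{i\in U\setminus\{q\}:2d_i<d_q\}$ minimizes $\Phi(\cdot,d_q)$, not $\Psi_j(\cdot,d_q)$, so it can discard most of the dispersion. Take the reachable initial state $(F_j,\infty)$, where $U=F_j$. Let $F_j$ have mass $1/2$ at distance $0$, mass $1/2-\delta$ at distance $1$, and $q$ at distance $2$. Then $U_q$ is the distance-$0$ part and $\Psi_j(U_q,2)=1$, while $\Psi_j(F_j,\infty)=2V_j-\Disp_j/12=\frac{23}{24}+O(\delta)$. Your opening charge is therefore $\frac1{24}-O(\delta)>0$. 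Moving the middle group to distance $1+\epsilon'$ for small $\epsilon'>0$ makes this independent of $<$ versus $\le$.

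\emph{Shifting the positive part into deletion charges.} This is impossible, because \eqref{eq:conversion-majorant} must also hold for $Q=\varnothing$. There $H_x(\varnothing)=1$ since $q\in T$, and there is no deletion term.

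\emph{The fallback on $r_x$.} It is false that $r_x>1$ only for owners outside $T$. The token of a fractional $q\in F_j$ can lie in $N(i)$ for arbitrarily many fractional $i$, for example many facilities outside $F_j$ at distance $1$ from $q$ and pairwise $2$ apart. So $r_x$ is unbounded, and $r_xa_{j,x}$ with $a_{j,x}>0$ breaks \eqref{eq:conversion-balance}.

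The missing idea is to open into the best $\Psi_j$-set rather than a fixed $\Phi$-optimal set, and to use two different bounds on that one charge.
\begin{itemize}
\item \emph{Definition.} Put $\bar b_q=\min\{b,d_q\}$, let $W_q$ minimize $\Psi_j(\cdot,\bar b_q)$ over subsets of $U\setminus\{q\}$, and set $a_{j,x}=\Psi_j(W_q,\bar b_q)-\Psi_j(U,b)$. The majorant follows from the candidate $W_q\setminus Q$, using \eqref{eq:conversion-delete-credit} for $W_q$, $\bar b_q\le d_q$, and $e_i(W_q)\le e_i(U)$.
\item \emph{Nonpositivity.} Comparing with $\varnothing$ gives $a_{j,x}\le d_q-V$, since $\Psi_j(U,b)\ge V$ by \eqref{eq:conversion-psi-nonnegative}. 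Comparing with $U\setminus\{q\}$, which keeps all of $U$'s dispersion except $q$'s endpoint share, gives $a_{j,x}\le-\delta d_q+e_q(U)\le\frac{\delta}{6}(V-5d_q)$. This uses $(1-m)(\bar b_q-b)\le0$ and $e_q(U)\le\frac{\delta}{6}(md_q+V)$. In either case $d_q\le V$ or $d_q>V$, one of these bounds is nonpositive, so $r_xa_{j,x}\le a_{j,x}$.
\item \emph{Balance.} For this single remaining occurrence, use the cruder bound $a_{j,x}\le\min_{W\subseteq U\setminus\{q\}}\Phi(W,d_q)-\Psi_j(U,b)$. It discards the new set's dispersion and produces exactly the $m\Disp(U)/12$ term your balance computation already budgets for.
\end{itemize}
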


\begin{proof}
The initial and terminal values
were established above. It remains to verify
\eqref{eq:conversion-majorant} and \eqref{eq:conversion-balance}.
Fix a current state $(T,b)$ and a minimizing $U\subseteq T$ in
\eqref{eq:conversion-psi}, so $P_j^{\rm M}(T,b)=\Psi_j(U,b)$. Throughout the proof, $x$ denotes
a current token, $q=\operatorname{own}(x)$ is its owner, and
$Q\subseteq N^{-1}(x)$ is an arbitrary set of facilities to be
deleted.
If $U=\varnothing$, then
$$
 P_j^{\rm M}(T^{x,Q},b^{x,Q})\le b^{x,Q}
 \le b=P_j^{\rm M}(T,b),
$$
and all charges can be zero. Therefore we assume $U\ne\varnothing$ and use $m,V,\mu_{q,i},r_i,I$ as defined above.

For each $i\in U$, let
$$
 b_i=\min\{b,2d_i+r_i\},
$$
where $b_i$ bounds the new backup distance when an owner outside $U$
deletes $i$. Thus, if $q\notin U$ and $i\in Q\cap U$, then
\begin{equation}\label{eq:conversion-outside-backup}
 b^{x,Q}=\min\{b,d_q\}\le b_i.
\end{equation}

For every token $x$, set
$g_{j,x,i}=0$ for $i\in N^{-1}(x)\cap(T\setminus U)$.
If $q\notin U$, set
$$
 a_{j,x}=0,\qquad
 g_{j,x,i}=\delta(b_i-2d_i)+e_i(U)
 \quad(i\in N^{-1}(x)\cap U).
$$
If $q\in U$, put $\bar b_q=\min\{b,d_q\}$ and choose
$W_q\subseteq U\setminus\{q\}$ minimizing
$\Psi_j(W_q,\bar b_q)$. Set
$$
 a_{j,x}=\Psi_j(W_q,\bar b_q)-\Psi_j(U,b),\qquad
 g_{j,x,i}=\delta\pos{d_q-2d_i}+e_i(U)
 \quad(i\in N^{-1}(x)\cap U).
$$

We first prove \eqref{eq:conversion-majorant} for every token $x$
and every $Q\subseteq N^{-1}(x)$.
If $H_x(Q)=0$, both sides are zero, so assume $H_x(Q)=1$.
For $q\notin U$, the set $U\setminus Q$ is a candidate for
the next potential. Hence
\begin{align}
 &P_j^{\rm M}(T^{x,Q},b^{x,Q})-P_j^{\rm M}(T,b)
 \nonumber\\
 &\qquad\le\Psi_j(U\setminus Q,b^{x,Q})-\Psi_j(U,b)
 \label{eq:conversion-outside-candidate}\\
 &\qquad=(1-m)b^{x,Q}-(1-m)b
   +\sum_{i\in Q\cap U}\delta(b^{x,Q}-2d_i)
   +\frac{\Disp(U)-\Disp(U\setminus Q)}{12}
 \nonumber\\
 &\qquad\le\sum_{i\in Q\cap U}
       \bigl[\delta(b_i-2d_i)+e_i(U)\bigr].
 \label{eq:conversion-outside-partial}
\end{align}
For \eqref{eq:conversion-outside-candidate}, use
$U\setminus Q\subseteq T^{x,Q}$ and the choice of $U$.
\eqref{eq:conversion-outside-partial} uses
$b^{x,Q}\le b$, \eqref{eq:conversion-delete-credit},
and \eqref{eq:conversion-outside-backup}.

For $q\in U$, we have $b^{x,Q}=\bar b_q$ and may instead use
$W_q\setminus Q$ as a candidate:
\begin{align}
 &P_j^{\rm M}(T^{x,Q},\bar b_q)-P_j^{\rm M}(T,b)
 \nonumber\\
 &\qquad\le a_{j,x}
       +\Psi_j(W_q\setminus Q,\bar b_q)-\Psi_j(W_q,\bar b_q)
 \label{eq:conversion-inside-candidate}\\
 &\qquad= a_{j,x}
       +\sum_{i\in Q\cap W_q}\delta(\bar b_q-2d_i)
       +\frac{\Disp(W_q)-\Disp(W_q\setminus Q)}{12}
 \nonumber\\
 &\qquad\le a_{j,x}+\sum_{i\in Q\cap W_q}
       \bigl[\delta(\bar b_q-2d_i)+e_i(W_q)\bigr]
 \label{eq:conversion-inside-endpoints}\\
 &\qquad\le a_{j,x}+\sum_{i\in Q\cap U}
       \bigl[\delta\pos{d_q-2d_i}+e_i(U)\bigr].
 \label{eq:conversion-inside-partial}
\end{align}
\eqref{eq:conversion-inside-candidate} uses
$W_q\setminus Q\subseteq T^{x,Q}$ and the definition of $a_{j,x}$.
\eqref{eq:conversion-inside-endpoints} applies
\eqref{eq:conversion-delete-credit} to $W_q$.
For \eqref{eq:conversion-inside-partial}, use
$\bar b_q\le d_q$ and $e_i(W_q)\le e_i(U)$.
This proves \eqref{eq:conversion-majorant}.

We next prove \eqref{eq:conversion-balance}.
Recall that $r_x$ is the number of operations associated with $x$.
If $q\in U$, its unique token $x$ has
$H_x(Q)=1$ for every $Q$ with $(x,Q)\in\mathcal A$.
To control all $r_x$ occurrences of its opening charge,
use $\varnothing$ and $U\setminus\{q\}$ as candidates in
the definition of $W_q$. They give
\begin{align}
 a_{j,x}
 &\le \min\{\Psi_j(\varnothing,\bar b_q),
             \Psi_j(U\setminus\{q\},\bar b_q)\}
       -\Psi_j(U,b)
 \nonumber\\
 &\le \min\{\bar b_q-\Psi_j(U,b),
             -2\delta d_q+\delta\bar b_q+e_q(U)\}
 \label{eq:conversion-repeat-deletion}\\
 &\le \min\{d_q-V,-\delta d_q+e_q(U)\}
 \label{eq:conversion-repeat-candidates}\\
 &\le \min\left\{d_q-V,
             -\delta d_q+\frac{\delta}{6}(md_q+V)\right\}
 \label{eq:conversion-repeat-endpoint}\\
 &\le \min\left\{d_q-V,\frac{\delta}{6}(V-5d_q)\right\} \nonumber\\
 &\le 0.
 \label{eq:conversion-repeat-balance}
\end{align}
For \eqref{eq:conversion-repeat-deletion}, use
$\Psi_j(\varnothing,\bar b_q)=\bar b_q$ for the first candidate and $(1-m)\bar b_q-(1-m)b\le0$ for the second candidate. \eqref{eq:conversion-repeat-candidates} uses
$\bar b_q\le d_q$ and $\Psi_j(U,b)\ge V$ from
\eqref{eq:conversion-psi-nonnegative}. For \eqref{eq:conversion-repeat-endpoint}, use
$|d_q-d_v|\le d_q+d_v$ in the definition of $e_q(U)$.
For \eqref{eq:conversion-repeat-balance}, the first term of the
minimum is nonpositive when $d_q\le V$, whereas the second
is negative when $d_q>V$.
Consequently, $r_xa_{j,x}\le a_{j,x}$.

We bound one occurrence of the opening charge as follows:
\begin{align}
 a_{j,x}
 &\le\min_{W\subseteq U\setminus\{q\}}\Phi(W,d_q)-\Psi_j(U,b)
 \nonumber\\
 &=d_q-\sum_{i\in U}\delta\pos{d_q-2d_i}
       -\Phi(U,b)+\frac{\Disp(U)}{12}.
 \label{eq:conversion-opening-bound}
\end{align}

Multiply the LHS of \eqref{eq:conversion-balance}
by $\delta$.
For $q\ne i$, \eqref{eq:conversion-token-weights} shows that
the number of tokens owned by $q$ whose deletion sets contain
$i$ is $w_{q,i}/\delta$. Each such token deletes $i$ in exactly
one of its operations.
Therefore, the opening charges and the deletion charges from
owners in $U$, excluding endpoint charges, satisfy
\begin{align}
 &\delta\sum_{x:\operatorname{own}(x)\in U}r_xa_{j,x}
       +\sum_{i,q\in U}\delta w_{q,i}\pos{d_q-2d_i}
 \nonumber\\
 &\qquad\le\sum_{q\in U}\delta\left[
 d_q-\sum_{i\in U}\delta\pos{d_q-2d_i}
       -\Phi(U,b)+\frac{\Disp(U)}{12}\right]
       +\sum_{i,q\in U}\delta w_{q,i}\pos{d_q-2d_i}
 \label{eq:conversion-inside-type-charges}\\
 &\qquad=V-I-m\Phi(U,b)+\frac{m\Disp(U)}{12}.
 \label{eq:conversion-inside-balance}
\end{align}
\eqref{eq:conversion-inside-type-charges} uses
$r_xa_{j,x}\le a_{j,x}$ and \eqref{eq:conversion-opening-bound}.
\eqref{eq:conversion-inside-balance} uses
$\mu_{q,i}=\delta-w_{q,i}$ and the definition of $I$.

The deletion charges from owners outside $U$, excluding
endpoint charges and multiplied by $\delta$, sum to
$\sum_{i\in U}\delta(1-w(U,i))(b_i-2d_i)$.
Combining this expression with the terms in
\eqref{eq:conversion-inside-balance} gives
\begin{align}
 &V-I-m\Phi(U,b)
       +\sum_{i\in U}\delta(1-w(U,i))(b_i-2d_i)
 \nonumber\\
 &\qquad\le V-I-2mV-m(1-m)b
 \nonumber\\
 &\qquad\quad+\sum_{i\in U}\delta\Bigl[
       (m-w(U,i))(2d_i+r_i)
       +(1-m)b-2(1-w(U,i))d_i\Bigr]
 \label{eq:conversion-outside-mixture}\\
 &\qquad=V-I-2mV+2(m-1)V
       +\sum_{i\in U}\delta(m-w(U,i))r_i
 \nonumber\\
 &\qquad=-V-I+\sum_{i\in U}\delta(m-w(U,i))r_i
 \nonumber\\
 &\qquad=-\Gamma(U,w).
 \label{eq:conversion-outside-saving}
\end{align}
For \eqref{eq:conversion-outside-mixture}, substitute
$\Phi(U,b)=2V+(1-m)b$ and split
$1-w(U,i)=(m-w(U,i))+(1-m)$.
Since $b_i=\min\{b,2d_i+r_i\}$, we can bound the two corresponding
terms by $(m-w(U,i))(2d_i+r_i)$ and $(1-m)b$.

Finally, for every $i\in U$, the set $N(i)$ contains
$\delta^{-1}-1$ tokens other than its own token.
Each of them deletes $i$ in exactly one operation.
After multiplication by $\delta$, the endpoint charges
therefore sum to
$$
 \sum_{i\in U}(1-\delta)e_i(U)
 =\frac{(1-\delta)\Disp(U)}6
 \le\frac{\Disp(U)}6.
$$
Combining this with \eqref{eq:conversion-inside-balance}
and \eqref{eq:conversion-outside-saving}, we obtain
\begin{align}
 &\delta\sum_x\left(a_{j,x}\sum_{Q:(x,Q)\in\mathcal A}H_x(Q)
       +\sum_{i\in N^{-1}(x)\cap T}g_{j,x,i}\right)
 \nonumber\\
 &\qquad\le -\Gamma(U,w)
       +\frac{(m+2)\Disp(U)}{12}
 \nonumber\\
 &\qquad\le \frac{(m-1)\Disp(U)}{12}
 \le0.
 \label{eq:conversion-disp-saving}
\end{align}
\eqref{eq:conversion-disp-saving} uses
\eqref{eq:conversion-saving} and $m\le1$.
This proves \eqref{eq:conversion-balance}
and completes the certificate.
\end{proof}

Combining Theorem~\ref{thm:conversion-local-charges} with
Lemma~\ref{lem:conversion-disp-cert}, we obtain a rounding algorithm that,
for every fixed preprocessed solution, returns a set $O$ satisfying
\begin{align*}
 |O|&\le Y_0+s,\\
 \E[d(j,O)]&\le(1+\varepsilon)\left(2V_j-\frac{\Disp_j}{12}\right)
 \qquad\text{for every }j\in C.
\end{align*}
Here $s=O(\delta^{-2}\varepsilon^{-2}\log(4/\varepsilon))$, and the
expectation is over the rounding trials, including rejection of trials
that exceed the facility budget. Since $Y_0\le k+3$ and
$\delta=\varepsilon^{O(1)}$, the algorithm always opens at most
$k+O_\varepsilon(1)$ facilities and runs in expected polynomial time for
fixed $\varepsilon$. Thus the conversion preserves the dispersion saving
from Section~\ref{sec:client-dependent-analysis}, up to a multiplicative
$1+\varepsilon$ loss. In the final combination, both rounding algorithms
will be applied to this same preprocessed solution.

\section{Robust Algorithm}

% \ifpaperchanges
% \gptmargin{Restored the previous text in gray for comparison. The blue passages give the revision; old commented drafts are also preserved in the source.}
% \fi
% \gptremovedfile{removed-text/robust1736-01}

% \begin{gptadded}
In \Cref{sec:client-dep}, the algorithm achieves the approximation ratio of 2 only when the connection distances in the support are the same.
In this section, we present an algorithm that complements the previous algorithm, leading to the ratio better than 2. %\ysmargin{Include an intro for the section.}

% Let $F$ and $C$ denote a set of facilities and clients, respectively. 
For a fractional LP solution $(x,y)$, split facilities into colocated copies so that each client has a complete nearest-unit support. 
For client $j \in C$, we again use $F_j \subseteq F$ and $V_j = \sum_{i \in F_j} y_i \, d(i, j)$ for the set of supporting facilities of $j$ and its fractional cost in the LP solution.
% For client $j\in C$, let $F_j$ denote its nearest facilities within one unit of opening. In other words, \[
%     F_j:= \arg\min_{S\subseteq F: y(S)=1} \sum_{i\in S} y_i d(i,j).
% \]
% Let $V_j$ denote the average distance from $j$ to $F_j$, i.e.,
% \(
%     V_j := \sum_{i \in F_j} y_i \, d(i, j).
% \)
We further define the mean absolute deviation 
\[
    M_j:=\sum_{i\in F_j} y_i|d(i,j)-V_j|.
\]
Observe that $M_j$ is always bounded above by its dispersion $\Disp_j$. %\ysmargin{Delete the derivation for $M_j \leq \Disp_j$.}

The main theorem we prove in this section is as follows:
% \gptmargin{Made splitting explicit and stated the total-opening budget. If $y(F)\le k$ for integer $k$, this gives at most $k$ facilities; after preprocessing it gives at most $k+3$.}
\begin{theorem} \label{thm:robust}
    There exists a randomized polynomial-time algorithm that, given a complete fractional solution $(x, y)$ with $y(F)\ge1$, returns a facility set $F_\open \subseteq F$ with $|F_\open| \leq \lceil y(F)\rceil$, satisfying for every $j \in C$, \[
        \E[d(j,F_\open)] \le \min\left\{\left(16+\frac{5}{e}\right)V_j, \left(1+\frac{2}{e}\right)V_j+\sqrt{\frac{90 V_jM_j}{e}}+\frac{15}{2}M_j\right\},
    \]
    where the expectation is over the randomness of $F_\open$.
\end{theorem}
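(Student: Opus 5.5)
We follow the filtering-and-bundling template of Charikar and Li~\cite{CL12}, as outlined in the overview. For each client $j$ and radius $R$, put $T_j(R):=\sum_{i\in F_j:d(i,j)>R}y_id(i,j)$. Let $R_j$ minimize $R+5e\,T_j(R)$ over radii with $y(F_j\cap\operatorname{ball}(j,R))\ge1/2$, and let $A_j$ be the minimum value. We process clients in increasing order of $A_j$ and greedily select a representative set $\mathcal R$. A client is discarded if it lies within distance $cA_{j}$ of an already chosen representative, for a constant $c$ to be fixed. For $r\in\mathcal R$, let $L_r$ be the distance to its nearest other representative, and let the bundle $\mathcal U_r$ be the facilities in $F_r$ within distance $L_r/2$ of $r$. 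Bundles are disjoint. A Markov-type argument using the choice of $R_r$ with $L_r\ge cA_r\ge cR_r$ gives $y(\mathcal U_r)\ge1/2$. We match representatives greedily by increasing $L_r$, pairing each $r$ with its nearest unmatched representative. We then apply dependent rounding~\cite{srinivasan2001distributions} to the laminar family consisting of matched pairs of bundles, single bundles, and singletons (after scaling bundle masses down to at most $1$). This opens $\lceil y(F)\rceil$ facilities, opens at least one facility in each matched pair, preserves the marginals $y_i$, and is negatively correlated.

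For the cost of a client $j$, we order the support $F_j$ by distance. By negative correlation, the probability that none of the nearest $s$ units of $F_j$ opens is at most $e^{-s}$. Integrating over distances gives the standard coverage bound: the expected contribution from directly served facilities is at most $(1-1/e)V_j$, with the remaining event of probability at most $1/e$. On that event, $j$ is served through its representative $r=r(j)$ or through the partner of $r$. We would show that the backup distance is at most $3A_j$ in the averaged sense that produces the $(3/e)A_j$ term. The argument uses $d(j,r)\le cA_j$, $A_r\le A_j$, and the fact that either the bundle of $r$ or its matched partner's bundle opens. Facilities beyond $R_j$ are charged against the $5e\,T_j(R_j)$ slack in $A_j$. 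This yields $\E[d(j,F_\open)]\le(1-1/e)V_j+(3/e)A_j$.

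For the crude bound $(16+5/e)V_j$, we take $R=2V_j$, so that by Markov at least half of the mass lies in the ball. This gives $A_j\le2V_j+5eV_j$, and substituting into the above already yields the claimed constant. We then need to bound $A_j$ via $M_j$. Take $R=V_j+\tau$ with $\tau\ge M_j$. Chebyshev--Markov on $|d-V_j|$ gives mass at least $1/2$ inside the ball once $\tau\ge2M_j$. We also have $T_j(V_j+\tau)\le\sum y_i(d_i-V_j)_+\cdot(V_j+\tau)/\tau\le\frac{M_j}{2}(1+V_j/\tau)$. Optimizing $\tau+\frac{5e}{2}M_jV_j/\tau$ gives $\tau=\sqrt{5eV_jM_j/2}$, and hence $A_j\le V_j+\sqrt{10eV_jM_j}+\frac{5e}{2}M_j$, with the regime $\tau<2M_j$ absorbed by the linear term. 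Plugging this into $(1-1/e)V_j+(3/e)A_j$ gives exactly $(1+2/e)V_j+\sqrt{90V_jM_j/e}+\frac{15}{2}M_j$.

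The main obstacle is the backup analysis giving the coefficient $3/e$ on $A_j$. It requires carefully coupling the event that $j$'s own nearby mass fails to open with the bundle of $r$ (which overlaps $F_j$) and with the partner bundle. It also requires justifying the constant $5e$ in the definition of $A_j$ through the tail charge. We would follow the case analysis of~\cite{CL12}. That analysis splits according to whether $d(j,r)$ is small relative to $R_j$. In the small case, opening anything in $\mathcal U_r$ costs at most $d(j,r)+L_r/2$. Otherwise, we use the guaranteed open facility within $L_r+L_r/2$ of $r$ from the matched pair.
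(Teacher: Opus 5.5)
Your outer layer is sound and matches the paper: the definitions of $R_j$ and $A_j$ are the same, and so are the bounds $A_j\le(2+5e)V_j$ and $A_j\le V_j+\sqrt{10eV_jM_j}+\frac{5e}{2}M_j$. Your regime $\tau<2M_j$ is indeed absorbed, because there the right-hand side exceeds $(2+5e)V_j$. Filtering within $cA_j$ works with $c=2$: it gives $d(j,r(j))\le 2A_j$, $A_{r(j)}\le A_j$ and $L_r>2R_r$. The final arithmetic is also correct. The gap is that the inequality everything rests on, $\E[d(j,F_\open)]\le(1-1/e)V_j+(3/e)A_j$, is never proved; you defer it to ``the case analysis of CL12''. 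The one concrete mechanism you give for it is also false.

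An open facility within $L_r+L_r/2$ of $r$ ``from the matched pair'' exists only if $r$ is matched to a nearest representative. In general, $r$'s own partner can be arbitrarily far. The paper instead uses the pair of $r$'s nearest representative $s$. It matches \emph{globally} closest unmatched pairs, so if $s$ was matched to $s'$ before $r$ was matched, then $d(s,s')\le d(s,r)$, because $r$ and $s$ were both unmatched at that moment. This gives $d(r,F_\open)\le d(r,s)+d(s,s')+B_{s'}\le 5B_r$. Your rule (process by increasing $L_r$, match to the nearest unmatched) destroys this property. Place representatives $w_2,w_1,z,y,x$ at positions $0,1,2.1,102.1,104.1$ on a line. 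Then $w_1$ and $w_2$ are matched, $z$ grabs $y$, and $x$ is left alone with $B_x=1$. The only guaranteed open facility near $x$ may lie in $U_z$, at distance about $100$.

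Even with the correct matching, you still need the argument that removes the coupling you flag between the failure of $F_j$ and $r$'s bundle. The paper avoids it by never conditioning. It splits $\int_0^\infty\Pr[d(j,F_\open)>t]\,dt$ at $L_j:=d(j,r)+R_r$.

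For $t\le L_j$, negative correlation and convexity give $\Pr[d(j,F_\open)>t]\le e^{-y(F_j(t))}\le \frac1e+(1-\frac1e)(1-y(F_j(t)))$. This part is therefore at most $L_j/e+(1-1/e)V_j$.

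For $t>L_j$, the triangle inequality gives $\E[(d(j,F_\open)-L_j)_+]\le\E[(d(r,F_\open)-R_r)_+]$. The key lemma is $\E[(d(r,F_\open)-R_r)_+]\le 5T_r$. Its proof uses two facts:
\begin{itemize}
\item At most one facility of $U_r$ opens, each with marginal $y_i$.
\item On the event $U_r\cap F_\open=\emptyset$, which has probability $1-y(U_r)$, the deterministic backup $5B_r$ is paid for by $5\sum_{i\in F_r\setminus U_r}y_id(i,r)$, since those facilities lie at distance at least $B_r$ from $r$.
\end{itemize}

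Adding the two parts gives $L_j/e+5T_r=(d(j,r)+A_r)/e\le 3A_j/e$. This is exactly where the constant $5e$ comes from. Note that the charge goes to the representative's tail $T_r$ (then $A_r\le A_j$), not to $j$'s own $T_j(R_j)$ as you wrote.

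Minor point: no scaling of bundle masses is needed, since $U_r\subseteq F_r$ already gives $y(U_r)\le1$. Scaling would in fact destroy the marginals $\E[Y_i]=y_i$ that the coverage bound uses.
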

% \end{gptadded}

% \begin{gptadded}
% When we define the dispersion of client $j$ as
% \[
%     \mathsf{disp}_j := \sum_{i \in F_j} \sum_{i' \in F_j} y_i y_{i'} |d(i, j) - d(i', j)|,
% \]
% and the normalized dispersion $s_j := \mathsf{disp}_j / V_j$ for $V_j>0$ (set $s_j=0$ if $V_j=0$), the above theorem implies
% \[
%     \E[d(j, F_\open)] \leq \min \Big\{ 16 + \frac{5}{e}, 1 + \frac{2}{e} + \sqrt{\frac{90}{e} s_j} + \frac{15}{2}s_j \Big\} V_j
% \]
% since
% \begin{align*}
%     M_j 
%     & 
%     = \sum_{i\in F_j} y_i|d(i,j)-V_j|
%     % \\&
%     = \sum_{i\in F_j} y_i \Big| d(i,j) - \sum_{i' \in F_j} y_{i'} d(i', j) \Big|\\&
%     \overset{(a)}{=} \sum_{i\in F_j} y_i \Big| \sum_{i' \in F_j} y_{i'} (d(i,j) - d(i', j)) \Big|
%     \\&
%     \overset{(b)}{\leq} \sum_{i \in F_j} y_i \Big( \sum_{i' \in F_j} y_{i'} |d(i,j) - d(i', j)| \Big)
%     % \\&
%     = \mathsf{disp}_j,
% \end{align*}
% where $(a)$ follows from the fact that $\sum_{i' \in F_j} y_{i'} = 1$, and $(b)$ comes from the fact that the absolute value of the sum is always no greater than the sum of absolute values.
% \end{gptadded}

\subsection{Clients and Their Radii}

Due to $y(F_j)=1$, we may view $\{(d (i, j), y_i) \}_{i \in F_j}$ as a probability distribution, denoted by $\calP_j$.
Consequently, we have
\[
    \E_{X \sim \calP_j}[X] = \sum_{i \in F_j} y_i \, d(i, j) = V_j.
    % V_j:= \sum_{i\in F_j} y_i d(i,j)=\E[X].
\]

For each client $j$, we say a radius $R$ is \emph{admissible} for $j$ if the total $y$-mass of the closed ball with radius $R$ centered at $j$ on $\calP_j$ is at least $1/2$, i.e., 
\begin{equation} \label{eq:majority-mass}
    y(\{i\in F_j: d(i,j)\le R\})\ge \frac{1}{2}.
\end{equation}
Given a radius $R$, we denote by $T_j(R)$ the ``partial expectation'' of $\calP_j$ outside the radius-$R$ ball centered at $j$, i.e.,
\[
    T_j(R) := \E_{X \sim \calP_j}[X \, \mathbf{1}_{X > R}] = \sum_{i \in F_j : d(i, j) > R} y_i \, d(i, j).
\]
By definition, we always have $T_j(R) \leq V_j$ for any $R$.

For each client $j$, we find a \emph{threshold radius} $R_j$ as the radius $R$ minimizing $R + 5e \cdot T_j (R) $ subject to the constraint that $R$ is admissible for $j$.
Since $|\mathsf{supp}(\calP_j)| = O(|F|)$, one can easily compute $R_j$ in polynomial time.
For notational simplicity, let $T_j := T_j(R_j)$ and define the \emph{assurance radius} $ A_j:= R_j + 5e \cdot T_j $ as the value of the objective function in calculating the threshold radius $R_j$.

We have three upper bounds for the assurance radius $A_j$ of a client $j$.
In the regular case where every facility in $F_j$ is at the same distance from $j$, we have a clear characterization of $A_j$ as follows.

% \gptremovedfile{removed-text/robust1736-02}

% \begin{gptadded}
\begin{lemma} \label{lem:assurance-bound-regular}
    For a client $j \in C$, if every facility $i \in F_j$ is at the same distance $d(i, j) = V_j$ from $j$, we then have $A_j = R_j = V_j$.
\end{lemma}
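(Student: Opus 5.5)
The plan is to compute the objective $R + 5e\,T_j(R)$ explicitly over all admissible radii, exploiting that $\calP_j$ is a point mass at $V_j$. Since every $i\in F_j$ satisfies $d(i,j)=V_j$ and $y(F_j)=1$, we have
\[
y(\{i\in F_j: d(i,j)\le R\})=\mathbf 1_{\{R\ge V_j\}}.
\]
Consequently, a radius $R\ge 0$ is admissible in the sense of \eqref{eq:majority-mass} if and only if $R\ge V_j$.

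For every admissible $R$, no facility of $F_j$ lies strictly outside the ball of radius $R$, so $T_j(R)=0$. The objective therefore equals $R$ on the admissible set $[V_j,\infty)$. Its unique minimizer is $R=V_j$. This gives $R_j=V_j$, $T_j=T_j(V_j)=0$, and hence $A_j=R_j+5e\cdot T_j=V_j$.

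The one point to check is the closed-ball convention in \eqref{eq:majority-mass}. Because the ball is closed, $R=V_j$ itself captures the full mass $1\ge 1/2$, and the strict inequality in the definition of $T_j(R)$ excludes the facilities at distance exactly $V_j$. Neither step poses a real obstacle; the degenerate case $V_j=0$ is covered by the same argument with $R_j=0$.
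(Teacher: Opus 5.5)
Your proof is correct and follows essentially the same route as the paper. Both arguments observe that the admissible radii are exactly those $R\ge V_j$, that the objective $R+5e\,T_j(R)$ reduces to $R$ on that set, and hence that $R_j=V_j$ and $A_j=V_j$.
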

\begin{proof}
    Note that radius $V_j$ is admissible for $j$ in this uniform distance case, achieving $V_j$ as the objective value in computing $R_j$.
    No radius $R<V_j$ is admissible, since its ball contains no support mass.
    % \gptmargin{Corrected the uniform-distance proof: radii below $V_j$ are inadmissible. Also added cardinality bars to the support size above.}
    For any radius $R > V_j$, the objective value is $R$.
    These together imply that $V_j$ is the optimal choice for $R_j$, whose objective value $A_j$ is $V_j$.
\end{proof}
% \end{gptadded}

For general cases, the next lemma provides an upper bound independent from the distance profile of client $j$.

% \gptremovedfile{removed-text/robust1736-03}

% \begin{gptadded}
\begin{lemma} \label{lem:assurance-bound1}
    For any client $j\in C$, $A_j\le (2+5e)V_j.$
\end{lemma}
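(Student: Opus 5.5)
Since $A_j$ is the minimum of $R+5e\,T_j(R)$ over admissible radii $R$, it suffices to exhibit one admissible radius whose objective value is at most $(2+5e)V_j$. The natural choice is $R=2V_j$, which is admissible by Markov's inequality, and $T_j(2V_j)\le V_j$ always holds.

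\textbf{Admissibility of $2V_j$.} View $\calP_j$ as a distribution with mean $V_j$ and nonnegative support. Markov's inequality gives $\Pr_{X\sim\calP_j}[X>2V_j]\le \Pr[X\ge 2V_j]\le 1/2$ when $V_j>0$. Hence the closed ball of radius $2V_j$ carries mass $y(\{i\in F_j:d(i,j)\le 2V_j\})\ge 1/2$, so \eqref{eq:majority-mass} holds. If $V_j=0$, all support facilities are at distance $0$, so the radius $0$ is admissible with $T_j(0)=0$, and $A_j=0$ as required.

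\textbf{Bounding the objective.} By definition $T_j(R)\le V_j$ for every $R$, so
\[
  A_j\le 2V_j+5e\,T_j(2V_j)\le 2V_j+5eV_j=(2+5e)V_j .
\]
One could sharpen this, since $T_j(2V_j)$ is strictly smaller than $V_j$ unless all the mass lies beyond $2V_j$, but the crude bound already matches the statement.

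\textbf{Main obstacle.} There is essentially none. The only care needed is the non-strict versus strict inequality in Markov's bound when $V_j>0$, together with the degenerate case $V_j=0$.
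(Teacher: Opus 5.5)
Your proof is correct and follows the paper's own argument: you show $2V_j$ is admissible by Markov's inequality and then use $T_j(2V_j)\le V_j$. The only difference is in the case $V_j=0$, where you argue directly with radius $0$ while the paper cites its uniform-distance lemma; both give $A_j=0$.
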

\begin{proof}
    If $V_j=0$, then $X \sim \calP_j$ is 0 almost surely, and hence $A_j=0$ by \Cref{lem:assurance-bound-regular}. 
    Otherwise, for $X \sim \calP_j$, we have $\Pr(X > 2V_j) \le 1/2$ due to Markov's inequality, implying that $2V_j$ is admissible for $j$ since $\Pr(X \leq 2 V_j) = y(\{i \in F_j : d(i, j) \leq 2 V_j \}) \geq 1/2$.
    We thus have
    \begin{align*}
        A_j & \le 2V_j+5e\cdot T_j(2V_j) 
        && \text{(by optimality of $R_j$)}\\
        & \le (2+5e)V_j
        && \text{($T_j(R)\le V_j$ for all $R \ge 0$)}.\qedhere
    \end{align*}
\end{proof}
% \end{gptadded}

Finally, to obtain a profile-dependent bound on $A_j$, we have the following  properties of the mean absolute deviation of a non-negative distribution. 

% \gptremovedfile{removed-text/robust1736-04}

% \begin{gptadded}
\begin{lemma} \label{obs:mad-basic}
    For any client $j \in C$ and $X \sim \calP_j$, we have
    \begin{enumerate}
        \item $M_j \leq 2 \, V_j$;
        \item $\E[(X - V_j)_+] = \frac{M_j}{2}$; and
        \item for any $t > 0$, $\Pr(X > V_j + t) = y(\{i \in F_j : d(i, j) > V_j + t \}) \leq \frac{M_j}{2t}$.
    \end{enumerate}
\end{lemma}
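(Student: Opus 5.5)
All three parts follow from elementary identities for the nonnegative random variable $X\sim\calP_j$. Throughout we use $\E[X]=V_j$ and $M_j=\E|X-V_j|$, where $\E|X-V_j|=\sum_{i\in F_j}y_i|d(i,j)-V_j|$ because $y(F_j)=1$.

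For the first part, apply the triangle inequality pointwise: $|X-V_j|\le X+V_j$ because $X\ge0$ and $V_j\ge0$. Taking expectations gives $M_j\le\E[X]+V_j=2V_j$.

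For the second part, write $X-V_j=(X-V_j)_+-(X-V_j)_-$ and $|X-V_j|=(X-V_j)_++(X-V_j)_-$. Since $\E[X-V_j]=0$, the positive and negative parts have equal expectations. Their sum is $M_j$, so each equals $M_j/2$. In particular $\E[(X-V_j)_+]=M_j/2$.

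For the third part, use Markov's inequality on the nonnegative variable $(X-V_j)_+$. The event $X>V_j+t$ is the same as $(X-V_j)_+>t$. Hence
\[
\Pr(X>V_j+t)\le\frac{\E[(X-V_j)_+]}{t}=\frac{M_j}{2t}
\]
by the second part. The probability on the left is, by the definition of $\calP_j$, exactly $y(\{i\in F_j:d(i,j)>V_j+t\})$. This gives the stated form.

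No step is a real obstacle; the only point needing care is the mean-zero argument in the second part, which is what makes the constant $1/2$ appear.
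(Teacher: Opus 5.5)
Your proof is correct and follows the paper's argument essentially step by step: the pointwise bound $|X-V_j|\le X+V_j$ for the first part, the mean-zero split of $\E|X-V_j|$ into equal positive and negative parts for the second, and Markov's inequality applied to $(X-V_j)_+$ for the third.
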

\begin{proof}
    Recall that, by definition, we have
    \[
        \E[|X - V_j|] = \sum_{i \in F_j} y_i \, |d(i, j) - V_j| = M_j.
    \]
    Since $X$ is always non-negative, $V_j\ge 0$. The first statement then follows by
    \[
        M_j = \sum_{i \in F_j} y_i \, |d(i, j) - V_j| \leq \sum_{i \in F_j} y_i \, (d(i, j) + V_j)
        = 2 \, V_j.
    \]
    For the second statement, we split $\E[|X - V_j|]$ into two values, depending on whether the distance is at least or less than $V_j$:
    \[
        \E[|X - V_j|] = \E[(X - V_j) \mathbf{1}_{X \geq V_j}] + \E[(V_j - X) \mathbf{1}_{X < V_j}].
    \]
    Note that these two values are equal since
    \[
        \E[(X - V_j) \mathbf{1}_{X \geq V_j}] - \E[(V_j - X) \mathbf{1}_{X < V_j}]
        = \E[X - V_j]
        = 0,
    \]
    showing the second statement.
    The last statement is then immediately derived from Markov's inequality since $\Pr(X > V_j + t) = \Pr(X - V_j > t)$ and $\E[(X - V_j)_+] = M_j/2$.
\end{proof}
% \end{gptadded}

We can now prove our profile-dependent upper bound on the assurance distance of client $j$.

% \gptremovedfile{removed-text/robust1736-05}

% \begin{gptadded}
\begin{lemma} \label{lem:assurance-bound2}
    For any client $j\in C$, $A_j\le V_j+\sqrt{10e\cdot V_jM_j}+\frac{5e}{2}\cdot M_j$.
\end{lemma}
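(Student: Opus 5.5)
The plan is to exhibit a single admissible radius of the form $R=V_j+t$ for a suitable $t>0$, and bound $A_j\le R+5e\,T_j(R)$ using optimality of $R_j$. The case $M_j=0$ is handled separately. There all support distances equal $V_j$, and \Cref{lem:assurance-bound-regular} gives $A_j=V_j$. So assume $M_j>0$, and let $X\sim\calP_j$.

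\emph{Admissibility.} By \Cref{obs:mad-basic}(3), $\Pr(X>V_j+t)\le M_j/(2t)$. Hence $R=V_j+t$ is admissible whenever $t\ge M_j$. For other choices of $t$ I would simply restrict to $t\ge M_j$, and check at the end that the optimizing $t$ satisfies this, or else handle the boundary.

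\emph{Tail term.} Write $X\mathbf 1_{X>R}=(X-R)_+ + R\,\mathbf 1_{X>R}$. Then
\[
T_j(R)=\E[(X-R)_+]+R\Pr(X>R)\le \frac{M_j}{2}+(V_j+t)\frac{M_j}{2t}
= M_j+\frac{V_jM_j}{2t}.
\]
This uses $\E[(X-R)_+]\le\E[(X-V_j)_+]=M_j/2$ from \Cref{obs:mad-basic}(2), together with the Markov bound above. It yields
\[
A_j\le V_j+t+5eM_j+\frac{5eV_jM_j}{2t}.
\]
The constant here is $5e M_j$, which is worse than the target $\frac{5e}{2}M_j$. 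So I would refine the tail estimate as follows. Note $R\,\mathbf 1_{X>R}\le X\mathbf 1_{X>R}$, and the part exceeding $V_j$ is already counted in $(X-V_j)_+$. Thus
\[
T_j(R)=\E[(X-V_j)\mathbf 1_{X>R}]+V_j\Pr(X>R)\le \frac{M_j}{2}+\frac{V_jM_j}{2t}.
\]
This gives
\[
A_j\le V_j+t+\frac{5e}{2}M_j+\frac{5eV_jM_j}{2t}.
\]

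\emph{Optimizing $t$.} Choose $t=\sqrt{5eV_jM_j/2}$. Then $t+\frac{5eV_jM_j}{2t}=2\sqrt{5eV_jM_j/2}=\sqrt{10eV_jM_j}$, which is exactly the target.

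\emph{Main obstacle: admissibility of this $t$.} We need $t\ge M_j$. By \Cref{obs:mad-basic}(1), $M_j\le 2V_j$, so $t^2=\frac{5e}{2}V_jM_j\ge\frac{5e}{4}M_j^2\ge M_j^2$, and admissibility holds. I would double-check this step. If any slack were missing, the fallback is to take $t=\max\{M_j,\sqrt{5eV_jM_j/2}\}$ and absorb the excess using $M_j\le2V_j$.
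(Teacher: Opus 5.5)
Your proposal is correct and follows the paper's proof essentially step for step. It uses the same case split on $M_j=0$, which also covers $V_j=0$ since $M_j\le 2V_j$, and the same choice $t=\sqrt{5eV_jM_j/2}$, with admissibility from $t\ge M_j$ via $M_j\le 2V_j$ and Markov. It also uses the same decomposition $T_j(V_j+t)=\E[(X-V_j)\mathbf 1_{X>V_j+t}]+V_j\Pr(X>V_j+t)\le \frac{M_j}{2}+\frac{V_jM_j}{2t}$. Your first, weaker tail estimate and the fallback choice of $t$ are unnecessary, because your admissibility check already goes through.
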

\begin{proof}
    If $V_j = 0$ or $M_j = 0$, then $\calP_j$ is supported on the equal distance $V_j$.
    We thus have $A_j = V_j$ due to \Cref{lem:assurance-bound-regular}, implying the lemma. 
    % If $M_j=0$, all support distances equal $V_j$, so $A_j=V_j$ and the bound holds.
    We now assume $V_j > 0$ and $M_j>0$. 
    Let $t := \sqrt{\frac{5e\cdot V_jM_j}{2}} > 0$.
    % \gptmargin{Handled zero deviation before dividing by $t$. Defined zero normalized dispersion and handled $V_j=0$ in the Markov argument above.}
    By the first statement of \Cref{obs:mad-basic}, we derive that $t \geq M_j$, further implying that $R = V_j + t$ is admissible for $j$ due to the third statement of \Cref{obs:mad-basic}.
    Observe that
    \begin{align*}
        T_j(V_j+t) & =\E [X \, \mathbf{1}_{X>V_j+t}] 
        = \E[V_j \, \mathbf{1}_{X>V_j+t}]
        + \E[(X-V_j) \mathbf{1}_{X>V_j+t}] 
        \\&
        \textstyle \leq V_j\cdot \Pr(X>V_j+t)+\E[(X-V_j)_+] 
        \\& 
        \le \frac{V_jM_j}{2t} + \frac{M_j}{2},
    \end{align*}
    where the last inequality follows from the second and third statements of \Cref{obs:mad-basic}.
    By the optimality of $R_j$, we have
    \begin{align*}
        A_j 
        & 
        \le (V_j + t) + 5e\cdot T_j(V_j+t) 
        \le V_j +t+\frac{5e\cdot V_jM_j}{2t}+\frac{5e}2M_j\\&
        = V_j + \sqrt{10e\cdot V_jM_j} +\frac{5e}{2}M_j.\qedhere
    \end{align*}
\end{proof}
% \end{gptadded}

\subsection{Algorithm}
The algorithm consists of two steps: The first step is to construct disjoint bundles of representatives by filtering, and the second step is to apply a dependent rounding on a maximum matching of the representatives.

\subsubsection{Filtering}
The filtering procedure starts with all clients and proceeds in rounds until no client remains.
For each round,
\begin{itemize}
    \item we select a client $r$ with the minimum $A_r$ among the remaining clients, and make it a representative;
    \item we then remove every remaining client $j$ with $d(j,r)\le 2\max\{R_j,R_r\}$;
\end{itemize}
% 
% \gptremovedfile{removed-text/robust1736-06}
% 
% \begin{gptadded}
Let $\calR\subseteq C$ be the set of representatives from the above filtering procedure, and for every non-representative client $j \in C\setminus \calR $, let $r(j) \in \calR$ denote the representative that removes client $j$; we further define $r(j) = j$ for $j \in \calR$.
Following are simple observations from the execution of the procedure.
% \end{gptadded}

% \begin{gptadded}
\begin{observation} \label{obs:filter}
    For every $j\in C\setminus \calR$, we have 
    $
    % \begin{equation} \label{eq:client-rep-assurance}
        A_{r(j)} \le A_{j}
    % \end{equation}
    $
    and
    $
    % \begin{equation} \label{eq:client-and-rep}
        d(j,r(j)) \le 2 \max\{R_j,R_{r(j)}\}.
    % \end{equation}
    $
    For every pair of distinct $r,s\in \calR$,
    $
    % \begin{equation} \label{eq:rep-and-rep}
        d(r,s) > 2 \max\{R_r, R_s\}.
    % \end{equation}
    $
\end{observation}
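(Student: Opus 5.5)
The three claims follow directly from the filtering procedure: one from the selection rule, one from the removal rule, and one from the fact that a surviving client was never removed.

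\emph{First claim.} Take $j\in C\setminus\calR$ and let $r=r(j)$ be the representative that removed it. Client $j$ was still present in the round in which $r$ was selected. That round selects a remaining client of minimum assurance radius, so $A_r\le A_j$.

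\emph{Second claim.} Client $j$ was removed in that round. The removal condition is exactly $d(j,r)\le 2\max\{R_j,R_r\}$, so the bound holds.

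\emph{Third claim.} Let $r,s\in\calR$ be distinct, and suppose $r$ was selected in an earlier round than $s$. Clients are removed only when some representative removes them, and $s$ later becomes a representative. So $s$ was still remaining when $r$ was selected and was not removed in that round. Hence the removal condition failed for the pair, that is,
\[
d(s,r)>2\max\{R_s,R_r\}.
\]
Both the distance and the maximum are symmetric in $r$ and $s$, so this gives the claim regardless of which was selected first.

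The argument is essentially routine. The one point needing care is that the removal test in the round that selects $r$ is applied to every remaining client, including those later chosen as representatives. A client that is still present afterwards therefore certifies the strict inequality, with the representative $r$ itself excluded from the test.
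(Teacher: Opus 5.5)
Your proof is correct and follows the paper's argument. The first two bounds come from $j$ still being present when $r(j)$ is selected and being removed in that round. The third comes from the later-selected representative surviving the earlier one's removal test, which the paper phrases as a contradiction and you state directly.
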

\begin{proof}
    The first two inequalities follow from the fact that $j$ remains unselected when $r(j)$ is selected.
    For the last inequality, if distinct $r, s \in \calR$ satisfy $d(r, s) \leq 2 \max \{R_r, R_s\}$, one of $r$ and $s$ must have been deleted when the other was selected, contradicting that both are representatives.
\end{proof}
% \end{gptadded}

\subsubsection{Disjoint Bundles}
For a representative $r$, we define its \emph{bundle radius} 
\(
    B_r := \frac{1}{2}d(r,\calR\setminus \{r\})
\)
and its bundle 
\(
    U_r := \{i \in F_r: d(i,r)<B_r\}.
\)
For the degenerate case $\calR = \{r\}$, we simply define $B_r = \infty$.

% \gptremovedfile{removed-text/robust1736-07}

% \begin{gptadded}
% \gptmargin{Corrected $R$ to $\mathcal R$ and restricted both pairwise statements to distinct representatives; also removed the stray period after $j$.}
\begin{observation}[Disjointness]
    For every pair of distinct $r, s\in \calR$, $U_r\cap U_s=\emptyset$.
\end{observation}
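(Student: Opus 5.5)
The claim is that the bundles of two distinct representatives share no facility, and I would prove it by contradiction using only the definition of the bundle radius and the triangle inequality. Fix distinct $r,s\in\calR$ and suppose some facility $i$ lies in $U_r\cap U_s$. Since $\calR$ contains at least two elements, both $B_r$ and $B_s$ are finite. By definition, $B_r=\frac12 d(r,\calR\setminus\{r\})\le \frac12 d(r,s)$, and likewise $B_s\le\frac12 d(r,s)$.

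Membership $i\in U_r$ gives $d(i,r)<B_r$, and membership $i\in U_s$ gives $d(i,s)<B_s$. The triangle inequality then yields
\[
 d(r,s)\le d(r,i)+d(i,s)<B_r+B_s\le \tfrac12 d(r,s)+\tfrac12 d(r,s)=d(r,s),
\]
which is a contradiction. Hence $U_r\cap U_s=\emptyset$.

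There is no real obstacle here, but one point deserves care. A facility copy is a colocated piece, so $i$ should be read as a single copy. If copies of the same physical facility lie in both $F_r$ and $F_s$, the argument is unchanged, because colocated copies have identical distances and the same contradiction applies to that location. Thus the strict inequalities in the definition of $U_r$ suffice, and we need neither the separation $d(r,s)>2\max\{R_r,R_s\}$ from \Cref{obs:filter} nor any property of $F_r$ beyond the fact that it consists of facilities.
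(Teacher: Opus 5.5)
Your proof is correct and is essentially the paper's argument: a common facility would be within $B_r\le\frac12 d(r,s)$ of $r$ and within $B_s\le\frac12 d(r,s)$ of $s$, and the triangle inequality then gives $d(r,s)<d(r,s)$. Your remark about colocated copies is not needed, but it is harmless, since it shows that the bundles do not even share a location.
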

\begin{proof}
    Assume toward contradiction that there exists a facility $i \in U_r \cap U_s$.
    By definition, we have $d(i,r)<\frac{1}{2}d(r,s)$ and $d(i,s)<\frac{1}{2}d(r,s)$. Hence, the triangle inequality of $d$ gives a contradiction since $d(r,s)\le d(i,r)+d(i,s)<d(r,s)$.
\end{proof}
% \end{gptadded}

% \begin{gptadded}
\begin{lemma}
    For every $r\in \calR$, $B_r\ge R_r$. Consequently, 
    % \begin{equation} \label{eq:bundle-value}
        $\frac{1}{2}\le y(U_r)\le 1.$
    % \end{equation}
\end{lemma}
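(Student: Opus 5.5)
The plan is to argue directly from the separation of representatives in Observation~\ref{obs:filter}. First handle the degenerate case $\calR=\{r\}$. Then $B_r=\infty$, so trivially $B_r\ge R_r$, and $U_r=F_r$ with $y(U_r)=y(F_r)=1$. Otherwise, fix $r\in\calR$ and let $s\in\calR\setminus\{r\}$ attain $d(r,\calR\setminus\{r\})$; this minimum exists since $\calR$ is finite. Observation~\ref{obs:filter} gives $d(r,s)>2\max\{R_r,R_s\}\ge 2R_r$. Hence $B_r=\tfrac12 d(r,s)>R_r$, which proves the first claim, with strict inequality in the non-degenerate case.

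For the consequence, the upper bound is immediate: $U_r\subseteq F_r$ and $y(F_r)=1$, so $y(U_r)\le 1$. For the lower bound, use that $R_r$ is admissible for $r$ by its definition as the constrained minimizer. So by \eqref{eq:majority-mass}, $y(\{i\in F_r: d(i,r)\le R_r\})\ge\tfrac12$. In the non-degenerate case $R_r<B_r$, so every $i\in F_r$ with $d(i,r)\le R_r$ satisfies $d(i,r)<B_r$ and lies in $U_r$. Thus $y(U_r)\ge\tfrac12$. In the degenerate case this holds because $y(U_r)=1$.

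The only subtle point is the strict versus non-strict inequality. The bundle is defined with a strict inequality $d(i,r)<B_r$, while admissibility uses a closed ball. A mere $B_r\ge R_r$ would not suffice, since facilities at distance exactly $R_r=B_r$ would be excluded. I will therefore take care to keep the strict inequality $d(r,s)>2R_r$ from the filtering rule, which removes any client within distance $2\max\{R_j,R_r\}$ inclusive. I do not expect any other obstacle.
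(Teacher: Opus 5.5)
Your proposal is correct and follows the paper's proof essentially step for step. You handle the singleton case, take a closest other representative $s$, and use the strict separation $d(r,s)>2\max\{R_r,R_s\}$ from Observation~\ref{obs:filter} to get $B_r>R_r$; the chain $\tfrac12\le y(\{i\in F_r:d(i,r)\le R_r\})\le y(U_r)\le y(F_r)=1$ then follows. Your remark about strictness is apt: the paper's proof also derives the strict inequality $B_r>R_r$, and that strictness is what actually puts the closed admissibility ball inside the open bundle $U_r$.
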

\begin{proof}
    If $\calR=\{r\}$, then $U_r=F_r$ and $y(U_r)=1$, so the lemma holds.
    Otherwise, let $s$ be a closest representative to $r$.
    We then have
    \[
        B_r = \frac{1}{2}d(r,s) > \max\{R_r,R_s\} \ge R_r,
    \]    
    where the first inequality is due to \Cref{obs:filter}. Consequently, we deduce
    \[
        \frac{1}{2} \le y( \{ i \in F_r: d(i,r) \le R_r \}) \le y(U_r) \le y(F_r)=1,
    \]
    where the first inequality follows from the admissibility of $R_r$.
    % \gptmargin{Corrected $F_j$ to $F_r$ and handled a single representative before choosing a different representative.}
\end{proof}
% \end{gptadded}

\subsubsection{Dependent Rounding}

Before applying the dependent rounding, a matching $\cal{M}$ of $\calR$ is greedily constructed by iteratively matching two closest unmatched representatives. 
The matching $\calM$ is maximal, i.e., at most one representative is unmatched.

% \gptremovedfile{removed-text/robust1736-08}

% \begin{gptadded}

Next we apply the dependent rounding in \cite{srinivasan2001distributions} on a laminar family $\{F\}\cup \{U_r\cup U_s: \{r,s\}\in \calM\}\cup \{U_r: r\in \calR\}$. Let $Y_i\in \{0,1\}$ be a result from the dependent rounding for facility $i$. The rounding guarantees that
% \gptmargin{Corrected the total-opening identity: equality to $k$ requires $y(F)=k$. Rounding a noninteger total gives its floor or ceiling.}
\begin{align}
    \sum_{i\in F} Y_i & \in\{\lfloor y(F)\rfloor,\lceil y(F)\rceil\}, \label{eq:k-preserving} \\
    \E[Y_i] & = y_i, && \forall i \in F, \label{eq:marginal-preserving} \\
    Y(U_r) & \in \{\lfloor y(U_r)\rfloor, \lceil y(U_r) \rceil\}, && \forall r\in \calR, \label{eq:bundle-preserving} \\
    Y(U_r\cup U_s) & \ge 1, && \forall \{r,s\}\in \calM. \label{eq:pair-preserving}
\end{align} 
Furthermore, let $F_\open:= \{i\in F : Y_i=1\}$, and the rounding also has a negative-correlation property: 
\begin{equation} \label{eq:negative-correlation}
    \Pr(F_\open \cap S = \emptyset) \le \prod_{i\in S} (1-y_i) \le e^{-y(S)} \quad \forall S\subseteq F.
\end{equation}
% \end{gptadded}

We first show that every representative is guaranteed to be served deterministically by an open
facility within distance a constant factor of its bundle radius.

% \gptremovedfile{removed-text/robust1736-09}

% \begin{gptadded}
\begin{lemma} \label{lem:deterministic-backup}
    For any representative $r\in \calR$, $d(r,F_\open)\le 5B_r$.
\end{lemma}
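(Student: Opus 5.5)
We need $d(r,F_\open)\le 5B_r$ for each representative $r$. The argument uses the matching guarantee \eqref{eq:pair-preserving}, chosen according to how $r$ was treated by the greedy matching. If $\calR=\{r\}$, then $B_r=\infty$ and $F_\open\neq\emptyset$ because $\sum_i Y_i\ge\lfloor y(F)\rfloor\ge 1$, so the claim is trivial. Otherwise, let $s$ be a closest other representative, so $d(r,s)=2B_r$. Every facility $i\in U_r$ satisfies $d(i,r)<B_r$, and every facility $i\in U_{r'}$ satisfies $d(i,r')<B_{r'}\le \tfrac12 d(r,r')$, since $B_{r'}$ is half the distance from $r'$ to its nearest other representative.

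\textbf{Case 1: $r$ is matched to some $r'$ with $d(r,r')\le 4B_r$.} By \eqref{eq:pair-preserving}, some facility $i\in U_r\cup U_{r'}$ is open. If $i\in U_r$, then $d(r,i)<B_r$. If $i\in U_{r'}$, then
\[
d(r,i)\le d(r,r')+B_{r'}\le \tfrac32 d(r,r')\le 6B_r,
\]
which is too weak. We therefore bound the partner distance more carefully. The greedy rule always matches the globally closest unmatched pair. If $r$ is matched to $r'$ while $s$ was still unmatched at that moment, then $d(r,r')\le d(r,s)=2B_r$. This gives $d(r,i)\le \tfrac32\cdot 2B_r=3B_r$.

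\textbf{Case 2: $s$ was matched before $r$.} Suppose $s$ was matched to some $s'$ while $r$ was still unmatched. Then $d(s,s')\le d(s,r)=2B_r$, by greedy minimality at that step. The pair $\{s,s'\}$ contains an open facility $i$. If $i\in U_s$, then $d(r,i)\le d(r,s)+B_s\le 2B_r+B_r=3B_r$, using $B_s\le \tfrac12 d(s,r)=B_r$. If $i\in U_{s'}$, then $B_{s'}\le \tfrac12 d(s,s')\le B_r$, and so
\[
d(r,i)\le d(r,s)+d(s,s')+B_{s'}\le 2B_r+2B_r+B_r=5B_r.
\]

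\textbf{Case 3: $r$ is the unique unmatched representative.} Then $s$ is matched to some $s'$, and at the time of that match $r$ was unmatched, because $r$ is never matched. So this case reduces to Case 2 and again yields $5B_r$. The same holds in Case 1 when $s$ was matched before $r$; in every case the better of the two bounds applies.

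The main obstacle is ordering these cases correctly, that is, applying greedy minimality at the step when the first of $r$ and $s$ gets matched. If $r$ is matched first, its partner is within $2B_r$ and the bound is $3B_r$. If $s$ is matched first, its partner is within $2B_r$ and the chain gives $5B_r$. The key fact throughout is $B_x\le \tfrac12 d(x,y)$ for any other representative $y$, which bounds the bundle radii of the partners.
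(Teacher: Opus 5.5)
Your proof is correct and follows essentially the same route as the paper. Both split on whether $r$ is matched while a nearest representative is still unmatched, which gives $3B_r$ via the partner, or a nearest representative $s$ is matched first, which gives $d(r,s)+d(s,s')+B_{s'}\le 5B_r$; the paper phrases this split through the set $S$ of all nearest representatives rather than one fixed $s$, and your condition $d(r,r')\le 4B_r$ in the Case~1 heading plays no role and can be dropped.
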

\begin{proof}
    If $|\calR| = 1$, the lemma trivially holds since $B_r = \infty$.
    Hence, we assume $|\calR|\ge 2$.
    Let $S \subseteq \calR \setminus \{r\}$ denote the set of closest representatives to $r$, i.e., $S := \arg\min_{j \in \calR \setminus \{r\}} d(r, j)$.
    Note that $B_r = d(r, j) / 2$ for every $j \in S$ by definition.
    
    Suppose for the moment that there exists an unmatched representative in $S$ when $r$ is about to be matched.
    In this case, $r$ must be matched with a representative $j \in S$ due to the greediness of the matching procedure, and hence, there exists an open facility in the merged bundle $U_r \cup U_j$ due to \eqref{eq:pair-preserving}.
    We thus obtain
    \[
        d(r, F_\open)
        \leq d(r, j) + B_j
        = 2 B_r + B_j
        \leq 3 B_r,
    \]
    where the last inequality is due to the definition of $B_j := \frac{1}{2} \, d(j, \calR \setminus \{j\}) \leq d(j, r) / 2 = B_r$.

    On the other hand, if every representative in $S$ is matched before $r$, including the case that $r$ remains unmatched at termination, we consider an arbitrary $j \in S$ matched with another representative $j' \in \calR \setminus \{r, j\}$.
    Note that $d(j, j') \leq d(j, r)$ by the greediness of the matching procedure.
    Since the dependent rounding procedure guarantees to open a facility in the merged bundle $U_j \cup U_{j'}$ due to \eqref{eq:pair-preserving}, we deduce
    \[
        d(r, F_\open) \leq d(r, j) + d(j, j') + B_{j'}
        \leq d(r, j) + \frac{3}{2} d(j, j')
        \leq \frac{5}{2} d(r, j)
        = 5 B_r,
    \]
    where the second inequality is again due to the definition of $B_{j'} := \frac{1}{2} \, d(j', \calR \setminus \{j'\}) \leq d(j', j) / 2$.
\end{proof}
% \end{gptadded}

% \begin{gptadded}
Based on the above lemma, we can prove that the expected excess of connection cost for a representative beyond its threshold radius can be fully compensated by a constant factor of the partial expectation of $\calP_j$ beyond its threshold radius.
\begin{lemma} \label{lem:beyond-R}
    For any representative $r \in \calR$, $\E [(d(r,F_\open)-R_r)_+] \le 5T_r = 5 \, \E_{X \sim \calP_r} [X \mathsf{1}_{X > R_r}]$.
\end{lemma}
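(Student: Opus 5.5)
We plan to prove the lemma through the layer-cake formula
\[
\E[(d(r,F_\open)-R_r)_+]=\int_{R_r}^{\infty}\Pr\bigl(d(r,F_\open)>t\bigr)\,dt .
\]
Two facts control the integrand. The first is \Cref{lem:deterministic-backup}: $d(r,F_\open)\le 5B_r$ always, so the integrand vanishes for $t\ge 5B_r$. The second is the negative-correlation bound \eqref{eq:negative-correlation}. For $t\in[R_r,B_r)$ we take $S=\{i\in F_r: d(i,r)\le t\}$ and obtain
\[
\Pr\bigl(d(r,F_\open)>t\bigr)\le e^{-y(S)}=e^{-(1-G(t))},
\qquad G(t):=\Pr_{X\sim\calP_r}(X>t).
\]
For $t\ge B_r$ we use the same bound with $S=U_r$ (or with $S$ the whole ball if $B_r=\infty$), which gives at most $e^{-y(U_r)}\le e^{-(1-G(B_r))}$. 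Here we use $U_r\supseteq\{i\in F_r:d(i,r)<B_r\}$, so $y(U_r)\ge 1-G(B_r^-)$; the endpoint distinction does not matter.

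Next we reduce the integrand to a multiple of $G$. Since $G(t)\le 1/2$ for all $t\ge R_r$ by admissibility, and the function $u\mapsto e^{-(1-u)}/u$ is decreasing on $(0,1]$, we get for $u\le 1/2$ the bound $e^{-(1-u)}\le e^{-1}e^{u}\le e^{-1/2}\cdot 2u$. That constant is too weak on its own, so we split the integral into the ranges $[R_r,B_r)$ and $[B_r,5B_r)$.

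On $[R_r,B_r)$ we bound the integrand by $e^{-1/2}\cdot 2\,G(t)\le 2G(t)$. Integrating,
\[
\int_{R_r}^{B_r}2G(t)\,dt\le 2\int_{R_r}^{\infty}G(t)\,dt=2\,\E[(X-R_r)_+]\le 2T_r .
\]

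On $[B_r,5B_r)$ the integrand is at most $2G(B_r)$, so this piece contributes at most $8B_rG(B_r)$. By Markov's inequality for $X\mathbf 1_{X>R_r}$ we have $B_rG(B_r)\le \E[X\mathbf 1_{X\ge B_r}]\le T_r$. This gives $8T_r$, which is too large, so the two ranges must be handled jointly.

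The main obstacle is sharpening the constants to reach exactly $5T_r$. The refined plan uses the bound $e^{-(1-u)}\le u\,e^{-1/2}\cdot 2\approx 1.213u$ throughout. The first range then contributes at most $1.213\int_{R_r}^{B_r}G(t)\,dt$. The second range contributes at most $1.213\cdot 4B_r\,G(B_r)$. Since $G$ is nonincreasing, $\int_{R_r}^{B_r}G\ge (B_r-R_r)G(B_r)$, and we can combine this with $T_r\ge B_rG(B_r)+\int_{B_r}^{\infty}G$, which follows from $T_r=R_rG(R_r)+\int_{R_r}^\infty G$. Together these give total at most
\[
1.213\Bigl(\int_{R_r}^{B_r}G+4B_rG(B_r)\Bigr)\le 1.213\cdot 4\,T_r<5T_r,
\]
because $\int_{R_r}^{B_r}G+B_rG(B_r)\le T_r$. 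If the inequality $e^{-(1-u)}\le 2e^{-1/2}u$ on $[0,1/2]$ (convexity, checked at the endpoints) or the bookkeeping gives trouble, the fallback is the cruder bound $e^{-1}e^{u}\le e^{-1}(1+2u(\sqrt e-1))$, again checked at the endpoints.
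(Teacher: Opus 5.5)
\textbf{The proposal has a genuine gap: the exponential integrand bound cannot give $O(T_r)$.} Inequality \eqref{eq:negative-correlation} only gives $\Pr(F_\open\cap S=\emptyset)\le e^{-y(S)}$, and since $y(S)\le 1$ this is never smaller than $e^{-1}$. Your step $e^{-(1-u)}\le 2e^{-1/2}u$ for $u\le 1/2$ has the direction reversed. Because $u\mapsto e^{u-1}/u$ is decreasing, on $(0,1/2]$ one has $e^{u-1}/u\ge 2e^{-1/2}$, and $e^{u-1}\to e^{-1}$ as $u\to0$. So the bound your estimates actually yield is at least $\int_{R_r}^{5B_r}e^{-1}\,dt=(5B_r-R_r)/e$, which is not bounded by any multiple of $T_r$. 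For instance, if every facility of $F_r$ lies within distance $R_r$ of $r$, then $T_r=0$ and the lemma asserts $\E[(d(r,F_\open)-R_r)_+]=0$. This holds only because $y(U_r)=1$ forces exactly one facility of $U_r$ to open, a fact the exponential bound cannot see. Your fallback chord bound keeps the same additive $e^{-1}$ and fails for the same reason.

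There are two further slips. First, $y(U_r)=1-\Pr(X\ge B_r)$, so $e^{-y(U_r)}\ge e^{-(1-G(B_r))}$, not $\le$. Second, $\int_{R_r}^{B_r}G+B_rG(B_r)\le T_r$ does not follow from your two facts, since both spend $\int_{R_r}^{B_r}G$. It fails, e.g., when $\calP_r$ has mass $1-g$ at $0$ and mass $g$ at $D\in(B_r,2B_r)$, with $R_r=0$ (the threshold radius when $g\le 1/(5e)$): the left side is $2gB_r>gD=T_r$.

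\textbf{What is missing is the bundle constraint \eqref{eq:bundle-preserving}, used in place of \eqref{eq:negative-correlation}.} Since $\frac12\le y(U_r)\le1$, it gives $|F_\open\cap U_r|\le1$. Hence the events $\{i\in F_\open\}$, $i\in U_r$, are disjoint with probabilities $y_i$ by \eqref{eq:marginal-preserving}. Therefore $\Pr(F_\open\cap S=\emptyset)=1-y(S)$ exactly for every $S\subseteq U_r$, which is linear rather than exponential. With this, your layer-cake framework goes through:
\begin{itemize}
\item the integrand is at most $G(t)$ on $[R_r,B_r)$;
\item it is at most $1-y(U_r)=\Pr(X\ge B_r)$ on $[B_r,5B_r)$;
\item so the integral is at most $\int_{R_r}^{B_r}G(t)\,dt+4B_r\Pr(X\ge B_r)\le T_r+4\,\E[X\mathbf 1_{X\ge B_r}]\le 5T_r$.
\end{itemize}
The last step uses $\int_{R_r}^{\infty}G=\E[(X-R_r)_+]\le T_r$ and $B_r>R_r$. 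If $\calR=\{r\}$, then $U_r=F_r$ and the first term alone suffices.

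The paper uses the same disjointness without integrating. With probability $y_i$ the open facility of $U_r$ is $i$, contributing $(d(i,r)-R_r)_+$. With probability $1-y(U_r)$ none is open, and Lemma~\ref{lem:deterministic-backup} bounds the cost by $5B_r$. Finally, $5B_r(1-y(U_r))\le5\sum_{i\in F_r\setminus U_r}y_id(i,r)$, because those facilities lie at distance at least $B_r$.
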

\begin{proof}
    If $\calR=\{r\}$, then $U_r=F_r$ has mass one and exactly one of its facilities opens by the dependent rounding procedure.
    We thus have
    \[
      \E[(d(r,F_\open)-R_r)_+]
      \le      
      \sum_{i\in F_r} y_i(d(i,r)-R_r)_+
      \le 
      \sum_{i \in F_r : d(i, r) > R_r} y_i d(i, r)
      =
      T_r.
    \]
    Let us now assume $|\calR| \ge 2$.
    % \gptmargin{Added the singleton case to avoid $0\cdot\infty$ in the backup calculation.}
    Since the dependent rounding procedure opens at most one facility from $U_r$ due to \eqref{eq:bundle-preserving}, we know that events $\{ i \in F_\open \}_{i \in U_r}$ partition the event $\{U_r \cap F_\open \neq \emptyset\}$.
    The marginal probability of a facility $i \in U_r$ being open is precisely $y_i$ due to \eqref{eq:marginal-preserving}, and under this event, we easily have
    \[
        (d(r, F_\open) - R_r)_+ \leq (d(i, r) - R_r)_+.
    \]
    Moreover, the event that none of $U_r$ are open happens with probability $1 - y(U_r)$, in which \Cref{lem:deterministic-backup} gives a backup bound as follows:
    \[
        (d(r, F_\open) - R_r)_+ \leq 5 B_r.
    \]
    We can therefore derive
    \begin{align*}
        \E[(d(r, F_\open) - R_r)_+]
        &
        \leq \sum_{i \in U_r} y_i \, (d(i, r) - R_r)_+ + 5 B_r (1 - y(U_r))
        \\&
        \leq \sum_{i \in U_r : d(i, r) > R_r} y_i \, d(i, r) + 5 B_r (1 - y(U_r))
        \\&
        \overset{(a)}{\leq} \sum_{i \in U_r : d(i, r) > R_r} y_i \, d(i, r) + 5 \sum_{i \in F_r \setminus U_r} y_i d(i, r)
        \\&
        \leq 5 \sum_{i \in F_r : d(i, r) > R_r} y_i \, d(i, r)= 5 T_r,
    \end{align*}
    where $(a)$ is due to the fact that every facility $i \in F_r \setminus U_r$ is at distance at least $B_r$ by definition.
\end{proof}
% \end{gptadded}

\subsection{Analysis}

In this section, we prove \Cref{lem:robust}, which is a key lemma for proving \Cref{thm:robust}. 
Note that \Cref{thm:robust} indeed follows from \Cref{lem:assurance-bound1,lem:assurance-bound2,lem:robust}.

\begin{lemma} \label{lem:robust}
    For every client $j \in C$,
    $\E[d(j,F_\open)]\le \left(1-\frac{1}{e}\right)V_j+\frac{3}{e}A_j$.
\end{lemma}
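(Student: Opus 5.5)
The plan is to bound $j$'s cost by the better of two options: the nearest open facility in $F_j$, and a backup route through its representative $r:=r(j)$. Let $X_{\min}:=\min\{d(i,j): i\in F_j\cap F_\open\}$, with $X_{\min}=\infty$ if no facility of $F_j$ opens. Then
\[
  d(j,F_\open)\le\min\bigl\{X_{\min},\,D_0+Z\bigr\},\qquad D_0:=d(j,r)+R_r,\qquad Z:=(d(r,F_\open)-R_r)_+ .
\]
By Lemma~\ref{lem:beyond-R}, $\E[Z]\le 5T_r$. By Observation~\ref{obs:filter}, $D_0\le 2\max\{R_j,R_r\}+R_r\le 3\max\{R_j,R_r\}$. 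I would then split into two cases. If $R_r\le R_j$, then $D_0\le 3R_j$. If $R_r>R_j$, then $D_0\le 3R_r\le 3A_r\le 3A_j$, using $A_r\le A_j$. In both cases $D_0\le 3A_j$ minus whatever slack is needed to absorb $Z$. For $r=j$ we simply have $D_0=R_j$.

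\textbf{Deterministic-backup part.} Write $q(t):=\Pr_{X\sim\calP_j}(X>t)$, so $\int_0^\infty q(t)\,dt=V_j$. The layer-cake formula and negative correlation \eqref{eq:negative-correlation} give
\[
  \E[\min\{X_{\min},D_0\}]=\int_0^{D_0}\Pr\bigl(F_\open\cap F_j\cap \mathrm{ball}(j,t)=\emptyset\bigr)\,dt
  \le\int_0^{D_0}e^{-(1-q(t))}\,dt .
\]
By convexity, $e^{-y}\le 1-(1-1/e)y$ for $y\in[0,1]$. Hence the integral is at most $\int_0^{D_0}\bigl(\tfrac1e+(1-\tfrac1e)q(t)\bigr)dt\le \tfrac{D_0}{e}+(1-\tfrac1e)V_j$. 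This produces the $(1-1/e)V_j$ term and leaves a budget of $\tfrac{3}{e}A_j-\tfrac{D_0}{e}$ for the random excess $Z$.

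\textbf{Handling $Z$: the main obstacle.} The crude bound $\E[Z]\le 5T_r$ is not enough in the case $R_r\le R_j$. There the budget is roughly $\tfrac{3}{e}(A_j-R_j)=15T_j$, which is unrelated to $T_r$. So I must use that $Z$ only matters on the event $\mathcal E$ that $F_j\cap\mathrm{ball}(j,D_0)$ has no open facility, and bound $\E[Z\mathbf 1_{\mathcal E}]$ by about $\tfrac1e\cdot 5T_r$ (or by $15T_j$).

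First I would attempt to reprove Lemma~\ref{lem:beyond-R} conditionally on $\mathcal E$. Its proof writes $Z$ as a sum over the events $\{i\in F_\open\}$ for $i\in U_r$, plus the event $U_r\cap F_\open=\emptyset$, which carries the backup $5B_r$. Each of these events can be intersected with $\mathcal E$. Since $\mathrm{ball}(j,D_0)\supseteq \mathrm{ball}(j,R_j)$ carries mass at least $1/2$ of $F_j$, the product form of \eqref{eq:negative-correlation} would give a factor $e^{-1/2}$, or $e^{-1}$ if the whole of $F_j$ lies in the ball. The difficulty is that $U_r$ and $F_j$ may overlap, so the events are not "disjoint coordinates". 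I would handle the overlap by noting that if a facility of $U_r\cap F_j$ opens inside the ball, then $\mathcal E$ fails. Thus only facilities outside $F_j$, or the all-closed event, contribute, and for those the negative-correlation bound applies to the union set.

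If this conditional factor does not come out cleanly, the fallback is a per-case comparison. When $R_r\le R_j$, use $A_r\le A_j$, that is, $R_r+5eT_r\le R_j+5eT_j$, so $5T_r\le \tfrac1e(R_j-R_r)+5T_j$. Then tighten $D_0\le d(j,r)+R_r\le 2R_j+R_r$ instead of $3R_j$. With this refinement the $R_r$ terms cancel: $\tfrac{D_0}{e}+\tfrac1e\cdot 5eT_r$-type bounds telescope to $\tfrac{3}{e}R_j+\tfrac{3}{e}\cdot 5eT_j=\tfrac{3}{e}A_j$, provided the $1/e$ conditional factor on $Z$ is available.
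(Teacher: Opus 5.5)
Your bound on the part below $D_0$ is exactly the paper's term (I). Your pointwise bound $d(j,F_\open)\le\min\{X_{\min},D_0+Z\}\le\min\{X_{\min},D_0\}+Z$ is equivalent to the paper's split of the layer-cake integral at $L_j=D_0$.

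The gap is the step you single out as the main obstacle. It is a misdiagnosis, and it leaves your argument depending on an unproven ``conditional $1/e$ factor'' that is not needed. The unconditional bound $\E[Z]\le 5T_r$ from \Cref{lem:beyond-R} already suffices. You lose this by replacing $D_0$ with $3R_j$ (or $3A_j$), which discards the $R_r$ inside $D_0=d(j,r)+R_r$. That $R_r$ is exactly what pairs with $5eT_r$, which is why the threshold objective carries the weight $5e$:
\[
\frac{D_0}{e}+5T_r=\frac{d(j,r)+R_r+5eT_r}{e}=\frac{d(j,r)+A_r}{e}\le\frac{2\max\{R_j,R_r\}+A_j}{e}\le\frac{3A_j}{e}.
\]
This uses \Cref{obs:filter} ($d(j,r)\le 2\max\{R_j,R_r\}$ and $A_r\le A_j$) together with $R_j\le A_j$ and $R_r\le A_r\le A_j$. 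No case split and no conditioning on $\mathcal E$ is needed; this is precisely the paper's proof.

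Your own fallback already shows this in the case $R_r\le R_j$. With the unconditional $\E[Z]\le 5T_r$, it gives $\frac{D_0}{e}+5T_r\le\frac{3R_j}{e}+5T_j\le\frac{3}{e}A_j$. So drop the caveat about the conditional factor. Also replace ``$D_0\le 3A_j$ minus whatever slack is needed'' with the exact inequality $D_0+5eT_r=d(j,r)+A_r\le 3A_j$.

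The conditional route itself would not go through with the tools available, so it cannot serve as the main line:
\begin{itemize}
\item Property \eqref{eq:negative-correlation} only bounds the probability that a fixed set contains no open facility. It says nothing about mixed events such as ``$i\in U_r$ opens while $F_j\cap\operatorname{ball}(j,D_0)$ stays closed,'' and your conditional version of \Cref{lem:beyond-R} needs exactly those.
\item For the all-closed event, the product bound on $\Pr[(U_r\cup(F_j\cap\operatorname{ball}(j,D_0)))\cap F_\open=\emptyset]$ is weaker than the exact value $1-y(U_r)$ that \Cref{lem:beyond-R} relies on. For example, it stays near $e^{-1}$ when $y(U_r)=1$ is spread over many small facilities, while the true probability is $0$.
\item $Z$ and $\mathbf 1_{\mathcal E}$ are both nonincreasing in the opened set, so there is no reason to expect $\E[Z\mathbf 1_{\mathcal E}]\le e^{-1}\E[Z]$. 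Under independent rounding, Harris' inequality would even make them positively correlated.
\end{itemize}
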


% \gptremovedfile{removed-text/robust1736-10}

% \begin{gptadded}
Following is a simple fact easily shown by convexity.
\begin{lemma} \label{lem:exp_inequality}
    For any $q \in [0, 1]$, $e^{-q} \leq \frac{1}{e} + (1 - \frac{1}{e})(1-q)$.
\end{lemma}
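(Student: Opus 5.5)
The plan is to use convexity of $q\mapsto e^{-q}$ directly: the right-hand side is exactly the chord of this function over $[0,1]$, and a convex function lies below its chords on the interval between the endpoints.

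\textbf{Step 1 (identify the right-hand side as a chord).} Expand the right-hand side:
\[
\frac1e+\Bigl(1-\frac1e\Bigr)(1-q)=1-q+\frac qe=(1-q)\,e^{-0}+q\,e^{-1}.
\]
This is the affine interpolation between the values of $e^{-q}$ at $q=0$ and $q=1$. As a check, it equals $1$ at $q=0$ and $1/e$ at $q=1$.

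\textbf{Step 2 (apply convexity).} For $q\in[0,1]$, write $q=(1-q)\cdot0+q\cdot1$ as a convex combination of the endpoints. The map $q\mapsto e^{-q}$ is convex, since its second derivative $e^{-q}$ is positive. Jensen's inequality for two points, which is just the definition of convexity, then gives
\[
e^{-q}\le(1-q)e^{0}+q\,e^{-1}.
\]
By Step 1, this is the claimed bound.

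There is no real obstacle here. The only point needing care is the algebraic identification in Step 1, which is a one-line expansion. The restriction $q\in[0,1]$ is what makes $(1-q,q)$ valid convex weights; outside this interval the chord inequality reverses.
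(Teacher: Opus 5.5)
Your proposal is correct and follows essentially the same route as the paper: the right-hand side is the chord of $e^{-q}$ through $(0,1)$ and $(1,1/e)$, and convexity of $e^{-q}$ places the function below that chord on $[0,1]$. Your expansion $\frac1e+\bigl(1-\frac1e\bigr)(1-q)=(1-q)e^{0}+q\,e^{-1}$ just spells out the chord identification that the paper states directly.
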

\begin{proof}
    The right-hand side is the line segment connecting $(0, 1)$ and $(1, 1/e)$.
    Observe that these two points are also on the left-hand side $e^{-q}$.
    The convexity of $e^{-q}$ then completes the proof.
\end{proof}
% \end{gptadded}

% \begin{gptadded}
We are now ready to prove the key lemma.
\begin{proof} [Proof of \Cref{lem:robust}]
    Let $r := r(j)$ denote the representative deleting $j$ in the filtering procedure, and let $L_j := d(j, r) + R_r$. 
    Observe that $L_j$ is an upper bound of $d(j,F_\open)$ if a facility is open in the ball of threshold radius $R_r$ around $r$. Otherwise, we use a backup facility, in which the distance is bounded above by \Cref{lem:beyond-R}.
    
    Since $d(j, F_\open)$ is nonnegative, we can write % (see, e.g., Lemma 2.9 in \cite{mitzenmacher2005probability})
    \begin{align*}
        \E[d(j, F_\open)]
        & 
        = \int_0^\infty \Pr[d(j, F_\open) > t] \, dt
        \\&
        = \underbrace{\int_0^{L_j} \Pr[d(j, F_\open) > t] \, dt}_{\text{(I)}}
        + \underbrace{\int_{L_j}^\infty \Pr[d(j, F_\open) > t] \, dt.}_{\text{(II)}}
    \end{align*}

    We first bound (I) above.
    For any $t > 0$, let $F_j(t) := \{i \in F_j : d(i, j) \leq t\}$ denote the facilities at distance no greater than $t$.
    We then have the following bound on the integrands:
    \[
        \Pr[d(j, F_\open) > t]
        \leq \Pr[F_\open \cap F_j(t)  = \emptyset]
        \leq e^{-y(F_j(t))}
        \leq \frac{1}{e} + \Big( 1 - \frac{1}{e} \Big) \Big( 1 - y \big( F_j(t) \big) \Big),
    \]
    where the first inequality uses $F_j(t)\subseteq\{i\in F:d(i,j)\le t\}$, the second is due to \eqref{eq:negative-correlation}, and the third is due to \Cref{lem:exp_inequality}.
    % \gptmargin{Changed equality to inequality: an open facility outside $F_j$ can also serve $j$ within radius $t$.}
    This inequality further gives
    \[
        \text{(I)}
        \leq \frac{L_j}{e} + \Big( 1 - \frac{1}{e} \Big) \int_0^{L_j} \Big( 1 - y \big( F_j(t) \big) \Big) \, dt
        \leq \frac{L_j}{e} + \Big( 1 - \frac{1}{e} \Big) V_j,
    \]
    where the last inequality comes from the fact that
    \[
        V_j 
        = \E_{X \sim \calP_j} [X]
        = \int_0^{\infty} \Pr(X > t) \, dt
        = \int_0^{\infty} \Big( 1 - y \big( F_j(t) \big) \Big) \, dt.
    \]

    For (II), observe that this term can be written as $\E[(d(j, F_\open) - L_j)_+]$. %\ysmargin{Do we have a good reference to cite for this fact: (II) $ = \E[(d(j, F_\open) - L_j)_+]$?}
    Let $i_j$ and $i_r$ be the nearest open facilities to clients $j$ and $r$, respectively, i.e., $d(j,F_\open)=d(j,i_j)$ and $d(r,F_\open)=d(r,i_r)$.
    % \gptmargin{Corrected the client and facility indices and specified that the closest facilities are open.}
    We thus have
    \begin{align*}
        \text{(II)}
        & 
        = \E[(d(j,F_\open) - L_j)_+]
        \\&
        = \E\Big[ \big(d(j,i_j) - (d(j,r)+R_r) \big)_+\Big] 
        \\& 
        \le \E\Big[ \big(d(j,i_r) - (d(j,r)+R_r) \big)_+\Big] 
        \\&
        \le \E[(d(F_\open,r)-R_{r})_+]
        \\& 
        \le 5T_r,
    \end{align*}
    where the last inequality follows from \Cref{lem:beyond-R}.

    Combining these two upper bounds for (I) and (II), we finally have
    \begin{align*}
        \E[d(j,F_\open)]
        &
        \leq \left(1-\frac{1}{e}\right) V_j + \frac{d(j,r)+R_{r}}{e}+5T_{r} 
        \\&
        \overset{(a)}{=} \left(1-\frac{1}{e}\right) V_j + \frac{d(j, r) + A_r}{e}
        \\&
        \overset{(b)}{\leq} \left(1-\frac{1}{e}\right) V_j + \frac{2 \max \{R_j, R_r\} + A_j}{e}
        \\&
        \overset{(c)}{\leq} \left(1-\frac{1}{e}\right) V_j + \frac{3 A_j}{e},
    \end{align*}
    where $(a)$ follows from the definition of $A_r := R_r + 5e \, T_r$, $(b)$ from \Cref{obs:filter}, and $(c)$ from the fact that
    $
        R_j \leq A_j \text{ and } R_r \leq A_r \leq A_j
    $
    due to the definitions of $A_j$ and $A_r$ and again \Cref{obs:filter}.
\end{proof}
% \end{gptadded}

\section{Combining the Roundings: An Approximation Below Two}
\label{sec:final-analysis}

In this section, we combine all the components to derive our final better-than-2 approximation. Our goal is to obtain the following pseudo-approximation: an $\alpha$ approximation for the $k$-median problem using $k + c$ facilities, for some $\alpha = 2 - \Omega(1)$ and $c = O(1)$.  This is sufficient as the reduction of Li and Svensson~\cite[Theorem 4]{LS16} converts such an algorithm into an $(\alpha+\tau)$-approximation in time $n^{O(c/\tau)}$ times the running time of the pseudo-approximation. 

We first solve the LP relaxation \eqref{lp:k-clustering} for $k$-median. Let $\varepsilon > 0$ be a sufficiently small constant. We perform the preprocessing step for this $\varepsilon$ in Section~\ref{sec:preprocessing} to obtain a $\delta$-integral solution, for some $\delta = \varepsilon^{O(1)}$. This will increase the number of fractional open facilities by $3$, and the cost by a multiplicative factor of $1 + O(\varepsilon)$. Abusing notations slightly, we still use $(\mathbf{x}, \mathbf{y})$ to denote the new LP solution. Define $V_j,\Disp_j,M_j$ for every $j \in C$ using
this solution.
% , and put
% \[
%  V=\sum_{j\in C}V_j,\qquad
%  \Disp=\sum_{j\in C}\Disp_j,\qquad M=\sum_{j\in C}M_j.
% \]
% The support weights sum to one. Jensen's inequality and the triangle inequality give
% \begin{equation}\label{eq:final-profile-relations}
%  M_j\le\Disp_j\le2V_j,\qquad M\le\Disp\le2V.
% \end{equation}
% These inequalities also cover $V_j=0$, in which case both $M_j$ and
% $\Disp_j$ vanish.

Let $\rho\in(0,1]$. By Theorem~\ref{thm:conversion-local-charges} and
Lemma~\ref{lem:conversion-disp-cert}, Algorithm~\ref{sec:conversion-algorithm} returns a random set $O_1$
such that
\begin{equation}\label{eq:final-A}
 |O_1|\le |\mathbf{y}|_1+s,\qquad
 \E[d(j,O_1)]\le(1+\rho)\left(2V_j-\frac{\Disp_j}{12}\right), \forall j \in C,
\end{equation}
where $s=O(\delta^{-2}\rho^{-2}\log(4/\rho))$. Here the expectation
already includes rejection of trials exceeding the facility budget.

The algorithm from Theorem~\ref{thm:robust} returns a random set $O_2$ with
\begin{equation}\label{eq:final-B}
 \begin{aligned}
 |O_2|&\le\lceil |\mathbf{y}|_1\rceil,\\
 \E[d(j,O_2)]&\le\left(1+\frac2e\right)V_j
       +\sqrt{\frac{90V_jM_j}{e}}+\frac{15}{2}M_j, \quad \forall j \in C.
 \end{aligned}
\end{equation}

Let $O$ the solution in $O_1$ and $O_2$ with a smaller cost, breaking ties arbitrarily. Clearly, we have $|O|\le\max\{|\mathbf{y}|_1+s,\lceil |\mathbf{y}|_1\rceil\}$. The following lemma bounds the cost of $O$:
\begin{lemma}\label{lem:final-combination}
 $\E[\cost(O)]\le(1+\rho)\alpha^\star V$,
where
\begin{equation}\label{eq:final-alpha}
 \begin{aligned}
 a&=\frac{91}{12},\qquad b=\sqrt{\frac{90}{e}},\qquad c=1-\frac2e,\\
 u^\star&=\frac{2c}{b+\sqrt{b^2+4ac}},\qquad
 \beta^\star=(u^\star)^2,\qquad
 \alpha^\star=2-\frac{\beta^\star}{12}.
 \end{aligned}
\end{equation}
Numerically, $\beta^\star\approx0.00188669438974$ and
$\alpha^\star\approx1.99984277546752$.
\end{lemma}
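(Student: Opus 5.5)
The plan is to bound $\E[\cost(O)]$ by the minimum of the two expected costs and then optimize over the single parameter $s:=\Disp/V$. Set $V=\sum_jV_j$, $\Disp=\sum_j\Disp_j$ and $M=\sum_jM_j$. If $V=0$, then every $V_j$, $\Disp_j$ and $M_j$ vanishes, and \eqref{eq:final-A} gives cost $0$; so assume $V>0$. Since $O$ is the cheaper of $O_1$ and $O_2$, we have $\cost(O)\le\cost(O_1)$ and $\cost(O)\le\cost(O_2)$ pointwise. Taking expectations gives $\E[\cost(O)]\le\min\{\E[\cost(O_1)],\E[\cost(O_2)]\}$. Here both $O_1$ and $O_2$ are built from the same preprocessed solution, and their randomness may be arbitrary, even independent.

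Summing \eqref{eq:final-A} over clients gives $\E[\cost(O_1)]\le(1+\rho)V(2-s/12)$. For $O_2$, sum \eqref{eq:final-B}. We need $M_j\le\Disp_j$, which follows from the triangle inequality: write $d_i-V_j=\sum_{i'}y_{i'}(d_i-d_{i'})$ and use $|\cdot|$ subadditivity. This gives $M\le\Disp=sV$. By Cauchy--Schwarz, $\sum_j\sqrt{V_jM_j}\le\sqrt{VM}\le V\sqrt s$. Hence
\[
\E[\cost(O_2)]\le V\Bigl(1+\tfrac2e+b\sqrt s+\tfrac{15}{2}s\Bigr)\le(1+\rho)V\Bigl(1+\tfrac2e+b\sqrt s+\tfrac{15}{2}s\Bigr).
\]

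It remains to show that $\min\{2-s/12,\;1+2/e+b\sqrt s+\frac{15}2s\}\le\alpha^\star$ for all $s\ge0$. Substitute $u=\sqrt s\ge0$. The first function $f_1(u)=2-u^2/12$ is decreasing. The second $f_2(u)=1+2/e+bu+\frac{15}2u^2$ is increasing. Their difference satisfies
\[
f_2-f_1=au^2+bu-c,\qquad a=\tfrac{15}{2}+\tfrac1{12}=\tfrac{91}{12},
\]
and this has the unique positive root
\[
u^\star=\frac{-b+\sqrt{b^2+4ac}}{2a}=\frac{2c}{b+\sqrt{b^2+4ac}},
\]
where $c=1-2/e>0$. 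For $u\le u^\star$ we have $\min\le f_2(u)\le f_2(u^\star)=f_1(u^\star)$. For $u\ge u^\star$ we have $\min\le f_1(u)\le f_1(u^\star)$. In both cases the minimum is at most $f_1(u^\star)=2-\beta^\star/12=\alpha^\star$. This yields $\E[\cost(O)]\le(1+\rho)\alpha^\star V$.

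The only mildly delicate step is that the choice of $O$ depends on the realized costs. Pointwise domination by each candidate handles this, so the expected minimum is at most the minimum of the expectations. Another subtlety is that $V$ here is the preprocessed LP value; the $1+O(\varepsilon)$ relation to the original LP is deferred to the final theorem. The numerical values follow by direct evaluation.
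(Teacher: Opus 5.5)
Your proposal is correct and follows essentially the same route as the paper. You bound $\E[\cost(O)]$ by the minimum of the two summed per-client guarantees, use Cauchy--Schwarz together with $M\le\Disp$ to express both bounds in terms of $s=\Disp/V$, and find the crossing point via the quadratic $au^2+bu-c=0$ in $u=\sqrt{s}$. The only differences are minor and harmless: your monotonicity argument ranges over all $s\ge0$ instead of restricting to $[0,2]$ and checking the endpoints as the paper does, and you derive $M_j\le\Disp_j$ explicitly where the paper only asserts it.
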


\begin{proof}
Let $V = \sum_{j \in C} V_j, \Disp = \sum_{j \in C} \Disp_j$ and $M = \sum_{j \in C} M_j$. Summing~\eqref{eq:final-A} gives
\[\E[\cost(O_1)]\le(1+\rho)(2V-\Disp/12).\]
For the analysis of the second algorithm, 
Cauchy--Schwarz gives
\[
 \sum_j\sqrt{V_jM_j}\le\sqrt{VM}\le\sqrt{V\Disp},
\]
and hence
\[
 \E[\cost(O_2)]\le\left(1+\frac2e\right)V
       +\sqrt{\frac{90V\Disp}{e}}+\frac{15}{2}\Disp.
\]
If $V=0$, both expected costs are zero and the claim follows.
Otherwise let $\beta=\Disp/V\in[0,2]$ and define
\[
 f(\beta)=2-\frac{\beta}{12},\qquad
 g(\beta)=1+\frac2e+\sqrt{\frac{90\beta}{e}}+\frac{15}{2}\beta.
\]
Both functions are nonnegative on this interval. Therefore
\begin{align}
 \E[\cost(O)] &\leq \min\{\E[\cost(O_1)],\E[\cost(O_2)]\}\nonumber\\
 &\le V\min\{(1+\rho)f(\beta),g(\beta)\}
 \le(1+\rho)V\min\{f(\beta),g(\beta)\}. \label{eq:final-envelope}
\end{align}
The function $f$ is strictly decreasing and $g$ is strictly increasing,
with $g(0)<f(0)$ and $g(2)>f(2)$. Thus their lower envelope has its
unique maximum at their intersection. The equation $f(\beta)=g(\beta)$ is
\[
 \frac{91}{12}\beta+\sqrt{\frac{90\beta}{e}}=1-\frac2e.
\]
Substituting $u=\sqrt{\beta}$ gives $au^2+bu-c=0$, whose positive
root is~\eqref{eq:final-alpha}. This proves the cost bound.
The facility bound holds in every outcome by
\eqref{eq:final-A} and~\eqref{eq:final-B}.
\end{proof}

% For a numerical bound that does not depend on rounded decimal roots,
% take $\bar s=9/5000$. If $\beta\ge\bar s$, then
% $f(\beta)\le2-\bar s/12=1.99985$. If $\beta\le\bar s$, use
% $e>27/10$ and $\sqrt{3/50}<49/200$ to obtain
% \[
%  g(\beta)\le g(\bar s)
%  <1+\frac{20}{27}+\frac{49}{200}+\frac{27}{2000}
%  <1.99985.
% \]
% Consequently,
% \begin{equation}\label{eq:final-rational-bound}
%  \alpha^\star\le1.99985<1.9999.
% \end{equation}

When our $\varepsilon$ and $\rho$ are sufficiently small, the approximation ratio we obtain is at most $1.9999$. This proves Theorem~\ref{thm:intro-main}.

\section*{Acknowledgment}
The work of Shi Li and Zaixuan Wang is supported by State Key Laboratory for Novel Software Technology, New Cornerstone Science Foundation, and Fundamental and Interdisciplinary Disciplines Breakthrough Plan of the Ministry of Education of China (No.JYB2025XDXM118). 

\bibliographystyle{plain}
\bibliography{refs}

\end{document}